\newcommand{\field}[1]{\ensuremath{\mathbb{#1}}}
\newcommand{\R}{\ensuremath{\field{R}}} 
\newcommand{\I}[1]{\ensuremath{\mathbb{I}_{\left\{#1\right\}}}} 
\newcommand{\PR}{\ensuremath{\mathbb{P}}} 
\newcommand{\E}{\ensuremath{\mathbb{E}}} 
\newcommand{\defeq}{\ensuremath{\triangleq}}
\newcommand{\Ascr}{\ensuremath{\mathcal A}}
\newcommand{\Cscr}{\ensuremath{\mathcal C}}
\newcommand{\Fscr}{\ensuremath{\mathcal F}}
\newcommand{\Lscr}{\ensuremath{\mathcal L}}
\newcommand{\Nscr}{\ensuremath{\mathcal N}}
\newcommand{\Pscr}{\ensuremath{\mathcal P}}
\newcommand{\Qscr}{\ensuremath{\mathcal Q}}
\newcommand{\Xscr}{\ensuremath{\mathcal X}}
\newcommand{\Yscr}{\ensuremath{\mathcal Y}}
\DeclareMathOperator*{\argmin}{\mathrm{argmin}}
\DeclareMathOperator*{\argmax}{\mathrm{argmax}}
\newcommand{\minimize}{\ensuremath{\mathop{\mathrm{minimize}}\limits}}
\DeclareMathOperator{\var}{\textup{VaR}}
\DeclareMathOperator{\cvar}{\mathrm{CVaR}}
\DeclareMathOperator{\scvar}{\mathrm{S-CVaR}}
\DeclareMathOperator{\Prog}{\Pscr}
\newcommand{\seclabel}[1]{\noindent\textbf{\sffamily #1}}
\newcommand{\subproof}[1]{\noindent\textit{#1}}
\declaretheoremstyle[headfont=\sffamily\bfseries,bodyfont=\itshape]{thm-sf}
\declaretheorem[style=thm-sf]{theorem}
\declaretheorem[style=thm-sf]{remark}
\declaretheorem[style=thm-sf]{definition}
\declaretheorem[style=thm-sf]{corollary}
\declaretheorem[style=thm-sf]{lemma}
\declaretheorem[style=thm-sf]{proposition}
\newcommand{\proofnamest}[1]{{\normalfont\sffamily\bfseries #1}}
\renewcommand{\thmcontinues}[1]{\hyperref[#1]{continued}}
\tikzstyle{every picture} += [>=stealth]
\tikzset{axis/.style={semithick, line join=miter}}
\def\@seccntformat#1{\csname the#1\endcsname.\quad}
\newcommand{\emailhref}[1]{\href{mailto:#1}{\tt #1}} 
\newcommand{\hidefastcompile}[1]{\ifthenelse{\boolean{fastcompile}}{}{#1}}
\definecolor{orange}{rgb}{0.85,0.33,0.13} 
\definecolor{green}{rgb}{0.13,0.85,0.33}
\definecolor{purple}{rgb}{0.33,0.13,0.85}
\definecolor{lime}{rgb}{0.65,0.85,0.13}
\definecolor{blue}{rgb}{0.13,0.65,0.85}
\pgfplotsset{colormap={tricolormap}{color=(orange) color=(green) color=(purple)},
  colormap={quadcolormap}{color=(orange) color=(lime) color=(blue) color=(purple)}}
\pgfplotsset{compat=1.15}
\renewcommand*{\thead}[1]{\bfseries \makecell{#1}}
  \renewcommand{\todo}[1]{}
  \newcommand{\newedit}[1]{#1}
  \newcommand{\deledit}[1]{}
  \newcommand{\newedit}[1]{{\color{green} #1}}
  \newcommand{\deledit}[1]{{\color{orange} \sout{#1}}}
\tikzstyle{rate} += [color=orange,very thick]
\pgfplotsset{compat=newest}
  \title{\textsf{\textbf{Risk-Sensitive Optimal Execution via a Conditional Value-at-Risk Objective}}}
  \author{}
  \date{}
  \title{\textsf{\textbf{Risk-Sensitive Optimal Execution via a Conditional Value-at-Risk
        Objective\thanks{
          The authors wish to thank Carlos Gomez Gascon and Paul Glasserman for
          helpful discussions.}
}}}
\author{ \\
  Seungki Min \\
  KAIST \\
  \emailhref{skmin@kaist.ac.kr}
  \and \\
  Ciamac C. Moallemi \\
  Columbia University \\
 \emailhref{ciamac@gsb.columbia.edu} \\
  \and \\
  Costis Maglaras \\
  Columbia University \\
  \emailhref{c.maglaras@gsb.columbia.edu}  \\
}
\date{
 Current Revision: January 2022}
\begin{document}

\maketitle
\singlespacing
\begin{center}
  \bfseries\sffamily PRELIMINARY VERSION --- DO NOT DISTRIBUTE
\end{center}

\begin{abstract}
  We consider a liquidation problem in which a risk-averse trader tries to liquidate a fixed
  quantity of an asset in the presence of market impact and random price fluctuations.  When
  deciding the liquidation strategy, the trader encounters a trade-off between the transaction
  costs incurred due to market impact and the volatility risk of holding the position.  Our
  formulation begins with a continuous-time and infinite horizon variation of the seminal model of
  \citet{Almgren00}, but we define as the objective the conditional value-at-risk (CVaR) of the
  implementation shortfall, and allow for dynamic (adaptive) trading strategies.  In this setting,
  remarkably, we are able to derive closed-form expressions for the optimal liquidation strategy and
  its value function.

  Our results yield a number of important practical insights. We are able to quantify the benefit
  of adaptive policies over optimized static (pre-committed) policies. The
  relevant improvement depends only on the level of risk aversion, and grows
  without bound as the trader becomes more risk neutral.  For moderate levels of risk aversion,
  the optimal dynamic policy outperforms the optimal static policy by 5--15\%, and outperforms the
  optimal volume weighted average price (VWAP) policy by 15--25\%. This improvement is achieved
  through dynamic policies that exhibit ``aggressiveness-in-the-money'': trading is accelerated
  when price movements are favorable (to minimize risk), and is slowed when price movements are
  unfavorable (to minimize transaction costs). Overall, the optimal dynamic policies exhibit much
  better performance in the right tail of worst outcomes, relative to optimal static policies.

  From a mathematical perspective, our analysis exploits the dual representation of CVaR to
  convert the problem to a continuous-time, zero sum dynamic game. In this setting, we leverage
  the idea of the state-space augmentation, recently applied to certain discrete-time Markov
  decision processes with a CVaR objective. 
  We obtain a partial differential equation describing the
  optimal value function, which is separable and a special instance of the
  Emden--Fowler equation. This leads to a closed-form solution. As our problem is a special case
  of a continuous-time linear-quadratic-Gaussian control problem with a CVaR objective, these
  results may be interesting in broader settings.


\end{abstract}

\onehalfspacing

\section{Introduction} \label{sec:intro}

Optimal execution is a problem of significant importance for algorithmic traders in modern
financial markets. Here, a trader must decide an optimal trading strategy to buy (or sell) a fixed
quantity of an asset. Typically, there is a trade-off between trading quickly, which minimizes the
risk of adverse price movements, and trading slowly, which minimizes transaction costs.  In their
seminal paper, \citet{Almgren00} framed this liquidation problem as a mean-variance optimization
problem, that optimizes a combination of the average cost (i.e., the expected implementation
shortfall) and the variability of the cost (i.e., the variance of the implementation shortfall).
In that setting, they explicitly derive the optimal liquidation schedule, which is parameterized
by the risk-aversion level of the trader. Their framework and suggested solution have become
standards in this area, serving as a starting point for more complicated models and trading
strategies, both in the academic literature and among practitioners.

However, the \citet{Almgren00} analysis is restricted to static strategies: they only considered
deterministic schedules under which the trader pre-commits to a fixed trading schedule, i.e., algorithms
that do not adapt to changing market conditions such as the price of the asset.  This restriction
makes the analysis considerably more straightforward and tractable, and allows for closed-form
solutions. This leaves open possible additional benefit from dynamic, adaptive trading strategies,
however. Indeed, most practitioners utilize adaptive strategies. This is often done through ad hoc
or heuristic means, such as model predictive control (MPC), where the trader periodically resolves
for a new deterministic policy as time evolves.


Indeed, it has been an important objective to incorporate dynamic strategies into the
Almgren-Chriss framework, but in a more principled way. Some practitioners such as
\citet{Kissell05} have suggested a series of heuristics that are price-adaptive in a particular
way: they liquidate more aggressively when the price moves in a favorable direction, and liquidate
more slowly when the price moves in an adverse direction.  \citet{Almgren07} observed that this
``aggressiveness in-the-money'' (AIM) behavior can strictly improve on the optimal deterministic
strategy in the mean-variance criterion.  In a subsequent work, \citet{Almgren11} develop a
dynamic programming technique by which approximate solutions can be numerically obtained, and show
that \newedit{such approximate solutions exhibit} AIM, despite the lack of an analytic solution.
See also \cite{Forsyth11} for the continuous-time version of this analysis.

Another stream of work (including the present paper) introduces alternative risk criteria other
than the mean-variance objective, so as to formulate the problem into a more tractable form. For
example, \citet{Schied09} formulate the problem as an expected utility maximization problem, and
derive a Hamilton-Jacobi-Bellman (HJB) equation that characterizes the optimal adaptive
strategy.  They find that the optimal strategy is either aggressive- or passive-in-the-money,
respectively, if and only if the utility function displays increasing or decreasing risk aversion,
respectively, but an analytic solution is not available.  \citet{Gatheral11} propose an
alternative risk criterion that utilizes the time-averaged risk exposure to the price change
driven by the geometric Brownian motion (more precisely, the risk term is formulated as the time
integration of the position value process, i.e., the product of the position process and the price
process), and explicitly solve for the optimal strategy that is shown to exhibit the AIM
behavior. However, their proposed risk criteria is ad hoc, and may encourage, for example,
negative positions to reduce risk. Hence, the resulting optimal policies are not
``liquidate-only'': they may trade in both directions (buying and selling) in excess of what
is strictly necessary to reduce the position to zero. \citet{Forsyth12} investigate the use of the
quadratic variation of the position value process as a risk criteria, and observe that the classic
static solution of \citet{Almgren00} is again optimal when the price process is driven by the
arithmetic Brownian motion.  The authors also consider the geometric Brownian motion, under which
the optimal solution is numerically determined and shown to be price-adaptive, and they report its
AIM behavior through numerical examples. \citet{Lin15}
introduces a composite dynamic coherent risk measures and derive the optimal solution that is
tractable but static.  One can also consider an entropic risk measure introduced such as that of
\citet{Glasserman13}, but it can be shown that the resulting strategy is also not price-adaptive.

To summarize, the prior work in this area either (i) yields only approximate numerical solutions;
(ii) incorporates only ad hoc risk criteria; (iii) yields policies that are not liquidate-only and
trade in both directions; or (iv) illustrates no benefit from dynamic policies
over static policies. In contrast, our work simultaneously addresses all of these issues, and, to
our knowledge, is the first paper to do so.

In this paper, we consider the conditional value-at-risk (CVaR; also known as average
value-at-risk, tail conditional expectation, or expected shortfall) as an objective.
CVaR is a risk measure that quantifies the tail risk.  Given a quantile value $q \in (0,1]$
and a random variable that represents the cost, the CVaR value at level $q$ is defined as the
average of the worst $q$-fraction of the outcomes, i.e., the tail average beyond the $q^\text{th}$
quantile of the cost distribution. Smaller values of $q$ focus on performance in the worst cases,
i.e., the trader is more risk averse, while $q=1$, on the other hand, corresponds to a risk
neutral trader. Starting from the pioneering work of \citet{Artzner99}, CVaR has
received much attention for its intuitive definition and for nice mathematical properties as a
convex and coherent risk measure.

From the point of view of using CVaR as an optimization objective, the (static) optimization of
the CVaR value for a single period can be done efficiently, by utilizing its primal variational
representation \citep{Rockafellar02}.  In the multi-period setting of a Markov decision problem
(MDP), however, the dynamic optimization of the CVaR value of the total cost is challenging. To
begin, this objective is time-inconsistent. Moreover, as observed by \citet{Artzner07} and
\citet{Shapiro09}, the optimal action at a point in time may not be completely determined by the
current state of the MDP, but may depend on the entire history, and therefore the conventional
dynamic programming techniques may not work.

More recent work has adopted the idea of state augmentation to overcome this issue: by introducing
an extra state variable, an optimal policy can be sufficiently characterized as a Markov process
defined on this augmented state space.  Broadly speaking, these studies develop CVaR MDP
frameworks using two kinds of state augmentation.  The first kind introduces an extra state
variable that represents the running cost, and derives the dynamic programming principle from the
primal variational representation of CVaR, i.e.,
$\cvar_q[C] = \min_{c \in \R}\{ c + \frac{1}{q} \E[ (C-c)^+ ] \}$ \citep[see][]{Rockafellar02}.
This state augmentation scheme is adopted in, for example,
\citet{Bauerle11,Huang16,Miller17,Chow18,Backhoff20}.  The second kind of state augmentation
introduces an extra state variable that represents the quantile value, and derives the dynamic
programming principle from the dual variational representation of CVaR, i.e.,
$\cvar_q[C] = \sup_{\mathbb{Q} : \mathbb{Q} \ll \mathbb{P}, \frac{d\mathbb{Q}}{d\mathbb{P}} \leq
  q^{-1} }\E_{\mathbb{Q}}[ C ] $ \citep[see][]{Artzner99}.  The work of
\citet{Pflug16,Chow15,Chapman18,Li20} belongs to this category.

\seclabel{This paper.}  In this paper, we seek an adaptive liquidation strategy that minimizes the
CVaR value of the implementation shortfall in a continuous-time, infinite-horizon variation of the
classical Almgren-Chriss framework.
We adopt the second kind of state augmentation described above.  More specifically, we consider an
augmented state space represented as $(X_t, Q_t)$, where $X_t \in \mathbb{R}$ is the current
remaining position size of the trader, and $Q_t \in [0,1]$ is a quantile value that represents the
current level of risk \newedit{neutrality}.  We observe that the dynamic optimization of the CVaR objective
can be represented as a (continuous-time) stochastic game between the trader who controls the
position process $X_t$ and the adversary who controls the quantile process $Q_t$.  By analyzing
the Nash equilibrium of this game, we can identify the minimal CVaR value that the trader can
achieve, and specify the trader's optimal policy and the adversary's optimal policy, which are
formulated as time-stationary Markov policies on $(X_t, Q_t)$.


\seclabel{Practical contributions.}  Using our approach, we can express the optimal dynamic
liquidation strategy in an analytic closed-form. This allows us to characterize properties of the
optimal strategy, and yields a number of insights consistent with how optimal execution algorithms
are widely employed in practice.

First, we show that the optimal strategy is liquidate-only, i.e.,
trades in only one direction and it keeps liquidating until it completes the execution. This is
intuitive since, in practice, traders typically do not want to increase their positions or
establish new positions during the liquidation process. We also observe that the optimal
trading strategy becomes more aggressive and trades more quickly as the level of risk aversion
increases. This is also intuitive since price risk can be minimized by trading quickly.

Second, and more interestingly, we show that the optimal trading strategy is dynamic and
  responds to stochastic fluctuations in the price process in a way that accelerates trading (to
  minimize price risk) when price movements are favorable, and slows trading (to minimize
  transaction costs) when price movements are unfavorable --- in other words, it exhibits
  aggressiveness-in-the-money (AIM).  The intuition behind the optimality of AIM is as follows: 

 Recall that the CVaR measure is the tail average beyond the $q^\text{th}$ quantile ---
  the CVaR objective is not impacted by cost realizations below this threshold but only requires
  the average cost above this threshold to be minimized.  As an extreme case, imagine a situation
  when the price movements have been so favorable that the trader has earned a large profit from
  the holding positions and the current cumulative cost is far below the quantile threshold.  The
  CVaR objective of a trader in this scenario is not impacted by the cost (so long as it remains
  below the threshold) and the trader should thus be willing to pay a large, deterministic
  transaction cost (up to the gap between the current cumulative cost and the threshold) so as to
  complete the execution quickly and minimize the risk of the total cost exceeding the threshold.
  In the opposite case, imagine a scenario where the current cumulative cost is far above the
  threshold. Since the total cost is very likely to ultimately exceed the threshold, the CVaR
  objective of the trader becomes the expected cost and is approximately risk neutral. The trader
  is thus encouraged to slow down trading so as to minimize (deterministic) transaction costs
  thereafter.  As illustrated in these two extreme cases, the optimal strategy responds
  asymmetrically to price changes, in the way that it becomes more aggressive in an adverse
  situation, due to the intrinsic asymmetry of the CVaR measure.

Our stochastic game interpretation of this dynamic CVaR optimization problem provides an alternative (and more formal) justification of AIM.
On the augmented state space, the adversary's quantile process $Q_t$ represents the current level
of risk neutrality (starting at value $q$), and is coupled with the price process.
In particular, the adversary's optimal quantile process $Q_t^\star$ represents the probability that the current sample path is among the $q$-fraction of worst outcomes for the trader, and hence, $Q_t^\star$ increases when the price moves in an unfavorable direction.
When $Q_t^\star$ increases, consequently, the trader becomes more risk-neutral and is encouraged
to trade slowly, which is consistent with AIM.

Third, we are also able to quantify the benefit of dynamic trading by comparing the optimal dynamic
policy to two benchmark static policies: the optimal static policy and an optimized version of a
policy that trades at a constant rate. The later policy is meant to be representative of volume
weighted average price (VWAP) polices that are popular in practice. The relative improvement of
the optimal dynamic policy depends on the problem parameters only through the risk aversion $q$,
and can be characterized analytically. For moderate levels of risk aversion, the optimal dynamic
policy outperforms the optimal static policy by 5--15\%, and outperforms the optimal VWAP policy
by 15--25\%. Moreover, relative improvement is increasing in $q$ and grows without bound as
$q \nearrow 1$, i.e., as the investor becomes more risk neutral. Since most traders using optimal
execution algorithms are large investors trading a small portion of their overall portfolio over a
short time horizon, their utility functions are nearly linear from the perspective of the optimal
execution problem, hence the nearly risk neutral regime ($q \approx 1$), where the relative
benefits of dynamic trading are greatest, is also the most practically relevant regime.

Last, numerical experiments yield insight to some useful features of the dynamic optimal policy.
Namely, this policy effectively controls the worst outcomes in the (right) tail of the cost distribution, resulting in a cost distribution is very distinguished from the ones induced by deterministic policies.
This feature may appeal to the practitioners who are not particularly interested in optimizing the CVaR value.
For example, we observe that the class of CVaR-optimal dynamic policies achieves better median performance and better tail probabilities than the class of deterministic policies.
See Figure~\ref{fig:numeric-frontiers} and the accompanying discussion for further details.



\seclabel{Mathematical contributions.}  We introduce a novel and technically sound approach to
developing the CVaR MDP framework in the continuous-time setting.  To sketch our approach briefly,
we first introduce a scaled version of CVaR that allows us to avoid the ambiguity of CVaR in the
corner case (i.e., when $q=0$) and inherently induces the concavity of the objective
(Proposition~\ref{prop:scvar-properties}).  We then utilize the martingale representation theorem,
by which we can rewrite the CVaR objective as a maximization problem for an adversary who controls
the quantile process $Q_t$ against the decision maker
(Theorem~\ref{thm:martingale-representation}), and the problem can be translated into a
continuous-time stochastic game between the decision maker and the adversary who are competing
over the expected value of the risk-adjusted outcome (Theorem~\ref{thm:cvar-minimax}).  After this
step, the objective now decomposes over time, hence we can develop a CVaR dynamic programming
principle (Theorem~\ref{thm:cvar-dp}).  This naturally leads to the Hamilton-Jacobi-Bellman (HJB) partial differential
equations that characterize the optimal solution. The HJB equation is separable across the state
variables, and hence admits a closed form solution for the value function.

From the HJB equation, we are able to define candidate optimal control policies for the trader and
adversary. Unfortunately, we are not able to show that these policies as feasible. Instead, we
construct a series of feasible policies that converge pointwise to the candidate optimal policy
and in value to the optimal value function (see Theorem~\ref{thm:policy-optimality} and the
accompanying discussion).

Our approach leverages the idea of state augmentation that incorporates the quantile value as an
extra state variable, which has been suggested and utilized in prior work \citep{Pflug16, Chow15,
  Chapman18, Li20}, but in exclusively discrete-time settings. To our knowledge, our work is
the first to apply this in continuous-time, which introduces new technical challenges. We clearly
take advantage of the continuous-time setting: the martingale representation theorem, which is
intrinsically continuous-time, is fundamental in our analysis as it allows us to parameterize the
adversary's control policy with a real-valued stochastic process that can be tractably optimized.
Moreover, the choice of continuous-time is critical in this application, since the optimization of
an analogous discrete-time model would involve a Bellman equation that is not separable, and hence
is unlikely to admit an analytic solution (see the discussion at the end of \S
\ref{ssec:martingale-representation}).

As our problem is a special case of a continuous-time linear-quadratic-Gaussian control problem
with a CVaR objective, these technical results may be interesting in broader settings, and feed
into the broader emerging literature of continuous-time differential games.


\seclabel{Organization of paper.}
In \S \ref{sec:problem}, we introduce the notation and formally describe the model and the problem.
In \S \ref{sec:cvar-dp}, we develop the CVaR MDP framework for which we sequentially introduce a martingale representation of the CVaR objective, the game-theoretic representation of the problem, and the Markov policies defined on the augmented state space.
In \S \ref{sec:opt}, we derive the HJB equation, identify its solution, and characterize the optimal liquidation strategy.
In \S \ref{sec:cost-analysis}, we compare the optimal adaptive strategy with two deterministic strategies: the optimal deterministic strategy and the optimized VWAP strategy.
In \S \ref{sec:numerical}, we provide simulation results that illustrate the optimal strategy.
In the appendix, we provide the proofs that are deferred from \S \ref{sec:problem}--\S \ref{sec:cost-analysis}.

\section{Problem} \label{sec:problem}

We consider a filtered probability space $\big( \Omega, \Fscr, \mathbb{F} = ( \Fscr_t )_{t \geq 0}, \mathbb{P} \big)$, where $\mathbb{F}$ is a natural filtration of a Brownian motion $( W_t )_{t \geq 0}$ that satisfies the usual conditions.
We denote by $\Lscr^p( \Omega, \Fscr, \mathbb{P} )$ (or simply by $\Lscr^p$) the set of $\Fscr$--measurable random variables $X : \Omega \rightarrow \mathbb{R}$ such that $\mathbb{E}|X|^p < \infty$.
Given a sequence of random variables $( X_n )_{n \in \mathbb{N}}$, it is said that $X_n \stackrel{ \Lscr^p }{ \rightarrow } X$ if $\lim_{n \rightarrow \infty} \E|X_n - X|^p = 0$.
The time index set is denoted by $\mathbb{T} \defeq [0,\infty)$.
We also define $\Prog$ to be the set of progressively measurable stochastic processes in this filtered probability space.
We denote $\mathbb{R}_+ \defeq [0, \infty)$ and $\mathbb{R}_- \defeq (-\infty, 0]$.

\subsection{Model} \label{sec:model}

We consider a continuous-time and infinite-horizon version of the setting of \citet{Almgren00}.
We postulate a trader who wants to liquidate $x \in \mathbb{X}$ units of an asset over an
\emph{infinite-time horizon}.\footnote{\newedit{The choice of an infinite-horizon setting is
    important since as it will allow an analytical solution of the problem. In particular, in an
    infinite-horizon setting, the optimal policy is stationary over time, and hence time can be
    eliminated as a state variable. As we will see in subsequent sections, this results in a
    simpler Hamilton-Jacobi-Bellman (HJB) equation that can be solved analytically. Beyond this,
    in some sense the infinite-horizon setting is more elegant from a modeling perspective since
    it endogenizes the effective time horizon of the trader dynamically as a function of risk
    aversion.  That said, the optimal policy for a finite-horizon variation of our setting could
    be obtained via a numerical solution of the HJB equation, and we expect the qualitative
    insights to be no different.  }} Here, $x$ can be negative if the trader wants to acquire the
asset.  We define $\mathbb{X} \defeq [-M, M] \subset \mathbb{R}$, where $M \geq 0$ is an arbitrary
large number\footnote{We require that the position size be bounded in order to resolve technical
  difficulties (see Remark~\ref{rem:Mbound}).}  that represents the largest possible position size
that the trader is allowed to own.

\seclabel{Liquidation strategy.}
The trader's liquidation policy is represented with a real-valued continuous-time stochastic process $\pi \defeq ( \pi_t )_{t \geq 0}$, where $\pi_t \in \mathbb{R}$ specifies the \emph{liquidation rate} at time $t$.
Given an initial position size $x$ and a liquidation strategy $\pi$, the trader's \emph{position process} $X^{x,\pi} \defeq (X_t^{x,\pi})_{t \geq 0}$ is determined by
\begin{equation}
	X_t^{x,\pi} = x - \int_{s=0}^t \pi_s ds;
\end{equation}
i.e., the trader liquidates $\pi_t$ units of the asset per unit time (or acquires $-\pi_t$ units of the asset per unit time if $\pi_t < 0$).
While deferring a formal statement to the end of this subsection, we will restrict our attention to the policies under which the trader's position varies continuously over time (i.e., involves no impulse trades) and vanishes eventually (i.e., converges to zero as $t$ goes to infinity).

\seclabel{Liquidation cost.}
Following the framework of \citet{Almgren00}, we define the \emph{cost process} (or loss process) $C^{x,\pi} \defeq (C_t^{x,\pi})_{t \geq 0}$ as
\begin{equation} \label{eq:loss}
	C_t^{x,\pi} \defeq \int_{s=0}^t \frac{1}{2} \eta \pi_s^2 ds - \int_{s=0}^t \sigma X_s^{x,\pi} dW_s.
\end{equation}
The first term $\int_{s=0}^t \frac{1}{2} \eta \pi_s^2 ds$ represents the (cumulative) transaction cost incurred by the temporary price impact, where the coefficient $\eta > 0$ reflects the illiquidity of the asset.
The second term $\int_{s=0}^t \sigma X_s^{x,\pi} dW_s$ represents the (cumulative) loss incurred by the random fluctuation of the market price, where $\sigma > 0$ is the volatility of the price process and $W_t$ is the standard Brownian motion.
The total cost (i.e., total implementation shortfall) $C_\infty^{x,\pi} \defeq \lim_{t \rightarrow \infty} C_t^{x,\pi}$ is a random variable of interest that we want to minimize via a CVaR objective.


Note that we do not consider permanent price impact in our formulation.  Within the
continuous-time Almgren--Chriss framework, it is well known that the contribution of permanent
linear price impact to the implementation shortfall is path-independent; i.e., it does not depend
on the liquidation strategy as long as the strategy clears all the positions eventually.\footnote{
  Suppose that liquidating the asset at rate $v$ permanently shifts the market price at rate
  $-\lambda v$, i.e., permanent linear price impact.  Given a feasible policy $\pi$, the
  associated transaction costs can be expressed $\int_{t=0}^\infty \lambda \pi_t X_t^{x,\pi} dt$,
  which is always equal to $\frac{1}{2} \lambda x^2$, independent of the policy, given that
  $\lim_{t \rightarrow \infty} X_t^{x,\pi} = 0$.  } Therefore, without loss of generality, we can
ignore the presence of permanent impact since it does not make any difference in the trader's
decision making.  See \cite{Almgren00} for details.

\seclabel{Admissible strategies.}
We now formally define the \emph{set of admissible policies} $\Pi(x)$ as
	\begin{equation} \label{eq:Pi}
		\Pi(x) \defeq \left\{ \pi : \mathbb{T} \times \Omega \rightarrow \mathbb{R} \left| 
			\begin{array} {l}
				\pi \in \Prog, \\
				\E\left[ \left( \int_{t=0}^\infty \pi_t^2 dt \right)^2 \right] < \infty, 
				~ \E\left[ \int_{t=0}^\infty |X_t^{x,\pi}|^2 dt \right] < \infty, \\
				X_t^{x,\pi} \in \mathbb{X}, ~ \forall t \geq 0
			\end{array}
		\right. \right\}
		.
	\end{equation}
An admissible policy $\pi \in \Pi(x)$ can be dynamic and so it can adjust the trading rate adaptively to the price changes.
By this definition, impulse trades are not allowed by the constraint $\E\left[ \left( \int_{t=0}^\infty \pi_t^2 dt \right)^2 \right] < \infty$, and a non-vanishing position is also not allowed by the constraint $\E\left[ \int_{t=0}^\infty |X_t^{x,\pi}|^2 dt \right] < \infty$.
These conditions are to guarantee that the limit $C_\infty^{x,\pi} \defeq \lim_{t \rightarrow \infty} C_t^{x,\pi}$ is an integrable random variable: individual terms in \eqref{eq:loss} converge in $\Lscr^2$ as $t \rightarrow \infty$, and therefore $C_t^{x,\pi} \stackrel{ \Lscr^2 }{ \rightarrow } C_\infty^{x,\pi}$ for some $C_\infty^{x,\pi} \in \Lscr^2$.

\begin{remark}\label{rem:Mbound}
  The last condition $X_t^{x,\pi} \in \mathbb{X} \defeq [-M, M]$ is seemingly restrictive,
  but enforcing boundedness of the position path is merely a technical condition necessary for our
  later analysis. In particular, we will see that under the optimal policy, the position size will
  be monotonically decreasing towards zero over time. Therefore, we need only choose $M$ so that
  the initial position size is feasible, i.e., $|x| \leq M$. Beyond this, the choice of $M$ will
  not be a binding constraint: it will neither influence the optimal policy nor the optimal value
  function.
\end{remark}

\subsection{Scaled Conditional Value-at-Risk} \label{ssec:scvar}


The \emph{conditional value-at-risk} (CVaR) at a quantile level $q \in (0,1]$ is a mapping from $\Lscr^1(\Omega, \Fscr, \mathbb{P})$ to $\mathbb{R}$.
While there exist several definitions of CVaR in the literature, we consider the one based on its dual representation as a coherent risk measure \citep{Artzner99, Shapiro09}: given a random variable $C \in \Lscr^1$, 
\begin{equation} \label{eq:cvar}
	\cvar_q[C] \defeq \sup_{Q^\dagger \in \Qscr^\dagger(q)} \E\left[ C Q^\dagger \right]
	\quad \text{where} \quad
	\Qscr^\dagger(q) \defeq \left\{  Q^\dagger \in \Lscr^\infty(\Omega, \Fscr, \mathbb{P}), ~ 0 \leq Q^\dagger \leq \frac{1}{q}, ~ \E Q^\dagger = 1 \right\},
\end{equation}
i.e., it is defined as a maximization over a random variable $Q^\dagger$ that has bounded support $[0, 1/q]$ and an expected value of one (or equivalently, a maximization over a probability measure such that it is absolutely continuous with respect to $\mathbb{P}$ and its Radon--Nikodyn derivative with respect to $\mathbb{P}$ is upper bounded by $1/q$ almost surely).

We introduce a scaled version of the CVaR measure, which surprisingly simplifies our analysis.

\begin{definition}[Scaled Conditional Value-at-Risk]
	Given a random variable $C \in \Lscr^1(\Omega, \Fscr, \mathbb{P})$ and a quantile $q \in [0,1]$, the \emph{scaled conditional value-at-risk} (S-CVaR) at level $q$ is
	\begin{equation} \label{eq:scvar}
		\scvar_q[C] \defeq \sup_{Q \in \Qscr(q)} \E\left[ C Q \right],
	\end{equation}
	where $\Qscr(q)$ represents \emph{the risk envelope}:
	\begin{equation}
		\Qscr(q) \defeq \left\{  Q \in \Lscr^\infty( \Omega, \Fscr, \mathbb{P} ) \left| 
			~ 0 \leq Q \leq 1, 
			~ \E Q = q
			\right. \right\}.
	\end{equation}
\end{definition}

The S-CVaR measure is obtained by simply scaling the risk envelope of CVaR by $q$.
As a result, $\scvar_q[C]$ is also given by $q \times \cvar_q[C]$ for $q \ne 0$.
Nevertheless, it naturally incorporates the case $q=0$ into its definition, for which CVaR is not well defined.\footnote{
	When $q=0$, $\cvar_0[C]$ is typically defined as the essential supremum of $C \in \Lscr^1$, which can be infinite if the loss distribution has an unbounded support.
	We have defined the S-CVaR measure using the dual representation of CVaR so that we can effectively avoid the ambiguity at $q=0$.
}
It further has the following useful properties:
\begin{proposition}[Properties of S-CVaR] \label{prop:scvar-properties}
	For any random variable $C \in \Lscr^1$, $\scvar_q[C]$ satisfies the following properties:
	\begin{enumerate}[label=(\roman*)]
		\item \label{it:scvar-cvar-relationship} $\scvar_q[C] = q \times \cvar_q[C]$, for any $q \in (0,1]$.
		\item \label{it:scvar-boundary} $\scvar_0[C] = 0$ and $\scvar_1[C]= \E C$.
		\item \label{it:scvar-bounds} $\left| \scvar_q[C] \right| \leq \E |C|$ and $\scvar_q[C] \geq q \mathbb{E}C$ for any $q \in [0,1]$.
		\item \label{it:scvar-concavity} The mapping $q \mapsto \scvar_q[C]$ is concave
                  and continuous on
                  $[0,1]$.
		\item \label{it:scvar-interpretation} Suppose that $C$ is a continuous random variable whose distribution is atomless. Then,
			\begin{equation}
				\scvar_q[C] = \E\left[ C \I{ C \geq F_C^{-1}(1-q) } \right],
			\end{equation}
			where $F_C^{-1}(\cdot)$ is the inverse distribution function of $C$, and $\cvar_q[C] = \E\left[ C \left| C \geq F_C^{-1}(1-q) \right. \right]$.
	\end{enumerate}
\end{proposition}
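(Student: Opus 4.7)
The plan is to handle the five properties more or less in order, exploiting the convexity/envelope structure of the dual representation, and to treat the continuity and the atomless formula as the two substantive steps.

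Parts \ref{it:scvar-cvar-relationship}--\ref{it:scvar-bounds} are essentially direct from the definition. For \ref{it:scvar-cvar-relationship} I would observe that the map $Q \mapsto Q/q$ is a bijection between $\Qscr(q)$ and $\Qscr^\dagger(q)$ for $q>0$, giving $\scvar_q[C] = \sup_{Q \in \Qscr(q)} \E[CQ] = q \sup_{Q^\dagger \in \Qscr^\dagger(q)} \E[CQ^\dagger] = q\cvar_q[C]$. For \ref{it:scvar-boundary}, note $\Qscr(0) = \{0\}$ a.s.\ (since $0 \le Q \le 1$ and $\E Q = 0$ force $Q=0$), and $\Qscr(1) = \{1\}$ a.s.\ (since $\E(1-Q) = 0$ with $1-Q \ge 0$ forces $Q=1$), so the sup collapses to a single value in each case. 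For \ref{it:scvar-bounds}, any $Q \in \Qscr(q)$ satisfies $0 \le Q \le 1$, so $|\E[CQ]| \le \E[|C|Q] \le \E|C|$; and the constant random variable $Q \equiv q$ lies in $\Qscr(q)$, yielding $\scvar_q[C] \ge \E[Cq] = q\E C$.

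For \ref{it:scvar-concavity}, concavity is a standard convex-combination-of-envelopes argument: given $q_1,q_2 \in [0,1]$, $\lambda \in [0,1]$, $Q_i \in \Qscr(q_i)$, the mixture $\lambda Q_1 + (1-\lambda)Q_2$ lies in $\Qscr(\lambda q_1 + (1-\lambda) q_2)$ because $0 \le Q \le 1$ is preserved under convex combinations and expectations are linear. Taking the supremum over $Q_1, Q_2$ independently gives $\scvar_{\lambda q_1 + (1-\lambda)q_2}[C] \ge \lambda \scvar_{q_1}[C] + (1-\lambda)\scvar_{q_2}[C]$. Concavity alone gives continuity on $(0,1)$. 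For continuity at the endpoints, concave functions can only fail continuity by jumping up at a boundary, so it suffices to produce matching one-sided bounds. For $q \to 0^+$, given $\epsilon > 0$ pick $K$ with $\E[|C|\I{|C|>K}] < \epsilon/2$; then any $Q \in \Qscr(q)$ satisfies $\E[|C|Q] \le K q + \E[|C|\I{|C|>K}] \le Kq + \epsilon/2$, so $|\scvar_q[C]| \to 0 = \scvar_0[C]$. For $q \to 1^-$, apply the same estimate to $\E C - \scvar_q[C] = \inf_{Q \in \Qscr(q)} \E[C(1-Q)]$, noting that $1-Q$ ranges over $\Qscr(1-q)$, and use that $1-q \to 0$.

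For \ref{it:scvar-interpretation}, the natural candidate is $Q^\star \defeq \I{C \ge F_C^{-1}(1-q)}$. Atomlessness of $C$ gives $\E Q^\star = \PR(C \ge F_C^{-1}(1-q)) = q$, and clearly $Q^\star \in \Qscr(q)$. To verify optimality, for any $Q \in \Qscr(q)$, set $\tau \defeq F_C^{-1}(1-q)$ and write
\begin{equation}
  \E[CQ] - \E[CQ^\star] = \E[(C-\tau)(Q - Q^\star)] + \tau \E[Q - Q^\star] = \E[(C-\tau)(Q-Q^\star)],
\end{equation}
using $\E Q = \E Q^\star = q$. A case analysis on $\{C \ge \tau\}$ (where $Q^\star = 1 \ge Q$, so $Q - Q^\star \le 0$ and $C - \tau \ge 0$) and $\{C < \tau\}$ (where $Q^\star = 0 \le Q$, so $Q - Q^\star \ge 0$ and $C-\tau < 0$) shows that the integrand is pointwise non-positive, so $\E[CQ] \le \E[CQ^\star]$, establishing the S-CVaR formula; dividing by $q$ and using $\PR(C \ge \tau) = q$ gives the conditional-expectation formula for $\cvar_q[C]$. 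I expect the only mildly delicate step to be continuity of $\scvar_q[C]$ at the endpoints of $[0,1]$, since concavity alone is not enough; the uniform-integrability-style truncation above is the key device there, while the remaining parts are straightforward manipulations of the envelope.
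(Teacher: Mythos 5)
Your proposal is correct, and parts \ref{it:scvar-cvar-relationship}--\ref{it:scvar-bounds} together with the concavity half of \ref{it:scvar-concavity} coincide with the paper's argument (your version of concavity takes suprema over $Q_1,Q_2$ independently rather than invoking attainment of the maximum via weak-* compactness of $\Qscr(q)$, which is a minor simplification). The two substantive differences are in the continuity claim and in part \ref{it:scvar-interpretation}. For continuity, the paper simply asserts that it ``follows from concavity since the function is bounded,'' but that reasoning is incomplete at the endpoints of a closed interval: a bounded concave function on $[0,1]$ can be discontinuous at $q=0$ or $q=1$ (e.g., $f(0)=0$, $f\equiv 1$ on $(0,1]$ is concave and bounded). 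You correctly isolate this as the delicate point and close it with the truncation bound $\sup_{Q\in\Qscr(q)}\E[|C|Q]\le Kq+\E[|C|\I{|C|>K}]$, plus the reflection $\E C-\scvar_q[C]=\inf_{Q'\in\Qscr(1-q)}\E[CQ']$ at the right endpoint; this makes your treatment strictly more rigorous than the paper's. For \ref{it:scvar-interpretation}, the paper cites a known identity from the literature, whereas you verify directly that $Q^\star=\I{C\ge F_C^{-1}(1-q)}$ is the maximizer via the pointwise sign argument $\E[(C-\tau)(Q-Q^\star)]\le 0$; your route is self-contained at the cost of a short extra computation, and it relies on atomlessness exactly where it should (to get $\E Q^\star=q$).
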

The proof can be found in Appendix \ref{app:proof-problem}.
Properties \ref{it:scvar-cvar-relationship}--\ref{it:scvar-bounds} provide basic characterizations of S-CVaR.
The property \ref{it:scvar-interpretation} provides an interpretation of S-CVaR as a truncated average as opposed to the interpretation of CVaR as a conditional average.
We particularly highlight property \ref{it:scvar-concavity} that shows the concavity and continuity of the S-CVaR value with respect to $q$, which is a crucial property that will be exploited in our analysis.

One can interpret the definition \eqref{eq:scvar} as a maximization problem for an adversary.
This adversary selects a set of scenarios so as to maximize the average cost within the selected scenario, given a constraint that the total measure of the selected scenarios should be $q$.
Informally,\footnote{When the loss distribution has an atom at the $q^\text{th}$ quantile, the extremal random variable $Q^\star(\omega)$ may take a fractional value.} the optimized random variable $Q^\star$ is an indicator random variable such that $Q^\star(\omega) = 1$ if the scenario $\omega$ is among the worst $q$-fraction of the scenarios, and $Q^\star(\omega) = 0$ otherwise.

\subsection{Risk-sensitive Execution with a CVaR Objective}

We now introduce the CVaR risk criterion into the setting described in \S \ref{sec:model}.
In particular, we seek an adaptive strategy that minimizes the CVaR value of the implementation shortfall, given an initial position $x \in \mathbb{X}$ and a target quantile $q \in (0,1]$.
Without loss of generality, we formulate this optimization problem via an S-CVaR objective and define the \emph{value function} $V:\mathbb{X} \times [0,1] \rightarrow \mathbb{R}$ as
\begin{equation} \label{eq:V}
	V(x,q) \defeq \inf_{\pi \in \Pi(x)} \scvar_q\left[ C_\infty^{x,\pi} \right].
	\tag{$*$}
\end{equation}
Note that the above formulation includes the case $q=0$.
By Proposition \ref{prop:scvar-properties}, the value function $V(x,q)$ is well-defined at $q=0$, and the minimal CVaR value is simply given by $V(x,q)/q$ for any $q \ne 0$.
We aim to identify the optimal value function $V(x,q)$ as well as its corresponding optimal liquidation strategy $\pi^\star$.

Recall that the objective $\scvar_q[ C_\infty^{x,\pi} ]$ concerns the worst $q$-fraction of outcomes.
When $q=1$, the problem reduces to a risk-neutral liquidation problem.
When $q$ takes a smaller value, the problem is equivalent to considering a more risk-averse trader who concerns a smaller fraction of worst cases, being wary of more extreme cases.
We anticipate that the trader uses this quantile value $q \in [0,1]$ as an input to our algorithm so as to control the level of risk-aversion that he wants to achieve.
In practice, we do not expect the traders to use an extremely small quantile value such as $q = 0.01$ or $q = 0.05$: since they encounter this sort of liquidation task often, possibly on a daily basis, it would be too conservative for them to optimize their performance in the worst $1\%$ or $5\%$ of cases at a cost of sacrificing their performance in the normal $99\%$ or $95\%$ of cases.

\section{CVaR Dynamic Programming Principle} \label{sec:cvar-dp}

Using the definition of the S-CVaR measure \eqref{eq:scvar}, the risk-sensitive optimal execution problem \eqref{eq:V} can be formulated as
\begin{equation}
        V(x,q) = \inf_{\pi \in \Pi(x)} \sup_{Q \in \Qscr(q)} \E\left[ C_\infty^{x,\pi} Q \right] .
\end{equation}
As discussed in \S \ref{ssec:scvar}, we can think of an adversary who optimizes a random variable $Q \in \Qscr(q)$ so as to select the worst $q$-fraction of sample paths against the trader who employs a liquidation policy $\pi \in \Pi(x)$.
In this section, we reformulate the adversary's optimization problem as an optimization over a real-valued continuous-time stochastic process rather than a random variable, and interpret the risk-sensitive optimal control problem as a continuous-time stochastic game between the trader and the adversary.
To this end, we develop a continuous-time dynamic programming principle by exploiting the recursive structure of this game.



\subsection{Martingale Representation of CVaR Objective} \label{ssec:martingale-representation}


We consider an arbitrary random variable $C \in \Lscr^1(\Omega, \Fscr, \mathbb{P})$ and derive an alternative representation of $\scvar_q[C]$.
The results in this subsection are valid not only in the context of the liquidation problem, but also in any filtered probability space generated by a Brownian motion.

We define the \emph{adversary's policy} as a real-valued continuous-time stochastic process $\gamma \defeq ( \gamma_t )_{t \geq 0}$, which determines the \emph{adversary's quantile process} $Q^{q,\gamma} \defeq (Q_t^{q,\gamma})_{t \geq 0}$:
\begin{equation}
        Q_t^{q,\gamma} = q + \int_{s=0}^t \gamma_s \, dW_s,
\end{equation}
where $W = (W_t)_{t \geq 0}$ is the Brownian motion that drives the random price fluctuation.
We sometimes call $\gamma_t$ the \textit{quantile diffusion rate} by analogy to the liquidation rate $\pi_t$.

The \emph{set of admissible adversary's policies} is defined as
        \begin{equation} \label{eq:Gamma}
                \Gamma(q) \defeq \left\{ \gamma : \mathbb{T} \times \Omega \rightarrow \mathbb{R} ~
                \left|
                        ~ \gamma \in \Prog,
                        ~ 0 \leq Q_t^{q,\gamma} \leq 1, \forall t \geq 0
                \right. \right\}.
        \end{equation}
Given an admissible adversary's policy $\gamma \in \Gamma(q)$, its corresponding quantile process $Q^{q,\gamma}$ is a (local) martingale starting at $q \in [0,1]$ whose diffusion term is governed by $\gamma$.
In particular, the quantile process is required to take values within $[0,1]$ and, as a result, it has the following properties (the proof is provided in Appendix \ref{app:proof-cvar-dp}):
\begin{proposition}[Properties of the adversary's quantile process $Q$] \label{prop:Q-properties}
        For any $\gamma \in \Gamma(q)$,
        \begin{enumerate}[label=(\roman*)]
                \item \label{it:Q-martingale} $(Q_t^{q,\gamma})_{t \geq 0}$ is a continuous and
                  bounded martingale taking values in $[0,1]$, and hence $\E[Q_\tau^{q,\gamma}] =
                  q$ for any stopping time $\tau$.
                \item \label{it:Q-limit} $Q_\infty^{q,\gamma} \defeq \lim_{t \rightarrow \infty} Q_t^{q,\gamma}$ exists in $[0,1]$ almost surely, and also $\E[Q_\infty^{q,\gamma}] = q$.
                \item \label{it:Q-absorption} Once $Q_t^{q,\gamma}$ hits $0$ or $1$, it never escapes thereafter.
        \end{enumerate}
\end{proposition}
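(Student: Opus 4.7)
The plan is to handle the three properties sequentially, invoking standard results for Itô integrals against Brownian motion and for bounded martingales; the admissibility condition in \eqref{eq:Gamma} that $Q^{q,\gamma}$ is uniformly bounded in $[0,1]$ is the engine behind everything.

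For part \ref{it:Q-martingale}, I would observe that $Q^{q,\gamma}_t = q + \int_0^t \gamma_s\, dW_s$ is by construction a continuous local martingale, since the Itô integral of a progressively measurable integrand with respect to Brownian motion admits a continuous version and is a local martingale. The admissibility constraint forces $Q^{q,\gamma}_t \in [0,1]$ uniformly in $(t,\omega)$, so this local martingale is uniformly bounded and hence a true (in fact uniformly integrable) martingale. The optional stopping theorem for bounded martingales then yields $\E[Q^{q,\gamma}_\tau] = q$ for every stopping time $\tau$, with the case $\tau = \infty$ handled via the convergence established in part \ref{it:Q-limit}.

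For part \ref{it:Q-limit}, since $Q^{q,\gamma}$ is bounded, Doob's martingale convergence theorem provides an almost-sure limit $Q^{q,\gamma}_\infty$, which inherits the range $[0,1]$. The same boundedness ensures uniform integrability, so $Q^{q,\gamma}_t \to Q^{q,\gamma}_\infty$ in $\Lscr^1$ as well, and hence $\E[Q^{q,\gamma}_\infty] = \lim_{t\to\infty} \E[Q^{q,\gamma}_t] = q$.

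For part \ref{it:Q-absorption}, define the hitting time $\tau_0 \defeq \inf\{t \geq 0 : Q^{q,\gamma}_t = 0\}$; the case of hitting $1$ is symmetric after passing to the nonnegative bounded martingale $1 - Q^{q,\gamma}_t$. I would apply optional stopping to the bounded martingale $Q^{q,\gamma}$ at the pair $\tau_0$ and $\tau_0 + s$ for each fixed $s \geq 0$: the tower property yields $\E[Q^{q,\gamma}_{\tau_0 + s}\,\I{\tau_0 < \infty} \mid \Fscr_{\tau_0}] = Q^{q,\gamma}_{\tau_0}\,\I{\tau_0 < \infty} = 0$. Combined with $Q^{q,\gamma} \geq 0$, this forces $Q^{q,\gamma}_{\tau_0 + s} = 0$ almost surely on $\{\tau_0 < \infty\}$. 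Continuity of paths from part \ref{it:Q-martingale} allows me to aggregate the null exceptional sets along a countable dense collection of $s$-values and conclude $Q^{q,\gamma}_t = 0$ for all $t \geq \tau_0$ almost surely on $\{\tau_0 < \infty\}$. The main obstacle is precisely this last step: the sign constraint plus the martingale property at a random time are what trap the process at an absorbing boundary, and the almost-sure vanishing must be promoted from each fixed $s$ to all $s \geq 0$ simultaneously via path continuity.
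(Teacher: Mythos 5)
Your proposal is correct and follows essentially the same route as the paper: the stochastic integral is a continuous local martingale, boundedness in $[0,1]$ upgrades it to a true (uniformly integrable) martingale with the stated optional-stopping and convergence consequences, and absorption follows by conditioning the nonnegative martingale on $\Fscr_{\tau_0}$ at the hitting time. Your extra care in part (iii) about aggregating the null sets over a countable dense set of times via path continuity is a detail the paper glosses over, but it is not a difference in approach.
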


By the martingale representation theorem, any random variable $Q$ in the risk envelope $\mathcal{Q}(q)$ can be represented as the limit of a quantile process $Q^{q,\gamma}$ for some $\gamma \in \Gamma(q)$, and vice versa:
\begin{lemma}\label{lem:isomorphism}
        For any $q \in [0,1]$, $\Qscr(q) = \left\{ Q_\infty^{q,\gamma} | \gamma \in \Gamma(q) \right\}$.
\end{lemma}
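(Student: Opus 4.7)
The plan is to establish the two set inclusions separately. The forward inclusion is essentially immediate from Proposition~\ref{prop:Q-properties}, while the reverse inclusion is the substantive step and hinges on the martingale representation theorem for Brownian filtrations.

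For $\{ Q_\infty^{q,\gamma} : \gamma \in \Gamma(q) \} \subseteq \Qscr(q)$, I would argue as follows. Given any $\gamma \in \Gamma(q)$, Proposition~\ref{prop:Q-properties}\ref{it:Q-limit} gives that $Q_\infty^{q,\gamma}$ exists almost surely in $[0,1]$ with $\E[Q_\infty^{q,\gamma}] = q$. Being the almost-sure limit of $\Fscr_t$-measurable random variables, $Q_\infty^{q,\gamma}$ is $\Fscr$-measurable, and the uniform bound places it in $\Lscr^\infty$, so $Q_\infty^{q,\gamma} \in \Qscr(q)$.

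For the reverse inclusion $\Qscr(q) \subseteq \{ Q_\infty^{q,\gamma} : \gamma \in \Gamma(q) \}$, given $Q \in \Qscr(q)$, I would define the closed martingale $M_t \defeq \E[Q \mid \Fscr_t]$. Since $\Fscr$ is the augmented natural filtration of a Brownian motion, every $\Fscr$-martingale admits a continuous modification, so $M$ may be taken continuous. The monotonicity of conditional expectation together with $0 \leq Q \leq 1$ yields $0 \leq M_t \leq 1$ for every $t \geq 0$, and L\'evy's upward convergence theorem gives $M_t \to Q$ almost surely as $t \to \infty$ (using that $Q$ is $\Fscr_\infty$-measurable). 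By the Brownian martingale representation theorem, there exists a progressively measurable process $\gamma$ such that
\begin{equation*}
	M_t = M_0 + \int_0^t \gamma_s \, dW_s = q + \int_0^t \gamma_s \, dW_s,
\end{equation*}
where the second equality uses $M_0 = \E Q = q$. This is exactly the SDE defining $Q^{q,\gamma}$, so $Q_t^{q,\gamma} = M_t \in [0,1]$ (confirming $\gamma \in \Gamma(q)$) and $Q_\infty^{q,\gamma} = Q$ almost surely.

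I expect no serious technical obstacle; the only subtlety is to invoke the correct version of the martingale representation theorem, which requires $\mathbb{F}$ to be exactly the Brownian filtration (as assumed in the setup) and produces a progressively measurable $\gamma$ of the regularity required by $\Gamma(q)$. The remaining arguments are routine bookkeeping about bounded continuous martingales on a Brownian filtration and preservation of bounds under conditional expectation.
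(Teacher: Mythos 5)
Your proposal is correct and follows essentially the same route as the paper: the easy inclusion via Proposition~\ref{prop:Q-properties}, and the substantive inclusion by forming the Doob martingale $\E[Q \mid \Fscr_t]$, noting it stays in $[0,1]$, and invoking the Brownian martingale representation theorem. The extra care you take with continuity of the modification and L\'evy's upward convergence is fine but not a different argument.
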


\begin{proof}
        Consider an arbitrary $\tilde{Q} \in \Qscr(q)$, and Doob martingale $(Q_t)_{t \geq 0}$ generated by $\tilde{Q}$, i.e., $Q_t \defeq \E[ \tilde{Q} | \Fscr_t ]$ for each $t$ (we have $\lim_{t \rightarrow \infty} Q_t = \tilde{Q}$ and $Q_0 = \E\tilde{Q} = q$).
        By the martingale representation theorem \citep[Thm.~43 in Chap.~IV]{Protter}, there exists a predictable process $\gamma$ such that $Q_t = Q_0 + \int_{s=0}^t \gamma_s dW_s$.
        Since $Q_t = \E[ \tilde{Q} | \Fscr_t] \in [0,1]$ for any $t$, we have an admissible adversary's policy $\gamma \in \Gamma(q)$ and therefore $\tilde{Q} \in \left\{ Q_\infty^{q,\gamma} | \gamma \in \Gamma(q) \right\}$ and $\Qscr(q) \subseteq \left\{ Q_\infty^{q,\gamma} | \gamma \in \Gamma(q) \right\}$.

        Now consider an arbitrary $\gamma \in \Gamma(q)$ and let $\tilde{Q} \defeq Q_\infty^{q,\gamma}$.
        Trivially, $\E \tilde{Q} = Q_0 = q$ and $\tilde{Q} \in [0,1]$.
        Therefore, $Q_\infty^{q,\gamma} \in \Qscr(q)$, and hence $\Qscr(q) \supseteq \left\{ Q_\infty^{q,\gamma} | \gamma \in \Gamma(q) \right\}$.
\end{proof}

This alternative representation of the risk envelope $\mathcal{Q}(q)$ immediately leads to the following representation of the S-CVaR value, under which the adversary optimizes over the set of stochastic processes instead of the set of random variables:
\begin{theorem}[Martingale representation of the CVaR objective] \label{thm:martingale-representation}
        For any given random variable $C \in \Lscr^1(\Omega, \Fscr, \mathbb{P})$ and $q \in [0,1]$, we have
        \begin{equation} \label{eq:martingale-representation}
                \scvar_q[ C ] = \sup_{\gamma \in \Gamma(q)} \E\left[ C Q_\infty^{q,\gamma} \right].
        \end{equation}
\end{theorem}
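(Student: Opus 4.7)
The plan is to observe that this theorem is essentially an immediate consequence of Lemma~\ref{lem:isomorphism} combined with the definition of S-CVaR. The lemma establishes a set-theoretic identity
\[
\Qscr(q) \;=\; \{\, Q_\infty^{q,\gamma} : \gamma \in \Gamma(q)\,\},
\]
so the supremum defining $\scvar_q[C]$ (originally taken over the risk envelope $\Qscr(q)$) can be rewritten as a supremum indexed by the adversary's admissible policies $\gamma \in \Gamma(q)$.

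More concretely, I would proceed in two short steps. First, I would simply invoke the definition
\[
\scvar_q[C] \;=\; \sup_{Q \in \Qscr(q)} \E[CQ]
\]
from equation~\eqref{eq:scvar}. Second, I would apply Lemma~\ref{lem:isomorphism} to replace the indexing set: every $Q \in \Qscr(q)$ is of the form $Q_\infty^{q,\gamma}$ for some $\gamma \in \Gamma(q)$, and conversely every such $Q_\infty^{q,\gamma}$ lies in $\Qscr(q)$. Hence the two suprema have the same value:
\[
\sup_{Q \in \Qscr(q)} \E[CQ] \;=\; \sup_{\gamma \in \Gamma(q)} \E\bigl[C\,Q_\infty^{q,\gamma}\bigr].
\]
Chaining these two equalities yields~\eqref{eq:martingale-representation}.

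There is no real obstacle here, since all the analytical work has been pushed into Lemma~\ref{lem:isomorphism}, whose proof relies on the martingale representation theorem (for the ``$\subseteq$'' inclusion via Doob's martingale $Q_t = \E[\tilde{Q}\mid\Fscr_t]$) and on Proposition~\ref{prop:Q-properties}\ref{it:Q-limit} (for the ``$\supseteq$'' inclusion). The only thing worth double-checking is that the limit $Q_\infty^{q,\gamma}$ is well-defined and $\Fscr$--measurable so that $\E[C\,Q_\infty^{q,\gamma}]$ makes sense; this is guaranteed by Proposition~\ref{prop:Q-properties}\ref{it:Q-limit}, together with the fact that $|C\,Q_\infty^{q,\gamma}| \le |C| \in \Lscr^1$ ensures integrability. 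Accordingly, I would present the proof as two short displayed equations with a citation to Lemma~\ref{lem:isomorphism}, rather than introducing any additional machinery.
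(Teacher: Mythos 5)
Your proposal is correct and matches the paper's proof exactly: the paper likewise derives the theorem as an immediate consequence of Lemma~\ref{lem:isomorphism} together with the definition of S-CVaR in~\eqref{eq:scvar}. The extra remarks on well-definedness and integrability of $Q_\infty^{q,\gamma}$ are sound but not needed beyond what Proposition~\ref{prop:Q-properties} already provides.
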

\begin{proof}
        The result immediately follows from Lemma \ref{lem:isomorphism} and the definition of S-CVaR \eqref{eq:scvar}.
\end{proof}

To better understand this adversary's optimization problem, consider a discrete-time setting with two periods, six sample paths and $q=1/2$, as illustrated in Figure~\ref{fig:cartoon-martingale}.
The adversary is asked to assign the quantile values on the individual nodes so as to maximize the objective, $\E[CQ_2]$, given a constraint that the resulting quantile process $(Q_0, Q_1, Q_2)$ needs to be a martingale.

\begin{figure}[H]
\centering
\begin{tikzpicture}
	\node[draw, fill=lightgray!50] (s0) at (0,0) 									{$Q_0 = \frac{1}{2}$};
	
	\node[draw, xshift=4cm, yshift=1.5cm, fill=lightgray!33] (s1u) at (s0.center) 		{$Q_1 = \frac{1}{3} $};
	\node[draw, xshift=4cm, yshift=-1.5cm, fill=lightgray!66] (s1d) at (s0.center) 		{$ Q_1 = \frac{2}{3}$};
	
	\node[draw, xshift=4cm, yshift=0.7cm] (s2uu) at (s1u.center) 					{$Q_2 = 0$};
	\node[draw, xshift=4cm, yshift=0cm] (s2um) at (s1u.center) 					{$Q_2 = 0$};
	\node[draw, xshift=4cm, yshift=-0.7cm, fill=lightgray] (s2ud) at (s1u.center) 		{$Q_2 = 1$};
	
	\node[draw, xshift=4cm, yshift=0.7cm] (s2du) at (s1d.center) 					{$Q_2 = 0$};
	\node[draw, xshift=4cm, yshift=0cm, fill=lightgray] (s2dm) at (s1d.center) 			{$Q_2 = 1$};
	\node[draw, xshift=4cm, yshift=-0.7cm, fill=lightgray] (s2dd) at (s1d.center) 		{$Q_2 =1$};
	
	\node[anchor=west] at (s2uu.east) 					{$C = -12$};
	\node[anchor=west] at (s2um.east) 					{$C = -5$};
	\node[anchor=west] at (s2ud.east) 					{$C = +2$};
	
	\node[anchor=west] at (s2du.east) 					{$C = -2$};
	\node[anchor=west] at (s2dm.east) 					{$C = +5$};
	\node[anchor=west] at (s2dd.east) 					{$C = +12$};
	
	\node at (s0 |- 0,-3cm) {$t=0$};
	\node at (s1u |- 0,-3cm) {$t=1$};
	\node at (s2uu |- 0,-3cm) {$t=2$};
	
	\draw[->] (s0.east) -- (s1u.west);
	\draw[->] (s0.east) -- (s1d.west);
	\draw[->] (s1u.east) -- (s2uu.west);
	\draw[->] (s1u.east) -- (s2um.west);
	\draw[->] (s1u.east) -- (s2ud.west);
	\draw[->] (s1d.east) -- (s2du.west);
	\draw[->] (s1d.east) -- (s2dm.west);
	\draw[->] (s1d.east) -- (s2dd.west);
\end{tikzpicture}
\caption{
        An illustration of the discrete-time version of the adversary's martingale in a two-period setting with $q=\frac{1}{2}$.
        The underlying probability space is represented with a tree, in which each node represents the current state of the system, and each branch represents a possible transition from one state to another.
        We assume that six sample paths are equally likely to be realized.
        The label next to each terminal node represents the total cost $C$ incurred along each sample path.
        The value in each node represents the value of the (optimal) adversary's martingale process at each state.
}
\label{fig:cartoon-martingale}
\end{figure}
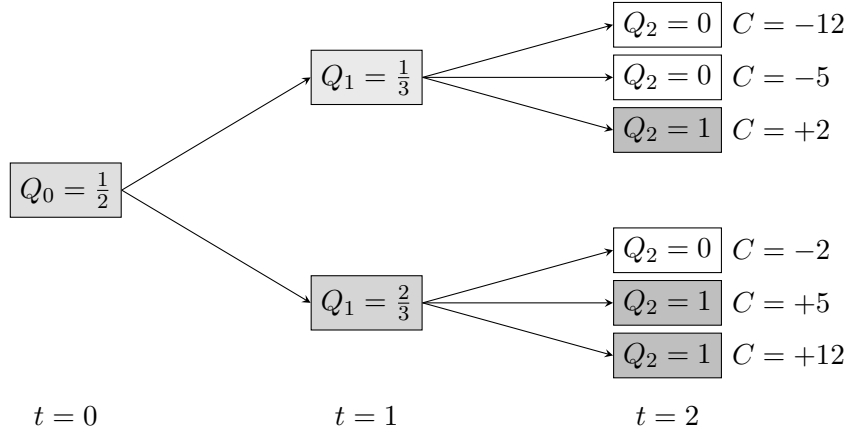

It is easy to see that the adversary's optimal solution is to select the worst $q$-fraction of sample paths, i.e., to assign the quantile value $1$ to the $q$-fraction of terminal nodes with the largest realized cost and assign the quantile value $0$ to the rest of terminal nodes.
The quantile values of the non-terminal nodes are sequentially determined via backward induction (from $t=2$ to $t=0$) by averaging the quantile values of subsequent nodes.
Here, the quantile value at each node means how likely the current sample path ends at one of the terminal nodes selected by the adversary.\footnote{
Consider the optimal martingale $Q^\star$ that solves \eqref{eq:martingale-representation}, and recall that the random variable $Q_\infty^\star(\omega)$ indicates whether the sample path $\omega$ is among the worst $q$-fraction of scenarios.
As a Doob martingale, the quantile process $Q_t^\star = \E[Q_\infty^\star | \Fscr_t ]$ is its running estizmate at time $t$, i.e., the likelihood that the current sample path will end up with one of the worst scenarios.
}

We can alternatively interpret this optimization problem as a sequential decision-making problem that the adversary assigns the quantile values in a forward direction (from $t=0$ to $t=2$).
It begins with assigning the quantile value $q$ to the root node.
Starting from the root node, the adversary is asked to allocate quantile values to the subsequent nodes given a budget constraint that the average of the quantile values assigned to these nodes should equal the quantile value at the current node.
This is repeated until all terminal nodes get assigned their quantile values.
Here, the quantile value at each node means how much fraction of sample paths can be selected thereafter, and the budget constraint guarantees the resulting quantile process be a martingale.

The latter interpretation is closed related to Theorem \ref{thm:martingale-representation}.
Indeed, for example, can consider a binomial tree model as a discrete-time approximation of the underlying Brownian motion, in which every node has two branches representing the events of price moving up or down by a small amount.
In this binomial approximation, the quantile value allocation problem that the adversary solves at each node involves only one decision variable: given the current quantile value $Q_t$, the allocation of next quantile values is of the form $\big\{ (1 + \theta_t) Q_t, (1-\theta_t) Q_t \big\}$, where $\theta_t$ is the real-valued decision variable.
Here, determining the value of $\theta_t$ is effectively equivalent to determining the diffusion rate of $Q_t$, which is in fact the adversary's policy $\gamma_t$ in our formulation.

We highlight that this martingale representation greatly simplifies the adversary's optimization problem and is only available in a continuous-time setting.
In the discrete-time setting investigated in \citet{Pflug16,Chow15,Chapman18,Li20}, for example, the adversary in each period needs to solve a much more complicated optimization that may involve many decision variables (if there are $n$ possible random outcomes at a state, the quantile value allocation problem involves $n-1$ real-valued decision variables).
It may not easy in such cases to solve the adversary's optimization problem even numerically.

\subsection{Risk-sensitive Liquidation as a Continuous-time Stochastic Game}

We now return to the liquidation problem, and define an \emph{outcome function} $J$ as a function of the trader's policy $\pi \in \Pi(x)$ and the adversary's policy $\gamma \in \Gamma(q)$ at each $x \in \mathbb{X}$ and $q \in [0,1]$:
        \begin{equation}
                J(\pi, \gamma; x, q) \defeq \E\left[ C_\infty^{x,\pi} Q_\infty^{q,\gamma} \right].
        \end{equation}
By Theorem \ref{thm:martingale-representation}, the value function \eqref{eq:V} can be formulated as
        \begin{equation}\label{eq:V2}
                V(x,q) = \inf_{\pi \in \Pi(x)} \sup_{\gamma \in \Gamma(q)} J( \pi, \gamma; x, q ).
        \end{equation}
The following theorem characterizes this value function as an equilibrium outcome of a continuous-time stochastic game between the trader and the adversary.

\begin{theorem}[CVaR optimization as a continuous-time stochastic game] \label{thm:cvar-minimax}
        The value function $V(x,q)$ is the outcome at the Nash equilibrium of the zero-sum game in which the trader and the adversary compete over the outcome $J$,
        \begin{equation}
                V(x,q)
                        = \inf_{\pi \in \Pi(x)} \max_{\gamma \in \Gamma(q)} J( \pi, \gamma; x, q )
                        = \max_{\gamma \in \Gamma(q)} \inf_{\pi \in \Pi(x)}  J( \pi, \gamma; x, q ),
        \end{equation}
        where an optimal solution $\gamma \in \Gamma(q)$ must exist for each maximization.
\end{theorem}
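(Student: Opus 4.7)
The plan is to use the identification $\Qscr(q) = \{Q_\infty^{q,\gamma} : \gamma \in \Gamma(q)\}$ from Lemma \ref{lem:isomorphism} to rewrite the problem as a minimax over the bilinear functional $f(\pi, Q) \defeq \E[C_\infty^{x,\pi} Q]$ on $\Pi(x) \times \Qscr(q)$, and then invoke Sion's minimax theorem. Since we already have $V(x,q) = \inf_\pi \sup_Q f(\pi, Q)$ from \eqref{eq:V2}, only two things remain: (a) the nontrivial inequality $\inf \sup \leq \sup \inf$, and (b) attainment of both maxima. Once an optimal $Q^\star \in \Qscr(q)$ is obtained, the martingale representation argument used in the proof of Lemma \ref{lem:isomorphism} produces a corresponding $\gamma^\star \in \Gamma(q)$ via the Doob martingale $Q_t^\star \defeq \E[Q^\star | \Fscr_t]$.

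The key structural observations I would verify are the following. First, $\Qscr(q)$ is convex and weak-$*$ compact as a subset of $\Lscr^\infty(\Omega, \Fscr, \mathbb{P})$: weak-$*$ compactness of the $\Lscr^\infty$ unit ball is Banach--Alaoglu, and $\Qscr(q)$ is weak-$*$ closed because its defining conditions $0 \leq Q \leq 1$ and $\E Q = q$ can be written as intersections of level sets of the weak-$*$ continuous linear functionals $Q \mapsto \E[Qg]$, $g \in \Lscr^1$. Second, for fixed $\pi \in \Pi(x)$ the map $Q \mapsto f(\pi, Q)$ is linear and weak-$*$ continuous, since $C_\infty^{x,\pi} \in \Lscr^2 \subset \Lscr^1$; this gives immediate attainment of the inner sup. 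Third, for fixed $Q \in \Qscr(q)$ (in particular $Q \geq 0$ a.s.) the map $\pi \mapsto f(\pi, Q)$ is convex, because $C_\infty^{x,\pi}$ decomposes as the pathwise convex quadratic $\int_0^\infty \frac{1}{2}\eta \pi_s^2 ds$ plus the term $-\int_0^\infty \sigma X_s^{x,\pi} dW_s$ which is affine in $\pi$ through $X_t^{x,\pi} = x - \int_0^t \pi_s ds$; multiplication by $Q \geq 0$ preserves convexity pathwise, and taking expectation preserves it globally. Convexity of $\Pi(x)$ is direct from \eqref{eq:Pi}.

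Endowing $\Pi(x)$ with the strong $\Lscr^2(\mathbb{T} \times \Omega)$ topology makes $\pi \mapsto f(\pi, Q)$ continuous (using the $\Lscr^2$ convergence properties of $C_t^{x,\pi}$ already established in \S \ref{sec:problem}) and hence lower semicontinuous; Sion's minimax theorem then yields the required equality. Attainment of the outer maximum follows because $Q \mapsto \inf_\pi f(\pi, Q)$, being an infimum of weak-$*$ continuous linear functionals, is concave and weak-$*$ upper semicontinuous on the weak-$*$ compact set $\Qscr(q)$; the corresponding $\gamma^\star$ is recovered via Lemma \ref{lem:isomorphism}. The main obstacle is the semicontinuity hypothesis of Sion's theorem on the $\pi$-side: the genuinely quadratic dependence of $f$ on $\pi$ rules out purely weak topologies, so one must commit to a topology strong enough to preserve continuity of the quadratic cost yet compatible with the convex structure. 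A tidy alternative that avoids this technicality altogether is to invoke a Ky Fan-type minimax theorem requiring compactness and upper semicontinuity only on the concave (adversary) side, which bypasses the need to topologize $\Pi(x)$.
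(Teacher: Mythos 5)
Your overall strategy is the same as the paper's: identify the risk envelope with the set of terminal values of admissible quantile martingales, exploit convexity/compactness of $\Qscr(q)$ and convexity of the cost in the trader's control, and invoke Sion's minimax theorem. The paper works with weak compactness of $\Qscr(q)$ in the reflexive space $\Lscr^2$ rather than weak-$*$ compactness in $\Lscr^\infty$, but that difference is cosmetic here.

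There is, however, one concrete gap in the route you commit to. Endowing $\Pi(x)$ with the strong $\Lscr^2(\mathbb{T}\times\Omega)$ topology on the \emph{rate} process does not make $\pi \mapsto \E[C_\infty^{x,\pi}Q]$ continuous, because the cost contains the term $-\int_0^\infty \sigma X_t^{x,\pi}\,dW_t$, and by the It\^o isometry its $\Lscr^2$ distance is $\E\bigl[\int_0^\infty (X_t - X_t^{(n)})^2\,dt\bigr]$. Over an \emph{infinite} horizon this quantity is not controlled by $\E\bigl[\int_0^\infty(\pi_t-\pi_t^{(n)})^2\,dt\bigr]$: the pointwise bound $|X_t - X_t^{(n)}|^2 \le t\int_0^t(\pi_s-\pi_s^{(n)})^2\,ds$ is not integrable in $t$ on $[0,\infty)$, so two rate processes close in $\Lscr^2$ can generate position processes far apart in $\Lscr^2(\mathbb{T}\times\Omega)$. (The $\Lscr^2$ convergence results of \S 2 concern $C_t^{x,\pi}\to C_\infty^{x,\pi}$ for a \emph{fixed} admissible $\pi$, not continuity in $\pi$.) The paper avoids this by reparameterizing the trader's control as the position process $X\in\Xscr(x)$ and using the Sobolev-type norm $\|X\|_{\Xscr} = \bigl(\E\int_0^\infty \dot X_t^2\,dt\bigr)^{1/2} + \bigl(\E\int_0^\infty X_t^2\,dt\bigr)^{1/2}$, which controls both the impact term and the stochastic-integral term simultaneously; with that norm the continuity argument goes through via Cauchy--Schwarz and the It\^o isometry exactly as you sketch for the quadratic part. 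Your closing remark about a Ky Fan--type minimax theorem requiring compactness and semicontinuity only on the adversary's side would indeed sidestep the whole issue (convexlikeness of $\pi\mapsto f(\pi,Q)$ needs no topology), and is a legitimate alternative to the paper's argument, but as written it is an aside rather than the worked proof; either adopt it as the main route or replace your topology on the trader's side with the paper's combined norm.
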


Theorem \ref{thm:cvar-minimax} states that the minimax solution equals the maximin solution.
This means that the value function is, as a saddle point, the equilibrium outcome at which each player simultaneously plays the best response against the other player's strategy.
This may not always hold true for a general class of risk-sensitive control problems: the convexity of the outcome function with respect to the trader's policy and the convexity of the policy space are required in our proof (Appendix \ref{app:cvar-minimax-proof}) in order to apply Sion's minimax theorem \citep{Sion58}.

The following remark provides an alternative interpretation of this game based on the Girsanov theorem.
\begin{remark} \label{rem:Girsanov} Recall that the dual representation of the CVaR
    objective \eqref{eq:cvar} involves a multiplication with a random variable $Q^\dagger$, which
    corresponds to an absolutely continuous change of measure  with Radon-Nikodym derivative $Q^\dagger$.
    In terms of the adversary's quantile process
    $Q^{q,\gamma}$, this change of measure can by written as $Q^\dagger \defeq Q_\infty^{q, \gamma}/q$.
    By the Girsanov theorem, the price process has a drift under this new measure, and is given by
    \[
      dP = \sigma d\tilde W_t + \sigma \frac{\gamma_t}{Q_t^{q,\gamma}} dt,
    \]
    where $\tilde W$ is a Brownian motion under the new measure.
    In other words, the trader's
    optimization problem given the adversary's policy $\gamma$ is can also be viewed as a risk-neutral
    liquidation problem under altered price dynamics in the presence of
    adversary who can introduce drift into the price process.
  
\end{remark}

\subsection{CVaR Dynamic Programming Principle}

In this subsection, we develop a continuous-time dynamic programming principle.
We first state a proposition that characterizes a temporal structure of the game.

\begin{proposition}[Time decomposition] \label{prop:time-decomposition}
        Fix $x \in \mathbb{X}$ and $q \in [0,1]$.
        For any trader's policy $\pi \in \Pi(x)$, adversary's policy $\gamma \in \Gamma(q)$, and stopping time $\tau$, we have
        \begin{equation}
                J( \pi, \gamma; x, q ) = \E\left[ C_\tau^{x,\pi} Q_\tau^{q,\gamma} + \E\left[ \left. \big( C_\infty^{x,\pi} - C_\tau^{x,\pi} \big) Q_\infty^{q, \gamma} \right| \Fscr_\tau \right] \right].
        \end{equation}
\end{proposition}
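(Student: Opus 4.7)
The plan is to start from the definition $J(\pi,\gamma;x,q) = \E\big[C_\infty^{x,\pi} Q_\infty^{q,\gamma}\big]$, split the total cost as $C_\infty^{x,\pi} = C_\tau^{x,\pi} + (C_\infty^{x,\pi} - C_\tau^{x,\pi})$, and handle the two resulting pieces separately. The integrability needed to split the expectation is immediate: the admissibility conditions in \eqref{eq:Pi} give $C_t^{x,\pi} \stackrel{\Lscr^2}{\rightarrow} C_\infty^{x,\pi}$, so in particular $C_\tau^{x,\pi}$ and $C_\infty^{x,\pi} - C_\tau^{x,\pi}$ both lie in $\Lscr^1$ (using $\E[C_\tau^{x,\pi}{}^2] \leq \E[\sup_{t \geq 0}(C_t^{x,\pi})^2] < \infty$ by Doob / BDG applied to the two components of \eqref{eq:loss}), while $Q_\infty^{q,\gamma} \in [0,1]$ is bounded by Proposition~\ref{prop:Q-properties}\ref{it:Q-martingale}.

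The second piece is easy: since $C_\infty^{x,\pi} - C_\tau^{x,\pi}$ and $Q_\infty^{q,\gamma}$ are both integrable (the former in $\Lscr^2$, the latter in $\Lscr^\infty$), the tower property yields
\begin{equation}
\E\big[(C_\infty^{x,\pi} - C_\tau^{x,\pi}) Q_\infty^{q,\gamma}\big]
= \E\Big[\E\big[(C_\infty^{x,\pi} - C_\tau^{x,\pi}) Q_\infty^{q,\gamma} \,\big|\, \Fscr_\tau\big]\Big].
\end{equation}

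For the first piece, the goal is to show $\E[C_\tau^{x,\pi} Q_\infty^{q,\gamma}] = \E[C_\tau^{x,\pi} Q_\tau^{q,\gamma}]$. Since $C_\tau^{x,\pi}$ is $\Fscr_\tau$--measurable, the tower property gives
\begin{equation}
\E\big[C_\tau^{x,\pi} Q_\infty^{q,\gamma}\big] = \E\Big[ C_\tau^{x,\pi} \, \E\big[Q_\infty^{q,\gamma} \,\big|\, \Fscr_\tau\big] \Big],
\end{equation}
so it suffices to show $\E[Q_\infty^{q,\gamma} \mid \Fscr_\tau] = Q_\tau^{q,\gamma}$ almost surely. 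This is the optional sampling identity for the closed martingale $(Q_t^{q,\gamma})_{t \in [0,\infty]}$: by Proposition~\ref{prop:Q-properties}\ref{it:Q-martingale}--\ref{it:Q-limit}, the process is a uniformly bounded (hence uniformly integrable) martingale with limit $Q_\infty^{q,\gamma}$, so appending $Q_\infty^{q,\gamma}$ at time $\infty$ produces a closed martingale on $[0,\infty]$, and the optional stopping theorem applies to any stopping time $\tau$ taking values in $[0,\infty]$.

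Combining the two pieces gives exactly the claimed decomposition. The only step that requires any care is the optional sampling identity $\E[Q_\infty^{q,\gamma} \mid \Fscr_\tau] = Q_\tau^{q,\gamma}$ at a possibly unbounded stopping time $\tau$, but this is a standard consequence of uniform integrability (guaranteed here by the $[0,1]$ bound on $Q$), so I would expect no substantive obstacle; the rest is bookkeeping with Fubini/tower-property manipulations.
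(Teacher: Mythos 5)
Your proposal is correct and follows essentially the same route as the paper's proof: split $C_\infty^{x,\pi}=C_\tau^{x,\pi}+(C_\infty^{x,\pi}-C_\tau^{x,\pi})$, apply the tower property to each piece, and use the $\Fscr_\tau$-measurability of $C_\tau^{x,\pi}$ together with the (closed-martingale) identity $\E[Q_\infty^{q,\gamma}\,|\,\Fscr_\tau]=Q_\tau^{q,\gamma}$. The only difference is that you spell out the integrability and uniform-integrability justifications that the paper leaves implicit, which is a harmless (and slightly more careful) elaboration of the same argument.
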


\begin{proof}
        Observe that $C_\tau^{x,\pi}$ is $\Fscr_\tau$-measurable and $\E[ Q_\infty^{q,\gamma} | \Fscr_\tau ] = Q_\tau^{q,\gamma}$ since  $Q^{q,\gamma}$ is a martingale.
        Utilizing the tower property, we obtain $J( \pi, \gamma; x, q )
                = \E\left[ C_\infty^{x,\pi} Q_\infty^{q,\gamma} \right]
                = \E\left[ C_\infty^{x,\pi} Q_\infty^{q,\gamma} - (C_\infty^{x,\pi} - C_\tau^{x,\pi}) Q_\infty^{q,\gamma} \right]
                = \E\left[ \E\big( C_\tau^{x,\pi} Q_\infty^{q,\gamma} | \Fscr_\tau \big) + \E\big( (C_\infty^{x,\pi} - C_\tau^{x,\pi}) Q_\infty^{q,\gamma} | \Fscr_\tau \big) \right]
                = \E\left[ C_\tau^{x,\pi} Q_\tau^{q,\gamma} + \E\big( (C_\infty^{x,\pi} - C_\tau^{x,\pi}) Q_\infty^{q,\gamma} | \Fscr_\tau \big) \right]$.
\end{proof}

Note that $C_\infty^{x,\pi} - C_\tau^{x,\pi}$ represents the cost realized after time $\tau$.
Proposition \ref{prop:time-decomposition} states that the final outcome can be decomposed into two terms: one term describes the subgame before time $\tau$, and the other term describes the subgame after time $\tau$.

Observe that, in the subgame after time $\tau$, the trader is liquidating $X_\tau^{x,\pi}$ shares and the adversary is selecting a $Q_\tau^{q,\gamma}$-fraction of the future scenarios realized thereafter.
This time decomposition naturally leads to the following dynamic programming principle:

\begin{theorem}[CVaR dynamic programming principle] \label{thm:cvar-dp}
        For any $x \in \mathbb{X}$, $q \in [0,1]$, and a stopping time $\tau$ with $\E[\tau] < \infty$, we have
        \begin{equation} \label{eq:cvar-dp}
                V(x,q) = \inf_{\pi \in \Pi(x)} \sup_{\gamma \in \Gamma(q)} \E\left[ C_\tau^{x,\pi}
                  Q_\tau^{q,\gamma} + V\left( X_\tau^{x,\pi}, Q_\tau^{q,\gamma} \right) \right].
        \end{equation}
\end{theorem}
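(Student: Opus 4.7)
The plan is to prove both inequalities of~\eqref{eq:cvar-dp} by combining Proposition~\ref{prop:time-decomposition} with a strong Markov/time-homogeneity argument. By that proposition, for any admissible $(\pi,\gamma)$,
\begin{equation}
J(\pi,\gamma;x,q) = \E\!\left[ C_\tau^{x,\pi} Q_\tau^{q,\gamma} + R_\tau(\pi,\gamma) \right], \qquad R_\tau(\pi,\gamma) \defeq \E\!\left[\big(C_\infty^{x,\pi}-C_\tau^{x,\pi}\big) Q_\infty^{q,\gamma} \,\big|\, \Fscr_\tau\right].
\end{equation}
Since $\tilde W_s \defeq W_{\tau+s}-W_\tau$ is a Brownian motion independent of $\Fscr_\tau$ and the state SDEs are time-homogeneous, the post-$\tau$ trajectories restarted from $(X_\tau^{x,\pi},0,Q_\tau^{q,\gamma})$ satisfy identical SDEs driven by $\tilde W$. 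Consequently, conditional on $\Fscr_\tau$, $R_\tau(\pi,\gamma)$ equals $\tilde J(\tilde\pi,\tilde\gamma;X_\tau^{x,\pi},Q_\tau^{q,\gamma})$, where $\tilde J$ is the analogue of $J$ for the shifted filtration and $(\tilde\pi,\tilde\gamma)$ denote the post-$\tau$ restrictions of $(\pi,\gamma)$. Write $\tilde V(x,q)$ for the right-hand side of~\eqref{eq:cvar-dp}.

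For the direction $V(x,q)\leq \tilde V(x,q)$, I fix $\epsilon>0$ and an arbitrary pre-$\tau$ trader's policy $\pi^{(0)}$ on $[0,\tau]$. A measurable selection argument produces an $\Fscr_\tau$-measurable continuation $\pi^{(\tau),\epsilon}$ such that, almost surely,
\[
\sup_{\tilde\gamma}\tilde J\bigl(\pi^{(\tau),\epsilon},\tilde\gamma;X_\tau,Q_\tau\bigr) \leq V(X_\tau,Q_\tau)+\epsilon.
\]
The spliced policy $\pi \defeq \pi^{(0)}\oplus\pi^{(\tau),\epsilon}$ remains in $\Pi(x)$, because splicing at a stopping time preserves progressive measurability and the $L^2$-integrability constraints in~\eqref{eq:Pi} add. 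Applying $\sup_{\tilde\gamma}\E \leq \E\,\esssup_{\tilde\gamma}$ in the post-$\tau$ factor yields
\[
V(x,q) \leq \sup_\gamma J(\pi,\gamma;x,q) \leq \sup_\gamma \E\!\left[C_\tau^{x,\pi}Q_\tau^{q,\gamma}+V(X_\tau^{x,\pi},Q_\tau^{q,\gamma})\right] + \epsilon;
\]
taking the infimum over $\pi^{(0)}$ and sending $\epsilon\to 0$ gives $V\leq\tilde V$.

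For the reverse $\tilde V(x,q)\leq V(x,q)$, I fix $\epsilon>0$ and an $\epsilon$-optimal full policy $\pi^\epsilon\in\Pi(x)$ with $\sup_\gamma J(\pi^\epsilon,\gamma;x,q)\leq V(x,q)+\epsilon$. For any $\gamma\in\Gamma(q)$, the inf-sup definition of $V$ gives $\esssup_{\tilde\gamma}R_\tau(\pi^\epsilon,\gamma)\geq V(X_\tau^{x,\pi^\epsilon},Q_\tau^{q,\gamma})$ a.s.\ (the sup over adversary continuations, with the trader's continuation inherited from $\pi^\epsilon$, dominates the inf-sup); an $\Fscr_\tau$-measurable selection of near-optimal adversary continuations then promotes this pointwise bound to an expectation bound, yielding $\sup_\gamma J(\pi^\epsilon,\gamma;x,q)\geq \tilde V(x,q)-\epsilon$. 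Hence $\tilde V(x,q)\leq V(x,q)+2\epsilon$, and sending $\epsilon\to 0$ closes the gap. The main obstacle I anticipate is the $\Fscr_\tau$-measurable selection of $\epsilon$-optimal continuations for both players so that the spliced policies remain progressively measurable and admissible; this should rest on boundedness of $Q$ in $[0,1]$, the $L^2$-admissibility bounds in~\eqref{eq:Pi}--\eqref{eq:Gamma}, continuity of $V$ in $q$ (Proposition~\ref{prop:scvar-properties}\ref{it:scvar-concavity}), and Proposition~\ref{prop:Q-properties}\ref{it:Q-absorption} to handle the boundary cases $Q_\tau\in\{0,1\}$ where no further adversary diffusion is possible.
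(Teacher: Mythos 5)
Your overall strategy---Proposition~\ref{prop:time-decomposition} combined with time-homogeneous restarting at $\tau$ and splicing of $\epsilon$-optimal continuations---is the same as the paper's, and your second direction ($\tilde V \le V$) essentially reproduces the paper's argument: fix a near-optimal $\pi^\epsilon$, dominate $V(X_\tau^{x,\pi^\epsilon},Q_\tau^{q,\gamma})$ by the conditional outcome against a spliced adversary continuation, and use that the spliced adversary policy is a legitimate element of $\Gamma(q)$ (the adversary moves second in both games, so its dependence on $\pi^\epsilon$ and on the pre-$\tau$ $\gamma$ is harmless). The measurable-selection issue you flag is real but is treated at essentially the same level of informality in the paper itself.

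The genuine gap is in the first direction. Your spliced trader policy $\pi=\pi^{(0)}\oplus\pi^{(\tau),\epsilon}$ is tuned to the realized pair $(X_\tau,Q_\tau^{q,\gamma})$, and $Q_\tau^{q,\gamma}$ depends on the adversary's policy. What you construct is therefore a $\gamma$-indexed family $\{\pi^\gamma\}$ rather than a single element of $\Pi(x)$, and the step $V(x,q)\le\sup_\gamma J(\pi,\gamma;x,q)$ is not licensed by the definition $V=\inf_\pi\sup_\gamma J$: in that order of moves the trader commits to $\pi$ before $\gamma$ is revealed and cannot condition its continuation on $Q_\tau^{q,\gamma}$. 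Making the continuation depend only on $X_\tau$ does not repair this, because the bound $\sup_{\tilde\gamma}\tilde J(\pi^{(\tau),\epsilon},\tilde\gamma;X_\tau,Q_\tau)\le V(X_\tau,Q_\tau)+\epsilon$ must hold at the realized quantile level, and the near-optimal continuation genuinely depends on it. The paper closes exactly this hole with the saddle-point property: since $V=\max_\gamma\inf_\pi J$ by Theorem~\ref{thm:cvar-minimax}, one has $\max_\gamma J(\pi^\gamma,\gamma)\ge\max_\gamma\inf_\pi J(\pi,\gamma)=V$, which is the inequality your argument needs, and the right-hand side of \eqref{eq:cvar-dp} is then bounded below by $V-2\epsilon$ through this chain. (The paper additionally invokes Proposition~\ref{prop:cvar-dp-minimax}, the minimax statement for the stopped objective, to extract a single pre-$\tau$ adversary policy in the reverse direction; your formulation of that direction arguably sidesteps it.) As written, your first direction does not close; it becomes correct once you route it explicitly through Theorem~\ref{thm:cvar-minimax} or an equivalent exchange of $\inf$ and $\sup$.
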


Theorem \ref{thm:cvar-dp} provides the optimality principle in the form of Bellman's equation: at the equilibrium, the outcome after time $\tau$ can be sufficiently described by the subgame equilibrium $V(X_\tau^{x,\pi}, Q_\tau^{q,\gamma})$.
The trader is minimizing the (risk-adjusted) cost up to time $\tau$ in a consideration of his future state $X_\tau$, while the adversary is simultaneously maximizing the (risk-adjusted) cost up to time $\tau$ in a consideration of his future state $Q_\tau$, and the subgame starts at those future states.
Like Theorem \ref{thm:cvar-minimax}, Theorem \ref{thm:cvar-dp} relies on the saddle-point characterization of the equilibrium; i.e., it does not matter which player commits his policy first in the subgame.
The formal proof can be found in Appendix \ref{ssec:proof-cvar-dp}.

\subsection{$(X,Q)$-Markov Policies}

Theorem \ref{thm:cvar-dp} implies that the \emph{augmented state space} $(X_t, Q_t)$ is sufficient to describe the remaining subgame at time $t$. 
Therefore, if a policy is reasonable, its action at time $t$ ($\pi_t$ or $\gamma_t$) should be determined by the current position size $X_t$ and the current quantile value $Q_t$.
To formalize this idea, we introduce time-stationary Markov policies running on this augmented state space:

\begin{definition}[ $(X,Q)$-Markov policies ]
        We say that a trader's policy $\pi$ is an $(X,Q)$-Markov policy coupled with $\gamma$ if
        \begin{equation} \label{eq:markov1}
                \pi_t(\omega) = f\left( X_t^{x,\pi}(\omega), Q_t^{q,\gamma}(\omega) \right), \quad \forall t, \omega
        \end{equation}
        for some measurable function $f:\mathbb{R} \times [0,1] \rightarrow \mathbb{R}$.

        Similarly, an adversary's policy $\gamma$ is an $(X,Q)$-Markov policy coupled with $\pi$ if
        \begin{equation} \label{eq:markov2}
                \gamma_t(\omega) = g\left( X_t^{x,\pi}(\omega), Q_t^{q,\gamma}(\omega) \right)
        \end{equation}
        for some measurable function $g:\mathbb{R} \times [0,1] \rightarrow \mathbb{R}$.

        A policy pair $(\pi, \gamma)$ is a mutually coupled $(X,Q)$-Markov policy pair if both \eqref{eq:markov1} and \eqref{eq:markov2} hold.
\end{definition}

An $(X,Q)$-Markov policy is characterized by a function defined on the augmented state space.
The function $f$ or $g$ specifies the liquidation rate or the quantile diffusion rate when the current position size is $x$ and the current quantile level is $q$.
Recall that we have defined a policy, $\pi$ or $\gamma$, as a continuous-time stochastic process adapted to the Brownian motion, i.e., as a (progressively measurable) mapping $\pi: \mathbb{T} \times \Omega \rightarrow \mathbb{R}$. 
Strictly speaking, the function $f$ or $g$ does not completely determine one player's policy unless the other player's policy is specified.
To avoid this ambiguity, when we describe an $(X,Q)$-Markov policy, we specify the other player's policy that is coupled with it.

To better understand, consider the policies $\pi$ and $\gamma$ that are mutually coupled $(X,Q)$-Markov policies induced by functions $f$ and $g$.
Under $\pi$ and $\gamma$, the system is completely described by the coupled processes $(X_t, Q_t)_{t \geq 0}$ on the augmented state space, whose dynamics are given by the following stochastic differential equations:
\begin{equation}
        dX_t = - f\left( X_t, Q_t \right) dt, \quad dQ_t = g\left( X_t, Q_t \right) dW_t,
\end{equation}
with the initial states $X_0 = x$ and $Q_0 = q$.
Even if $\gamma$ is not an $(X,Q)$-Markov policy, we can still consider an $(X,Q)$-Markov policy $\pi$ that is induced by $f$ and coupled with $\gamma$, and then the position process will be given by $dX_t = - f\left( X_t, Q_t^{q,\gamma} \right) dt$.

Note also that the admissibility of an $(X,Q)$-Markov policy is not always guaranteed: it may fail to satisfy the admissible conditions given in \eqref{eq:Pi} or \eqref{eq:Gamma}, depending on the generating function $f$ or $g$ as well as the other player's policy coupled with it.
See the discussions before and after Theorem \ref{thm:policy-optimality}.


\section{Optimal Solution} \label{sec:opt}

In this section, we utilize the CVaR dynamic programming principle (Theorem \ref{thm:cvar-dp}) to derive a Hamilton--Jacobi--Bellman (HJB) equation for the risk-sensitive optimal execution problem, and identify the functional form of the value function and the optimal policies by solving this HJB equation.
For all propositions/theorems stated in this section, we defer their proofs to Appendix \ref{app:proof-opt}.

\subsection{Minimal CVaR Cost}

We first state Dynkin's formula that we can apply to the right-hand side of \eqref{eq:cvar-dp} in Theorem \ref{thm:cvar-dp} so as to represent it as a time-integration.

\begin{proposition}[Dynkin's formula] \label{prop:cvar-dp-dynkin}
        Consider a function $\widehat{V} : \mathbb{X} \times [0,1] \rightarrow \mathbb{R}$ such
        that \newedit{$\widehat{V} \in \Cscr^{1,2}(\mathbb{X} \times [0,1])$.}
        For any $\pi \in \Pi(x)$, $\gamma \in \Gamma(q)$, and stopping time $\tau$, we have
        \begin{align}
                \E\left[ C_{\tau}^{x,\pi} Q_{\tau}^{q,\gamma} + \widehat{V}\left( X_{\tau}^{x,\pi}, Q_{\tau}^{q,\gamma} \right) \right]
                        &= \widehat{V}\left( x, q \right)
                        \\&+ \E\left[ \int_{t=0}^{\tau} \left\{ \frac{\eta}{2} Q_t^{q,\gamma} \pi_t^2 - \widehat{V}_x\left( X_t^{x,\pi}, Q_t^{q,\gamma} \right) \pi_t \right\} dt \right]
                        \label{eq:dynkin-trader}
                        \\&+ \E\left[ \int_{t=0}^{\tau} \left\{ \frac{1}{2} \widehat{V}_{qq}\left( X_t^{x,\pi}, Q_t^{q,\gamma} \right) \gamma_t^2 - \sigma X_t^{x,\pi} \gamma_t \right\} dt \right],
                        \label{eq:dynkin-adversary}
        \end{align}
        where $\widehat{V}_x(x,q) \defeq \frac{ \partial }{ \partial x }\widehat{V} (x,q)$ and $\widehat{V}_{qq}(x,q) \defeq \frac{ \partial^2 }{ \partial q^2 }\widehat{V} (x,q)$.
\end{proposition}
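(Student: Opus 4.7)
The plan is to apply Itô's formula to the combined process $C_t^{x,\pi} Q_t^{q,\gamma} + \widehat{V}(X_t^{x,\pi}, Q_t^{q,\gamma})$, then integrate up to a localized stopping time and take expectations. First, I note the dynamics: $X_t$ has no diffusion ($dX_t = -\pi_t\,dt$), $Q_t$ is a continuous martingale ($dQ_t = \gamma_t\,dW_t$), and $dC_t = \tfrac{1}{2}\eta\pi_t^2\,dt - \sigma X_t\,dW_t$. Applying the Itô product rule (with cross-variation $d\langle C,Q\rangle_t = -\sigma X_t \gamma_t\,dt$) together with Itô's formula to $\widehat{V}$, and noting that no $\widehat{V}_{xx}$ term appears because $X$ is absolutely continuous, the combined differential decomposes as $d\bigl[C_t Q_t + \widehat{V}(X_t,Q_t)\bigr] = A_t\,dt + B_t\,dW_t$, where the drift
\[
A_t = \tfrac{1}{2}\eta Q_t \pi_t^2 - \widehat{V}_x(X_t,Q_t) \pi_t + \tfrac{1}{2}\widehat{V}_{qq}(X_t,Q_t)\gamma_t^2 - \sigma X_t \gamma_t
\]
is exactly the integrand in the claim, and $B_t = (\widehat{V}_q(X_t,Q_t) + C_t)\gamma_t - \sigma X_t Q_t$.

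Next, for a truncating stopping time $\tau_n \defeq \tau \wedge n \wedge \inf\bigl\{t : |C_t| + \int_0^t \gamma_s^2\,ds \geq n\bigr\}$, the integrand $B_t$ is bounded on $[0,\tau_n]$ and $\E\bigl[\int_0^{\tau_n} B_t^2\,dt\bigr] < \infty$, so the Itô integral $\int_0^{\tau_n} B_t\,dW_t$ is a true martingale with zero expectation. Integrating $A_t\,dt + B_t\,dW_t$ from $0$ to $\tau_n$ and taking expectations therefore yields the desired identity with $\tau$ replaced by $\tau_n$.

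Finally, to take $n \to \infty$, I would apply dominated convergence. On the left-hand side, $\widehat{V}$ is continuous (and hence bounded) on the compact domain $\mathbb{X} \times [0,1]$, while $C_{\tau_n} Q_{\tau_n} \to C_\tau Q_\tau$ in $\Lscr^1$ by Cauchy--Schwarz using $\sup_t |C_t| \in \Lscr^2$ and $|Q_t| \leq 1$. On the right-hand side, the drift integrands are dominated by integrable functions: admissibility ensures $\E\bigl[\int_0^\infty \pi_t^2\,dt\bigr] < \infty$ and $\E\bigl[\int_0^\infty X_t^2\,dt\bigr] < \infty$, while Itô isometry applied to the bounded martingale $Q$ (Proposition \ref{prop:Q-properties}) gives $\E\bigl[\int_0^\infty \gamma_t^2\,dt\bigr] = \E[Q_\infty^2] - q^2 \leq 1$; the cross term $|X_t\gamma_t|$ is controlled by Cauchy--Schwarz on the product of these two $\Lscr^2(dt \otimes d\mathbb{P})$ quantities.

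The main obstacle I anticipate is establishing that $\sup_t |C_t| \in \Lscr^2$, which underlies both the $\Lscr^1$ convergence on the left and the validity of the localization. This is where the stronger admissibility condition $\E\bigl[\bigl(\int_0^\infty \pi_t^2\,dt\bigr)^2\bigr] < \infty$ is essential: the finite-variation part $\int_0^t \tfrac{\eta}{2}\pi_s^2\,ds$ is dominated by the $\Lscr^2$ random variable $\tfrac{\eta}{2}\int_0^\infty \pi_s^2\,ds$, and Doob's maximal inequality applied to the martingale $\int_0^t \sigma X_s\,dW_s$ bounds its supremum in $\Lscr^2$ by $4\sigma^2\E\bigl[\int_0^\infty X_s^2\,ds\bigr] < \infty$. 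Combining these gives the required uniform $\Lscr^2$-bound on $C_t$.
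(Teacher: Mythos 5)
Your proposal is correct and rests on the same core computation as the paper: It\^o's formula for $\widehat{V}(X_t,Q_t)$ (no $\widehat{V}_{xx}$ term since $X$ is of finite variation) plus the It\^o product rule for $C_tQ_t$ with cross-variation $-\sigma X_t\gamma_t\,dt$, yielding exactly the drift $A_t$ in the statement. Where you diverge is in how the stochastic-integral part is killed. The paper splits it into three local martingales $M^{(1)},M^{(2)},M^{(3)}$, shows each is $\Lscr^2$-bounded uniformly in $t$ (using $|\widehat V_q|$ bounded on the compact domain, $\E\int X_t^2\,dt<\infty$, and $\E\int\gamma_t^2\,dt\le 1$), and then applies optional stopping directly at $\tau$. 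You instead localize with $\tau_n$, observe the integrand is bounded up to $\tau_n$ so the stopped integral is a true martingale, and pass to the limit by dominated convergence, which forces you to establish $\sup_t|C_t|\in\Lscr^2$ via Doob's maximal inequality and the fourth-moment condition on $\pi$. Your route is slightly longer but arguably more robust: your $B_t$ correctly contains the term $C_t\gamma_t$ coming from $C_t\,dQ_t$ (the paper's displayed product-rule formula writes $X_s\gamma_s$ there, which looks like a typo), and it is precisely this term that cannot be handled by the paper's simple $\Lscr^2$-boundedness argument without the $\sup_t|C_t|$ control you develop; localization absorbs it for free. One small caveat common to both arguments: for a general stopping time $\tau$ the term $\E\bigl[\int_0^\tau|\widehat V_x\,\pi_t|\,dt\bigr]$ is not obviously finite from $\E\int_0^\infty\pi_t^2\,dt<\infty$ alone (square-integrability on an infinite horizon does not give integrability); in the paper's applications $\E[\tau]<\infty$, and Cauchy--Schwarz then closes this, so you should state that restriction or note it when invoking dominated convergence on that term.
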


For the sake of argument, suppose that the value function $V$ can be plugged into Proposition \ref{prop:cvar-dp-dynkin} in the place of $\widehat{V}$.
When considering an infinitesimal time interval (i.e., $\tau = dt$), we have
\begin{align}
        V(x,q) &\stackrel{\text{Thm \ref{thm:cvar-dp}}}{=} \inf_{\pi \in \Pi(x)} \sup_{\gamma \in \Gamma(q)} \E\left[ C_{\tau}^{x,\pi} Q_{\tau}^{q,\gamma} + V\left( X_{\tau}^{x,\pi}, Q_{\tau}^{q,\gamma} \right) \right]
                \\&\stackrel{\text{Prop \ref{prop:cvar-dp-dynkin}}}{=} \inf_{\pi \in \Pi(x)} \sup_{\gamma \in \Gamma(q)}\left\{
                        V(x,q)
                        + \left( \frac{\eta}{2} q \pi_0^2 - V_x\left( x, q \right) \pi_0 \right) dt
                        + \left( \frac{1}{2} V_{qq}\left( x, q \right) \gamma_0^2 - \sigma x \gamma_0 \right) dt
                        \right\}
                \\&= V(x,q)
                        + \inf_{\pi \in \Pi(x)} \left\{ \frac{\eta}{2} q \pi_0^2 - V_x\left( x, q \right) \pi_0 \right\} dt
                        + \sup_{\gamma \in \Gamma(q)} \left\{ \frac{1}{2} V_{qq}\left( x, q \right) \gamma_0^2 - \sigma x \gamma_0 \right\} dt.
\end{align}
Observe that the terms associated with the trader's policy $\pi$ and the terms associated with the adversary's policy $\gamma$ can be separated.
This informal argument suggests that the value function $V$ has to satisfy the following HJB equation:
\begin{equation}
        \min_{v \in \mathbb{R}} \left\{ \frac{\eta}{2} q v^2 - V_x\left( x, q \right) v \right\}
        +
        \max_{w \in \mathbb{R}} \left\{ \frac{1}{2} V_{qq}\left( x, q \right) w^2 - \sigma x w \right\}
        = 0.
\end{equation}
In the following theorem, we make this argument more formally and identify the sufficient conditions to be the optimal value function.

\begin{theorem}[Verification theorem] \label{thm:verification}
        Consider a function \newedit{$V^\star : \mathbb{X} \times [0,1] \rightarrow \mathbb{R}_+$} satisfying
        \begin{enumerate}[label=(\roman*)]
                \item \label{it:verification-HJB} \newedit{$V^\star \in \Cscr^{1,2}(\mathbb{X} \times
                    (0,1))$}, and, for any \newedit{$x \in \mathbb{X}$}
                  and $q \in (0,1)$, it satisfies
                \begin{equation} \label{eq:HJB}
                        \min_{v \in \mathbb{R}} \left\{ \frac{\eta}{2} q v^2 - V^\star_x\left( x, q \right) v \right\}
                        +
                        \max_{w \in \mathbb{R}} \left\{ \frac{1}{2} V^\star_{qq}\left( x, q \right) w^2 - \sigma x w \right\}
                        = 0,
                \end{equation}
                where $V^\star_x \defeq \frac{\partial V^\star}{\partial x}$ and $V^\star_{qq} \defeq \frac{ \partial^2 V^\star}{\partial q^2}$.

                \item \label{it:verification-boundary}
                                $V^\star(0, q) = 0$ for all $q \in [0,1]$, and $V^\star(x, 0) = V^\star(x,1) = 0$ for all \newedit{$x \in \mathbb{X}$}.

                \item \label{it:verification-symmetry}
                                $V^\star(x, q) = V^\star(-x, q)$ for all \newedit{$x \in \mathbb{X}$} and $q \in [0,1]$, and $V^\star(x,q)$ is increasing in \newedit{$x$ on $[0, M]$}.

                \item \label{it:verification-f-monotone}
                        $\frac{V^\star_x(x,q)}{q}$ is increasing in \newedit{$x$ on $\mathbb{X}$} for each $q \in (0,1)$, and decreasing in $q$ on $(0,1)$ for each \newedit{$x \in \mathbb{X}$}.

                \item  \label{it:verification-g-monotone}
                        $\frac{x}{V^\star_{qq}(x,q)}$ is decreasing in \newedit{$x$ on $[0, M]$}.
        \end{enumerate}
        Then, $V(x,q) = V^\star(x,q)$ for all $x \in \mathbb{X}$ and $q \in [0,1]$.
\end{theorem}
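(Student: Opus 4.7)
The plan is a standard verification argument adapted to the minimax structure established in Theorem~\ref{thm:cvar-minimax}. I will derive candidate saddle-point feedback rules from the HJB equation~(i), apply Dynkin's formula (Proposition~\ref{prop:cvar-dp-dynkin}) to $V^\star$ along arbitrary admissible policies in order to sign an integrand via the HJB, and then pass to the limit in the stopping time using the boundary conditions~(ii) and the integrability built into $\Pi(x)$ and $\Gamma(q)$. Pointwise optimization in \eqref{eq:HJB} gives the unique stationary points $f^\star(x,q) = V^\star_x(x,q)/(\eta q)$ (the inner minimum) and $g^\star(x,q) = \sigma x / V^\star_{qq}(x,q)$ (the inner maximum, well defined only if $V^\star_{qq}(x,q) < 0$, which is forced by the existence of the maximum and is consistent with the concavity of S-CVaR in $q$ per Proposition~\ref{prop:scvar-properties}\ref{it:scvar-concavity}). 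Conditions~(iv) and~(v) provide the monotonicity that makes the coupled stochastic differential equations well posed.

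To prove $V(x,q) \geq V^\star(x,q)$, fix an arbitrary $\pi \in \Pi(x)$ and let $\gamma^\star$ be the adversary's $(X,Q)$-Markov policy generated by $g^\star$ and coupled to $\pi$. Applying Dynkin's formula with a localizing stopping time $\tau_n$ (for instance, the first exit of $(X_t,Q_t)$ from a compact subdomain of the interior of $\mathbb{X}\times(0,1)$, with $\tau_n \nearrow \infty$) yields the identity of Proposition~\ref{prop:cvar-dp-dynkin}. By construction, the adversary integrand \eqref{eq:dynkin-adversary} attains the $\max_w$ in the HJB, while the trader integrand \eqref{eq:dynkin-trader} is bounded below by the corresponding $\min_v$; summing and invoking \eqref{eq:HJB} shows the total integrand is almost surely nonnegative. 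Monotone convergence on the right, combined with dominated convergence on the left using (a) $V^\star$ bounded on the compact $\mathbb{X} \times [0,1]$, (b) the boundary condition $V^\star(0,q)=0$ together with $X_t \to 0$ in $\Lscr^2$ (from $\pi \in \Pi(x)$), and (c) $\E[\,|C_\infty^{x,\pi}Q_\infty^{q,\gamma^\star}|\,]<\infty$ by Cauchy--Schwarz, yields $J(\pi,\gamma^\star) \geq V^\star(x,q)$. Since $\pi$ was arbitrary, $V(x,q) = \inf_\pi \sup_\gamma J \geq V^\star(x,q)$. The reverse inequality is the symmetric argument: for any $\gamma \in \Gamma(q)$, couple $f^\star$ with $\gamma$ to define a trader policy $\pi^\star$; the HJB now forces a nonpositive integrand, and the identical limit analysis gives $J(\pi^\star,\gamma) \leq V^\star(x,q)$ and hence $V \leq V^\star$.

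The main obstacle is verifying admissibility of the coupled feedback policies in order to legitimately take $\gamma^\star \in \Gamma(q)$ and $\pi^\star \in \Pi(x)$. For the adversary, one must prevent escape of $Q_t$ from $[0,1]$: the boundary condition $V^\star(x,0) = V^\star(x,1) = 0$ combined with concavity of $V^\star$ in $q$ should force $V^\star_{qq}$ to blow up toward $-\infty$ near the quantile boundaries, driving $g^\star \to 0$ there and acting as a soft reflection; condition~(v) is what pins down the $x$-dependence needed for this to survive coupling with arbitrary $\pi$. For the trader, conditions~(iii) and~(iv) make $f^\star$ an odd, monotone feedback (same sign as $x$), so the position $X_t^{x,\pi^\star}$ is monotonically driven toward zero; one must then verify the two integrability conditions in \eqref{eq:Pi}, which is the most delicate step and will ultimately rely on the explicit form of $V^\star$ derived in the sequel. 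A secondary technical point is that Dynkin's formula only applies where $V^\star \in \Cscr^{1,2}$, namely on the interior in $q$; this forces the use of the localizing sequence $\tau_n$ above, with a careful passage to the limit. These admissibility concerns are precisely what the authors flag in the introduction when they announce that they will construct approximating sequences of feasible policies rather than work with $f^\star$ and $g^\star$ directly.
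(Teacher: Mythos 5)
Your outline is the textbook verification argument, and its first half (deriving $f^\star,g^\star$ from the pointwise optimizations in \eqref{eq:HJB}, signing the Dynkin integrand, and reading off the two inequalities) matches the skeleton of the paper's proof. But the proof as written has a genuine gap at exactly the point you flag as ``the main obstacle'': both directions of your argument require that the pure feedback policies --- $\gamma^\star$ induced by $g^\star$ and coupled with an arbitrary $\pi\in\Pi(x)$, and $\pi^\star$ induced by $f^\star$ and coupled with an arbitrary $\gamma\in\Gamma(q)$ --- actually lie in $\Gamma(q)$ and $\Pi(x)$. Without that, $\sup_{\gamma\in\Gamma(q)}J(\pi,\gamma)\geq J(\pi,\gamma^\star)$ and $\inf_{\pi\in\Pi(x)}J(\pi,\gamma)\leq J(\pi^\star,\gamma)$ do not follow. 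The paper states explicitly that this admissibility cannot be established: $f^\star(x,q)\to\infty$ as $q\searrow 0$ (threatening $\E[(\int\pi_t^2dt)^2]<\infty$), $f^\star(x,q)\to 0$ as $q\nearrow 1$ (threatening $\E\int X_t^2dt<\infty$), and $g^\star(x,q)\to-\infty$ as $x\searrow 0$. Your heuristic that the boundary conditions force a ``soft reflection'' of $Q_t$ does not address the $x\searrow 0$ blow-up of $g^\star$ at all, and ``will ultimately rely on the explicit form of $V^\star$'' is a deferral, not an argument --- the verification theorem must be proved from conditions (i)--(v) alone, before $V^\star$ is exhibited.

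The paper's resolution is structural, not a refinement of your limit passage. It replaces $(f^\star,g^\star)$ by truncations $(f^{(n)},g^{(n)})$ as in \eqref{eq:fngn}, equal to the candidates on $\Ascr_n=\{|x|>1/n,\ 1/n<q<1-1/n\}$ and equal to a harmless exponential schedule (resp.\ zero diffusion) outside; conditions (iv)--(v) then give Lipschitz continuity and two-sided bounds on $\Ascr_n$, which is what makes the coupled SDEs well posed, makes $\tau_n$ (the exit time of $\Ascr_n$) have finite mean, and delivers the integrability conditions in \eqref{eq:Pi} and \eqref{eq:Gamma}. Crucially, Dynkin's formula is applied only up to $\tau_n$ --- never to $\infty$ with the pure policies --- and the contribution after $\tau_n$ is handled by the dynamic programming principle (Theorem~\ref{thm:cvar-dp}, with the minimax interchange of Proposition~\ref{prop:cvar-dp-minimax}) together with the a priori bound $0\leq V(x,q)\leq V^{\textsc{exp}}(x,q)$ from Proposition~\ref{prop:V-properties}, which forces $\E[V(X_{\tau_n},Q_{\tau_n})]\to 0$ because $(X_{\tau_n},Q_{\tau_n})$ sits on the boundary of $\Ascr_n$. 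Your dominated-convergence step at $\tau_n\nearrow\infty$ is also shakier than you suggest: $X_{\tau_n}$ need not tend to $0$ (the exit can occur through the $q$-boundary), so one really does need the uniform-in-$x$ vanishing of $V$ near $q\in\{0,1\}$, which is exactly what the $q^{1/3}\kappa(q)^{2/3}$ bound supplies. To repair your proof you would need to carry out this truncation-and-DP construction rather than defer it.
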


In Theorem \ref{thm:verification}, conditions \ref{it:verification-HJB} and \ref{it:verification-boundary} specify the HJB equation and the boundary conditions that the value function has to satisfy.
In fact, the value function $V$ can be uniquely determined by these two conditions.
However, the other conditions, \ref{it:verification-symmetry}--\ref{it:verification-g-monotone}, are also necessary to show that this value is indeed achievable within our definition of the admissible policies, $\Pi(x)$ and $\Gamma(q)$.
More specifically, condition \ref{it:verification-symmetry} asserts the symmetry and the monotonicity of the value function with respect to position size $x$, and conditions \ref{it:verification-f-monotone} and \ref{it:verification-g-monotone} assert certain behaviors of the optimal policies that are implied from the HJB equation (e.g., the optimal liquidation strategy should trade more aggressively when liquidating a larger quantity).
While these properties of the value function are natural given the problem structure, they serve as regularity conditions in our proof to resolve technical issues arising in the convergence analysis.

Observe that the optimization terms in the HJB equation \eqref{eq:HJB} are separated and each of them is a trivial quadratic optimization problem.
By solving these optimizations explicitly, the HJB equation can be translated into the following partial differential equation:
\begin{equation}
        V_x^2(x,q) \times V_{qq}(x,q) = - \sigma^2 \eta \times x^2 \times q.
\end{equation}
It turns out that this differential equation with the boundary condition \ref{it:verification-boundary} admits a separable solution.
The value function $V(x,q)$ can be represented as a product of a function of $x$ and a function of $q$, as identified in the following theorem.

\begin{theorem}[Value function] \label{thm:value-function}
  Consider a function \newedit{$V^\star: \mathbb{X} \times [0,1] \rightarrow \mathbb{R}$} defined as
        \begin{equation} \label{eq:V-opt}
                V^\star(x,q) \defeq (3/4)^{\frac{2}{3}} \times \sigma^{\frac{2}{3}} \eta^{\frac{1}{3}} \times |x|^{\frac{4}{3}} \times \varphi(q),
        \end{equation}
        where $\varphi : [0,1] \rightarrow \mathbb{R}_+$ is the solution in $\Cscr((0,1))$ to the following differential equation:
        \begin{equation} \label{eq:diff-eq}
                \varphi^2(q) \times \varphi''(q) = -q, ~ \forall q \in (0,1)
                , \quad \text{and} \quad
                \varphi(0) = \varphi(1) = 0.
        \end{equation}
        Then, $V^\star$ satisfies the conditions of Theorem \ref{thm:verification}, and hence $V(x,q) = V^\star(x,q)$ for all \newedit{$x \in \mathbb{X}$} and $q \in [0,1]$.
\end{theorem}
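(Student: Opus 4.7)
\textbf{Proof Plan for Theorem \ref{thm:value-function}.}

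The goal is to verify that the candidate $V^\star(x,q) = (3/4)^{2/3}\sigma^{2/3}\eta^{1/3}|x|^{4/3}\varphi(q)$ satisfies every hypothesis of the verification theorem (Theorem \ref{thm:verification}). The plan is a direct substitution into the HJB equation, reduction to the scalar ODE for $\varphi$, a qualitative analysis of that ODE, and a condition-by-condition check. Since the HJB equation decouples into independent quadratic optimizations in $v$ and $w$, I would first solve these explicitly: the optimal trader control is $v^\star = V^\star_x/(\eta q)$ with minimum value $-(V^\star_x)^2/(2\eta q)$, and the optimal adversary control is $w^\star = \sigma x/V^\star_{qq}$ (valid because $V^\star_{qq}<0$ will be verified) with maximum value $-\sigma^2 x^2/(2V^\star_{qq})$. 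Setting the sum to zero reduces \eqref{eq:HJB} to the single PDE
\begin{equation}
\bigl(V^\star_x(x,q)\bigr)^2\,V^\star_{qq}(x,q) \;=\; -\sigma^2\eta\,x^2\,q.
\end{equation}

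Next, I would plug the separable ansatz $V^\star=C|x|^{4/3}\varphi(q)$ with $C\defeq(3/4)^{2/3}\sigma^{2/3}\eta^{1/3}$ into this PDE. For $x>0$ one has $V^\star_x = (4C/3)x^{1/3}\varphi(q)$ and $V^\star_{qq}=Cx^{4/3}\varphi''(q)$, so the left side becomes $(16C^3/9)\,x^2\,\varphi^2(q)\varphi''(q)$; the symmetric case $x<0$ gives the same expression. The identity $C^3 = (9/16)\sigma^2\eta$ holds by construction, and the PDE collapses to exactly the Emden--Fowler equation $\varphi^2\varphi'' = -q$ with the prescribed boundary data $\varphi(0)=\varphi(1)=0$. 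Smoothness of $V^\star$ in $\Cscr^{1,2}(\mathbb{X}\times(0,1))$ follows once $\varphi\in\Cscr^2(0,1)$ is established.

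I would then extract the qualitative features of $\varphi$ needed for conditions (ii)--(v). From $\varphi^2\varphi''=-q\leq 0$, wherever $\varphi>0$ the function $\varphi$ is concave; together with $\varphi(0)=\varphi(1)=0$, a standard maximum-principle-type argument gives $\varphi>0$ on $(0,1)$ and hence $\varphi''<0$ there. With $\varphi>0$, condition (ii) is immediate, and condition (iii) follows because $|x|^{4/3}$ is even and strictly increasing on $[0,M]$. For (iv), $V^\star_x/q$ equals (up to a positive constant) $x^{1/3}\,\varphi(q)/q$ on $x>0$; it is increasing in $x$ since $x^{1/3}$ is, and decreasing in $q$ since $\varphi(q)/q$ is the slope of the chord from the origin to $(q,\varphi(q))$ of a concave function vanishing at $0$, hence non-increasing in $q$. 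Condition (v) follows in the same separable manner from the explicit formula for $V^\star_{qq}$.

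The main obstacle is the analysis of the boundary-value problem \eqref{eq:diff-eq}. It is a nonlinear second-order ODE that is singular at $q=1$: since $\varphi(1)=0$ but $-q=-1\ne 0$ there, $\varphi''$ must blow up as $q\nearrow 1$, so standard Picard theory does not apply uniformly. To establish existence, uniqueness, positivity, and the boundary values I would use a shooting argument from $q=0$: for each candidate $\varphi'(0)=\alpha>0$, local existence gives a concave solution on some maximal interval $[0,q_\alpha)$; monotonicity of the hitting time $q_\alpha$ in $\alpha$ (which can be seen from a rescaling $\varphi_\alpha(q)=\alpha\varphi_1(q/\alpha^{2/3})$ coming from the homogeneity of the equation) then pins down the unique $\alpha^\star$ with $q_{\alpha^\star}=1$. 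The same rescaling identity should also give the integrability and regularity estimates on $\varphi''$ near $q=1$ needed to conclude $\varphi\in\Cscr((0,1))\cap\Cscr^2((0,1))$ with continuous extension to $[0,1]$, after which all conditions of Theorem \ref{thm:verification} are satisfied and the conclusion $V=V^\star$ follows.
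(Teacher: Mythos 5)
Your verification portion is essentially the paper's proof. You reduce the HJB equation \eqref{eq:HJB} to the single PDE $(V^\star_x)^2 V^\star_{qq} = -\sigma^2\eta\, x^2 q$, substitute the separable ansatz to land on the Emden--Fowler equation, and check conditions \ref{it:verification-boundary}--\ref{it:verification-g-monotone} using positivity and concavity of $\varphi$; in particular your chord-slope argument that $\varphi(q)/q$ is non-increasing for a concave $\varphi$ with $\varphi(0)=0$ is exactly the argument used for condition \ref{it:verification-f-monotone}. This part is correct and complete at the same level of detail as the paper.

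Where you diverge is in how existence and regularity of $\varphi$ are obtained, and there your argument has a genuine gap. The paper does not prove existence inside this theorem at all: the theorem takes $\varphi$ as given, and existence together with all regularity is established separately by the explicit Bessel-function parametrization of Proposition~\ref{prop:Emden-Fowler}. Your proposed shooting argument from $q=0$ fails for two concrete reasons. First, the rescaling $\varphi_\alpha(q)=\alpha\varphi_1(q/\alpha^{2/3})$ is not a symmetry of $\varphi^2\varphi''=-q$: setting $\psi(q)=\alpha\varphi_1(q/\alpha^{2/3})$ gives $\psi^2\psi''=-\alpha q$, not $-q$. The genuine scaling symmetry is $\varphi\mapsto\lambda\varphi(\cdot/\lambda)$, but this leaves the initial slope invariant, so it cannot be used to sweep the shooting parameter. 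Second, and more fundamentally, the ODE is singular at $q=0$ because $\varphi(0)=0$ appears in the denominator of $\varphi''=-q/\varphi^2$; if $\varphi(q)\sim\alpha q$ with finite $\alpha$, then $\varphi''\sim -1/(\alpha^2 q)$ is not integrable at $0$, contradicting finiteness of $\varphi'(0)$. Indeed, the asymptotics of the parametric solution (the branch $\theta\to\infty$ in \eqref{eq:Emden-Fowler-left}) show $\varphi(q)/q\to\infty$ as $q\searrow 0$, so there is no finite shooting datum $\varphi'(0)=\alpha$ at all. If you want to include existence in your proof, you should either exhibit the explicit parametric solution and verify the boundary/matching conditions as the paper does, or shoot from an interior point (or in a desingularized variable), with monotone dependence established by a correct comparison argument rather than the stated rescaling.
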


\begin{figure}[H]
\centering
\begin{subfigure}{.5\textwidth}
  \centering
  \includegraphics[width=\linewidth]{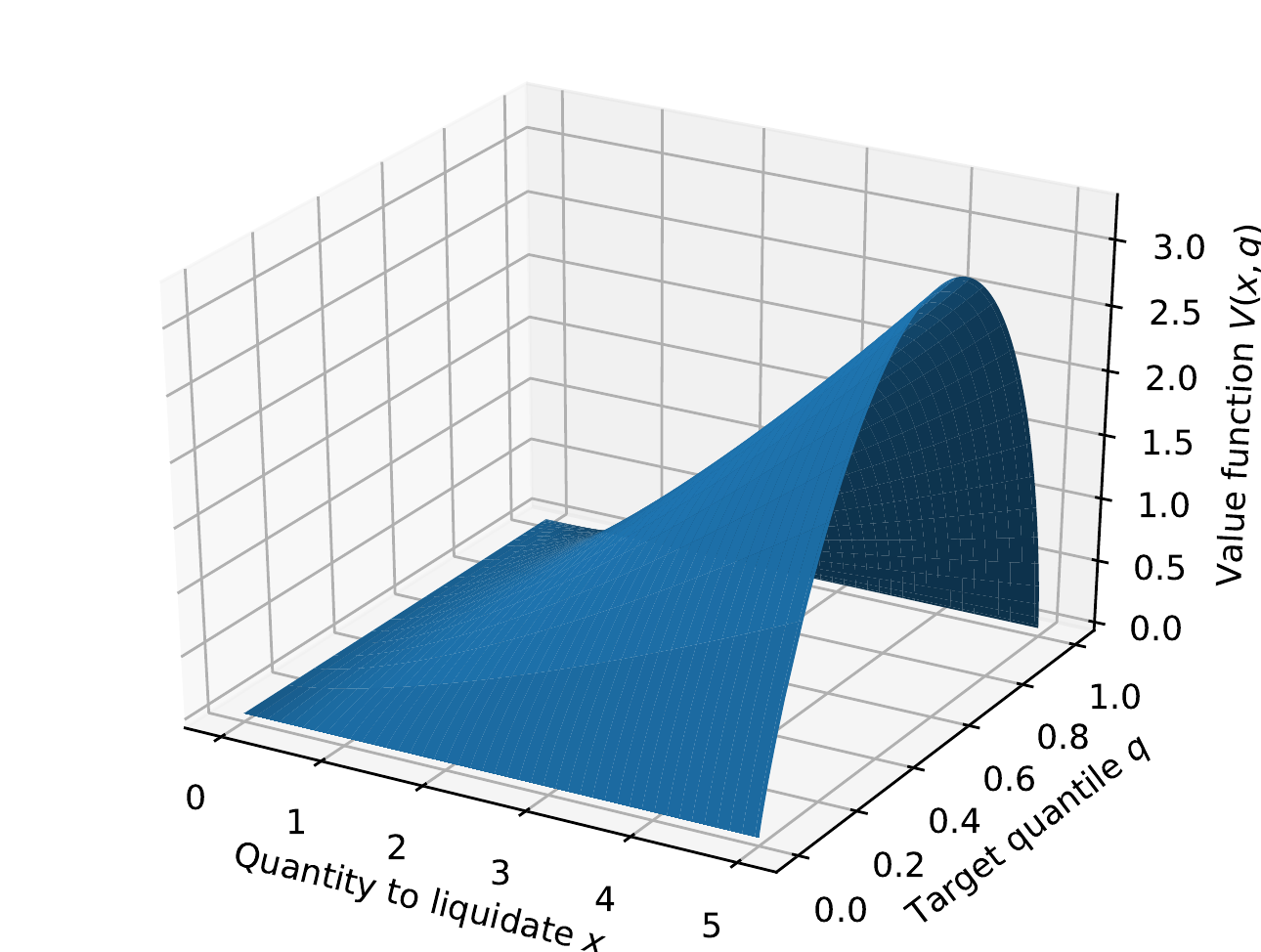}
\end{subfigure}%
\begin{subfigure}{.5\textwidth}
  \centering
  \includegraphics[width=0.9\linewidth]{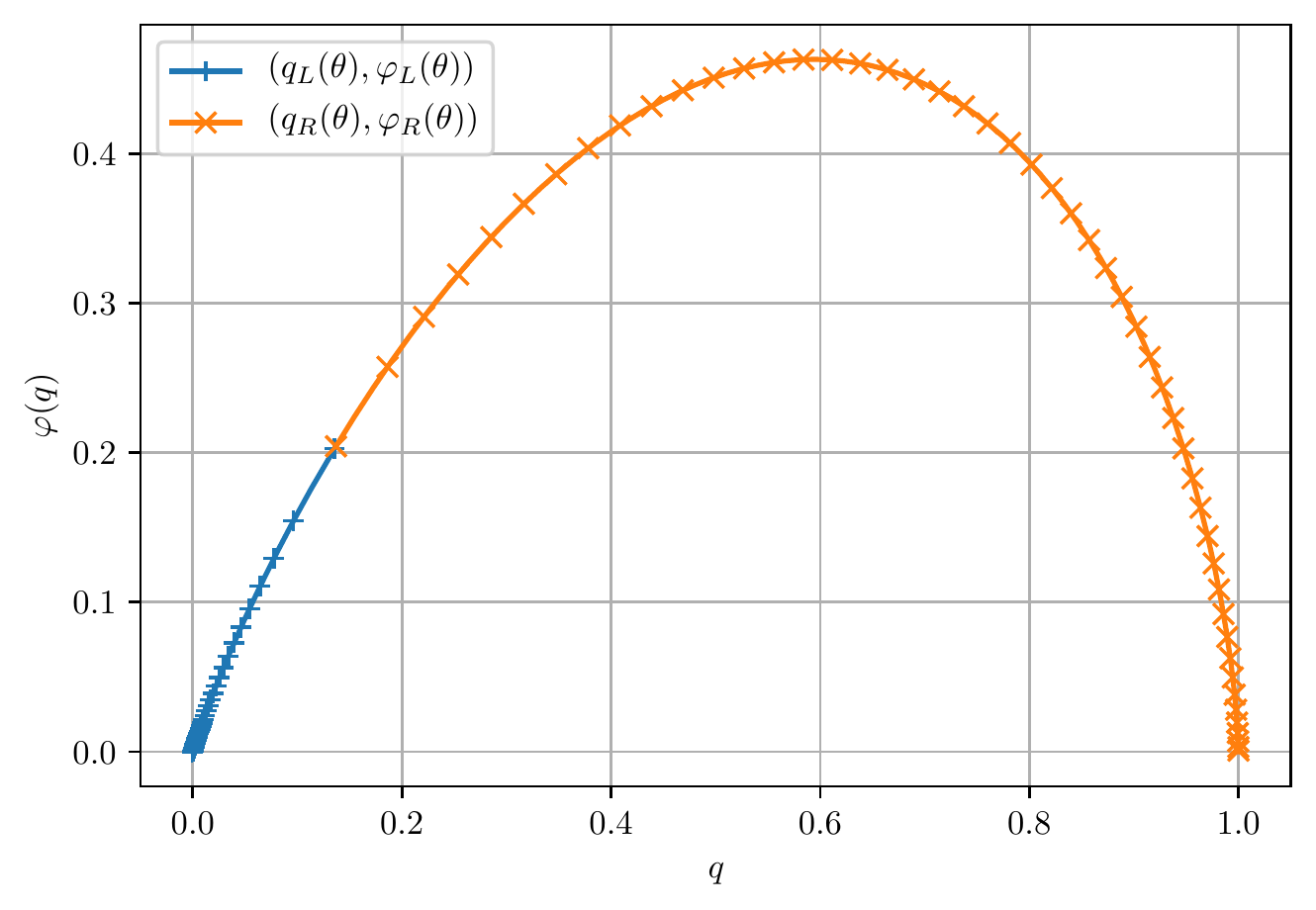}
\end{subfigure}
\caption{(Left) Value function $V(x,q)$ that represents the minimal $\scvar$ loss (i.e., $\scvar_q[C_\infty]$) at a quantile level $q$ when liquidating $x$ units of an asset given that $\sigma = \eta = 1$.
        The closed-form expression is provided in \eqref{eq:V-opt}.
        (Right) Function $\varphi(q)$ that is derived in Proposition \ref{prop:Emden-Fowler}.
        The curve $\{ (q, \varphi(q)) \}_{q \in [0,1]}$ is represented in parametric form, in which the left part of the curve is represented as $\left\{  \left( q_L(\theta), \varphi_L(\theta) \right) \right\}_{\theta \in [0, \infty]}$, and  the right part of the curve is represented as $\left\{ \left( q_R(\theta), \varphi_R(\theta) \right) \right\}_{\theta \in [0, \bar{\theta}]}$.
        }
\label{f-empirical}
\end{figure}

The differential equation of form \eqref{eq:diff-eq} is known as the \emph{Emden-Fowler equation} \citep[2.3.27]{ODEhandbook}, and its solution can be expressed in parametric form as follows:

\begin{proposition}[Parametric representation of $\varphi(q)$] \label{prop:Emden-Fowler}
        The function $\varphi(q)$ can be represented in a parametric form that admits a closed-form expression.
        Define
        \begin{equation}
                Z_L(\theta) \defeq - \frac{2}{\pi} K_{1/3}(\theta)
                , \quad
                Z_R(\theta) \defeq \sqrt{3} J_{1/3}(\theta) - Y_{1/3}(\theta)
        \end{equation}
        where $J$ and $Y$ are the first and second kinds of Bessel functions, and $K$ is the second kind of modified Bessel function.
        Further define
        \begin{equation}
                \bar{\theta} \defeq \inf\{ \theta > 0 :  Z_R(\theta) = 0 \} \approx 2.3834
                , \quad
                a \defeq \frac{1}{ \bar{\theta}^{\frac{4}{3}} \left( Z_R'(\bar{\theta}) \right)^2 } \approx 0.2910
                , \quad
                b \defeq a \left( 9/2 \right)^{\frac{1}{3}} \approx 0.1763.
        \end{equation}
        Then, the curve $\{ (q, \varphi(q)) \}_{q \in [0,1]}$ is parameterized as
        \begin{equation}
                \{ (q, \varphi(q)) \}_{q \in [0,1]}
                        = \left\{  \left( q_L(\theta), \varphi_L(\theta) \right) \right\}_{\theta \in [0, \infty]} \bigcup \left\{ \left( q_R(\theta), \varphi_R(\theta) \right) \right\}_{\theta \in [0, \bar{\theta}]},
        \end{equation}
        where\footnote{The values of $Z_L(\theta)$ and $Z_R(\theta)$ are not defined when $\theta = 0$. However, the limit points (e.g., $\lim_{\theta \searrow 0} (q_L(\theta), \varphi_L(\theta))$, $\lim_{\theta \nearrow \infty} (q_L(\theta), \varphi_L(\theta))$) do exist, and our parametric representation includes those limit points.}
        \begin{equation} \label{eq:Emden-Fowler-left}
                q_L(\theta) \defeq a \theta^{-\frac{2}{3}}\left[ \left( \theta Z_L'(\theta) + \frac{1}{3} Z_L(\theta) \right)^2 - \theta^2 Z_L^2(\theta) \right]
                , \quad
                \varphi_L(\theta) \defeq b \theta^{\frac{2}{3}} Z_L^2(\theta),
        \end{equation}
        and
        \begin{equation} \label{eq:Emden-Fowler-right}
                q_R(\theta) \defeq a \theta^{-\frac{2}{3}}\left[ \left( \theta Z_R'(\theta) + \frac{1}{3} Z_R(\theta) \right)^2 + \theta^2 Z_R^2(\theta) \right]
                , \quad
                \varphi_R(\theta) \defeq b \theta^{\frac{2}{3}} Z_R^2(\theta).
        \end{equation}
\end{proposition}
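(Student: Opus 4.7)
The plan is to verify that the parametric curves \eqref{eq:Emden-Fowler-left} and \eqref{eq:Emden-Fowler-right} together trace out the graph of the unique nonnegative solution $\varphi \in \Cscr([0,1])$ to the Emden--Fowler boundary value problem $\varphi^2 \varphi'' = -q$ on $(0,1)$ with $\varphi(0) = \varphi(1) = 0$. Existence and uniqueness on each side of the interior maximum of $\varphi$ follow from standard ODE theory applied to the smooth vector field $-q/\varphi^2$ on the set $\{\varphi > 0\}$, so the substantive task is to exhibit the explicit parametric form and to check its endpoint behavior.

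The first step invokes the classical linearization of this Emden--Fowler family \citep[Eq.~2.3.27]{ODEhandbook}: under the ansatz $\varphi = b\theta^{2/3} Z^2(\theta)$ paired with the dual expression for $q$ given in \eqref{eq:Emden-Fowler-left}--\eqref{eq:Emden-Fowler-right}, the equation $\varphi^2 \varphi'' = -q$ reduces to a Bessel equation of order $1/3$ for $Z$. The modified kernel $K_{1/3}$ governs the convex branch adjacent to $q = 0$, producing $Z_L$; the oscillatory combination $\sqrt{3}\,J_{1/3} - Y_{1/3}$ governs the concave branch adjacent to $q = 1$, producing $Z_R$, where the specific linear combination is fixed by requiring $q_R \geq 0$ and smoothness on $(0, \bar\theta]$. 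The verification itself is a chain-rule computation: starting from $d\varphi/dq = \varphi'(\theta)/q'(\theta)$ and $d^2\varphi/dq^2 = (q'(\theta))^{-1}\, d[\varphi'(\theta)/q'(\theta)]/d\theta$, and using the (modified) Bessel equation $\theta^2 Z'' + \theta Z' \mp (\theta^2 + 1/9) Z = 0$ to eliminate $Z''$, the identity $\varphi^2(d^2\varphi/dq^2) + q \equiv 0$ emerges after algebraic simplification. The normalization constants $a$ and $b$ are then chosen so that $q_R(\bar\theta) = 1$, aligning the first positive zero of $Z_R$ with the right boundary $\varphi(1) = 0$, with the ratio $b/a$ dictated by the scaling invariance of the Emden--Fowler equation.

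The boundary conditions are confirmed directly. At $\theta = \bar\theta$, $Z_R(\bar\theta) = 0$ gives $\varphi_R(\bar\theta) = 0$ and $q_R(\bar\theta) = 1$ by the definition of $a$. As $\theta \to 0^+$ in the left branch, the small-argument expansion $K_{1/3}(\theta) \sim \tfrac{1}{2}\Gamma(1/3)(2/\theta)^{1/3}$ produces a cancellation in $(\theta Z_L' + Z_L/3)^2 - \theta^2 Z_L^2$ of sufficient order that $q_L(\theta) \to 0$ while $\varphi_L(\theta) \to 0$ as well, establishing $\varphi(0) = 0$. To glue the two branches continuously at the interior maximum, one checks that the limits $\theta \to \infty$ on the left branch and $\theta \to 0^+$ on the right branch converge to the same point $(q^\star, \varphi_{\max})$ where $d\varphi/dq = 0$, using the exponential decay of $K_{1/3}$ at infinity together with the small-argument asymptotics of $J_{1/3}$ and $Y_{1/3}$.

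The main obstacle is the chain-rule verification that $\varphi^2 \varphi'' + q$ reduces identically to zero. After differentiation, both $q'(\theta)$ and $\varphi'(\theta)$ are quadratic in $(Z, Z')$, and the cancellation is obtained only after careful substitution of the Bessel equation and algebraic simplification, effectively amounting to a Wronskian-type identity along Bessel solutions. The remaining steps---verifying boundary values, smoothness across the interior maximum, and strict positivity of $\varphi$ on $(0,1)$---are routine applications of standard Bessel asymptotics.
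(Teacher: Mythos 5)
Your overall strategy matches the paper's: both invoke the handbook parametric solution of the Emden--Fowler equation (the paper simply cites the general two-parameter families built from $I_{1/3},K_{1/3}$ and $J_{1/3},Y_{1/3}$ rather than re-deriving them by the chain rule, which is only a presentational difference), and both then reduce the problem to pinning down constants via endpoint and matching conditions. However, your endpoint analysis contains a concrete error that would make the verification fail. On the left branch, the boundary point $(q,\varphi)=(0,0)$ is attained as $\theta\nearrow\infty$, because $K_{1/3}$ and $K_{1/3}'$ decay exponentially there; it is \emph{not} attained as $\theta\searrow 0$. Indeed, the small-argument asymptotic $K_{1/3}(\theta)\sim\tfrac12\Gamma(1/3)(2/\theta)^{1/3}$ that you quote gives $Z_L^2(\theta)\sim c\,\theta^{-2/3}$, so $\varphi_L(\theta)=b\,\theta^{2/3}Z_L^2(\theta)$ tends to the strictly positive constant $2^{8/3}b/(3\Gamma^2(2/3))$ as $\theta\searrow 0$ --- there is no cancellation driving $\varphi_L$ to zero. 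Consequently the two branches must be glued at $\theta\searrow 0$ on \emph{both} sides (as the paper does), not by matching the left branch at $\theta\to\infty$ with the right branch at $\theta\to 0^{+}$ as you propose; with your assignment the ``gluing point'' of the left branch would sit at $\varphi=0$, not at an interior point.

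A second, related error: you assert the junction is the interior maximum of $\varphi$, where $d\varphi/dq=0$. The paper's computation shows the common slope at the junction is $3\cdot 2^{1/3}b\,\Gamma(2/3)/\bigl(a\,\Gamma(1/3)\bigr)>0$, so the branches meet at an interior point with nonzero slope, and the maximum of $\varphi$ lies inside the right branch. Matching the nonzero slopes at this junction is precisely one of the conditions the paper uses (together with $C_{L2}=0$ from the left endpoint and the normalization $q_R(\bar\theta)=1$) to determine the specific combinations $Z_L=-\tfrac{2}{\pi}K_{1/3}$ and $Z_R=\sqrt{3}J_{1/3}-Y_{1/3}$; your proposal instead fixes $Z_R$ by the vaguer requirement of ``$q_R\geq 0$ and smoothness,'' which does not single out these coefficients. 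Your direct chain-rule verification that $\varphi^{2}\varphi''=-q$ holds along the parametric curves is a legitimate, more self-contained alternative to citing the handbook formula, but the gluing and boundary steps need to be corrected before the argument goes through.
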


\subsection{Optimal Adaptive Liquidation Strategy} \label{ssec:opt-policy}

Let $f^\star(x,q)$ and $g^\star(x,q)$ be, respectively, the minimizer and the maximizer of the optimization terms in the HJB equation \eqref{eq:HJB}, i.e.,
\begin{align}
        \label{eq:f-star}
        f^\star(x,q)
                &\defeq \argmin_{v \in \mathbb{R}} \left\{ \frac{\eta}{2} q v^2 - V_x\left( x, q \right) v \right\}
                &=& \frac{V_x(x,q)}{\eta q}
                = (3/4)^{-\frac{1}{3}} \times \sigma^{\frac{2}{3}} \eta^{-\frac{2}{3}} \times x^{\frac{1}{3}} \times \frac{\varphi(q)}{q},
        \\
        \label{eq:g-star}
        g^\star(x,q)
                &\defeq \argmax_{w \in \mathbb{R}} \left\{ \frac{1}{2} V_{qq}\left( x, q \right) w^2 - \sigma x w \right\}
                &=& \frac{\sigma x}{V_{qq}(x,q)}
                = -(3/4)^{-\frac{2}{3}} \times \sigma^{\frac{1}{3}} \eta^{-\frac{1}{3}} \times x^{-\frac{1}{3}} \times \frac{\varphi^2(q)}{q},
\end{align}
where we define $x^{\frac{1}{3}} = -|x|^{\frac{1}{3}}$ for $x < 0$.
The function $f^\star(x,q)$ specifies the trader's optimal liquidation rate when the current position size is $x$ and the current quantile level is $q$, and similarly, the function $g^\star(x,q)$ specifies the adversary's optimal quantile diffusion rate in that situation.
We can naturally postulate mutually coupled $(X,Q)$-Markov policies $\pi^\star$ and $\gamma^\star$ induced by these functions $f^\star$ and $g^\star$, which are characterizing the equilibrium of the stochastic game.\footnote{
        This does not mean that the policy $\pi^\star$ is optimal against any adversary's policy $\gamma$, nor an $(X,Q)$-Markov policy induced by $f^\star$ and coupled with $\gamma$ is the best response against $\gamma$.
        It merely means that $\pi^\star$ is the best response against $\gamma^\star$ only, and vice versa.
        In order to obtain a best response against an arbitrary adversary's policy $\gamma \in \Gamma(q)$, we may need to characterize the best possible performance against $\gamma$, e.g., $V(x,q; \gamma) \defeq \inf_{\pi \in \Pi(x)} J( \pi, \gamma; x, q )$, and derive and solve the HJB equation associated with it.
        Nevertheless, the policy $\pi^\star$ is the optimal liquidation strategy that minimizes the CVaR value of implementation shortfall.
}
Under policies $\pi^\star$ and $\gamma^\star$, the system is described by the following stochastic differential equations:
\begin{equation} \label{eq:opt-sde}
        dX_t = - f^\star\left( X_t, Q_t \right) dt, \quad dQ_t = g^\star\left( X_t, Q_t \right) dW_t,
\end{equation}
with $X_0 = x$ and $Q_0 = q$.

\begin{figure}[H]
\centering
\begin{subfigure}{.5\textwidth}
  \centering
  \includegraphics[width=\linewidth]{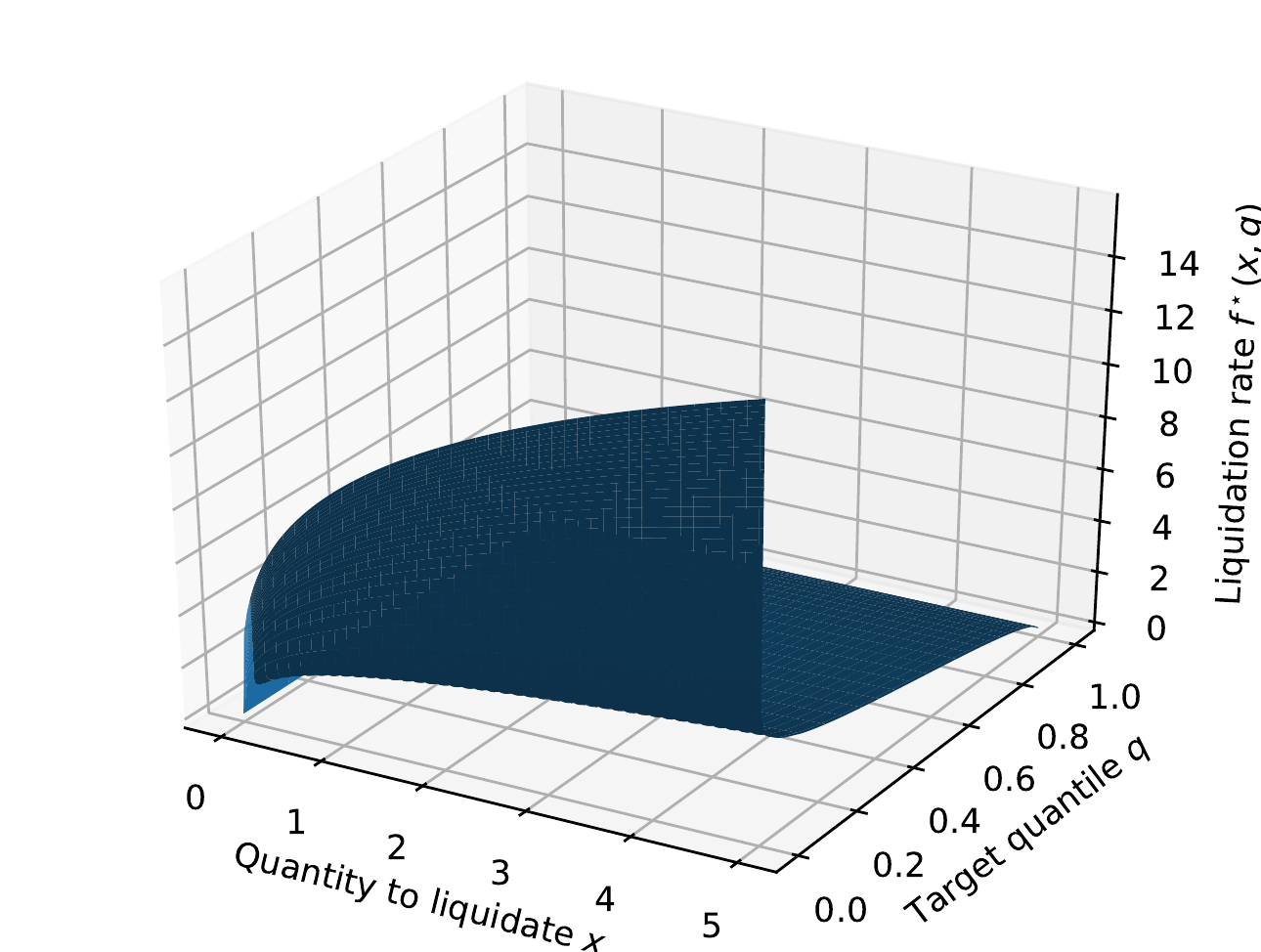}
\end{subfigure}%
\begin{subfigure}{.5\textwidth}
  \centering
  \includegraphics[width=\linewidth]{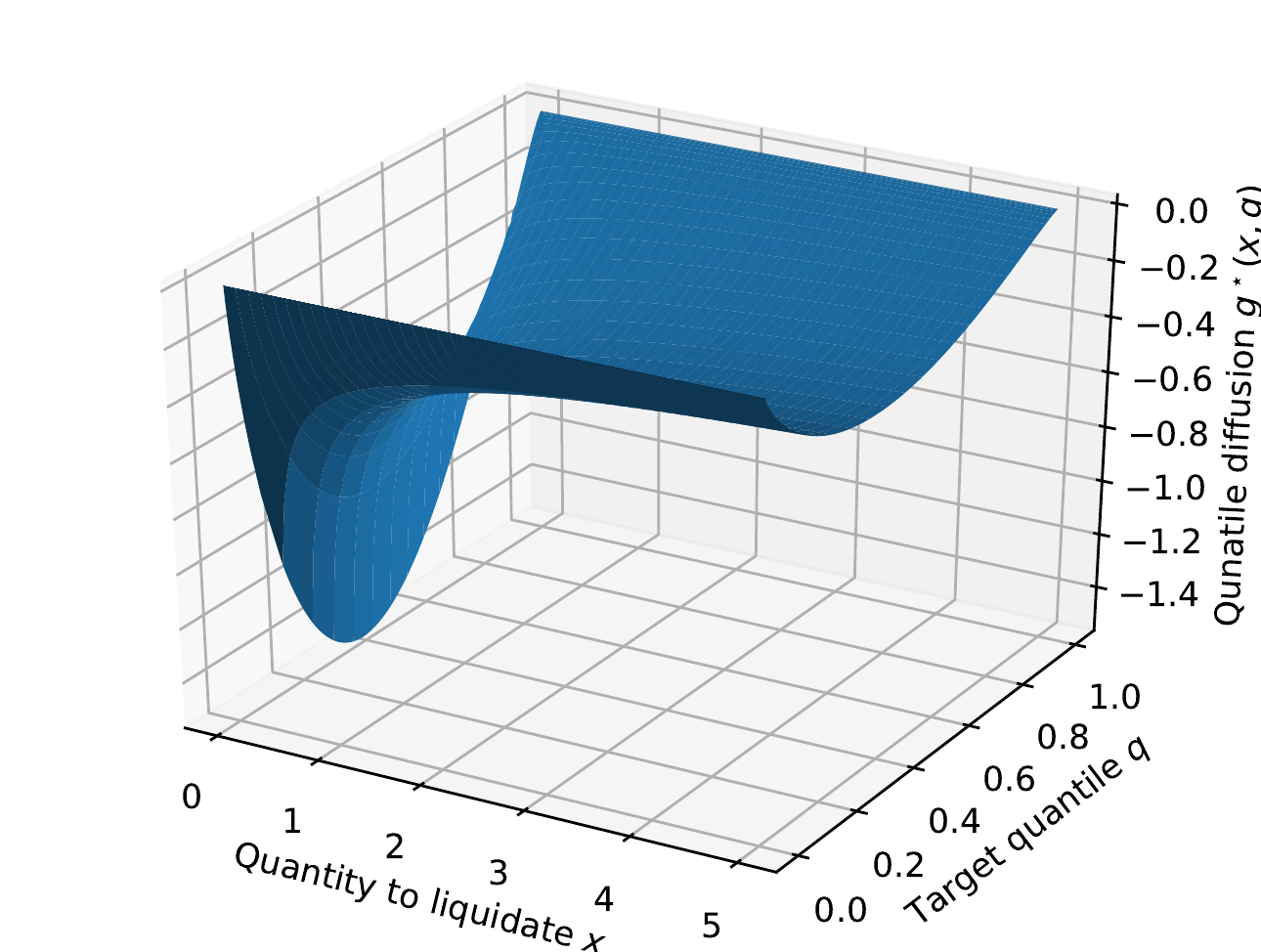}
\end{subfigure}
\caption{ Optimal trading rate function $f^\star(x,q)$ (left) and optimal quantile diffusion rate function $g^\star(x,q)$ (right) given that $\sigma = \eta = 1$.
        The closed-form expressions are provided in \eqref{eq:f-star} and \eqref{eq:g-star}.}
\label{fig:opt-policy}
\end{figure}

However, we cannot directly show that the policies $\pi^\star$ and $\gamma^\star$ satisfy the admissibility conditions introduced in \eqref{eq:Pi} and \eqref{eq:Gamma}, because the functions $f^\star$ and $g^\star$ exhibit extreme behaviors near the boundaries, as shown in Figure \ref{fig:opt-policy}.
For example, if $Q_t \searrow 0$, the liquidation rate $\pi_t$ may increase unboundedly since $\lim_{q \searrow 0} f^\star(x,q) = \infty$, and thus may violate the condition $\E\left[ \left( \int_{t=0}^\infty \pi_t^2 dt \right)^2 \right] < \infty$.
If $Q_t \nearrow 1$, on the other hand, the liquidation rate $\pi_t$ may vanish since $\lim_{q \nearrow 1} f^\star(x,q) = 0$, and thus may violate the condition $\E\left[ \int_{t=0}^\infty |X_t^{x,\pi}|^2 dt \right] < \infty$.

We instead prove that the optimal value function can be achieved ``asymptotically'' by a sequence of admissible policies that approximate $\pi^\star$ and $\gamma^\star$.

\begin{theorem}[Policy optimality] \label{thm:policy-optimality}
        There exists a sequence of function pairs $( f^{(n)}, g^{(n)} )_{n \in \mathbb{N}}$ \newedit{with $f^{(n)}:\mathbb{X} \times [0,1] \rightarrow \mathbb{R}$ and $g^{(n)}:\mathbb{X} \times [0,1] \rightarrow \mathbb{R}$} such that
        \begin{equation}
                \lim_{n \rightarrow \infty} f^{(n)}(x,q) = f^\star(x,q)
                , \quad
                \lim_{n \rightarrow \infty} g^{(n)}(x,q) = g^\star(x,q)
                , \quad
                \forall (x,q) \in \mathbb{R} \setminus \{0\} \times (0, 1],
        \end{equation}
        and it satisfies the following properties for any $x \in \mathbb{X}$ and $q \in [0,1]$:
        \begin{enumerate}[label=(\alph*)]
                \item \label{it:opt-pi-optimality-general}
                        For any given $\gamma \in \Gamma(q)$, let $\pi^{(n),\gamma}$ be an $(X,Q)$-Markov trader's policy induced by $f^{(n)}$ and coupled with $\gamma$.
                        Then, $\pi^{(n),\gamma}$ is admissible and
                        \begin{equation}
                                \limsup_{n \rightarrow \infty} J( \pi^{(n),\gamma}, \gamma; x, q ) \leq V(x,q), \quad \forall \gamma \in \Gamma(q).
                        \end{equation}
                \item \label{it:opt-gamma-optimality-general}
                        For any given $\pi \in \Pi(x)$, let $\gamma^{(n),\pi}$ be an $(X,Q)$-Markov adversary's policy induced by $g^{(n)}$ and coupled with $\pi$.
                        Then, $\gamma^{(n),\pi}$ is admissible and
                        \begin{equation}
                                \liminf_{n \rightarrow \infty} J( \pi, \gamma^{(n), \pi}; x, q) \geq V(x,q), \quad \forall \pi \in \Pi(x).
                        \end{equation}
                \item \label{it:opt-pi-gamma-optimality-general}
                        Let $(\pi^{(n)}, \gamma^{(n)})$ be a mutually coupled $(X,Q)$-Markov policy pair induced by $(f^{(n)}, g^{(n)})$.
                        Then, $\pi^{(n)}$ and $\gamma^{(n)}$ are admissible and
                        \begin{equation}
                                \lim_{n \rightarrow \infty} J( \pi^{(n)}, \gamma^{(n)}; x, q) = V(x,q).
                        \end{equation}
        \end{enumerate}

\end{theorem}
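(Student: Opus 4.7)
My plan is to construct $(f^{(n)}, g^{(n)})$ by truncating $(f^\star, g^\star)$ away from the singular boundary of $\mathbb{X}\times[0,1]$, and then to apply Dynkin's formula (Proposition~\ref{prop:cvar-dp-dynkin}) with $\widehat V = V^\star$ from Theorem~\ref{thm:value-function} and invoke the HJB identity \eqref{eq:HJB} to compare integrands. Explicitly, with $\underline q_n \defeq 1/n$ and $\overline q_n \defeq 1-1/n$, I would set
\begin{equation*}
f^{(n)}(x,q) \defeq f^\star\!\left(x,\,(q\vee\underline q_n)\wedge\overline q_n\right)+\tfrac{1}{n}\,\mathrm{sgn}(x),
\end{equation*}
which caps the divergence of $\varphi(q)/q$ as $q\searrow 0$, lifts its vanishing at $q=1$, and adds a small drift of size $1/n$ that drives $X_t^{x,\pi^{(n),\gamma}}$ to $0$ in finite time, yielding admissibility in $\Pi(x)$. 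Dually, set
\begin{equation*}
g^{(n)}(x,q) \defeq g^\star(x,q)\,\I{q\in(\underline q_n,\overline q_n)}\,\I{|x|>1/n},
\end{equation*}
which is uniformly bounded for each fixed $n$; together with Proposition~\ref{prop:Q-properties}\ref{it:Q-absorption}, this yields admissibility in $\Gamma(q)$. Pointwise convergence $(f^{(n)},g^{(n)})\to(f^\star,g^\star)$ on $(\mathbb{X}\setminus\{0\})\times(0,1]$ is immediate from these definitions.

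The verification engine common to all three parts is as follows. For any admissible $(\pi,\gamma)$ and stopping time $\tau$, Proposition~\ref{prop:cvar-dp-dynkin} with $\widehat V = V^\star$ decomposes $\E[C_\tau^{x,\pi}Q_\tau^{q,\gamma}+V^\star(X_\tau,Q_\tau)] - V^\star(x,q)$ into a trader integral and an adversary integral. By \eqref{eq:HJB}, the trader integrand exceeds its pointwise HJB minimum by $\tfrac{\eta Q_t}{2}\bigl(\pi_t-f^\star(X_t,Q_t)\bigr)^2$, the adversary integrand falls short of its pointwise HJB maximum by $\tfrac{|V^\star_{qq}|}{2}\bigl(\gamma_t-g^\star(X_t,Q_t)\bigr)^2$, and the two extrema sum to zero. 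Hence for part~\ref{it:opt-pi-optimality-general}, taking $\pi=\pi^{(n),\gamma}$ and arbitrary $\gamma\in\Gamma(q)$ gives
\begin{equation*}
\E\!\left[C_\tau^{x,\pi}Q_\tau^{q,\gamma}+V^\star(X_\tau,Q_\tau)\right]\le V^\star(x,q) + \E\!\int_0^\tau\tfrac{\eta}{2}Q_t\bigl(f^{(n)}(X_t,Q_t)-f^\star(X_t,Q_t)\bigr)^2\,dt,
\end{equation*}
and sending $\tau\to\infty$ and then $n\to\infty$, using dominated convergence and the $\Lscr^2$-convergence $C_\tau\to C_\infty$ guaranteed by admissibility, yields $\limsup_n J(\pi^{(n),\gamma},\gamma;x,q)\le V(x,q)$. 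Part~\ref{it:opt-gamma-optimality-general} is entirely symmetric (the adversary deficit furnishes the lower bound against any $\pi\in\Pi(x)$), and part~\ref{it:opt-pi-gamma-optimality-general} applies both bounds simultaneously with $(\pi,\gamma)=(\pi^{(n)},\gamma^{(n)})$ to obtain equality in the limit.

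The main obstacle will be controlling the terminal term $\E[V^\star(X_\tau,Q_\tau)]\to 0$ as $\tau\to\infty$ for each $n$, and ensuring the excess integral vanishes uniformly enough that the iterated limit $n\to\infty$ goes through. Boundedness of $V^\star$ on $\mathbb{X}\times[0,1]$ follows from $V^\star(x,q)=c|x|^{4/3}\varphi(q)$ together with $\varphi$ being bounded on $[0,1]$ (Proposition~\ref{prop:Emden-Fowler}) and condition~\ref{it:verification-symmetry}; the $1/n$-drift ensures $|X_\tau|\to 0$ almost surely in finite time, so dominated convergence handles the terminal term. Control of the excess integral relies crucially on the monotonicity conditions~\ref{it:verification-f-monotone}--\ref{it:verification-g-monotone}, which supply the comparison inequalities needed to dominate the occupation measure of $(X_t,Q_t)$ uniformly in $n$. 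A last technical point: $V^\star\in\Cscr^{1,2}$ only on the interior $\mathbb{X}\times(0,1)$, so Dynkin's formula must first be applied up to the first exit time of $Q$ from $(\underline q_n,\overline q_n)$; beyond this exit, Proposition~\ref{prop:Q-properties}\ref{it:Q-absorption} together with condition~\ref{it:verification-boundary} (where $V^\star$ vanishes) close the argument without further regularity demands.
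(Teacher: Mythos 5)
Your construction and verification engine follow the paper's route closely: the paper likewise truncates near the boundary (its $g^{(n)}$ is essentially identical to yours, and its $f^{(n)}$ agrees with $f^\star$ on $\Ascr_n=\{|x|>1/n,\ 1/n<q<1-1/n\}$ and switches to the deterministic rate $x/n$ outside), applies Dynkin's formula with $\widehat V=V^\star$ up to the exit time $\tau_n$ of $\Ascr_n$, and uses the HJB completing-the-square comparison exactly as you describe. Your handling of the excess integral via the $\tfrac1n\mathrm{sgn}(x)$ drift, and your admissibility arguments, are fine.

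The genuine gap is in your last paragraph, where you claim that beyond the exit time ``Proposition~\ref{prop:Q-properties}\ref{it:Q-absorption} together with condition~\ref{it:verification-boundary} close the argument.'' Absorption of $Q$ occurs only at $\{0,1\}$, not at $\{1/n,1-1/n\}$, and $V^\star(x,1/n)\neq 0$; more importantly, neither fact addresses the risk-adjusted cost accrued \emph{after} the exit, $\E\bigl[(C_\infty-C_{\tau_n})Q_\infty\bigr]$, which must be shown to vanish as $n\to\infty$ and does not do so for free. At the exit the remaining position can be of order $x$, and any reasonable $O(1/n)$-rate completion schedule leaves a residual cost with standard deviation of order $\sqrt{n}$; the generic bound $\scvar_{q'}[C]\le q'\E C+\sqrt{q'(1-q')}\,\mathrm{std}(C)$ then gives only $O(\sqrt{n}\cdot n^{-1/2})=O(1)$, which is not enough. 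The paper closes this by (i) making the post-exit trader schedule deterministic (exponential with time constant $n$), so the residual cost is exactly Gaussian conditional on $\Fscr_{\tau_n}$, and (ii) invoking the sharp Gaussian estimate $\sqrt{n}\,\kappa(1/n)\to 0$ of Lemma~\ref{lem:kappa-limit}, which beats the $\sqrt{n}$ growth of the residual standard deviation. Under your $f^{(n)}$ the post-exit policy remains $Q$-dependent (and $Q$ keeps moving under a general $\gamma$ in part~\ref{it:opt-pi-optimality-general}), so the residual cost is not Gaussian and even this sharp estimate is not directly available. You would need either to switch to a deterministic schedule after $\tau_n$ as the paper does, or to supply a separate tail estimate for $\scvar_{Q_{\tau_n}}$ of your non-Gaussian residual; as written, parts \ref{it:opt-pi-optimality-general} and \ref{it:opt-pi-gamma-optimality-general} are not established. (A smaller point: to relate the stopped quantity to $V(x,q)$ the paper also verifies $\E[\tau_n]<\infty$ so that Theorem~\ref{thm:cvar-dp} applies; you bypass this by comparing directly to $V^\star$ and citing $V=V^\star$, which is acceptable given Theorem~\ref{thm:value-function}.)
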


Theorem \ref{thm:policy-optimality} shows that we can construct a sequence of functions $( f^{(n)}, g^{(n)} )_{n \in \mathbb{N}}$ that converges to $(f^\star, g^\star)$ pointwise except at the boundaries, and further induces $(X,Q)$-Markov policies that are admissible and asymptotically optimal.
More precisely, against any adversary's policy $\gamma$, the sequence of functions $( f^{(n)} )_{n \in \mathbb{N}}$ induces a sequence of admissible policies $( \pi^{(n),\gamma} )_{n \in \mathbb{N}}$ for the trader, and in the limit, the trader achieves an outcome that is no worse than the equilibrium outcome.
And vice versa, against any trader's policy $\pi$, the sequence of functions $( g^{(n)} )_{n \in \mathbb{N}}$ induces a sequence of admissible policies  $( \gamma^{(n),\pi} )_{n \in \mathbb{N}}$ for the adversary, and in the limit, the adversary achieves an outcome that is no worse than the equilibrium outcome.
As a combination, the sequence of function pairs $( f^{(n)}, g^{(n)} )_{n \in \mathbb{N}}$ induces a sequence of mutually admissible policy pairs $( \pi^{(n)}, \gamma^{(n)} )_{n \in \mathbb{N}}$ that yields the equilibrium outcome asymptotically.

The construction of such a sequence of function pairs $( f^{(n)}, g^{(n)} )_{n \in \mathbb{N}}$ is straightforward.
We consider a vanishing subset of the augmented state space that contains the boundaries, i.e., $\{ (x,q) | |x| \leq \frac{1}{n}, q \leq \frac{1}{n} \text{ or } q \geq 1- \frac{1}{n} \}$, and obtain $f^{(n)}$ and $g^{(n)}$ by suppressing the extreme behaviors of $f^\star$ and $g^\star$ arising in this subset.
Roughly speaking, the liquidation strategy induced by $( f^{(n)}, g^{(n)} )$ mimics the optimal strategy until it clears almost all positions (i.e., $X_t \leq \frac{1}{n}$) or it becomes almost risk-neutral (i.e., $Q_t \geq 1-\frac{1}{n}$) or extremely risk-averse (i.e., $Q_t \leq \frac{1}{n}$), and then liquidates according to a deterministic schedule thereafter.
We can show that the gap between the outcome of this approximated strategy and the theoretical equilibrium outcome is diminishing as $n$ goes to infinity.
We refer the readers to Appendix \ref{app:proof-opt} for the details.

\seclabel{Aggressiveness-in-the-money.}
Despite that the admissibility of the optimal liquidation policy $\pi^\star$ is not guaranteed, we can still characterize its behaviors by inspecting the stochastic differential equations \eqref{eq:opt-sde}.
Without loss of generality, let us consider the task of liquidation (i.e., $x \geq 0$).

First, we observe that the optimal policy liquidates only (i.e., $f^\star(x,q) \geq 0$), until it completes the execution\footnote{
        We are not sure if the completion time is almost surely finite even though the position will be vanishing eventually (i.e., $X_t \searrow 0$).
        In particular, when $Q_t \approx 1$, the optimal policy trades very slowly ($\pi_t \approx 0$) and the position process may never hit zero.
        We believe that the completion time is finite with a probability of at least $1-q$.
        } (i.e., $f^\star(0,q)=0$).
Note that we have not imposed any constraint on the trading direction.
This formally shows that winding back the position during the liquidation process will never be helpful in reducing the CVaR loss.

Second, when the trader becomes more risk-averse (i.e., $q \searrow 0$), the optimal policy trades more aggressively (i.e., $f^\star(x,q) \nearrow \infty$).
The opposite also holds true.
This is because, by liquidating the position more quickly, it can reduce the risk exposure to the changes in price more effectively.
Even though it will be more costly in terms of market impact, it can make sure that the transactions will be made at a certain level, which is more beneficial to a risk-averse trader than a risk-neutral trader.
We can also understand this behavior based on the alternative interpretation of the problem discussed in Remark \ref{rem:Girsanov}: when the risk-averse liquidation problem is cast as a risk-neutral execution problem that involves an adverse price drift, being more risk-averse is equivalent to facing a more adverse price drift, which encourages the risk-neutral trader to trade more aggressively.

Most interestingly, we observe that when the price moves in a favorable direction toward in-the-money (i.e., $dW_t > 0$), the policy becomes more risk-averse (i.e., $dQ_t < 0$ since $g^\star(x,q) < 0$) and hence it trades more aggressively.
This formally characterizes aggressiveness-in-the-money, which has been observed by \citet{Almgren07, Almgren11, Gatheral11, Forsyth12}.
Intuitively, if the trader has made some ``free'' money due to the price change, he would have an additional incentive to complete the liquidation early so as to secure his current profit, and thus he would be willing to pay an additional deterministic cost for aggressive execution.

This behavior can also be understood in the context of a more general risk-sensitive optimal control problem.
Recall that the optimal adversary's martingale $Q_t^{q, \gamma^\star}$ represents the likelihood that the current sample path leads to one of the worst $q$-fraction of outcomes.
When something favorable happens, it becomes less likely that the sample path is in the worst $q^\text{th}$ quantile, and therefore $Q_t$ decreases.
This means that the trader will need to pay attention to a smaller fraction of adverse scenarios; i.e., he will become more risk-averse.

\seclabel{Threshold behavior.}
While we do not have a formal characterization here, we observe that the optimal strategy exhibits some threshold behavior, particularly near the end of the liquidation.
We observe that the policy trades aggressively when the cumulative cost $C_t$ is below some threshold, and it trades passively when the cumulative cost $C_t$ is above the threshold.
Near the end of the liquidation, such a threshold corresponds to $\var_q[C_\infty]$ (the value-at-risk, i.e., the $q^\text{th}$ quantile of the loss distribution), and the liquidation rate sharply changes around $\var_q[C_\infty]$.
This behavior is related to an alternative representation of CVaR \citep{Rockafellar02}: $\cvar_q[ C_\infty ] = \max_{c \in \mathbb{R}}\{  c + \frac{1}{q} \E\left[ (C_\infty - c)^+ \right] \}$, where the maximizer $c^\star$ is in fact $\var_q[C_\infty]$, and $\cvar_q[ C_\infty ]$ only concerns the cases where $C_\infty > c^\star$.

To better illustrate, suppose that the trader is currently left with a small amount of position to liquidate.
If $C_t < c^\star$, the trader is willing to pay a large transaction cost (up to $c^\star - C_t$) to complete the execution as soon as possible, thereby making sure that the total loss $C_\infty$ won't exceed the threshold $c^\star$.
If $C_t > c^\star$, the trader may believe that the total loss $C_\infty$ will inevitably exceed the threshold $c^\star$, and then tries to minimize the expected future cost by slowing down the liquidation.
One can make a connection with aggressiveness-in-the-money, since having $C_t < c^\star$ implies that the price has moved in a favorable direction.

We believe that the threshold value is a function of remaining position size such that it increases as the position size decreases and converges to $\var_q[C_\infty]$ as the position vanishes.
Moreover, the change in the aggressiveness around the threshold also depends on the remaining position size.
To formalize this behavior, one may adopt an alternative formulation of the problem with an extra state variable representing the cost realized so far (i.e., a Markov policy defined on the augmented state space $(X_t, C_t)$), which is in fact the approach suggested by \cite{Bauerle11} for a general class of control problems with a CVaR objective.
This might be a topic of future research.


\section{Cost Analysis: Adaptive vs. Deterministic Strategy} \label{sec:cost-analysis}

In this section, we provide a comparison between the optimal adaptive strategy derived in \S \ref{sec:opt} and the (optimized) deterministic schedules under which the liquidation is executed according to a deterministic schedule committed at the beginning of the liquidation process.

\subsection{Optimized Deterministic Schedules} \label{sec:deterministic}

First observe that any deterministic schedule will yield a normally distributed implementation
shortfall; i.e., given a deterministic schedule $\pi$, the total implementation shortfall
$C_\infty^{x,\pi} = \int_0^\infty \frac{\eta}{2} \pi_t^2 dt - \int_0^\infty \sigma X_t dW_t$
follows a normal distribution with mean and variance given by
\begin{equation}\label{eq:detmeanvar}
  \E[C_\infty^{x,\pi}] = \int_0^\infty \frac{\eta}{2} \pi_t^2 dt,
  \qquad \text{Var}[C_\infty^{x,\pi}] = \int_0^\infty \sigma^2 X_t^2 dt.
\end{equation}
To understand the performance of deterministic schedules, therefore, it suffices consider the CVaR
value of a normal distribution.

Given a normal random variable $C\sim\Nscr(\mu,\sigma^2)$, it is easy to verify that
\begin{equation}\label{eq:normcvar}
  \cvar_q[C] = \mu + \sigma \frac{\kappa(q)}{q},
\end{equation}
where
\begin{equation} \label{eq:kappa}
  \kappa(q) \defeq \phi( \Phi^{-1}(1-q) ),
\end{equation}
and $\phi(\cdot)$ and $\Phi(\cdot)$ are the p.d.f.\ and the c.d.f.\ of a standard normal
distribution, respectively.
Thus,
\begin{equation}\label{eq:detcvar}
  \cvar_q[ C_\infty^{x,\pi} ] = \int_0^\infty \frac{\eta}{2} \pi_t^2 dt +\frac{\kappa(q)}{q}
  \sqrt{ \int_0^\infty \sigma^2 X_t^2 dt },
\end{equation}
for any deterministic schedule $\pi$.

We first focus on the set of all deterministic schedules and find the optimal one that minimizes the CVaR cost.
The next proposition shows that such an optimal schedule has the form of an exponential schedule under which the trader's position decays exponentially over time (i.e., the liquidation rate is proportional to the current position size).

\begin{proposition}[Optimized deterministic schedule] \label{prop:exp-schedule}
        Given an initial position $x \in \mathbb{R}$ and a target quantile $q \in (0,1)$, the optimal deterministic schedule is given by an exponential schedule $X_t = X_0 e^{-t/\tau^\star}$ where the optimal time constant $\tau^\star$ is given by
        \begin{equation}
                \tau^\star = \left( \frac{\eta x q}{ \sqrt{2} \sigma \kappa(q) } \right)^{\frac{2}{3}}.
        \end{equation}
        Let \textsc{exp} be this optimized exponential schedule.
        Then, its performance $V^{\textsc{exp}}(x,q)$ is given by
        \begin{equation} \label{eq:V-exp}
                V^{\textsc{exp}}(x,q)
                        \defeq \scvar_q\left[ C_\infty^{x,\textsc{exp}} \right]
                        = \frac{3}{2^{\frac{5}{3}}} \times \sigma^{\frac{2}{3}} \eta^{\frac{1}{3}} \times |x|^{\frac{4}{3}} \times q^{\frac{1}{3}} \big( \kappa(q) \big)^{\frac{2}{3}}.
        \end{equation}
\end{proposition}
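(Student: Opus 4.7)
The plan is to exploit the decomposition noted in \eqref{eq:detmeanvar}--\eqref{eq:detcvar}: for any deterministic schedule, $\cvar_q[C_\infty^{x,\pi}] = A + \tfrac{\kappa(q)}{q}\sqrt{B}$ with $A \defeq \int_0^\infty \tfrac{\eta}{2}\pi_t^2\,dt = \E[C_\infty^{x,\pi}]$ and $B \defeq \int_0^\infty \sigma^2 X_t^2\,dt = \Var[C_\infty^{x,\pi}]$. Because this objective is strictly increasing in each of $A$ and $B$, the problem reduces to understanding the set of achievable pairs $(A,B)$. First I would prove a Cauchy--Schwarz inequality showing this set lies above an explicit hyperbola; then I would show that exponential schedules saturate this hyperbola; and finally I would minimize scalarly over the time constant $\tau$.

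For the inequality, I would apply Cauchy--Schwarz to the identity
\[
x^2 = X_0^2 - \lim_{t\to\infty} X_t^2 = -2\int_0^\infty X_t \dot X_t\,dt,
\]
where the vanishing limit follows from the admissibility conditions in \eqref{eq:Pi}: $X_t$ is absolutely continuous with $\dot X_t = -\pi_t$ square-integrable, so $X_t^2$ tends to a finite limit, which must equal zero because $\int_0^\infty X_t^2\,dt < \infty$. Squaring and applying Cauchy--Schwarz gives $x^4 \leq 4\int_0^\infty X_t^2\,dt \cdot \int_0^\infty \dot X_t^2\,dt$, equivalently $AB \geq \eta\sigma^2 x^4/8$, with equality iff $\dot X_t$ is proportional to $X_t$, that is, $X_t = x e^{-t/\tau}$ for some $\tau > 0$.

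For the reduction to exponentials, plugging $X_t = xe^{-t/\tau}$ into the definitions of $A$ and $B$ gives $A(\tau) = \eta x^2/(4\tau)$ and $B(\tau) = \sigma^2 x^2 \tau/2$, whose product is exactly $\eta\sigma^2 x^4/8$. As $\tau$ sweeps $(0,\infty)$, the pair $(A(\tau),B(\tau))$ traces the entire boundary hyperbola. Since the objective is monotone in $A$ and $B$, the infimum over all deterministic schedules coincides with the infimum over exponential ones, and an admissibility check on $\pi_t = (x/\tau)e^{-t/\tau}$ confirms $\pi \in \Pi(x)$.

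The remaining scalar optimization is
\[
\min_{\tau > 0}\ \frac{\eta x^2}{4\tau} + \frac{\kappa(q)\sigma|x|\sqrt{\tau}}{q\sqrt{2}}.
\]
The first-order condition yields $(\tau^\star)^{3/2} = \eta|x|q/(\sqrt{2}\sigma\kappa(q))$, which matches the stated $\tau^\star$; the second derivative is positive so this is a minimum. At the optimum the two terms are in a $2{:}1$ ratio, so the total $\cvar_q$ value equals $\tfrac{3}{2} \cdot \tfrac{\kappa(q)\sigma|x|\sqrt{\tau^\star}}{q\sqrt{2}}$; substituting $\tau^\star$ and multiplying by $q$ to convert $\cvar_q$ to $\scvar_q$ produces the closed-form \eqref{eq:V-exp}. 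The only mildly delicate step is justifying $\lim_{t\to\infty} X_t^2 = 0$ in the integration by parts; every other step is an elementary calculation.
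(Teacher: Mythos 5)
Your proof is correct, and it reaches the same scalar optimization over the time constant as the paper, but the key reduction to exponential schedules is done by a genuinely different argument. The paper identifies the mean--variance efficient frontier by setting up the Lagrangian \eqref{eq:mv-lagrangian}, invoking the calculus of variations to get the Euler--Lagrange equation $2\lambda\sigma^2 X_t - \eta\ddot X_t = 0$, and solving it to find that the frontier $\bigl\{\bigl(\tfrac{\eta x^2}{4\rho},\tfrac{\sigma^2 x^2\rho}{2}\bigr)\bigr\}_{\rho>0}$ is traced by exponentials; it then passes from the (mean, variance) frontier to the (mean, standard deviation) frontier and applies \eqref{eq:normcvar}. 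You instead prove the hard constraint $A B \geq \eta\sigma^2 x^4/8$ directly via the integration-by-parts identity $x^2 = -2\int_0^\infty X_t\dot X_t\,dt$ and Cauchy--Schwarz, with equality exactly for exponentials, and then use monotonicity of $A + \tfrac{\kappa(q)}{q}\sqrt{B}$ in both arguments. Your route is more elementary and self-contained: it avoids the variational machinery, gives the equality condition for free, and sidesteps the (mild but real) step of arguing that every Pareto point of the convex mean--variance problem is attained by some multiplier $\lambda$. What the paper's route buys is the explicit connection to the Almgren--Chriss finite-horizon solution (the $\sinh$ trajectory degenerating to an exponential as $T\to\infty$), which it uses as supporting intuition in the surrounding text. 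Your justification that $\lim_{t\to\infty}X_t^2=0$ under the admissibility conditions, and your bookkeeping of the constants (the $2{:}1$ ratio of the two terms at the optimum and the resulting $3/2^{5/3}$ prefactor), both check out.
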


While the proof is provided in Appendix \ref{app:deterministic}, we remark that the optimality of the exponential schedule can be directly inferred from the result of \citet{Almgren00}: it was shown that, given a finite time-horizon of length $T$, a mean-variance optimization results in a trajectory $X_t = \frac{ \sinh( \kappa (T-t) ) }{ \sinh(\kappa T ) } X_0$ for some constant $\kappa > 0$.
As $T \nearrow \infty$, we can observe that the optimized trajectory converges to an exponential schedule $X_t = e^{ -\kappa t } X_0$.

We next examine the volume-weighted average price (VWAP) schedules, under which the trader
liquidates the asset at a constant rate until completion so that the trader's position decreases
linearly over time.\footnote{
 \newedit{A VWAP schedule typically refers to a liquidation schedule that is proportional to the average market volume profile, aiming to make its average transaction price close to the market volume-weighted average price.
 As we assume a time-stationary market in this paper, the VWAP schedule is equivalent to a constant liquidation rate schedule (also known as a TWAP schedule).
 }
 }  The next proposition identifies the optimized VWAP schedule (see
Appendix \ref{app:deterministic} for the proof).

\begin{proposition}[Optimized VWAP schedule] \label{prop:vwap-schedule}
        Given an initial position $x \in \mathbb{R}$ and a target quantile $q \in (0,1)$, the best VWAP schedule is given by $X_t = X_0 \left( 1 - \frac{t}{T^\star} \right)^+$, where the optimal execution period $T^\star$ is as follows:
        \begin{equation}
                T^\star = \left( \frac{\sqrt{3} \eta x q }{ \sigma \kappa(q) } \right)^{\frac{2}{3}}.
        \end{equation}
        Let \textsc{vwap} be this optimized VWAP schedule.
        Then, its performance $V^{\textsc{vwap}}(x,q)$ is given by
        \begin{equation} \label{eq:V-vwap}
                V^{\textsc{vwap}}(x,q)
                        \defeq \scvar_q\left[ C_\infty^{x,\textsc{vwap}} \right]
                        = \frac{3^{\frac{2}{3}}}{2} \times \sigma^{\frac{2}{3}} \eta^{\frac{1}{3}} \times |x|^{\frac{4}{3}} \times q^{\frac{1}{3}} \big( \kappa(q) \big)^{\frac{2}{3}}.
        \end{equation}
\end{proposition}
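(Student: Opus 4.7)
The plan is to reduce the statement to a one-dimensional calculus problem in the single free parameter of the VWAP family, namely the total execution time. A VWAP schedule parameterized by horizon $T > 0$ has constant rate $\pi_t = x/T$ on $[0,T]$ and $\pi_t = 0$ afterwards, so the position path is $X_t = x (1-t/T)$ on $[0,T]$ and $X_t = 0$ for $t \geq T$. This schedule is deterministic and lies in $\Pi(x)$ provided $T < \infty$.

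First, I would apply the formulas \eqref{eq:detmeanvar} for a deterministic schedule to obtain the mean and variance of the (Gaussian) implementation shortfall explicitly:
\begin{equation}
\E[C_\infty^{x,\textsc{vwap}_T}] = \int_0^T \frac{\eta}{2}\Bigl(\frac{x}{T}\Bigr)^2 dt = \frac{\eta x^2}{2T}, \qquad
\Var[C_\infty^{x,\textsc{vwap}_T}] = \int_0^T \sigma^2 x^2 \Bigl(1-\frac{t}{T}\Bigr)^2 dt = \frac{\sigma^2 x^2 T}{3}.
\end{equation}
Since $C_\infty^{x,\textsc{vwap}_T}$ is normal, the CVaR formula \eqref{eq:detcvar} together with the S-CVaR/CVaR relation $\scvar_q = q\cdot\cvar_q$ (Proposition \ref{prop:scvar-properties}\ref{it:scvar-cvar-relationship}) yields
\begin{equation}
\scvar_q[C_\infty^{x,\textsc{vwap}_T}] \;=\; \frac{q\eta x^2}{2T} \;+\; \frac{\sigma |x| \kappa(q)}{\sqrt 3}\sqrt{T}.
\end{equation}

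Next, I would minimize this expression over $T > 0$. Writing it as $aT^{-1} + b T^{1/2}$ with $a = q\eta x^2/2$ and $b = \sigma|x|\kappa(q)/\sqrt 3$, the first-order condition $-aT^{-2} + \tfrac{b}{2}T^{-1/2} = 0$ gives $T^{3/2} = 2a/b$, i.e.,
\begin{equation}
T^\star = \Bigl(\frac{2a}{b}\Bigr)^{2/3} = \Bigl(\frac{\sqrt 3\, \eta |x| q}{\sigma \kappa(q)}\Bigr)^{2/3},
\end{equation}
matching the claimed formula (and the function is strictly convex on $(0,\infty)$ since its second derivative is $2aT^{-3} - \tfrac{b}{4}T^{-3/2} + \tfrac{b}{4}T^{-3/2}$... more simply, it is the sum of a strictly convex decreasing piece and a strictly concave increasing piece whose minimum is the unique critical point, which one can verify directly). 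Substituting $T^\star$ back, both terms become proportional to $q^{1/3}\eta^{1/3}\sigma^{2/3}|x|^{4/3}\kappa(q)^{2/3}$, and summing the coefficients $\tfrac{1}{2\cdot 3^{1/3}} + \tfrac{1}{3^{1/3}} = \tfrac{3}{2\cdot 3^{1/3}} = \tfrac{3^{2/3}}{2}$ yields the claimed expression for $V^{\textsc{vwap}}(x,q)$.

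There is no real obstacle here beyond bookkeeping of the fractional exponents; the only mildly delicate point is justifying that restricting to $T < \infty$ is without loss (the objective blows up as $T\to\infty$ due to the variance term, and as $T\to 0$ due to the transaction-cost term, so the infimum is attained in the interior), and that the VWAP family with $T = T^\star$ is admissible in the sense of $\Pi(x)$, which is immediate since the schedule is compactly supported and bounded.
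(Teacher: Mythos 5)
Your proposal is correct and follows essentially the same route as the paper: compute the mean and variance of the linear schedule, apply the normal-CVaR formula \eqref{eq:normcvar}, and minimize the resulting function $a T^{-1} + b\sqrt{T}$ over the horizon $T$ (the paper packages this last step as a small lemma in Appendix~\ref{app:deterministic}). The only blemish is the parenthetical claim that $aT^{-1}+b\sqrt{T}$ is strictly convex on $(0,\infty)$ --- it is not, since its second derivative $2aT^{-3}-\tfrac{b}{4}T^{-3/2}$ changes sign for large $T$ --- but your fallback argument (unique critical point of a function diverging at both endpoints of $(0,\infty)$) is the right one and suffices.
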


\subsection{Cost Analysis}

We now compare three liquidation strategies: the optimal adaptive strategy (\textsc{opt}) derived in \S \ref{sec:opt}, the optimized deterministic strategy (\textsc{exp}), and the optimized VWAP strategy (\textsc{vwap}).
We have derived closed-form expressions \eqref{eq:V-opt}, \eqref{eq:V-exp}, and \eqref{eq:V-vwap} that represent their S-CVaR performance $V$, $V^\textsc{exp}$, and $V^\textsc{vwap}$, respectively.
Given that $V(x,q) \leq V^\textsc{exp}(x,q) \leq V^\textsc{vwap}(x,q)$ by their definitions, we
particularly consider the following ratios that are useful for pairwise comparison:
\begin{equation}\label{eq:ratios}
\begin{split}
        \Upsilon_{\textsc{opt}}^{\textsc{exp}} &\defeq \frac{ V^\textsc{exp}(x,q)}{ V(x,q) } - 1 =
        \frac{ (3/2)^{\frac{1}{3}}  q^{\frac{1}{3}} \big( \kappa(q) \big)^{\frac{2}{3}} }{
          \varphi(q) } - 1, \\
        \Upsilon_{\textsc{opt}}^{\textsc{vwap}} &\defeq \frac{ V^\textsc{vwap}(x,q)}{ V(x,q) } - 1
        = \frac{ 2^{\frac{1}{3}}  q^{\frac{1}{3}} \big( \kappa(q) \big)^{\frac{2}{3}} }{
          \varphi(q) } - 1, \\
        \Upsilon_{\textsc{exp}}^{\textsc{vwap}} &\defeq \frac{ V^\textsc{vwap}(x,q) }{
          V^\textsc{exp}(x,q) } - 1 = (4/3)^\frac{1}{3} -1 \approx 10\%.
\end{split}
\end{equation}
Note that these ratios do not change even if we compare $\cvar$ performance instead of $\scvar$ performance since $\scvar$ is merely a scaled version of $\cvar$ (see Remark \ref{prop:scvar-properties}\ref{it:scvar-cvar-relationship}).

\begin{figure}[!h]
        \centering
        \includegraphics[width=0.5\linewidth]{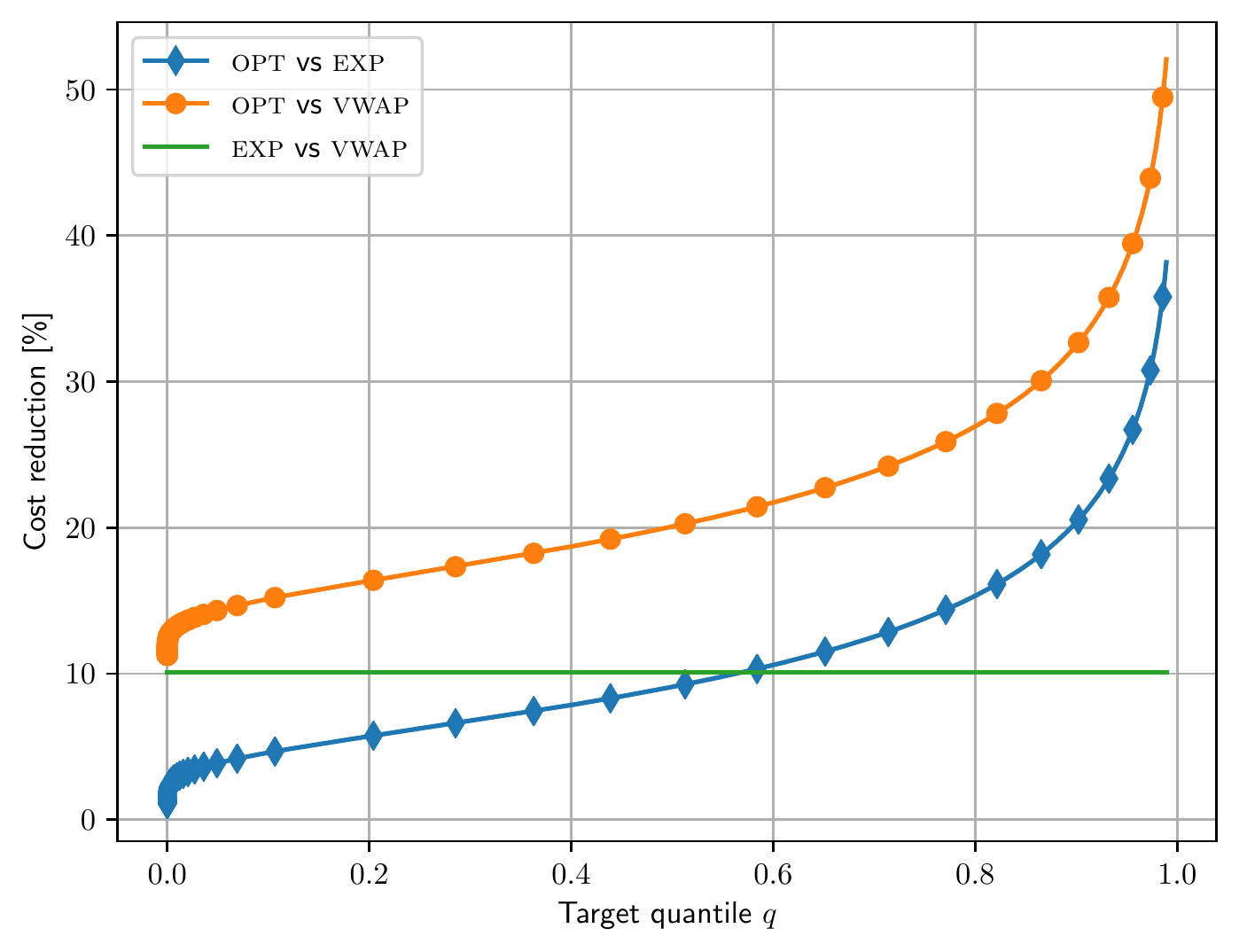}
        \caption{
        The pairwise comparison among three liquidation strategies -- the optimal adaptive strategy, the optimal deterministic strategy (\textsc{exp}), and the optimized VWAP strategy -- measured with the ratios $\Upsilon_{\textsc{opt}}^{\textsc{exp}}$, $\Upsilon_{\textsc{opt}}^{\textsc{vwap}}$, and $\Upsilon_{\textsc{exp}}^{\textsc{vwap}}$ defined in \eqref{eq:ratios}.
        These ratios represent the relative percentage reductions in CVaR cost, and depend on the target quantile $q$ only.
        Each curve shows the value of each ratio when $q$ ranges from zero to one.}
        \label{fig:cost-saving}
\end{figure}

These ratios can be expressed in closed form, and we observe the following.
First, all the ratios depend only on the quantile $q$, but not on the other problem parameters such as the quantity to liquidate $x$, the price volatility $\sigma$, and the market impact factor $\eta$.
Figure \ref{fig:cost-saving} plots these ratios as functions of $q$.
Second, the optimal deterministic schedule always outperforms to the best VWAP schedule by $\approx10.0\%$, irrespective of the value of $q$.
Finally, in a moderate range of $q$, from $0.2$ to $0.8$, the adaptive strategy outperforms the optimal deterministic strategy by $5\%$ to $15\%$, and outperforms the optimized VWAP strategy by $15\%$ to $25\%$.
This gap increases as $q$ increases (i.e., becomes more risk-neutral).
More specifically, it explodes as $q$ approaches one, and vanishes as $q$ approaches zero.\footnote{
        The absolute performance (i.e., the CVaR cost) of all three policies converges to zero as $q \nearrow 1$, and diverges to infinity as $q \searrow 0$.
      }
Note that most traders using optimal
execution algorithms are large investors trading a small portion of their overall portfolio over a
short time horizon. Thus, from the perspective of optimal decision making,  their utility functions are nearly linear, hence the nearly risk neutral regime ($q \approx 1$), where the relative
benefits of dynamic trading are greatest, is also the most practically relevant regime.
See also \S \ref{ssec:numerical-comparison} for a more illustrative comparison between the adaptive strategy and the deterministic strategies.


\section{Numerical Simulations} \label{sec:numerical}

Throughout this section, we denote the optimal adaptive strategy by \textsc{opt}, the optimized deterministic schedule by \textsc{exp}, and the optimized VWAP schedule by \textsc{vwap}.

\subsection{Illustration of Optimal Adaptive Strategy}

We consider a situation where the trader wants to liquidate $x = 1$ unit of an asset given the volatility $\sigma=5.06$ basis points per minute ($1\%$ per day), the market impact factor $\eta = 1.56 \times 10^3$ (liquidating one unit over a day at a constant rate incurs $2$ basis points loss in average), and the target quantile $q$ varying from $0.10$ to $0.99$.

We simulate the optimal policy \textsc{opt} as follows.
We first discretize the time horizon into subintervals of equal length $\Delta t = 10^{-4}$, and generate a sample path of the standard Brownian motion $W_0, W_{\Delta t}, W_{2 \Delta t}, \ldots$.
Starting from $X_0 = x$ and $Q_0 = q$, at each time $t=0, \Delta t, 2 \Delta t, \ldots$, we compute the liquidation rate $\pi_t = f^\star(X_t, Q_t)$ and the quantile diffusion rate $\gamma_t = g^\star(X_t,Q_t)$ and then update the position size $X_{t+\Delta t} = X_t - \pi_t \Delta t$ and the quantile $Q_{t+\Delta t} = Q_t + \gamma_t \Delta W_t$ accordingly, where $\Delta W_t \defeq W_{t+\Delta t} - W_t$.
The expressions for $f^\star$ and $g^\star$ are given in \eqref{eq:f-star} and \eqref{eq:g-star}, and the value of $\varphi(q)$ can be computed using linear interpolation based on its parametric representation derived in Proposition \ref{prop:Emden-Fowler}.
In order to prevent numerical instability, we keep the value of quantile process $Q_t$ between $\epsilon$ and $1-\epsilon$ via truncation (we take $\epsilon = 10^{-5}$); i.e., if $Q_t < \epsilon$ or $Q_t > 1-\epsilon$, it is set to $\epsilon$ or $1-\epsilon$, respectively.
This procedure is repeated until the remaining position size $X_t$ becomes smaller than $10^{-2}$.

Figure \ref{fig:numeric-sample-paths} illustrates the sample paths of the price process $\sigma W_t$, the position process $X_t$, and the quantile process $Q_t$ under \textsc{opt} for different values of target quantile $q \in \{0.1, 0.2, \ldots, 0.99\}$ in the following two scenarios: when the price moves in an adverse direction (left), and when the price moves in a favorable direction (right).
From these results, we confirm the behaviors of the optimal strategy characterized in \S \ref{ssec:opt-policy}.
In every case, the position monotonically decreases over time; i.e., the optimal policy keeps trading in one direction.
Also observe that the policy liquidates the position more aggressively as we take a smaller value for $q$ (i.e., as the policy becomes more risk-averse).
In a comparison between two scenarios (left vs. right), we observe ``aggressiveness-in-the-money''; i.e., the policy trades more aggressively when the price moves in a favorable direction (right).
This behavior can also be observed within each sample path: during the execution process, the quantile process $Q_t$ decreases when the price moves upward and the policy trades more aggressively.
In addition, the quantile process $Q_t$ converges to either zero or one, indicating whether the realized price process is among the worst $q$-fraction of the scenarios.
While not reported here, we observe that $Q_t$ converges to one in the $q$-fraction of simulations and converges to zero in the other $(1-q)$-fraction of simulations (recall that $Q_t$ is a martingale starting at $q$).

\begin{figure}[H]
\centering
\begin{subfigure}{.51\textwidth}
  \centering
  \includegraphics[width=\linewidth]{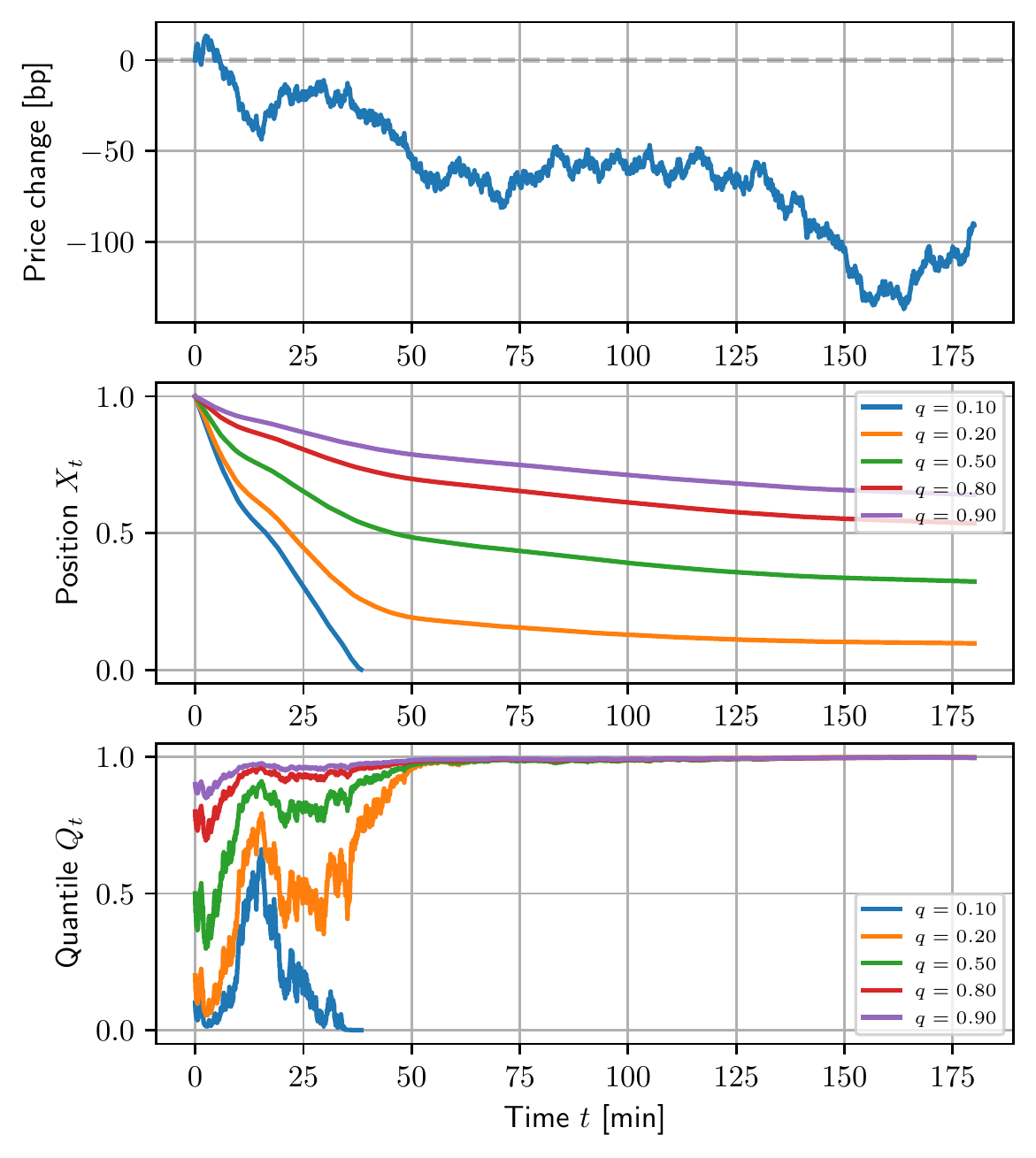}  
\end{subfigure}%
\begin{subfigure}{.49\textwidth}
  \centering
  \includegraphics[width=\linewidth]{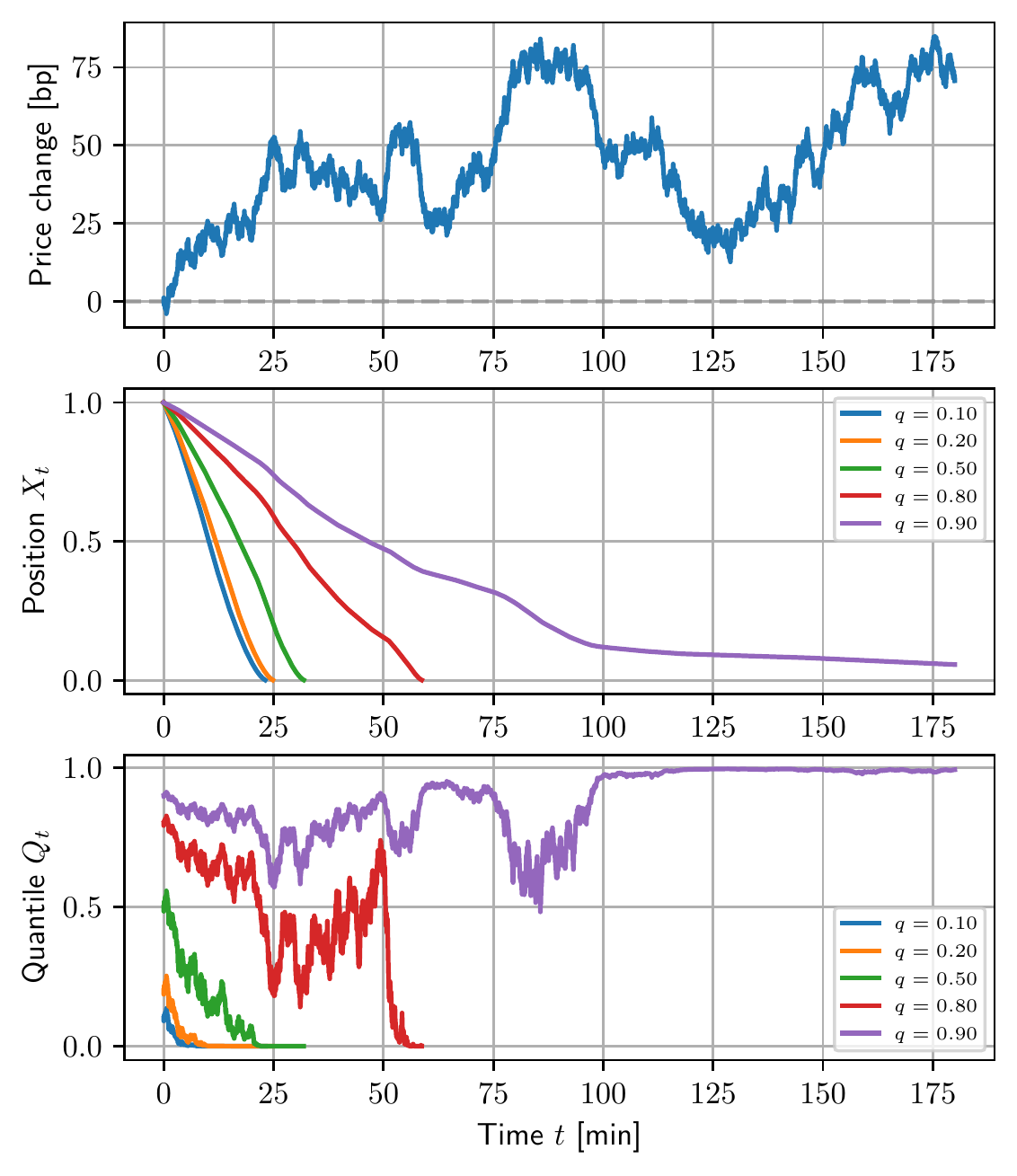}
\end{subfigure}
\caption{
	Illustration of the optimal adaptive liquidation processes with different values for target quantile $q \in \{0.1, 0.2, 0.5, 0.8, 0.9\}$ in two scenarios: when the price moves in an adverse direction (left), and when the price moves in a favorable direction (right).
	The plots in the top row show the realized price process over time, the plots in the middle row show the position processes $X_t$, and the plots in the bottom row show the quantile processes $Q_t$ associated with these strategies.}
\label{fig:numeric-sample-paths}
\end{figure}

\subsection{Comparison with Deterministic Strategies} \label{ssec:numerical-comparison}

We provide the detailed simulation results of the optimal adaptive strategy (\textsc{opt}) in a comparison with those of the deterministic strategies (\textsc{exp}, \textsc{vwap}) introduced in \S \ref{sec:deterministic}.

Figure \ref{fig:numeric-sample-paths2} illustrates the position process trajectories under these three strategies with target quantile $q=0.5$ in the two scenarios as in Figure \ref{fig:numeric-sample-paths}.
One can immediately observe that the deterministic strategies are not adaptive to the price changes.
The optimal adaptive strategy \textsc{opt} liquidates at a similar rate to the exponential schedule \textsc{exp} during the initial periods, but it deviates as soon as it adjusts its aggressiveness adaptively to the price changes.
In particular, it slows down when the price moves in the adverse direction (i.e., the quantile process $Q_t$ moves toward one).
Figure \ref{fig:numeric-average-path} shows the average position trajectories under these three strategies given the target quantile $q \in \{0.1, 0.5, 0.8\}$, aggregated across 100,000 runs of simulations.
Similarly to the above, we observe that \textsc{opt} trades more aggressively than \textsc{exp} when the target quantile is small and less aggressively when the target quantile is large.

\begin{figure}[H]
\centering
\begin{subfigure}{.505\textwidth}
  \centering
  \includegraphics[width=\linewidth]{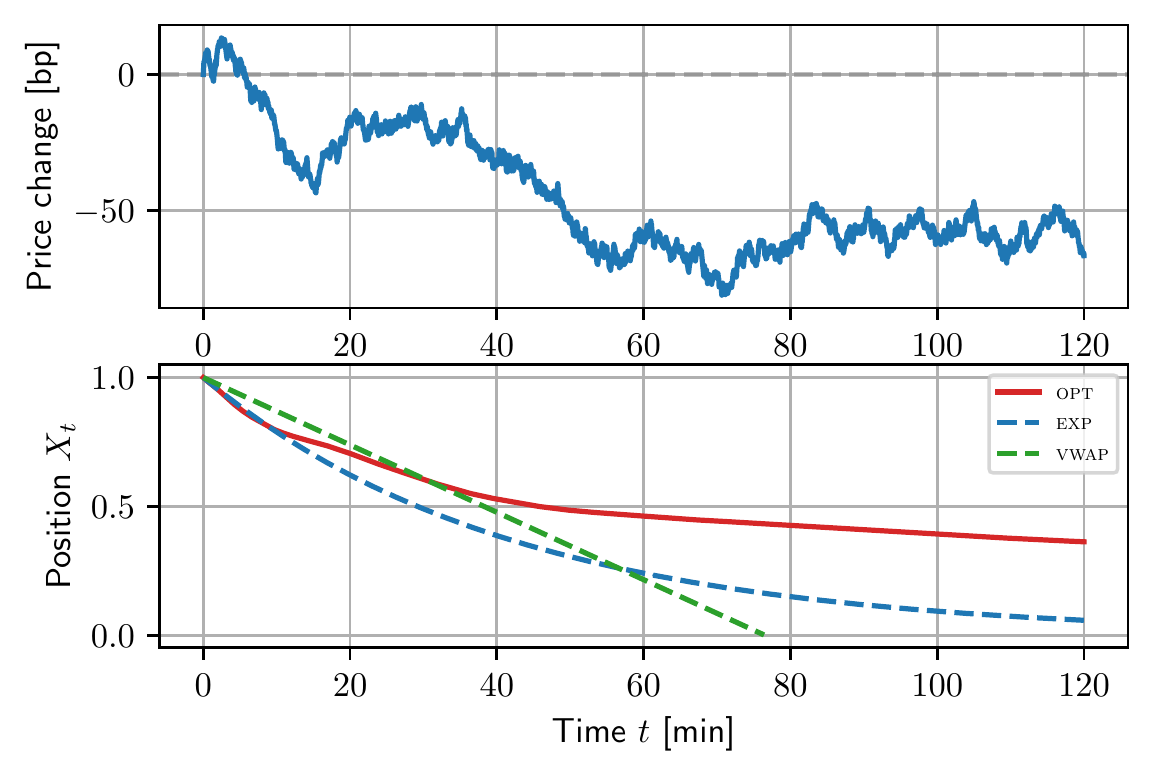}  
\end{subfigure}%
\begin{subfigure}{.495\textwidth}
  \centering
  \includegraphics[width=\linewidth]{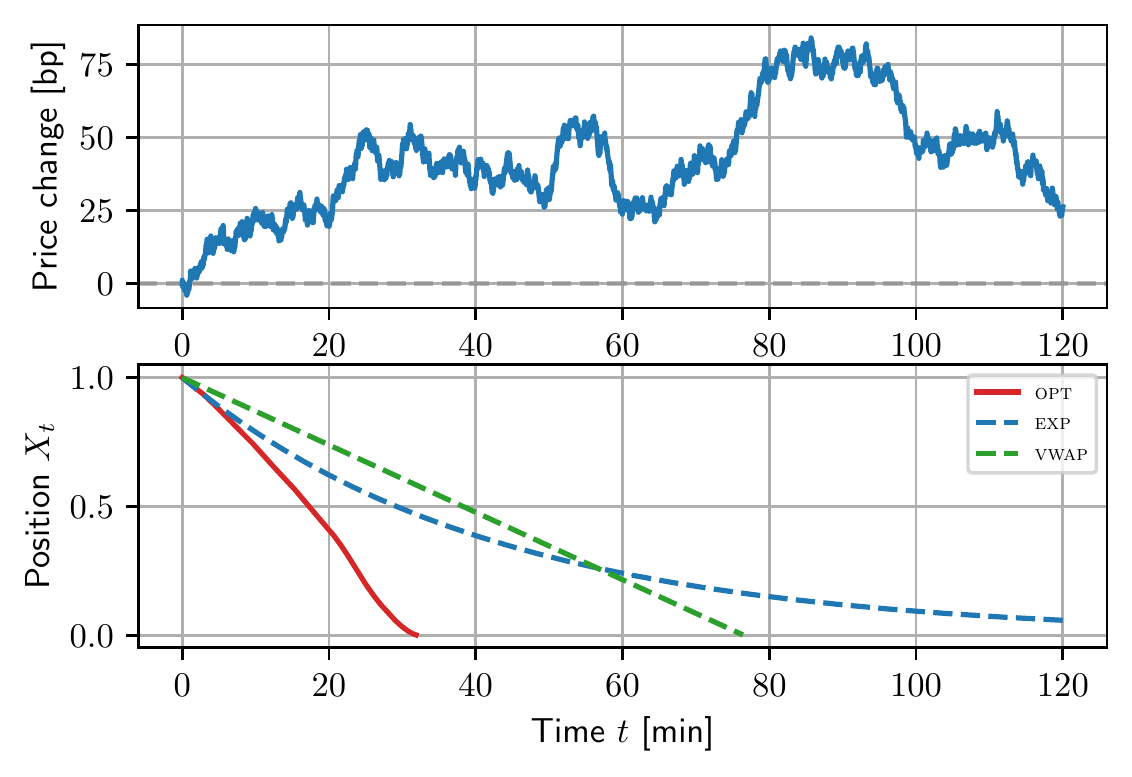}
\end{subfigure}
\caption{
	Illustration of liquidation processes under the optimal adaptive strategy (\textsc{opt}, red solid lines), the optimized deterministic schedule (\textsc{exp}, blue dashed lines), and the optimized VWAP schedule (\textsc{vwap}, green dashed lines) with the target quantile $q=0.5$, and in two scenarios: when the price moves in an adverse direction (left), and when the price moves in a favorable direction (right).
	}
\label{fig:numeric-sample-paths2}
\end{figure}

\begin{figure}[H]
\centering
\begin{subfigure}{.33\textwidth}
  \centering
  \includegraphics[width=\linewidth]{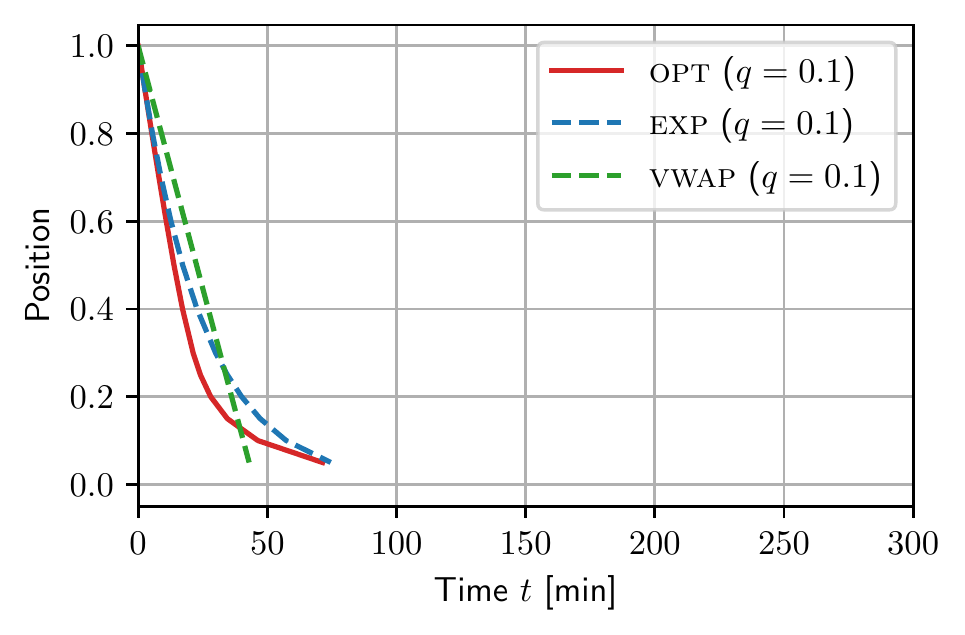}
\end{subfigure}%
\begin{subfigure}{.33\textwidth}
  \centering
  \includegraphics[width=\linewidth]{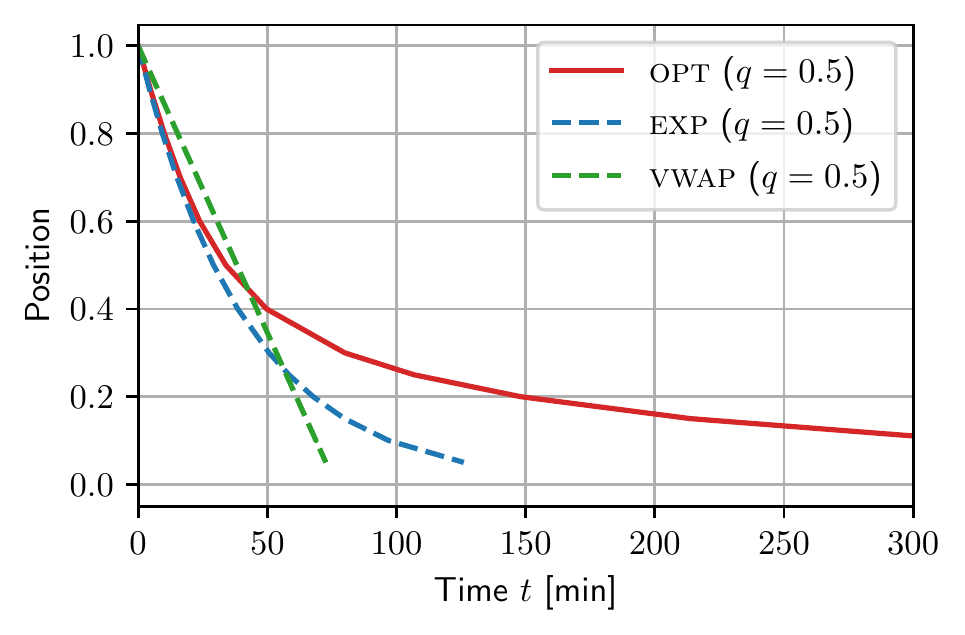}
\end{subfigure}
\begin{subfigure}{.33\textwidth}
  \centering
  \includegraphics[width=\linewidth]{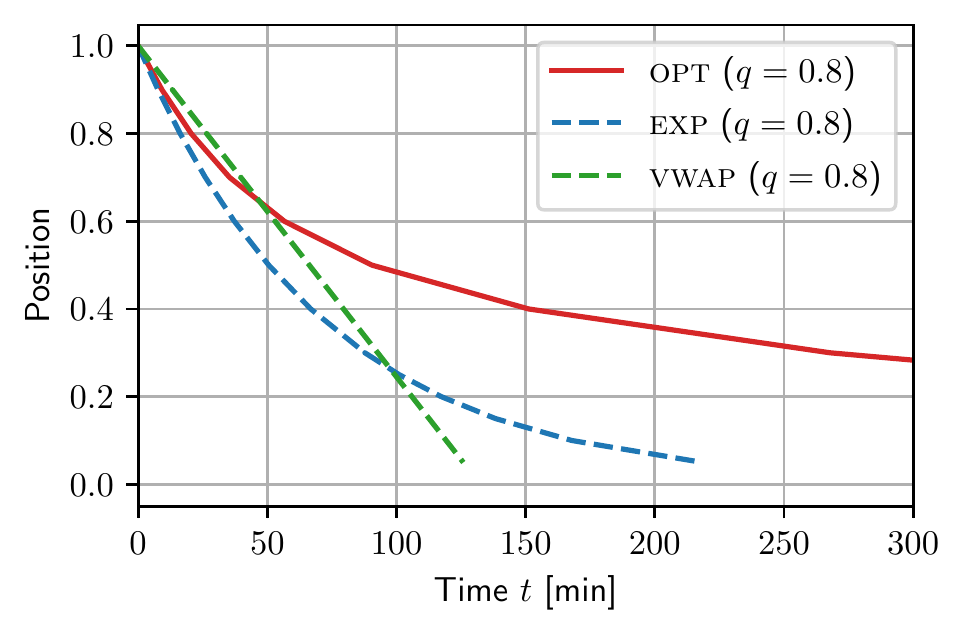}
\end{subfigure}%
\caption{
	Average liquidation processes under the optimal adaptive strategy (\textsc{opt}, red), the optimized deterministic schedule (\textsc{exp}, blue), and the optimized VWAP schedule (\textsc{vwap}, green) given the target quantile $q=0.1$ (left), $q=0.5$ (middle), and $q=0.8$ (right).
	Each curve reports the average time that the position process under each strategy falls behind a certain level, i.e., $\left\{ ( \mathbb{E}[ \min_t\{ X_t^\pi \leq y \} ], y ) \right\}_{y \in \{0.9, 0.8, \ldots, 0.05\}}$.
	}
\label{fig:numeric-average-path}
\end{figure}

Figure \ref{fig:numeric-histogram} shows the implementation shortfall distributions (i.e., the histograms of $C_\infty$) resulting from \textsc{opt} (top) and \textsc{exp} (bottom) with different values of the target quantile $q \in \{0.1, 0.5, 0.8\}$.
These histograms are obtained from $100{,}000$ simulation trials, where all the strategies see the same price process realization per simulation.
The resulting distributions are visually very different: \textsc{exp} yields a normal distribution whereas \textsc{opt} yields a distribution that has a sharp peak at the $q^\text{th}$ quantile.
Such a sharp peak can be explained by the threshold behavior of the optimal adaptive strategy, discussed at the end of \S \ref{ssec:opt-policy}.
Figure \ref{fig:numeric-comparison} visualizes these distributions for a wider range of target quantiles $q \in \{0.1,0.2, \ldots, 0.99\}$.
We observe that the implementation shortfall distribution induced by \textsc{opt} is more concentrated than the ones induced by \textsc{exp} and \textsc{vwap} when the target quantile $q$ is small, and it is the opposite when the target quantile $q$ is large (roughly speaking, it has a longer right tail, visually similar to an exponential distribution).

Table \ref{tab:numeric-stats} reports in detail the summary statistics of those implementation shortfall distributions.
We first confirm that the simulation results are consistent with our theoretic predictions, i.e., the measured CVaR values match with the values calculated from the expressions \eqref{eq:V-opt}, \eqref{eq:V-exp}, and \eqref{eq:V-vwap}.
We also see that the optimal adaptive strategy does not make an improvement over the deterministic strategies in terms of the average nor the variance as it specifically targets to minimize the CVaR value at a given target quantile.
But, it significantly improves the median value particularly in the risk-neutral regime (when $q \approx 1$).
See also a discussion on Figure \ref{fig:numeric-frontiers} below.

\begin{figure}[H]
\centering
\begin{subfigure}{.33\textwidth}
  \centering
  \includegraphics[width=\linewidth]{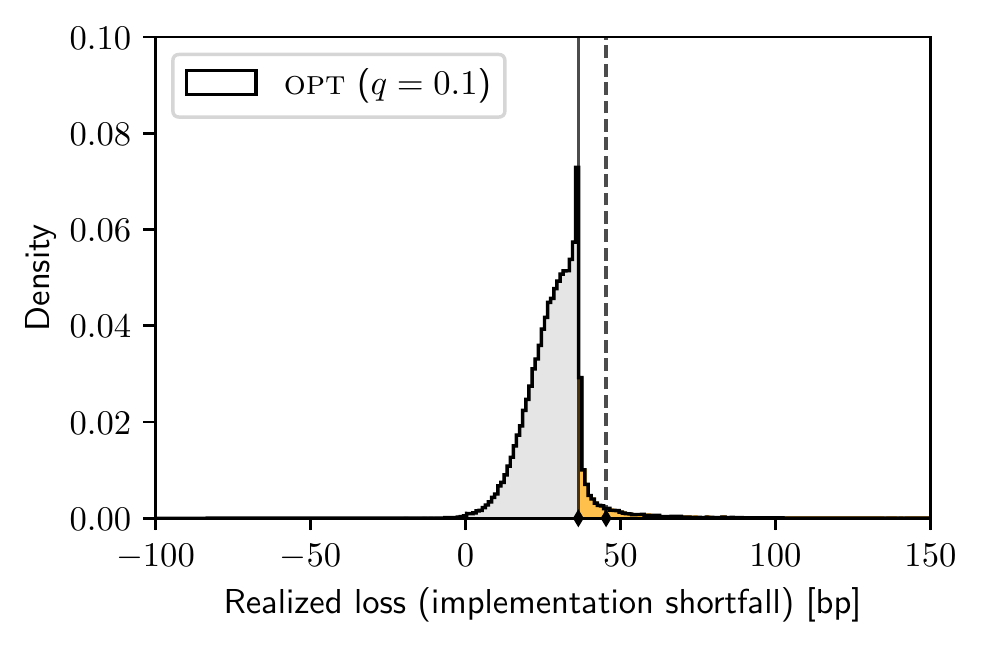}
  \includegraphics[width=\linewidth]{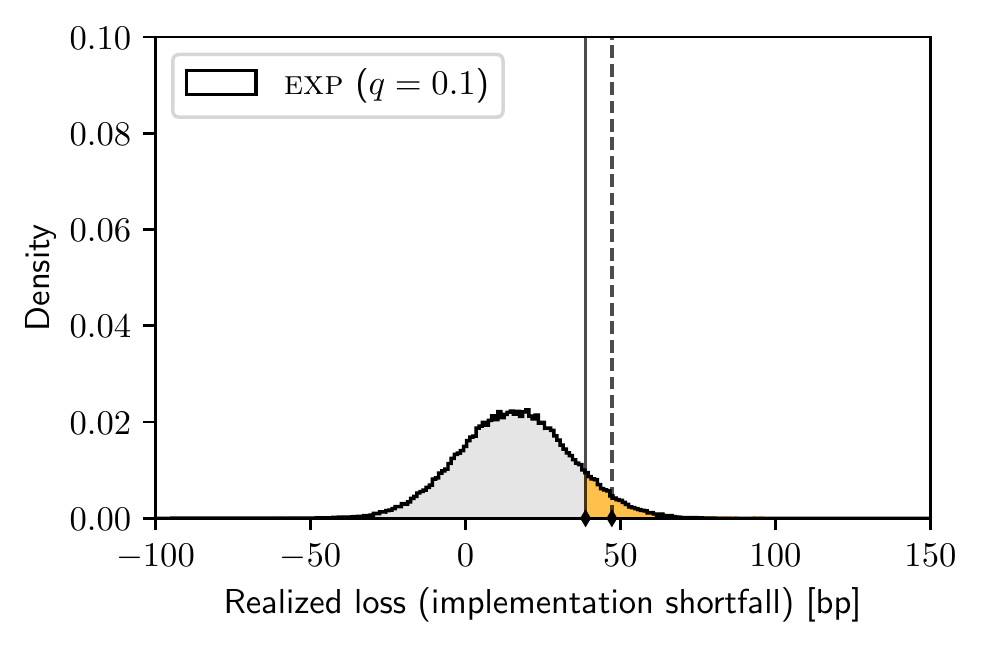}
\end{subfigure}%
\begin{subfigure}{.33\textwidth}
  \centering
  \includegraphics[width=\linewidth]{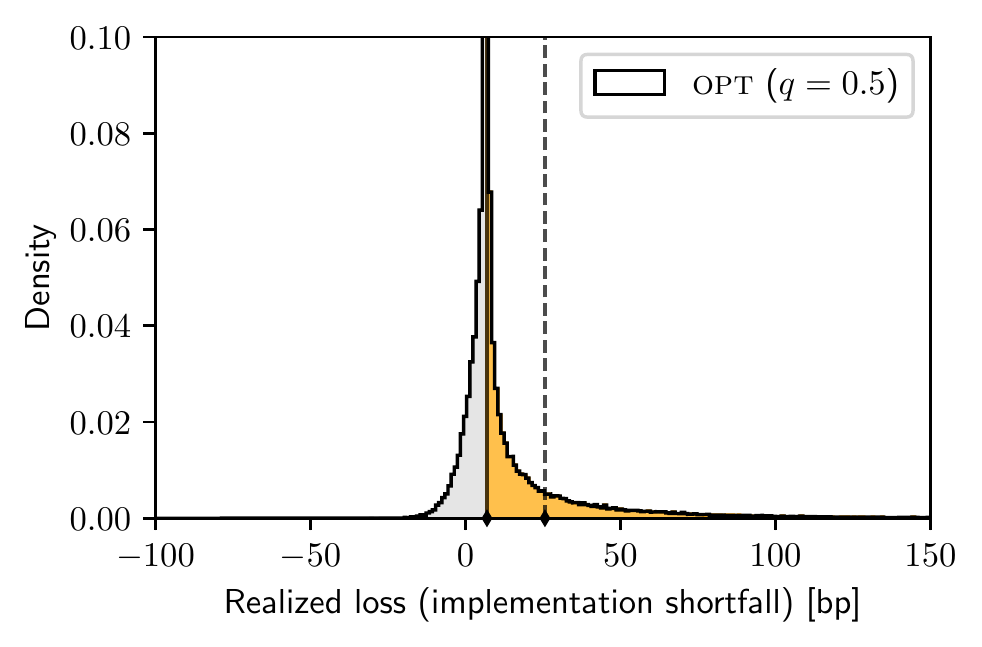}
  \includegraphics[width=\linewidth]{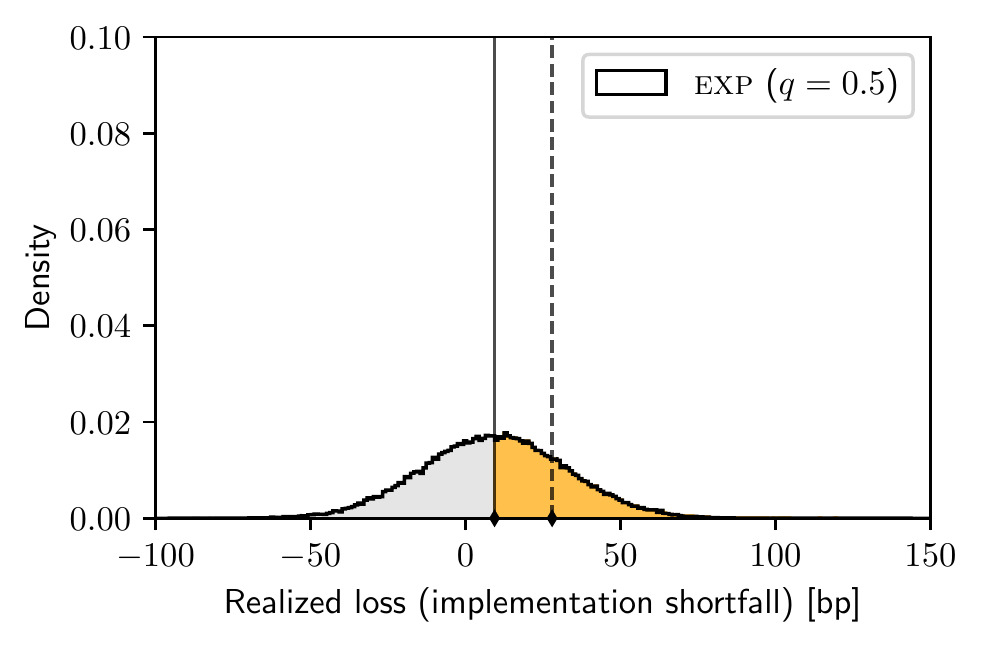}
\end{subfigure}
\begin{subfigure}{.33\textwidth}
  \centering
  \includegraphics[width=\linewidth]{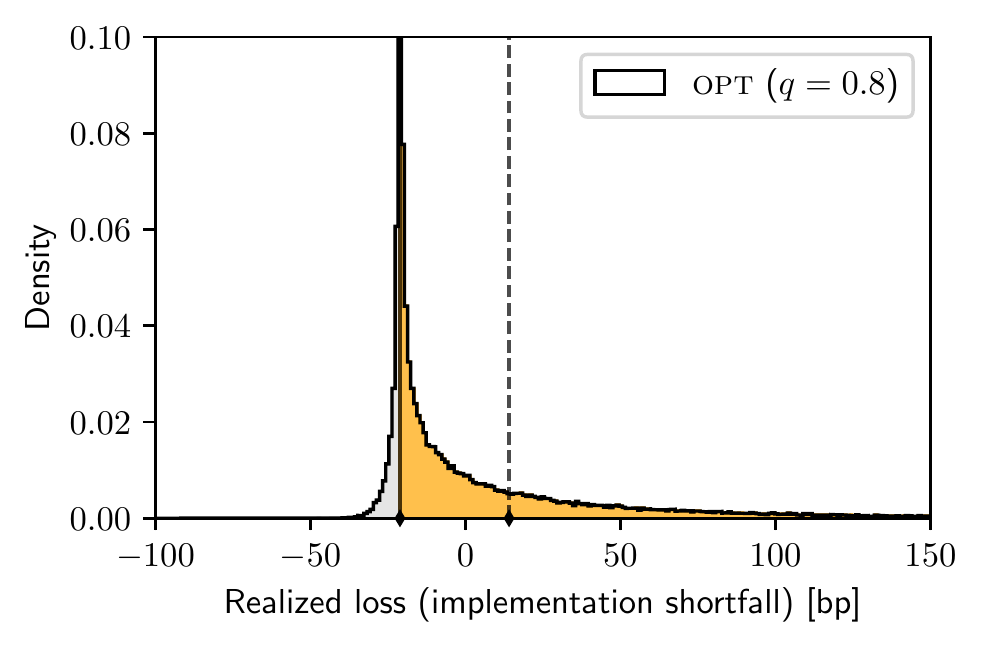}
  \includegraphics[width=\linewidth]{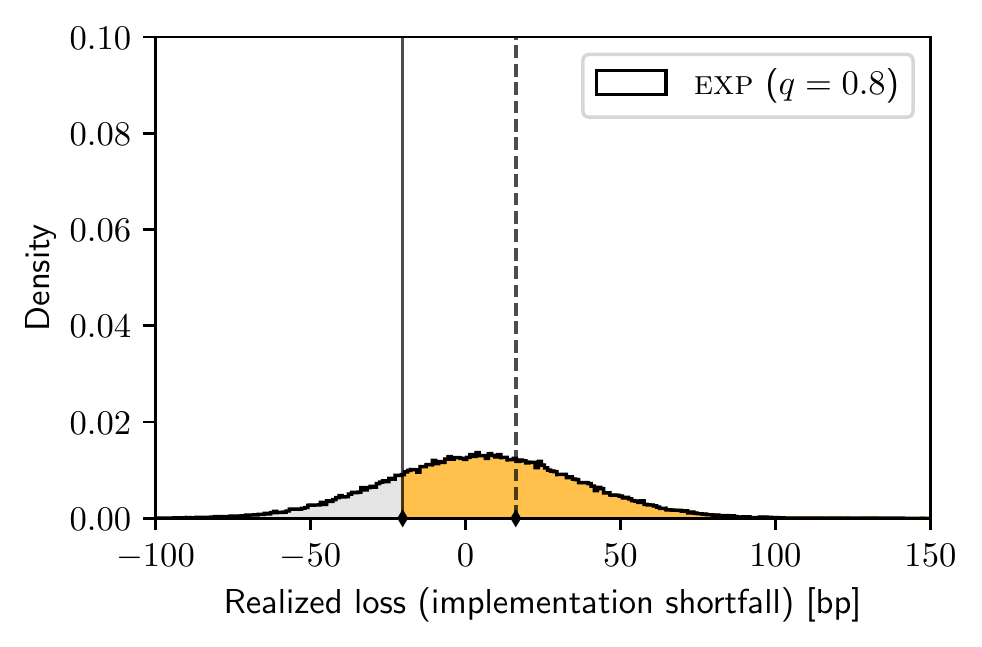}
\end{subfigure}%
\caption{
	The distributions of the implementation shortfall (i.e., the histogram of $C_\infty$) incurred by the optimal adaptive liquidation strategy (top) and the optimized deterministic schedule (bottom) given the target quantile $q \in \{0.1, 0.5, 0.8\}$ (left, middle, and right, respectively).
	In each plot, the solid vertical line represents the $q$-quantile (i.e., the value-at-risk $\var_q[C_\infty]$), and the dashed vertical line represents the tail average beyond the $q$-quantile (i.e., the conditional value-at-risk $\cvar_q[C_\infty]$).
	The highlighted area represents the worst $q$-fraction of the outcomes whose average corresponds to $\cvar_q[C_\infty]$.
	These are obtained from 100,000 runs of simulations. }
\label{fig:numeric-histogram}
\end{figure}

\begin{figure}[H]
	\centering
	\includegraphics[width=0.7\linewidth]{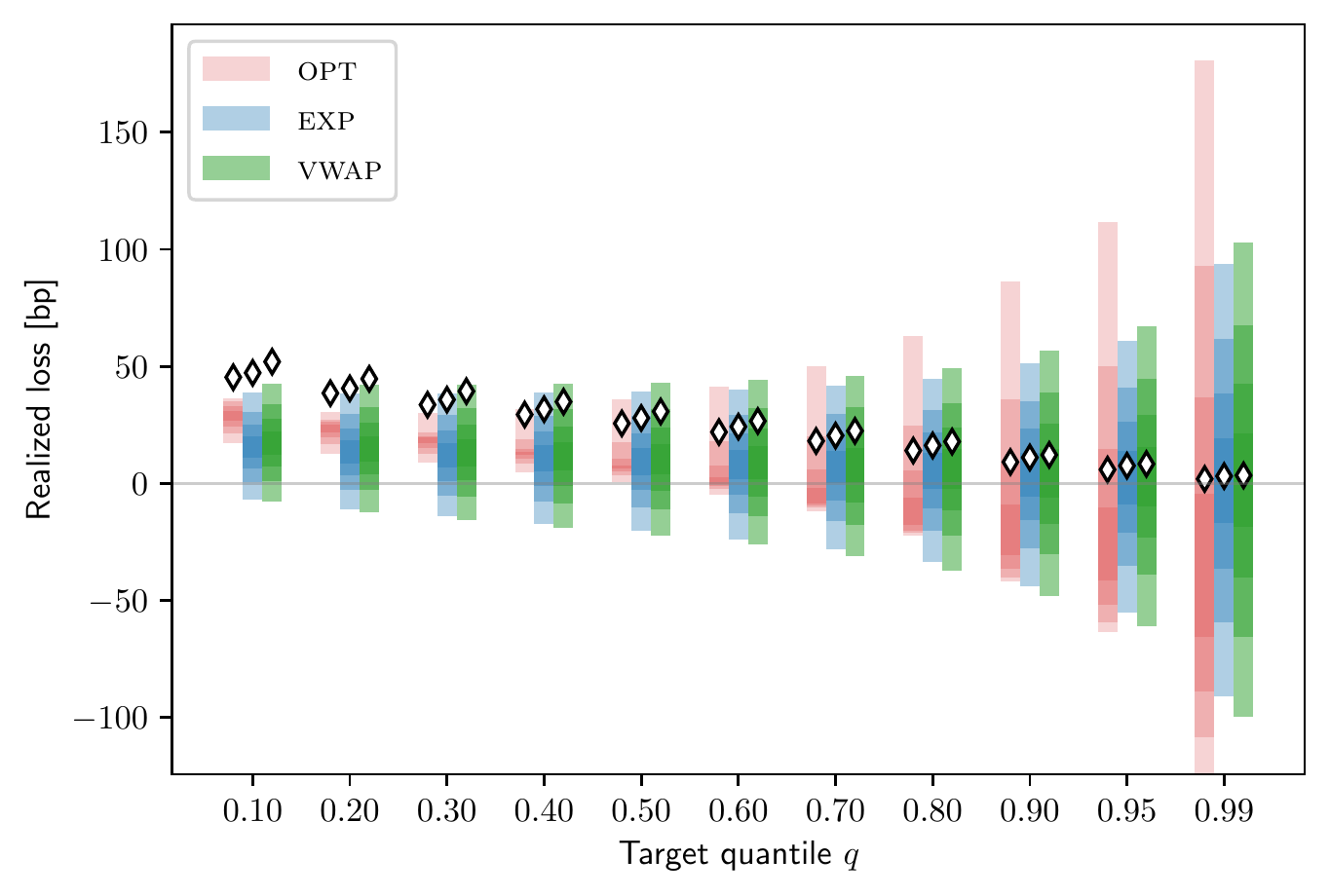} 
	\caption{
	Alternative visualization of the implementation shortfall distributions induced by the optimal adaptive strategy (\textsc{opt}, red), the optimized deterministic schedule (\textsc{exp}, blue), and the optimized VWAP schedule (\textsc{vwap}, green) given the target quantile $q \in \{0.1, 0.2, \ldots, 0.99\}$.
	Each distribution is represented with a colored bar with seven segments whose edges are $0.1, 0.2, \ldots, 0.9$-quantiles, e.g., the brightest segment on the top represents the range of the implementation shortfall between $0.1$-quantile and $0.2$-quantile, and the darkest segment represents the range of implementation shortfall between $0.4$-quantile and $0.6$-quantile.
	The diamond-shaped dots report the CVaR values of these distributions.
	}
	\label{fig:numeric-comparison}
\end{figure}

        \begin{table}[H]
        \centering \small
        \begin{tabular}{*{10}c}
        \toprule
        \thead{\makecell{Target\\quantile}} & \thead{Policy} & \thead{CVaR (theo.)} & \thead{VaR} & \thead{Average} & \thead{Median} & \thead{\makecell{Std.\\dev.}} & \thead{\makecell{50\%\\time}} & \thead{\makecell{95\%\\time}} \\ 
\midrule 
\multirow{3}{*}{$q=0.10$} & \textsc{opt} & $45.37$ ($44.96$)  & $36.42$  & $28.61$  & $29.11$  & $10.34$  & $14$  & $71$  \\ 
 & \textsc{exp} & $47.24$ ($47.02$)  & $38.75$  & $15.79$  & $15.77$  & $17.86$  & $17$  & $75$  \\ 
 & \textsc{vwap} & $52.00$ ($51.75$)  & $42.61$  & $17.37$  & $17.30$  & $19.65$  & $23$  & $43$  \\ 
\midrule 
\multirow{3}{*}{$q=0.20$} & \textsc{opt} & $38.55$ ($38.26$)  & $27.40$  & $23.95$  & $23.62$  & $12.93$  & $18$  & $145$  \\ 
 & \textsc{exp} & $40.62$ ($40.44$)  & $29.72$  & $13.60$  & $13.57$  & $19.26$  & $20$  & $87$  \\ 
 & \textsc{vwap} & $44.69$ ($44.51$)  & $32.74$  & $14.96$  & $14.88$  & $21.18$  & $26$  & $50$  \\ 
\midrule 
\multirow{3}{*}{$q=0.30$} & \textsc{opt} & $33.62$ ($33.40$)  & $20.28$  & $20.33$  & $18.52$  & $16.29$  & $22$  & $241$  \\ 
 & \textsc{exp} & $35.80$ ($35.66$)  & $22.76$  & $12.02$  & $11.99$  & $20.51$  & $23$  & $98$  \\ 
 & \textsc{vwap} & $39.41$ ($39.25$)  & $24.98$  & $13.21$  & $13.15$  & $22.56$  & $30$  & $57$  \\ 
\midrule 
\multirow{3}{*}{$q=0.40$} & \textsc{opt} & $29.44$ ($29.28$)  & $13.67$  & $17.20$  & $13.02$  & $20.23$  & $27$  & $363$  \\ 
 & \textsc{exp} & $31.72$ ($31.58$)  & $16.22$  & $10.66$  & $10.66$  & $21.80$  & $26$  & $111$  \\ 
 & \textsc{vwap} & $34.91$ ($34.75$)  & $17.76$  & $11.72$  & $11.64$  & $23.98$  & $34$  & $64$  \\ 
\midrule 
\multirow{3}{*}{$q=0.50$} & \textsc{opt} & $25.65$ ($25.48$)  & $6.95$  & $14.38$  & $6.95$  & $24.86$  & $34$  & $518$  \\ 
 & \textsc{exp} & $27.96$ ($27.80$)  & $9.39$  & $9.41$  & $9.39$  & $23.24$  & $29$  & $126$  \\ 
 & \textsc{vwap} & $30.74$ ($30.60$)  & $10.30$  & $10.34$  & $10.30$  & $25.56$  & $38$  & $73$  \\ 
\midrule 
\multirow{3}{*}{$q=0.60$} & \textsc{opt} & $21.95$ ($21.79$)  & $-0.42$  & $11.72$  & $0.40$  & $30.46$  & $44$  & $722$  \\ 
 & \textsc{exp} & $24.27$ ($24.10$)  & $1.79$  & $8.19$  & $8.20$  & $24.97$  & $34$  & $145$  \\ 
 & \textsc{vwap} & $26.67$ ($26.52$)  & $1.94$  & $8.98$  & $9.00$  & $27.46$  & $44$  & $84$  \\ 
\midrule 
\multirow{3}{*}{$q=0.70$} & \textsc{opt} & $18.17$ ($18.01$)  & $-9.23$  & $9.17$  & $-6.21$  & $37.56$  & $60$  & $998$  \\ 
 & \textsc{exp} & $20.44$ ($20.27$)  & $-7.46$  & $6.92$  & $6.95$  & $27.24$  & $40$  & $173$  \\ 
 & \textsc{vwap} & $22.48$ ($22.31$)  & $-8.21$  & $7.59$  & $7.62$  & $29.97$  & $52$  & $100$  \\ 
\midrule 
\multirow{3}{*}{$q=0.80$} & \textsc{opt} & $14.07$ ($13.92$)  & $-21.12$  & $6.62$  & $-13.14$  & $47.48$  & $90$  & $1404$  \\ 
 & \textsc{exp} & $16.24$ ($16.05$)  & $-20.25$  & $5.53$  & $5.51$  & $30.62$  & $51$  & $218$  \\ 
 & \textsc{vwap} & $17.89$ ($17.66$)  & $-22.28$  & $6.09$  & $6.07$  & $33.71$  & $66$  & $126$  \\ 
\midrule 
\multirow{3}{*}{$q=0.90$} & \textsc{opt} & $9.16$ ($9.03$)  & $-41.76$  & $3.92$  & $-21.65$  & $64.62$  & $169$  & $2125$  \\ 
 & \textsc{exp} & $11.08$ ($10.87$)  & $-43.93$  & $3.83$  & $3.78$  & $37.22$  & $75$  & $323$  \\ 
 & \textsc{vwap} & $12.22$ ($11.96$)  & $-48.20$  & $4.23$  & $4.20$  & $41.02$  & $98$  & $186$  \\ 
\midrule 
\multirow{3}{*}{$q=0.95$} & \textsc{opt} & $5.92$ ($5.85$)  & $-64.35$  & $2.35$  & $-28.15$  & $82.52$  & $294$  & $2874$  \\ 
 & \textsc{exp} & $7.59$ ($7.35$)  & $-71.53$  & $2.68$  & $2.59$  & $45.23$  & $110$  & $477$  \\ 
 & \textsc{vwap} & $8.34$ ($8.09$)  & $-78.99$  & $2.93$  & $2.93$  & $49.82$  & $145$  & $275$  \\ 
\midrule 
\multirow{3}{*}{$q=0.99$} & \textsc{opt} & $1.94$ ($2.09$)  & $-132.60$  & $0.59$  & $-37.62$  & $129.72$  & $873$  & $4740$  \\ 
 & \textsc{exp} & $3.16$ ($2.90$)  & $-167.00$  & $1.20$  & $1.12$  & $71.96$  & $279$  & $1207$  \\ 
 & \textsc{vwap} & $3.47$ ($3.19$)  & $-183.10$  & $1.33$  & $1.49$  & $79.16$  & $366$  & $696$  \\ 
        \bottomrule
        \end{tabular}
        \caption{
        		Summary statistics of implementation shortfall $C_\infty$ incurred by the optimal adaptive strategy (\textsc{opt}), the optimized deterministic schedule (\textsc{exp}), and the optimized VWAP schedule (\textsc{vwap}), given $x=1$, $\sigma=5.06$, $\eta=1.56 \times 10^3$, and the target quantile $q \in \{0.1, 0.2, \ldots, 0.99\}$.
		For each combination of a policy and a target quantile, it reports the CVaR value, the VaR value, the average, the median, and the standard deviation of implementation shortfall $C_\infty$, represented in basis points.
		It additionally reports the average time to complete 50\% of the execution and the average time to complete 95\% of the execution, represented in minutes.
		These statistics are measured from 100,000 runs of simulations.
		The numbers in parentheses in the third column report the theoretically predicted CVaR values, computed with the expressions \eqref{eq:V-opt}, \eqref{eq:V-exp}, and \eqref{eq:V-vwap}.
                }
        \label{tab:numeric-stats}
        \end{table}

Figure \ref{fig:numeric-frontiers} demonstrates some additional advantages of the adaptive strategy other than minimizing the CVaR value.
Here, the target quantile $q$ is considered as a control parameter to the strategies on the behalf of a trader who may not be particularly interested in minimizing the CVaR value, and we compare three families of strategies (\textsc{opt}, \textsc{exp}, and \textsc{vwap} with the target quantile ranging from $0.1$ to $0.99$) in terms of the mean, the median, and the tail probability of the resulting implementation shortfall distributions.
It is shown in the left plot that, by implementing the adaptive strategy \textsc{opt} with a suitably chosen target quantile, it can achieve a smaller median value than any of deterministic strategies that yield the same average loss value.
Similarly, it is shown in the right plot that a smaller tail probability can be achieved by \textsc{opt} for a given target threshold level: for example, when wanting to avoid the event that the implementation shortfall exceeds 25 basis points, the trader can implement \textsc{opt} with the target quantile $q=0.4$ so that such event takes place with probability 12.9\%, whereas the probability is 25.3\% under the best deterministic schedule (\textsc{exp} with $q=0.6$).

\begin{figure}[H]
\centering
\begin{subfigure}{.49\textwidth}
  \centering
  \includegraphics[width=\linewidth]{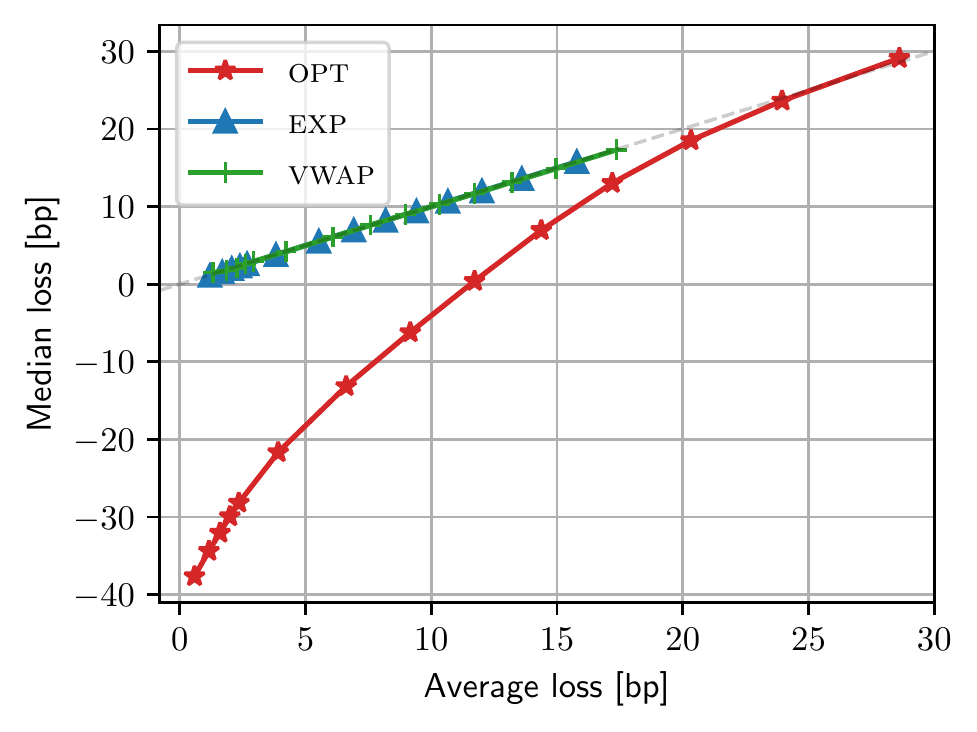}
\end{subfigure}
\begin{subfigure}{.49\textwidth}
  \centering
  \includegraphics[width=\linewidth]{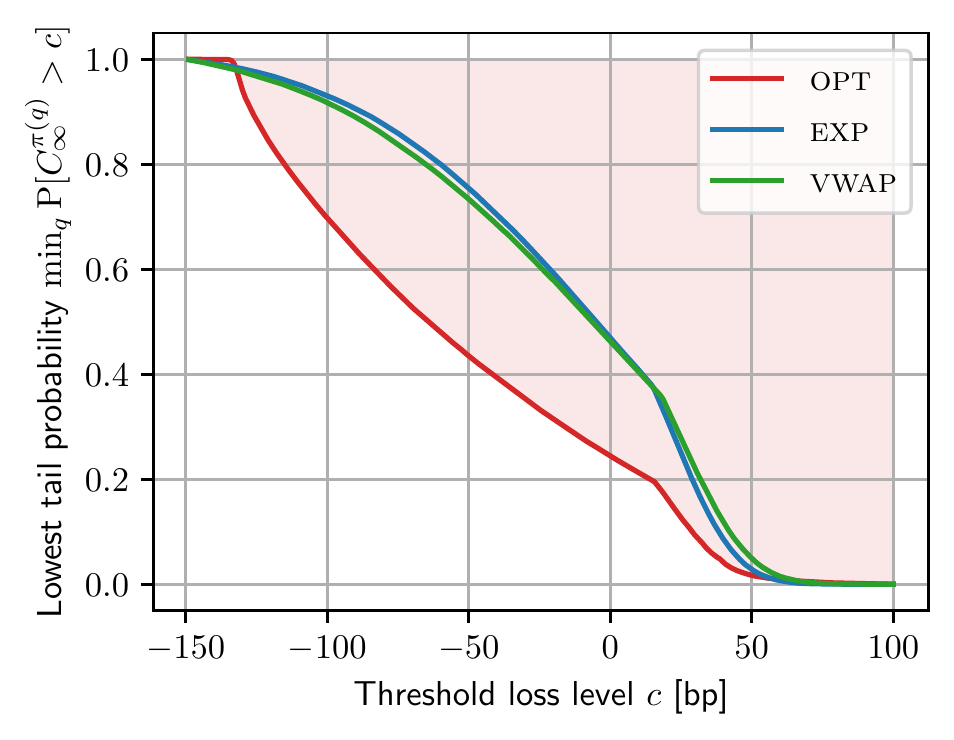}
\end{subfigure}%
\caption{
	Plots that characterize three families of implementation shortfall distributions produced by the optimal adaptive strategy (\textsc{opt}, red), the optimized deterministic schedule (\textsc{exp}, blue), and the optimized VWAP schedule (\textsc{vwap}, green).
	The left plot shows the average and the median loss values that can be obtained by a strategy (either \textsc{opt}, \textsc{exp}, or \textsc{vwap}) when $q$ varies, i.e., $\{ ( \mathbb{E}[ C_\infty^{\pi_q}], \text{Median}[ C_\infty^{\pi_q} ] ) \}_{q \in \{0.1, \ldots, 0.99\}}$ where $\pi_q$ denotes the strategy with a target quantile $q$.
	The right plot reports the pointwise minimum of the complementary cumulative distribution functions induced by a strategy with the different values of target quantiles, i.e., $\min_{q \in \{0.1, \ldots, 0.99\}}\mathbb{P}[ C_\infty^{\pi_q} > c ]$, the lowest tail probability that the implementation shortfall exceeds a given threshold level $c$ when the target quantile $q$ for the strategy can be arbitrarily chosen.
	}
\label{fig:numeric-frontiers}
\end{figure}


{\small
  \singlespacing
\bibliography{cvar-exec}
}

\newpage

\appendix

\newpage
\newpage
\seclabel{Organization of appendix.}
The appendix is organized as follows.
In Appendix~\ref{app:deterministic}, we identify the optimal deterministic strategy and its performance for \S \ref{sec:deterministic}.
The CVaR performance of the optimal deterministic strategy is utilized as an upper bound on the CVaR performance of the optimal adaptive strategy.
In Appendix~\ref{app:proof-problem}, we provide the basic characterizations of S-CVaR measure introduced in \S \ref{sec:problem}.
In Appendix~\ref{app:proof-cvar-dp}, we provide the preliminary characterizations of the value function, and by applying Sion's minimax theorem, we prove Theorem~\ref{thm:cvar-minimax} and Theorem~\ref{thm:cvar-dp} stated in \S \ref{sec:cvar-dp}.
The main challenge here is to verify the conditions of Sion's minimax theorem.
In Appendix~\ref{app:proof-opt}, we provide proofs for \S \ref{sec:opt}.
We first state and prove Theorem~\ref{thm:optimality-detail} from which Theorem~\ref{thm:verification} and Theorem~\ref{thm:policy-optimality} follow almost immediately.
Proposition~\ref{prop:cvar-dp-dynkin}, Proposition~\ref{prop:Emden-Fowler} and Theorem~\ref{thm:value-function} are proven separately.

\section{Optimal Deterministic Schedules} \label{app:deterministic}

\begin{lemma}
        For any $a, b > 0$,
        \begin{equation}
                \min_{x \in \mathbb{R}_+}\left\{ \frac{a}{x} + b \sqrt{x} \right\}
                        = \left. \frac{a}{x} + b \sqrt{x} ~ \right|_{x = \left( \frac{2a}{b} \right)^{\frac{2}{3}} }
                        = \frac{3 a^{\frac{1}{3}} b^{\frac{2}{3}} }{ 2^{\frac{2}{3}} }.
        \end{equation}
\end{lemma}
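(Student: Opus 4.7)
The plan is to proceed by elementary single-variable calculus on the strictly convex function $f(x) \defeq a/x + b\sqrt{x}$ defined on $(0, \infty)$.

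First, I would observe that $f$ is smooth on $(0,\infty)$, tends to $+\infty$ as $x \searrow 0$ (via the $a/x$ term) and as $x \to \infty$ (via the $b\sqrt{x}$ term), so a minimizer exists in the interior. Next, I would compute
\begin{equation}
  f'(x) = -\frac{a}{x^2} + \frac{b}{2\sqrt{x}},
  \qquad f''(x) = \frac{2a}{x^3} - \frac{b}{4 x^{3/2}} + \frac{b}{2x^{3/2}} \cdot (\text{correction}),
\end{equation}
but the cleanest route is to note strict convexity directly: each term $a/x$ and $b\sqrt{x}$ has strictly positive second derivative on $(0,\infty)$ when $a,b>0$ --- wait, $b\sqrt{x}$ is concave, so I cannot argue convexity termwise. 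Instead, I would rely on the boundary behavior plus uniqueness of the critical point.

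Setting $f'(x) = 0$ gives $b x^{3/2} = 2a$, hence the unique critical point is
\begin{equation}
  x^\star = \left( \frac{2a}{b} \right)^{\frac{2}{3}}.
\end{equation}
Since $f$ is continuous on $(0,\infty)$, blows up at both endpoints, and has a unique critical point, $x^\star$ must be the global minimizer. Finally, I would substitute and simplify:
\begin{equation}
  f(x^\star) = a \left(\frac{b}{2a}\right)^{\frac{2}{3}} + b \left(\frac{2a}{b}\right)^{\frac{1}{3}}
  = a^{\frac{1}{3}} b^{\frac{2}{3}} \bigl( 2^{-\frac{2}{3}} + 2^{\frac{1}{3}} \bigr)
  = \frac{3 a^{\frac{1}{3}} b^{\frac{2}{3}}}{2^{\frac{2}{3}}},
\end{equation}
using $2^{-2/3} + 2^{1/3} = 2^{-2/3}(1 + 2) = 3 \cdot 2^{-2/3}$. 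No step here is a real obstacle; the only thing to be careful about is avoiding a spurious convexity claim (since $b\sqrt{x}$ is concave), which I would sidestep by invoking the boundary-behavior-plus-unique-critical-point argument above.
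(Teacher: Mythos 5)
Your proposal is correct and follows essentially the same route as the paper: compute $f'(x)$, locate the unique critical point $x^\star = (2a/b)^{2/3}$, and substitute. You are in fact slightly more careful than the paper's one-line proof, since you justify global optimality via the blow-up of $f$ at both endpoints of $(0,\infty)$ rather than leaving it implicit.
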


\begin{proof}
        Let $f(x) \defeq \frac{a}{x} + b \sqrt{x}$.
        Since $f'(x) = -\frac{a}{x^2} + \frac{b}{2\sqrt{x}}$, the equation $f'(x)=0$ has a unique solution at $x = \left( \frac{2a}{b} \right)^{\frac{2}{3}}$.
\end{proof}

\begin{proof}[\proofnamest{Proof of Proposition~\ref{prop:exp-schedule}}]
        We prove the optimality of exponential schedules and identify the optimal decaying rate.

        First we consider a mean-variance optimization problem:
        \begin{equation} \label{eq:mv-lagrangian}
                \minimize_{\pi \in \mathcal{D}(x)} ~~ \E[ C_\infty^{x,\pi} ] + \lambda \text{Var}[ C_\infty^{x,\pi} ],
        \end{equation}
        where $\mathcal{D}(x)$ is the set of all deterministic policies and
        $\lambda \in (0,\infty)$ is a penalty for variance term.  Applying \eqref{eq:normcvar},
        this is equivalent to an optimization over the deterministic trajectories of
        $(X_t)_{t \geq 0}$:
        \begin{equation}
                \inf_{X:X_0=x} \int_{t=0}^\infty \left( \frac{\eta}{2} \dot{X}_t^2 + \lambda \sigma^2 X_t^2 \right)dt,
        \end{equation}
        where $\dot{X}_t \defeq dX_t/dt$.  By applying standard calculus of variations arguments
        \citep{CourantHilbert53}, we deduce that the optimal schedule $X^\star$ has to satisfy the
        Euler-Lagrange equation
        $ 2 \lambda \sigma^2 X_t^\star - \eta \ddot{X}_t^\star = 0, $ at each time $t$, with
        boundary conditions $X_0^\star = x$ and $\lim_{t \rightarrow \infty} X_t^\star = 0$.  The
        solution is uniquely given by an exponential schedule
        \begin{equation}
                X_t^\star = x \exp\left( - t /  \rho_\lambda \right),
        \end{equation}
        with the decay rate $\rho_\lambda \defeq \sqrt{ \frac{ \eta }{ 2 \lambda \sigma^2 } } \in (0, \infty)$, and such a schedule yields
        \begin{equation} \label{eq:mv-exponential}
                \E[ C_\infty^{x,\pi^\star} ] = \frac{\eta x^2}{4 \rho_\lambda}
                , \quad
                \text{Var}[ C_\infty^{x,\pi^\star} ] = \frac{\sigma^2 x^2 \rho_\lambda}{2}
                .
        \end{equation}
        In other words, the efficient frontier of the range of mean and variance achievable by deterministic schedules, $\big\{ \big( \E[ C_\infty^{x,\pi} ], \text{Var}[ C_\infty^{x,\pi} ] \big) \big\}_{\pi \in \mathcal{D}(x)}$, is given by $\big\{ \big( \frac{\eta x^2}{4 \rho}, \frac{\sigma^2 x^2 \rho}{2} \big) \big\}_{\rho \in (0,\infty)}$ and is attained by exponential schedules.

        Let us now consider the achievable range of mean and standard deviation, $\big\{ \big( \E[ C_\infty^{x,\pi} ], \sqrt{ \text{Var}[ C_\infty^{x,\pi} ] } \big) \big\}_{\pi \in \mathcal{D}(x)}$.
        Observe that its efficient frontier is still characterized by $\big\{ \big( \frac{\eta x^2}{4 \rho}, \sqrt{ \frac{\sigma^2 x^2 \rho}{2} } \big) \big\}_{\rho \in (0,\infty)}$.
        Therefore, the optimal solution of the following mean-standard deviation optimization problem
        \begin{equation} \label{eq:ms-lagrangian}
                \minimize_{\pi \in \mathcal{D}(x)} ~~ \E[ C_\infty^{x,\pi} ] + \theta \sqrt{ \text{Var}[ C_\infty^{x,\pi} ] },
        \end{equation}
        is also given by an exponential schedule, for any given $\theta \in (0,\infty)$.

        Finally, observe that for any deterministic schedule $\pi \in \mathcal{D}(x)$ the
        resulting cost $C_\infty^{x,\pi}$ is normally distributed, thus
        $\cvar_q\left[ C_\infty^{x,\pi} \right]$ is minimized by an exponential schedule.
        Applying \eqref{eq:normcvar}, for any $\pi \in \mathcal{D}(x)$ and $q \in (0,1)$, we have
        \begin{equation}
                \cvar_q\left[ C_\infty^{x,\pi} \right] = \E\left[ C_\infty^{x,\pi} \right] + \frac{ \kappa(q) }{q} \sqrt{ \text{Var}\left[ C_\infty^{x,\pi} \right] },
        \end{equation}
        Using \eqref{eq:mv-exponential}, the optimal time constant can be determined as
        \begin{equation}
                \tau^\star
                        \defeq \argmin_\rho\left\{ \frac{\eta x^2}{4 \rho } + \frac{ \kappa(q) }{ q } \sqrt{ \frac{\sigma^2 x^2 \rho}{2} } \right\}
                        = \left( \frac{\eta x q }{ \sqrt{2} \sigma \kappa(q) } \right)^{\frac{2}{3}}.
        \end{equation}
        This concludes the proof.
\end{proof}

\begin{proof}[\proofnamest{Proof of Proposition~\ref{prop:vwap-schedule}}]
        With some calculation, it can be easily shown that a VWAP schedule $X_t = x \left( 1 - \frac{t}{T} \right)^+$ yields
        \begin{equation} \label{eq:mv-twap}
                \E[ C_\infty^{x,\pi} ]  = \frac{\eta x^2}{2 T}
                , \quad
                \text{Var}[ C_\infty^{x,\pi} ] =  \frac{\sigma^2 x^2 T}{3}
                .
        \end{equation}
        Therefore, the optimal execution horizon $T^\star$ is given by
        \begin{equation}
                T^\star
                        \defeq \argmin_T\left\{ \frac{\eta x^2}{2 T} + \frac{\kappa(q)}{q} \sqrt{\frac{\sigma^2 x^2 T}{3}} \right\}
                        = \left( \frac{\sqrt{3} \eta x q }{\sigma \kappa(q)} \right)^{\frac{2}{3}}.
        \end{equation}
\end{proof}

We state the following lemma that identifies the boundary values of $\scvar_q[ C_\infty^{x,\textsc{exp}} ]$, which is useful to characterize the optimal value function.

\begin{lemma} \label{lem:kappa-limit}
        The function $\kappa(q)$ given in \eqref{eq:kappa} satisfies
        \begin{equation}
                \sup_{q \in [0,1]} \kappa(q) < \infty
                , \quad
                \lim_{n \rightarrow \infty} \sqrt{n} \kappa(1/n) = \lim_{n \rightarrow \epsilon} \sqrt{n} \kappa(1-1/n) = 0.
        \end{equation}
\end{lemma}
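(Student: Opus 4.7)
The plan is to handle boundedness by inspection and then reduce both limits to a single tail estimate via a Mills' ratio bound.

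First, since $\phi(z) \le \frac{1}{\sqrt{2\pi}}$ for every $z \in \mathbb{R}$, the uniform bound $\sup_{q \in [0,1]}\kappa(q) \le \frac{1}{\sqrt{2\pi}}$ is immediate from the definition $\kappa(q) = \phi(\Phi^{-1}(1-q))$.

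Next, I would observe the symmetry $\kappa(q) = \kappa(1-q)$: since $\Phi^{-1}(1-q) = -\Phi^{-1}(q)$ and $\phi$ is an even function, we have $\phi(\Phi^{-1}(1-q)) = \phi(\Phi^{-1}(q))$. Hence $\sqrt{n}\,\kappa(1 - 1/n) = \sqrt{n}\,\kappa(1/n)$, and it suffices to establish a single limit, namely
\begin{equation}
\lim_{q \searrow 0}\,\frac{\kappa(q)}{\sqrt{q}} \;=\; 0.
\end{equation}

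For this, I would change variables by setting $z = \Phi^{-1}(1-q)$, so that $q = 1 - \Phi(z)$ and $z \to \infty$ as $q \searrow 0$. Using the standard Mills' ratio lower bound
\begin{equation}
1 - \Phi(z) \;\ge\; \frac{z}{1+z^2}\,\phi(z), \qquad z > 0,
\end{equation}
I would estimate
\begin{equation}
\frac{\kappa(q)^2}{q} \;=\; \frac{\phi(z)^2}{1 - \Phi(z)} \;\le\; \frac{(1+z^2)\,\phi(z)}{z}.
\end{equation}
Since $\phi(z) = \frac{1}{\sqrt{2\pi}}\,e^{-z^2/2}$ decays faster than any polynomial grows, the right-hand side tends to $0$ as $z \to \infty$, which yields $\kappa(q)^2/q \to 0$ and hence the desired limit.

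There is no substantive obstacle here beyond invoking the correct Mills'-type bound; the whole argument is a short asymptotic calculation once the symmetry $\kappa(q)=\kappa(1-q)$ is noted to eliminate the $q \nearrow 1$ case. (As a minor clean-up, the ``$\lim_{n \to \epsilon}$'' in the statement appears to be a typo for $\lim_{n \to \infty}$, which is how I interpret it above.)
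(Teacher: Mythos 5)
Your proof is correct and follows essentially the same route as the paper: the uniform bound on $\phi$, the symmetry $\kappa(q)=\kappa(1-q)$ to collapse the two limits into one, and a Mills'-ratio lower bound on the Gaussian tail to control $\kappa(q)/\sqrt{q}$ as $q\searrow 0$. Your version of the tail estimate (bounding $\kappa(q)^2/q$ directly by $(1+z^2)\phi(z)/z$) is marginally more streamlined than the paper's, which extracts $z_n \leq \sqrt{2\log n}$ first, but the idea is identical; you are also right that ``$\lim_{n\to\epsilon}$'' is a typo for $\lim_{n\to\infty}$.
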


\begin{proof}
        Recall that $\kappa(q) \defeq \phi\left( \Phi^{-1}(1-q) \right)$.
        Since $\phi(z) = \frac{1}{\sqrt{2\pi}} e^{-x^2/2} \leq \frac{1}{\sqrt{2 \pi}}$ for any $z \in \mathbb{R}$, we have $\sup_{q \in [0,1]} \kappa(q) \leq \frac{1}{\sqrt{2\pi}} < \infty$.
        Also note that $\kappa(q) = \kappa(1-q)$ since $\Phi^{-1}(q) = - \Phi^{-1}(1-q)$ and $\phi(z) = \phi(-z)$.
        Therefore, it suffices to show that $\lim_{n \rightarrow \infty} \sqrt{n} \kappa(1/n) = 0$.

        We have the following tail bounds of standard normal distribution \citep[Theorem~1.2.3]{Durrett}: for any $z > 0$,
        \begin{equation}
                \left( z^{-1} - z^{-3} \right) \phi(z) \leq 1-\Phi(z) \leq z^{-1} \phi(z).
        \end{equation}
        Define $z_n \defeq \Phi^{-1}(1-1/n) > 0$, and then we have
        \begin{equation}
                \frac{\phi( z_n )}{2z_n} \leq \frac{1}{n} \leq \frac{\phi( z_n )}{z_n},
        \end{equation}
        for large enough $n$ (such that $z_n^{-3} \leq \frac{1}{2} z_n^{-1}$) since $\lim_{n \rightarrow \infty} z_n = \infty$.
        Observe that $1/n \leq \phi(z_n)/z_n \leq \phi(z_n) = \exp(-z_n^2/2)/\sqrt{2\pi} \leq \exp( - z_n^2/2 )$ and thus $z_n \leq \sqrt{2 \log n}$.
        We further deduce that, since $\kappa(1/n) = \phi( \Phi^{-1}(1-1/n) ) = \phi(z_n)$,
        \begin{equation}
                \sqrt{n} \kappa(1/n)
                        = \sqrt{n} \phi(z_n)
                        = 2 \sqrt{n} z_n \times \frac{\phi(z_n)}{2 z_n}
                        \leq 2 \sqrt{n} z_n \times \frac{1}{n}
                        \leq \frac{ 2 z_n }{ \sqrt{n} }
                        \leq 2 \sqrt{ \frac{ 2 \log n }{n} }.
        \end{equation}
        Therefore, $\lim_{n \rightarrow \infty} \sqrt{n} \kappa(1/n) = 0$.
\end{proof}

\newpage
\section{Preliminary Characterizations of $\scvar$} \label{app:proof-problem}

We begin with a technical lemma:
\begin{lemma} \label{lem:weakly-star-compact}
        The risk envelope $\Qscr(q)$ is a non-empty, convex, and weak-* compact subset of $\Lscr^\infty$.
\end{lemma}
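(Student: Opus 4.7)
The plan is to verify the three properties in order, with the bulk of the work going into weak-* compactness via the Banach--Alaoglu theorem.

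First, for non-emptiness, I would simply exhibit the constant random variable $Q \equiv q$, which clearly lies in $\Lscr^\infty$, satisfies $0 \leq Q \leq 1$, and has expectation $q$. For convexity, given $Q_1, Q_2 \in \Qscr(q)$ and $\lambda \in [0,1]$, the convex combination $\lambda Q_1 + (1-\lambda) Q_2$ inherits the pointwise bounds $[0,1]$ and, by linearity of expectation, has mean $q$. Both of these steps are essentially immediate from the definition.

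The substantive step is weak-* compactness. Since $\Lscr^\infty = (\Lscr^1)^*$, the Banach--Alaoglu theorem tells us that the closed unit ball of $\Lscr^\infty$ is weak-* compact. Every element of $\Qscr(q)$ has $\|Q\|_\infty \leq 1$, so $\Qscr(q)$ sits inside this ball. It therefore suffices to show that $\Qscr(q)$ is weak-* closed. For this, I would write $\Qscr(q)$ as an intersection of three weak-* closed sets: the half-space $\{Q : \E[QZ] \geq 0 \text{ for all } Z \in \Lscr^1_+\}$ (which encodes $Q \geq 0$ a.s.), the analogous half-space encoding $Q \leq 1$ a.s. (equivalently, $\E[(1-Q)Z] \geq 0$ for all nonnegative $Z \in \Lscr^1$), and the hyperplane $\{Q : \E[Q \cdot \1] = q\}$. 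Each of these is weak-* closed because it is defined by weak-* continuous linear functionals $Q \mapsto \E[QZ]$ with $Z \in \Lscr^1$, and arbitrary intersections of weak-* closed sets are weak-* closed.

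The only mildly subtle point is verifying that the pointwise a.s.\ constraints $0 \leq Q \leq 1$ really are captured by the family of integral inequalities against nonnegative $\Lscr^1$ test functions; this is standard (take $Z = \I{Q < 0}$ or $Z = \I{Q > 1}$, both of which lie in $\Lscr^1$ on the probability space, to force the indicator sets to have measure zero). Once this is established, weak-* closedness is immediate, and combined with the Banach--Alaoglu bound we conclude that $\Qscr(q)$ is weak-* compact.
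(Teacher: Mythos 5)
Your proof is correct and follows essentially the same route as the paper: non-emptiness via the constant $Q \equiv q$, convexity from pointwise bounds and linearity of expectation, and weak-* compactness via the Banach--Alaoglu theorem. The only difference is that you spell out why $\Qscr(q)$ is weak-* closed (as an intersection of sets cut out by the weak-* continuous functionals $Q \mapsto \E[QZ]$, $Z \in \Lscr^1$), a detail the paper's proof asserts only implicitly by calling $\Qscr(q)$ ``a closed subset of the unit ball''; your version is the more careful one, since norm-closedness alone would not suffice here.
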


\begin{proof}
        It is non-empty because $Q(\omega) = q$ is always feasible.

        Consider $Q_1, Q_2 \in \Qscr(q)$ and $Q_\lambda \defeq \lambda Q_1 + (1-\lambda)Q_2$ for some $\lambda \in [0,1]$.
        Since $Q_1(\omega), Q_2(\omega) \in [0,1]$, we have $Q_\lambda(\omega) \in [0,1]$, and by the linearity of expectation, $\E[Q_\lambda] = \lambda \E[Q_1] + (1-\lambda)\E[Q_2] = q$.
        Therefore, $Q_\lambda \in \Qscr(q)$ and thus $\Qscr(q)$ is convex.

        Finally, note that $\Qscr(q)$ is a closed subset of the unit ball in $\Lscr^\infty(\Omega, \Fscr, \mathbb{P})$.
        Given that $\Lscr^\infty$ is the dual space of $\Lscr^1$, it is weak-* compact by Banach-Alaoglu theorem.
\end{proof}

\begin{proof}[\proofnamest{Proof of Proposition~\ref{prop:scvar-properties}}]

\subproof{Proof of claim \ref{it:scvar-cvar-relationship} and \ref{it:scvar-interpretation}.}
        Claim \ref{it:scvar-cvar-relationship} immediately follows from our definition of CVaR.
        Claim \ref{it:scvar-interpretation} follows from the following identity \citep[Theorem 6.2]{Shapiro09}:
        \begin{equation}
                \cvar_q[C]
                        = \E\left[ C \left| C \geq F_C^{-1}(1-q) \right. \right]
                        = \frac{ \E\left[ C \, \I{ C \geq F_C^{-1}(1-q) } \right] }{ \PR[ C \geq F_C^{-1}(1-q) ] }
                        = \frac{ \E\left[ C \, \I{ C \geq F_C^{-1}(1-q) } \right] }{ q }.
        \end{equation}

\subproof{Proof of claim \ref{it:scvar-boundary}.}
        When $q=0$, the risk envelope $\Qscr(q)$ has a single element $Q(\omega) = 0$, and hence, $\sup_{Q \in \Qscr(0)}\E[ CQ ] = 0$.
        When $q=1$, the risk envelope $\Qscr(q)$ also has a single element $Q(\omega) = 1$, and hence, $\sup_{Q \in \Qscr(0)}\E[ CQ ] = \E C$.

\subproof{Proof of claim \ref{it:scvar-bounds}.}
        For any $Q \in \Qscr(q)$, we have $\left| \E[ C Q ] \right| \leq \E\left[ \left| C Q \right| \right] \leq \E|C|$ since $|Q| \leq 1$ almost surely, and therefore, $\left| \scvar_q[C] \right| \leq \E|C|$.
        Furthermore, $\Qscr(q)$ contains $Q(\omega) = q$, and therefore, $\cvar_q[C] \geq \E[ q C ] = q \E C$.

\subproof{Proof of claim \ref{it:scvar-concavity}.}
        Consider $q_1, q_2 \in [0,1]$ and $q_\lambda \defeq \lambda q_1 + (1-\lambda) q_2$ for some $\lambda \in [0,1]$.
        Let $Q_1^* \in \argmax_{Q \in \Qscr(q_1)} \E[ CQ ]$ and $Q_2^* \in \argmax_{Q \in \Qscr(q_2)} \E[ CQ ]$ (the maximum is attained since $\Qscr(q)$ is weak-* compact).
        Let $Q_\lambda \defeq \lambda Q_1^* + (1-\lambda) Q_2^*$.
        Observe that $Q_\lambda \in [0,1]$ a.s. and $\E[ Q_\lambda ] = \lambda \E[ Q_1^* ] + (1-\lambda) \E[Q_2^*] = \lambda q_1 + (1-\lambda) q_2 = q_\lambda$.
        Therefore, $Q_\lambda \in \Qscr(q_\lambda)$.
        Trivially, $\E[ CQ_\lambda ] = \E[ \lambda CQ_1^* + (1-\lambda) C Q_2^* ] = \lambda \E[ C Q_1^* ] + (1-\lambda) \E[ C Q_2^* ]$, and thus
        \begin{align}
                \scvar_{q_\lambda}[ C ]
                        = \sup_{Q \in \Qscr(q_\lambda)} \E[ C Q ]
                        \geq \E[ C Q_\lambda ]
                        &= \lambda \E[ C Q_1^* ] + (1-\lambda) \E[ C Q_2^* ]
                        \\&= \lambda \scvar_{q_1}[ C ] + (1-\lambda) \scvar_{q_2}[ C ].
        \end{align}
        Continuity follows from concavity since the function is bounded on its domain from
        \ref{it:scvar-bounds}.
\end{proof}

\newpage

\section{Proofs for \S \ref{sec:cvar-dp}} \label{app:proof-cvar-dp}

From now on, we characterize $\scvar_q[ \, \cdot \, ]$ as a mapping from $\Lscr^2$ to $\mathbb{R}$.
This can be done without loss generality since we have $C_\infty^{x,\pi} \in \Lscr^2$ for any feasible policy $\pi \in \Pi(x)$.
This is to utilize the fact that $\Lscr^2$ is reflexive so that its weak-* topology coincides with its weak topology.

\begin{lemma} \label{lem:weakly-compact}
        The risk envelope $\Qscr(q)$ is a weakly compact subset of $\Lscr^2$.
\end{lemma}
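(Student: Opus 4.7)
The plan is to invoke the standard fact that a norm-closed, bounded, convex subset of a reflexive Banach space is weakly compact. Since $\Lscr^2(\Omega,\Fscr,\mathbb{P})$ is reflexive, it suffices to verify that $\Qscr(q)$ is (i) a subset of $\Lscr^2$, (ii) bounded in $\|\cdot\|_{\Lscr^2}$, (iii) convex, and (iv) strongly closed in $\Lscr^2$; Mazur's theorem then upgrades strong closure to weak closure, and the Banach--Alaoglu / Kakutani characterization of weak compactness in reflexive spaces finishes the job.

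First I would observe that on the probability space $(\Omega,\Fscr,\mathbb{P})$ we have $\Lscr^\infty\subset\Lscr^2$, so $\Qscr(q)\subset\Lscr^2$. For boundedness, any $Q\in\Qscr(q)$ satisfies $0\leq Q\leq 1$, so $Q^2\leq Q$ pointwise, whence
\begin{equation}
\|Q\|_{\Lscr^2}^2 \;=\; \E[Q^2] \;\leq\; \E[Q] \;=\; q \;\leq\; 1 .
\end{equation}
Convexity is already contained in the proof of Lemma~\ref{lem:weakly-star-compact}, so there is nothing new to show there.

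For strong closure in $\Lscr^2$, suppose $Q_n\in\Qscr(q)$ and $Q_n\to Q$ in $\Lscr^2$. By Cauchy--Schwarz (with the constant function $1$) we get $\E[Q_n]\to\E[Q]$, hence $\E[Q]=q$. By extracting a subsequence $Q_{n_k}\to Q$ almost surely, the pointwise bounds $0\leq Q_{n_k}\leq 1$ pass to the limit, giving $0\leq Q\leq 1$ almost surely. Thus $Q\in\Qscr(q)$, and $\Qscr(q)$ is strongly closed.

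The last step is to combine these: convexity plus strong closure yields weak closure by Mazur's theorem; in the reflexive space $\Lscr^2$, a weakly closed and norm-bounded set is weakly compact. No part of this is genuinely hard; the only mild subtlety is remembering to extract an almost-surely convergent subsequence in the closure step so that the pointwise bound $0\leq Q\leq 1$ survives the limit. I would expect the write-up to be just a few lines.
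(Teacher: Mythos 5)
Your proof is correct. It differs from the paper's in how compactness is transferred to $\Lscr^2$: the paper simply reuses Lemma~\ref{lem:weakly-star-compact} (weak-* compactness of $\Qscr(q)$ in $\Lscr^\infty$ as the dual of $\Lscr^1$, via Banach--Alaoglu) and notes that this yields weak compactness in the reflexive space $\Lscr^2$, implicitly relying on the fact that on an $\Lscr^\infty$-bounded set the $\sigma(\Lscr^\infty,\Lscr^1)$ topology is finer than the restriction of $\sigma(\Lscr^2,\Lscr^2)$. You instead work entirely inside $\Lscr^2$: you verify norm-boundedness (the bound $\E[Q^2]\leq \E[Q]=q$ is a nice touch, though boundedness by $1$ in $\Lscr^\infty$ would already suffice), convexity, and norm-closedness (correctly passing the pointwise constraints to the limit along an a.s.\ convergent subsequence), then invoke Mazur plus the Kakutani characterization. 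Your route is more self-contained and makes explicit the closure verification that the paper leaves implicit; the paper's route is shorter because it piggybacks on the earlier lemma. Both rest on the same underlying facts (Banach--Alaoglu and reflexivity of $\Lscr^2$), so there is no substantive gap either way.
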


\begin{proof}
        As stated in the proof of Lemma~\ref{lem:weakly-star-compact}, $\Qscr(q)$ is a closed subset of the unit ball in $\Lscr^\infty$, which is a subset of $\Lscr^2$.
        By Banach--Alaoglu theorem, it is weak-* compact in $\Lscr^2$ and hence weakly compact since $\Lscr^2$ is reflexive.
\end{proof}

\begin{proof}[\proofnamest{Proof of Proposition~\ref{prop:Q-properties}}]

\subproof{Proof of claim \ref{it:Q-martingale}.}
        Note that $(Q_t^{q,\gamma})_{t \geq 0}$ is a continuous local martingale since it is a stochastic integral of a progressively measurable process with respect to Brownian motion \citep[Theorem 33 in Chap. III]{Protter}.
        Since $Q_t^{q,\gamma} \in [0,1]$ for any $t \in [0, \infty)$ by the definition of $\Gamma(q)$, it is a bounded local martingale, which is indeed a martingale \citep[Thm. 51 in Chap. I]{Protter}.

\subproof{Proof of claim \ref{it:Q-limit}.}
        The limit $\lim_{t \rightarrow \infty} Q_t^{q,\gamma}$ exists due to martingale convergence theorem \citep[Theorem 10 in Chap. I]{Protter}.

\subproof{Proof of claim \ref{it:Q-absorption}.}
        Recall that $\big( Q_t^{q,\gamma} \big)_{t \geq 0}$ is a martingale taking values in $[0,1]$.
        Define a stopping time $\tau \defeq \inf_{t \geq 0}\left\{ Q_t^{q,\gamma} = 0 \right\}$.
        Then, for any $\tau' \geq \tau$, we have $\E[ Q_{\tau'}^{q,\gamma} | \Fscr_\tau ] = Q_{\tau}^{q,\gamma} = 0$ and therefore $Q_{\tau'}^{q,\gamma} = 0$ almost surely (otherwise, we would have $\E[ Q_{\tau'}^{q,\gamma} | \Fscr_\tau ] > 0 $).
        In words, once $\big( Q_t^{q,\gamma} \big)_{t \geq 0}$ hits zero, it never escapes
        thereafter, and the same argument holds for the other boundary.
\end{proof}

\subsection{Proof of Theorem~\ref{thm:cvar-minimax}} \label{app:cvar-minimax-proof}

Within this subsection, we characterize the trader's policy with its position process $(X_t)_{t
  \geq 0}$ rather than dealing with the liquidation rate process $(\pi_t)_{t \geq 0}$: the set of
admissible policies is represented as
\begin{equation}
  \Xscr(x) \defeq \left\{ X : \mathbb{T} \times \Omega \rightarrow \mathbb{R} \left|
      X \in \Prog,
      X_0 = x,
      \E\left[ \left( \int_{t=0}^\infty \dot{X}_t^2 dt \right)^2 \right] < \infty,
      \E\left[ \int_{t=0}^\infty X_t^2 dt \right] < \infty,
      \sup_{t \geq 0} \left| X_t \right| \leq M
    \right. \right\}
  ,
\end{equation}
where we require that each sample path $X \in \Xscr(x)$ is differentiable so that we can define
$\dot{X}_t \defeq dX_t/dt$.  Then, $\Pi(x) = \{ \pi = \dot{X} | X \in \Xscr(x) \}$.
Accordingly, we represent the loss process as
\begin{equation}
        C_t^X \defeq \int_{s=0}^t \frac{1}{2} \eta |\dot{X}_s|^2 ds - \int_{s=0}^t \sigma X_s dW_s,
\end{equation}
so that we have $C_t^X = C_t^{x,\pi}$ if $X \in \Xscr(x)$ and $\pi = \dot{X}$.

We aim to prove Theorem~\ref{thm:cvar-minimax} by utilizing Sion's minimax theorem:
\begin{lemma}[Sion's minimax theorem \citep{Sion58}] \label{lem:sion-minimax}
        Let $\Xscr$ be a convex subset of a linear topological space and $\Yscr$ a compact convex subset of a linear topological space.
        If $f$ is a real-valued function on $\Xscr \times \Yscr$ with
        $f(x,\cdot)$ upper semicontinuous and quasi-concave on $Y$, for each $x \in \Xscr$, and
        $f(\cdot,y)$ lower semicontinuous and quasi-convex on $X$, for each $y \in \Yscr$,
        then,
        \begin{equation}
                \inf_{x \in \Xscr} \max_{y \in \Yscr} f(x,y) = \max_{y \in \Yscr} \inf_{x \in \Xscr} f(x,y),
        \end{equation}
        where an optimal solution $y\in\Yscr$ must exist for each maximization.
\end{lemma}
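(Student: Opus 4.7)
The trivial direction $\sup_y \inf_x f(x,y) \leq \inf_x \sup_y f(x,y)$ follows at once from the pointwise inequality $f(x,y) \geq \inf_{x'} f(x',y)$ by taking $\sup_y$ and then $\inf_x$. Attainment of the outer maximum on the right-hand side is standard: the map $y \mapsto \inf_x f(x,y)$ is upper semicontinuous on $Y$ (as the infimum of a family of upper semicontinuous functions of $y$), and $Y$ is compact, so a Weierstrass-type argument yields a maximizer. Attainment of $\max_y f(x,y)$ on the left is immediate from upper semicontinuity of $f(x,\cdot)$ on compact $Y$.

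For the non-trivial inequality $\inf_x \sup_y f(x,y) \leq \sup_y \inf_x f(x,y)$, the plan is to argue by contradiction. Suppose there exist real numbers $\alpha < \beta$ with $\sup_y \inf_x f(x,y) < \alpha < \beta < \inf_x \sup_y f(x,y)$. For each $x \in X$, define the superlevel set
\begin{equation}
G_x \defeq \{ y \in Y : f(x,y) \geq \beta \}.
\end{equation}
By upper semicontinuity of $f(x,\cdot)$, $G_x$ is closed in $Y$ and hence compact; by quasi-concavity of $f(x,\cdot)$, $G_x$ is convex. The assumption $\inf_x \sup_y f(x,y) > \beta$ combined with attainment of $\sup_y f(x,y)$ gives $G_x \neq \emptyset$ for every $x$. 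Conversely, $\sup_y \inf_x f(x,y) < \beta$ implies that for every $y$ there exists $x$ with $f(x,y) < \beta$, i.e., $y \notin G_x$; hence $\bigcap_{x \in X} G_x = \emptyset$. By compactness of $Y$, some finite subfamily $\{G_{x_1}, \ldots, G_{x_n}\}$ must already have empty intersection.

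The core step is then to establish the finite intersection property: for every finite subset $\{x_1, \ldots, x_n\} \subset X$, the intersection $\bigcap_{i=1}^n G_{x_i}$ is non-empty, which directly contradicts the preceding step. I would prove this by induction on $n$, with the base case $n=1$ already handled. For the inductive step, I would follow the route of Komiya's simplification of Sion's original proof: choose a point $z$ in the convex hull of $x_1, \ldots, x_{n-1}$ (using convexity of $X$), reduce to a two-point configuration, and exploit quasi-convexity of $f(\cdot, y)$ together with the closedness of its sublevel sets (from lower semicontinuity) along the line segment $[z, x_n] \subset X$. Connectedness of this segment, combined with the quasi-concavity and upper semicontinuity in $y$, forces a common point in $G_z \cap G_{x_n}$, which via the inductive hypothesis promotes to a point in $\bigcap_{i=1}^n G_{x_i}$.

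The hard part is precisely this two-point reduction. The subtlety is that $f(\cdot, y)$ is only quasi-convex (so its strict sublevel sets need not be convex) and only lower semicontinuous (so its strict sublevel sets need not be closed), which forces one to juggle strict and weak inequalities carefully and to avoid naive separation-style arguments that would require convexity of superlevel sets in $x$. The symmetric hypotheses on $f(x,\cdot)$ in $Y$ are used to pass between closed superlevel sets (for $G_x$) and convex superlevel sets (for quasi-concavity), and this is the only place where both sets of hypotheses are used simultaneously. Once the finite intersection property is established, the contradiction and hence the theorem follow immediately.
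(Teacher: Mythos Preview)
The paper does not prove this lemma at all; it is stated with a citation to \citet{Sion58} and used as a black box in the proofs of Theorem~\ref{thm:cvar-minimax} and Proposition~\ref{prop:cvar-dp-minimax}. Your sketch follows the standard Komiya-style argument and is a reasonable outline of a proof, but there is nothing in the paper to compare it against --- the authors simply invoke the result without supplying or sketching any argument.
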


\begin{lemma} \label{lem:Pi-convexity}
        $\Xscr(x)$ is a convex subset of a linear space endowed with a norm $\| \cdot \|_{\Xscr}$:
        \begin{equation}
                \| X \|_{\Xscr} \defeq \sqrt{ \E\left[ \int_{t=0}^\infty \dot{X}_t^2 dt \right] }
                                                + \sqrt{ \E\left[  \int_{t=0}^\infty X_t^2 dt \right] }.
        \end{equation}
\end{lemma}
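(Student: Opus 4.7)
The plan is to treat the two claims (convexity of $\Xscr(x)$ and the norm property of $\|\cdot\|_\Xscr$) separately, since $\Xscr(x)$ itself is not a linear space (it fixes $X_0 = x$). For the norm, I would first identify the ambient linear space as the collection $\Yscr$ of all progressively measurable, absolutely continuous (in $t$) processes $X:\mathbb{T}\times\Omega \to \mathbb{R}$ satisfying $\E\bigl[\int_0^\infty \dot X_t^2\,dt\bigr] < \infty$ and $\E\bigl[\int_0^\infty X_t^2\,dt\bigr] < \infty$, quotiented by indistinguishability. Linearity of $\Yscr$ is immediate from linearity of differentiation, expectation, and integration. Since $\|\cdot\|_\Xscr$ is a sum of two $L^2(\Omega\times\mathbb{T})$ seminorms, absolute homogeneity and the triangle inequality follow from Minkowski's inequality applied termwise, and positive definiteness follows because $\|X\|_\Xscr = 0$ forces $X_t \equiv 0$ a.e., hence $X \equiv 0$ in $\Yscr$.

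For convexity of $\Xscr(x)\subset \Yscr$, I would fix $X^1, X^2 \in \Xscr(x)$ and $\lambda \in [0,1]$, and set $X^\lambda \defeq \lambda X^1 + (1-\lambda) X^2$. Each defining clause of \eqref{eq:Pi} is verified directly: progressive measurability is preserved under linear combinations; $X^\lambda_0 = \lambda x + (1-\lambda)x = x$ trivially; the pathwise bound $\sup_t |X^\lambda_t| \leq M$ follows from the triangle inequality on each sample path since $|X^i_t| \leq M$; and $\E\bigl[\int_0^\infty |X^\lambda_t|^2\,dt\bigr] < \infty$ is immediate from Minkowski in $L^2(\Omega \times \mathbb{T})$. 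For the fourth-moment condition $\E\bigl[\bigl(\int_0^\infty (\dot X^\lambda_t)^2\,dt\bigr)^2\bigr] < \infty$, I would view $\bigl(\E\bigl[\bigl(\int_0^\infty \dot X_t^2\,dt\bigr)^2\bigr]\bigr)^{1/4}$ as the norm on the mixed Lebesgue--Bochner space $L^4(\Omega;L^2(\mathbb{T}))$, and apply Minkowski there: both $\dot X^1$ and $\dot X^2$ have finite norm in this space by admissibility, so their convex combination does too.

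No serious obstacle is anticipated; the entire argument reduces to Minkowski's inequality in appropriately chosen function spaces, together with closure of progressive measurability under linear combinations. The one subtlety worth flagging is that a naive $L^2(\Omega\times\mathbb{T})$ bound does not suffice for the fourth-moment condition on $\dot X^\lambda$; the mixed-norm $L^4(\Omega;L^2(\mathbb{T}))$ viewpoint is what makes that step clean, and it is also the natural way to check that $\Yscr$ actually contains all of $\Xscr(x)$.
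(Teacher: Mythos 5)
Your proposal is correct and follows essentially the same route as the paper: the paper also verifies each admissibility clause termwise, handles the fourth-moment condition by bounding the pathwise $L^2(\mathbb{T})$ norm of $\dot X^{(\lambda)}$ by the convex combination of those of $\dot X^{(1)}, \dot X^{(2)}$ and noting this lies in $\Lscr^4(\Omega)$ (your $L^4(\Omega;L^2(\mathbb{T}))$ Minkowski step), and dismisses the norm axioms as a standard Sobolev-norm verification. Your write-up is slightly more explicit about the ambient linear space and positive definiteness, but there is no substantive difference.
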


\begin{proof}
        It can be easily verified that $\| \cdot \|_{\Xscr}$ is a valid norm (as a norm of a Sobolev space).
        Also note that $\| X \|_{\Xscr} < \infty$ for any $X \in \Xscr(x)$.
        Now consider $X^{(1)}, X^{(2)} \in \Xscr(x)$ and $X^{(\lambda)} \defeq \lambda X^{(1)} + (1-\lambda) X^{(2)}$ for some $\lambda \in [0,1]$.
        Observe that (i) $X^{(\lambda)}_0 = x$, (ii) $\sqrt{ \int_{t=0}^\infty \big( \dot{X}_t^{(\lambda)} \big)^2 dt } \leq \lambda \sqrt{ \int_{t=0}^\infty \big( \dot{X}_t^{(1)} \big)^2 dt } + (1-\lambda) \sqrt{ \int_{t=0}^\infty \big( \dot{X}_t^{(2)} \big)^2 dt } \in \Lscr^4$, and thus $\E\left[ \left( \int_{t=0}^\infty \big( \dot{X}_t^{(\lambda)} \big)^2 dt \right)^2 \right] < \infty$, (iii) similarly, $\E\left[ \int_{t=0}^\infty \big( X_t^{(\lambda)} \big)^2 dt \right] < \infty$, and (iv) $\sup_{t \geq 0} | X_t^{(\lambda)}| \leq \lambda \sup_{t \geq 0} | X_t^{(1)}| + (1-\lambda) \sup_{t \geq 0} | X_t^{(2)}| \leq M$.
        Therefore, $X^{(\lambda)} \in \Xscr(x)$, and hence $\Xscr(x)$ is a convex set.
\end{proof}

\begin{proof}[\proofnamest{Proof of Theorem~\ref{thm:cvar-minimax}}]
        By Lemma~\ref{lem:isomorphism}, it is equivalent to show that
        \begin{equation}
                \inf_{X \in \Xscr(x)} \max_{Q \in \Qscr(q)} \E\left[ C_\infty^{X} Q \right] = \max_{Q \in \Qscr(q)} \inf_{X \in \Xscr(x)} \E\left[ C_\infty^{X} Q \right].
        \end{equation}
        We aim to prove this by utilizing Sion's minimax theorem (Lemma~\ref{lem:sion-minimax}), so it suffices to check the conditions of Lemma~\ref{lem:sion-minimax}.
        We have shown that $\Xscr(x)$ is a convex subset of a linear space endowed with a norm $\| \cdot \|_{\Xscr}$ (Lemma~\ref{lem:Pi-convexity}), and $\Qscr(q)$ is a compact convex subset of $\Lscr^2( \Omega, \Fscr, \mathbb{P} )$ endowed with the weak topology (Lemma~\ref{lem:weakly-compact}).
        In the rest of the proof, we verify the followings:
        \begin{enumerate}[label=(\roman*)]
                \item \label{it:cvar-minimax-X-mapping} The mapping $X \mapsto \E[ C_\infty^{X} Q ]$ is convex and continuous in norm $\| \cdot \|_{\Xscr}$ for any given $Q \in \Qscr(q)$.
                \item \label{it:cvar-minimax-Q-mapping} The mapping $Q \mapsto \E[ C_\infty^{X} Q ]$ is concave and weakly continuous on $\Qscr(q)$ for any given $X \in \Xscr(x)$.
        \end{enumerate}

\subproof{Proof of claim \ref{it:cvar-minimax-X-mapping}.}
        The convexity immediately follows from the fact that the mapping $X \mapsto C_\infty^{X}$
        is quadratic and convex.
        We now focus on the continuity.

        Fix $X \in \Xscr(x)$ and consider a sequence $\big( X^{(n)} \in \Xscr(x) \big)_{n \in
          \mathbb{N}}$ such that $\lim_{n \rightarrow \infty} \| X - X^{(n)} \|_{\Xscr} =
        0$. That is,
        \begin{equation} \label{eq:cvar-minimax-eq1}
                \lim_{n \rightarrow \infty} \E\left[ \int_{t=0}^\infty \left( \dot{X}_t - \dot{X}_t^{(n)} \right)^2 dt \right] = 0
                , \quad
                \lim_{n \rightarrow \infty} \E\left[ \int_{t=0}^\infty \left( X_t - X_t^{(n)} \right)^2 dt \right] = 0.
        \end{equation}
        Let
        \begin{equation}
                A \defeq \sqrt{ \int_{t=0}^\infty \dot{X}_t^2 dt },
                \quad
                A_n \defeq \sqrt{ \int_{t=0}^\infty ( \dot{X}_t^{(n)} )^2 dt },
                \quad
                \Delta_n \defeq \sqrt{ \int_{t=0}^\infty \left( \dot{X}_t - \dot{X}_t^{(n)} \right)^2 dt }.
        \end{equation}
        We then have $A \in \Lscr^2$, $A_n \in \Lscr^2$, and $| A - A_n | \leq \Delta_n$ (triangle inequality) where $\Delta_n \stackrel{ \Lscr^2 }{ \rightarrow} 0$ due to the above condition \eqref{eq:cvar-minimax-eq1}.
        Further observe that
        \begin{align}
                \left| C_\infty^{X} Q - C_\infty^{X^{(n)}} Q \right|
                        &\leq \left| C_\infty^{X} - C_\infty^{X^{(n)}} \right|
                        \\&\leq \frac{\eta}{2}  \left| \int_{t=0}^\infty \dot{X}_t^2 dt - \int_{t=0}^\infty ( \dot{X}_t^{(n)} )^2 dt \right|
                                + \sigma \left| \int_{t=0}^\infty X_t dW_t - \int_{t=0}^\infty X_t^{(n)} dW_t \right|
                        \\&= \frac{\eta}{2}  \left| A^2 - A_n^2  \right|
                                + \sigma \left| \int_{t=0}^\infty \left( X_t - X_t^{(n)} \right) dW_t \right| ,
                                \label{eq:cvar-minimiax-eq2}
        \end{align}
        where the first inequality holds since $|Q| \leq 1$.
        For the first term of \eqref{eq:cvar-minimiax-eq2}, we have
        \begin{align}
                \E\left[ \left| A_n^2 - A^2 \right| \right]
                        &= \E\left[ \left|  (A_n - A)^2 + 2 A ( A_n -  A) \right| \right]
                        \\&\leq \E\left[ (A_n - A)^2 \right] + 2 \E\left[ | A | \cdot | A_n - A | \right]
                        \\&\leq \E\left[ (A_n - A)^2 \right] + 2 \sqrt{ \E\left[ A^2 \right] }
          \sqrt{ \E\left[ | A_n - A |^2 \right] } \label{eq:cvar-cs}
                        \\&= \E\left[ \Delta_n^2 \right] + 2 \sqrt{ \E\left[ A^2 \right] } \sqrt{ \E\left[ \Delta_n^2 \right] },
        \end{align}
        where for \eqref{eq:cvar-cs} we apply the Cauchy-Schwartz inequality.
        Therefore, $\E\left[ \left| A_n^2 - A^2 \right| \right] \rightarrow 0$ since $\Delta_n
        \stackrel{ \Lscr^2 }{ \rightarrow} 0$, as $n \rightarrow \infty$.
        For the second term of \eqref{eq:cvar-minimiax-eq2}, using the It\^{o} isometry,
        \begin{equation}
                \E\left[ \left( \int_{t=0}^\infty \left( X_t - X_t^{(n)} \right) dW_t \right)^2  \right]
                        = \E\left[ \int_{t=0}^\infty \left( X_t - X_t^{(n)} \right)^2 dt  \right],
        \end{equation}
        which vanishes as $n \rightarrow \infty$ due to condition \eqref{eq:cvar-minimax-eq1}.
        Therefore,
        \begin{equation}
                \lim_{n \rightarrow \infty} \E\left[ \left| C_\infty^{X} Q - C_\infty^{X^{(n)}} Q \right| \right] = 0,
        \end{equation}
        which implies that $\lim_{n \rightarrow \infty} \E\left[ C_\infty^{X^{(n)}} Q \right] = \E\left[ C_\infty^{X} Q \right]$.
        This concludes the proof.

        \subproof{Proof of claim \ref{it:cvar-minimax-Q-mapping}.}  The concavity immediately
        follows from the fact that the mapping $Q \mapsto C_\infty^X Q$ is linear.  Fix
        $Q \in \Qscr(q)$, consider a sequence $\big( Q_n \in \Qscr(q) \big)_{n \in \mathbb{N}}$
        converging to $Q$ in $\Lscr^2$-weak topology: i.e.,
        $\lim_{n \rightarrow \infty} \E[ Z Q_n ] = \E[ Z Q ]$ for any $Z \in \Lscr^2$.  Since
        $C_\infty^X \in \Lscr^2$ for any $X \in \Xscr(x)$, the continuity of the mapping
        $Q \mapsto \E[ C_\infty^X Q ]$ immediately follows.
\end{proof}

\newpage
\subsection{Preliminary Characterizations of Value Function}

We begin with a preliminary characterization of the value function \eqref{eq:V} that will be used
in later proofs in this section.

\begin{proposition} \label{prop:V-properties}
        The value function \newedit{$V: \mathbb{X} \times [0,1] \rightarrow \mathbb{R}$} satisfies the followings:
        \begin{enumerate}[label=(\roman*)]
                \item \label{it:V-bounds} $0 \leq V(x,q) \leq \sigma^{\frac{2}{3}} \eta^{\frac{1}{3}} \times |x|^{\frac{4}{3}} \times q^{\frac{1}{3}} \big( \kappa(q) \big)^{\frac{2}{3}}$.
                \item \label{it:V-convexity} $V(x,q)$ is convex in $x$ on \newedit{$\mathbb{X}$} for each $q \in [0,1]$.
                \item \label{it:V-concavity} $V(x,q)$ is concave in $q$ on $[0,1]$ for each \newedit{$x \in \mathbb{X}$}.
                \item \label{it:V-boundary} $V(x,0) = V(x,1) = V(0,q) = 0$ for each \newedit{$x \in \mathbb{X}$} and $q \in [0,1]$.
        \end{enumerate}
\end{proposition}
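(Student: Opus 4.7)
The plan is to dispatch the four claims in the order (iv), (i), (iii), (ii), each reducing to properties of the scaled CVaR established in Proposition~\ref{prop:scvar-properties} together with simple admissible witnesses.

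For the boundary conditions (iv), I would handle the three cases separately. $V(0,q)=0$ follows from the trivial policy $\pi\equiv 0$: the position stays at zero, the cost process is identically zero, and $\scvar_q[0]=0$. $V(x,0)=0$ follows immediately from Proposition~\ref{prop:scvar-properties}\ref{it:scvar-boundary}, which gives $\scvar_0[C]=0$ for every $C\in\Lscr^1$. $V(x,1)=0$ follows because the same proposition gives $\scvar_1[C]=\E C$; since the stochastic-integral term in $C_\infty^{x,\pi}$ has zero mean for any admissible $\pi$ (each truncation is a square-integrable martingale by the $\Lscr^2$ admissibility condition), we have $V(x,1)=\inf_{\pi}\E\int_0^\infty\tfrac{\eta}{2}\pi_t^2\,dt$, which can be driven to zero by considering the constant-rate liquidation $\pi_t=(x/T)\mathbf{1}_{\{t\leq T\}}$ and letting $T\to\infty$.

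For (i), the lower bound $V(x,q)\geq 0$ combines Proposition~\ref{prop:scvar-properties}\ref{it:scvar-bounds} ($\scvar_q[C]\geq q\,\E C$) with $\E[C_\infty^{x,\pi}]=\E\int\tfrac{\eta}{2}\pi_t^2\,dt\geq 0$. For the upper bound I would invoke Proposition~\ref{prop:exp-schedule} from Appendix~\ref{app:deterministic}, which exhibits a feasible exponential schedule whose S-CVaR equals $(3/2^{5/3})\,\sigma^{2/3}\eta^{1/3}|x|^{4/3}q^{1/3}(\kappa(q))^{2/3}$; since $3/2^{5/3}<1$, the claimed bound is immediate. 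Claim (iii) is shortest: Proposition~\ref{prop:scvar-properties}\ref{it:scvar-concavity} makes $q\mapsto\scvar_q[C_\infty^{x,\pi}]$ concave on $[0,1]$ for each fixed admissible $\pi$, and the pointwise infimum over $\pi$ preserves concavity.

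Claim (ii), convexity in $x$, is the only one requiring a genuine preservation-under-optimization argument. Given $x_1,x_2\in\mathbb{X}$, $\lambda\in[0,1]$, $x_\lambda\defeq\lambda x_1+(1-\lambda)x_2$, and arbitrary $\pi_i\in\Pi(x_i)$, I would set $\pi_\lambda\defeq\lambda\pi_1+(1-\lambda)\pi_2$. The corresponding position process is the convex combination $X_t^{x_\lambda,\pi_\lambda}=\lambda X_t^{x_1,\pi_1}+(1-\lambda)X_t^{x_2,\pi_2}$, so $X_t\in[-M,M]$ is preserved; each integrability constraint in \eqref{eq:Pi} is preserved by convex combinations via $(a+b)^2\leq 2a^2+2b^2$, hence $\pi_\lambda\in\Pi(x_\lambda)$. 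Convexity of $v\mapsto v^2$ and linearity of the stochastic integral in $(x,\pi)$ then yield the pathwise inequality $C_\infty^{x_\lambda,\pi_\lambda}\leq\lambda\, C_\infty^{x_1,\pi_1}+(1-\lambda)\,C_\infty^{x_2,\pi_2}$. Multiplying against any nonnegative $Q\in\Qscr(q)$, taking expectation, supremizing over $Q$ (which is subadditive), and then taking an $\epsilon$-optimal infimum over $\pi_1,\pi_2$ gives $V(x_\lambda,q)\leq\lambda V(x_1,q)+(1-\lambda)V(x_2,q)$.

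The only mildly delicate step is the admissibility check for the mixture $\pi_\lambda$ in (ii)---specifically the quartic moment bound $\E[(\int \pi_{\lambda,t}^2\,dt)^2]<\infty$---but this is routine and no deeper obstruction is expected.
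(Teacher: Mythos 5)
Your proposal is correct, and for claims \ref{it:V-bounds} and the upper/lower bounds it matches the paper's proof exactly (nonnegativity via $\scvar_q[C]\geq q\,\E C$ plus $\E[C_\infty^{x,\pi}]\geq 0$, and the exponential schedule of Proposition~\ref{prop:exp-schedule} with $3/2^{5/3}<1$ for the upper bound). Where you diverge is in how the adversary's supremum is handled in claims \ref{it:V-convexity} and \ref{it:V-concavity}, and in the logical placement of claim \ref{it:V-boundary}. For convexity in $x$, the paper first invokes Theorem~\ref{thm:cvar-minimax} to rewrite $V(x,q)=\max_{\gamma\in\Gamma(q)}\inf_{\pi}J(\pi,\gamma;x,q)$ and then argues that each $\inf_\pi J(\pi,\gamma;\cdot,q)$ is convex, so that $V(\cdot,q)$ is a pointwise maximum of convex functions; you instead stay with the $\inf_\pi\sup_Q$ form and use subadditivity of the supremum applied to the pathwise inequality $C_\infty^{x_\lambda,\pi_\lambda}\leq \lambda C_\infty^{x_1,\pi_1}+(1-\lambda)C_\infty^{x_2,\pi_2}$ together with $Q\geq 0$. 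Your route is more elementary in that it needs neither Sion's theorem nor attainment of the maximum over the risk envelope; the core content (pathwise convexity of the cost and admissibility of the mixed policy, checked exactly as in the paper's Lemma~\ref{lem:Pi-convexity}-style computation) is identical. For concavity in $q$, the paper re-derives concavity by taking optimal adversary martingales $\gamma^1,\gamma^2$ and mixing them, whereas you simply cite Proposition~\ref{prop:scvar-properties}\ref{it:scvar-concavity} and observe that $\Pi(x)$ does not depend on $q$, so the pointwise infimum of concave functions is concave --- shorter and equally valid. Finally, the paper obtains claim \ref{it:V-boundary} as an immediate corollary of the bound in \ref{it:V-bounds} (using $\kappa(1)=0$), while you give direct constructions ($\pi\equiv 0$, $\scvar_0\equiv 0$, and slow constant-rate schedules for $q=1$); both are fine, and your self-contained argument has the small advantage of not relying on the value of $\kappa$ at the endpoints.
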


\begin{proof}
\subproof{Proof of claim \ref{it:V-bounds}.}
        Note that $\E[ C_\infty^{x,\pi} ] \geq 0$ for any $\pi \in \Pi(x)$.
        By Proposition~\ref{prop:scvar-properties}\ref{it:scvar-bounds}, we have $\scvar_q[ C_\infty^{x,\pi} ] \geq 0$ for any $\pi \in \Pi(x)$, and hence $V(x,q) \geq 0$.
        The upper bound follows from Proposition~\ref{prop:exp-schedule} given that $\frac{3}{2^{5/3}} < 1$.

\subproof{Proof of claim \ref{it:V-convexity}.}
        Fix $q \in [0,1]$ and define $\varphi_\gamma : \mathbb{R} \mapsto \mathbb{R}$ as follows:
        \begin{equation}
                \varphi_\gamma(x) \defeq \inf_{\pi \in \Pi(x)} J( \pi, \gamma; x, q ),
        \end{equation}
        so that we have $V(x,q) = \max_{\gamma \in \Gamma(q)} \varphi_\gamma(x)$.
        Since the pointwise maximum of a set of convex functions is convex, it suffices to show that $\varphi_\gamma(\cdot)$ is convex for each $\gamma \in \Gamma(q)$.

        Fix $\gamma \in \Gamma(q)$ and consider any \newedit{$x_1, x_2 \in \mathbb{X}$}.
        For any $\epsilon > 0$, by definition of the infimum, there exist $\pi^{1,\epsilon} \in \Pi(x_1)$ and $\pi^{2,\epsilon} \in \Pi(x_2)$ such that
        \begin{equation}
                J(\pi^{1,\epsilon}, \gamma; x_1, q) \leq \varphi_\gamma(x_1) + \epsilon,
                \quad
                J(\pi^{2,\epsilon}, \gamma; x_2, q) \leq \varphi_\gamma(x_2) + \epsilon.
        \end{equation}
        Given some $\lambda \in [0,1]$, let $x_\lambda \defeq \lambda x_1 + (1-\lambda) x_2$ and $\pi^{\lambda, \epsilon} \defeq \lambda \pi^{1,\epsilon} + (1-\lambda) \pi^{2,\epsilon}$. Note that
        \begin{equation}
                X_t^{x_\lambda, \pi^{\lambda,\epsilon}} = x_\lambda - \int_{s=0}^t \pi_s^{\lambda, \epsilon} ds
                        =  \lambda X_t^{x_1, \pi^{1,\epsilon}} + (1-\lambda) X_t^{x_2, \pi^{2,\epsilon}}.
        \end{equation}
        Similarly to the proof of Lemma \ref{lem:Pi-convexity}, it can be shown that $\E\left[ \left( \int_{t=0}^\infty \big| \pi_t^{\lambda,\epsilon} \big|^2 dt \right)^2 \right] < \infty$ and $\E\left[ \int_{t=0}^\infty \big| X_t^{\lambda, \pi^{\lambda,\epsilon}} \big|^2 dt  \right] < \infty$, and therefore $\pi^{\lambda, \epsilon} \in \Pi(x_\lambda)$.
        Consequently,
        \begin{equation}
                C_\infty^{x_\lambda, \pi^{\lambda, \epsilon}}
                        = \int_{s=0}^t \frac{1}{2} \eta \left| \pi_s^{\lambda,\epsilon} \right|^2 ds
                                - \int_{s=0}^t \sigma X_s^{x_\lambda, \pi^{\lambda,\epsilon}} dW_s
                        \leq \lambda C_\infty^{x_1, \pi^{1, \epsilon}} + (1-\lambda) C_\infty^{x_2, \pi^{2, \epsilon}},
        \end{equation}
        and therefore,
        \begin{align}
                J( \pi^{\lambda,\epsilon}, \gamma; x_\lambda, q )
                        &= \E\left[ C_\infty^{x_\lambda, \pi^{\lambda, \epsilon}} Q_\infty^{q, \gamma} \right]
                        \leq \lambda \E\left[ C_\infty^{x_1, \pi^{1, \epsilon}} Q_\infty^{q, \gamma} \right]
                                + (1-\lambda) \E\left[ C_\infty^{x_2, \pi^{2, \epsilon}} Q_\infty^{q, \gamma} \right]
                        \\&= \lambda J( \pi^{1,\epsilon}, \gamma; x_1,q ) + (1-\lambda) J( \pi^{2,\epsilon}, \gamma; x_2,q )
                        \\&\leq \lambda \varphi_\gamma(x_1) + (1-\lambda) \varphi_\gamma(x_2) + \epsilon.
        \end{align}
        As a result, we have
        \begin{equation}
                \varphi_\gamma( \lambda x_1 + (1-\lambda) x_2 )
                        \leq J( \pi^{\lambda,\epsilon}, \gamma; x_\lambda,q )
                        \leq \lambda \varphi_\gamma(x_1) + (1-\lambda) \varphi_\gamma(x_2) + \epsilon.
        \end{equation}
        Since $x_1, x_2, \lambda, \epsilon$ were arbitrarily chosen, $\varphi_\gamma(\cdot)$ is convex on $\mathbb{R}$.


        \subproof{Proof of claim \ref{it:V-concavity}.}
        Fix \newedit{$x \in \mathbb{X}$} and define $\varphi_x : [0,1] \mapsto \mathbb{R}$ as follows:
        \begin{equation}
                \varphi_\pi(q) \defeq \sup_{\gamma \in \Gamma(q)} J( \pi, \gamma; x, q ),
        \end{equation}
        so that we have $V(x,q) = \inf_{\pi \in \Pi(x)} \varphi_\pi(q)$.
        Since the point-wise infimum of a set of concave functions is concave, it suffices to show that $\varphi_\pi(\cdot)$ is concave on $[0,1]$ for each $\pi \in \Pi(x)$.

        Fix $\pi \in \Pi(x)$ and consider any $q_1, q_2 \in [0,1]$.
          By Lemma~\ref{lem:isomorphism} and Lemma~\ref{lem:weakly-compact}, there exist $\gamma^1 \in \Gamma(q_1)$ and $\gamma^2 \in \Gamma(q_2)$ such that
        \begin{equation}
                J( \pi, \gamma^1; x, q_1 ) = \varphi_\pi(q_1)
                , \quad
                J( \pi, \gamma^2; x, q_2 ) = \varphi_\pi(q_2).
        \end{equation}
        Given some $\lambda \in [0,1]$, let $q_\lambda \defeq \lambda q_1 + (1-\lambda) q_2$ and $\gamma^\lambda \defeq \lambda \gamma^1 + (1-\lambda) \gamma^2$ (i.e., $\gamma_t^\lambda(\omega) = \lambda \gamma_t^1(\omega) + (1-\lambda) \gamma_t^2(\omega)$ for $\forall t, \omega$).
        Note that
        \begin{equation}
                Q_t^{q_\lambda, \gamma^\lambda}
                        = q_\lambda + \int_{s=0}^t \gamma_s^\lambda dW_s
                        = \lambda Q_t^{q_1, \gamma^1} + (1-\lambda) Q_t^{q_2, \gamma^2},
        \end{equation}
        and hence $Q_t^{q_\lambda, \gamma^\lambda} \in [0,1]$ almost surely for any $t$, i.e., $\gamma^\lambda \in \Gamma(q_\lambda)$.
        Therefore,
        \begin{align}
                J( \pi, \gamma^\lambda; x, q_\lambda )
                        &= \E\left[ C_\infty^{x, \pi} Q_\infty^{q_\lambda, \gamma^\lambda} \right]
                        = \lambda \E\left[ C_\infty^{x, \pi} Q_\infty^{q_1, \gamma^1} \right]
                                + (1-\lambda) \E\left[ C_\infty^{x, \pi} Q_\infty^{q_2, \gamma^2} \right]
                        \\&= \lambda J( \pi, \gamma^1; x, q_1 ) + (1-\lambda) J( \pi, \gamma^2; x, q_2 )
                        \\&= \lambda \varphi_\pi(q_1) + (1-\lambda) \varphi_\pi(q_2).
        \end{align}
        As a result, we have
        \begin{equation}
                \varphi_\pi( \lambda q_1 + (1-\lambda) q_2 )
                        \geq J( \pi, \gamma^\lambda; x, q_\lambda )
                        = \lambda \varphi_\pi(q_1) + (1-\lambda) \varphi_\pi(q_2).
        \end{equation}
        Since $q_1, q_2, \lambda$ were arbitrarily chosen, $\varphi_\pi(\cdot)$ is concave on $[0,1]$.

\subproof{Proof of claim \ref{it:V-boundary}.}
        The claim immediately follows from claim \ref{it:V-bounds}.
\end{proof}


\newpage
\subsection{Proof of CVaR Dynamic Programming Principle} \label{ssec:proof-cvar-dp}

We first state a proposition that is useful for proving upper-semicontinuity of a mapping with respect to the weak topology.

\begin{lemma}[\citet{Barbu}, Proposition 2.10. Restated for upper-semicontinuity] \label{lem:concave-upper-semicontinuity}
        Let $\Yscr$ be a locally convex space.
        A proper concave function $f : \Yscr \rightarrow [-\infty, \infty)$ is upper-semicontinuous on $\Yscr$ if and only if it is upper-semicontinuous with respect to the weak topology on $\Yscr$.
\end{lemma}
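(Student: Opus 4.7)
The plan is to use the super-level set characterization of upper-semicontinuity combined with Mazur's theorem. First, I would recall the standard fact that for any topology on $\Yscr$, a function $f : \Yscr \rightarrow [-\infty, \infty)$ is upper-semicontinuous with respect to that topology if and only if, for every $\alpha \in \mathbb{R}$, the super-level set $S_\alpha \defeq \{ y \in \Yscr : f(y) \geq \alpha \}$ is closed in that topology. Since $f$ is concave, each $S_\alpha$ is a convex subset of $\Yscr$, which is the crucial structural property we will exploit.

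The forward direction (weak upper-semicontinuity implies upper-semicontinuity in the original topology) is immediate and does not require concavity: because the weak topology on $\Yscr$ is coarser than the original locally convex topology, every weakly closed set is automatically closed in the original topology. Hence, if $f$ is weakly upper-semicontinuous, each $S_\alpha$ is weakly closed and thus closed in the original topology, so $f$ is upper-semicontinuous.

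For the converse, I would invoke Mazur's theorem, a consequence of the Hahn–Banach separation theorem: in a locally convex space, a convex set is closed in the original topology if and only if it is closed in the weak topology. Applying this to each convex super-level set $S_\alpha$, original-topology closedness implies weak closedness, which yields weak upper-semicontinuity of $f$. The only genuine ingredient beyond definitions is Mazur's theorem, which is why the hypothesis that $\Yscr$ be locally convex is essential; without it, convex sets cannot in general be separated from external points by continuous linear functionals, and the equivalence breaks down. Concreteness of $\Yscr$ (e.g., $\Lscr^2$ in our application) is not needed for the argument.
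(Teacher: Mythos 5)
Your proof is correct and is exactly the classical argument behind the cited result: the paper itself gives no proof of this lemma, deferring entirely to \citet{Barbu}, and the standard proof there (stated for lower-semicontinuous convex functions, hence applied to $-f$ here) is precisely your combination of the super-level-set characterization of upper-semicontinuity with the Hahn--Banach/Mazur fact that closed convex sets in a locally convex space are weakly closed. Nothing is missing.
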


We next prove a minimax theorem that includes the value function and a stopping time, which extends Theorem~\ref{thm:cvar-minimax}.

\begin{proposition} \label{prop:cvar-dp-minimax}
        For any stopping time $\tau:\Omega \rightarrow \mathbb{T}$ \newedit{with $\E[ \tau ] < \infty$}, we have
        \begin{equation}
                \inf_{\pi \in \Pi(x)} \sup_{\gamma \in \Gamma(q)} \E\left[ C_\tau^{x,\pi} Q_\tau^{q,\gamma} + V( X_\tau^{x,\pi}, Q_\tau^{q,\gamma} ) \right]
                        = \sup_{\gamma \in \Gamma(q)}  \inf_{\pi \in \Pi(x)} \E\left[ C_\tau^{x,\pi} Q_\tau^{q,\gamma} + V( X_\tau^{x,\pi}, Q_\tau^{q,\gamma} ) \right].
                      \end{equation}
       Moreover, an optimal solution $\gamma \in \Gamma(q)$ must exist for each maximization, i.e.,
        \begin{equation}
                \inf_{\pi \in \Pi(x)} \max_{\gamma \in \Gamma(q)} \E\left[ C_\tau^{x,\pi} Q_\tau^{q,\gamma} + V( X_\tau^{x,\pi}, Q_\tau^{q,\gamma} ) \right]
                        = \max_{\gamma \in \Gamma(q)}  \inf_{\pi \in \Pi(x)} \E\left[ C_\tau^{x,\pi} Q_\tau^{q,\gamma} + V( X_\tau^{x,\pi}, Q_\tau^{q,\gamma} ) \right].
       \end{equation}
\end{proposition}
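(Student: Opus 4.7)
The plan is to apply Sion's minimax theorem (Lemma~\ref{lem:sion-minimax}) in essentially the same way as in the proof of Theorem~\ref{thm:cvar-minimax}, with the extra term $V(X_\tau^{x,\pi}, Q_\tau^{q,\gamma})$ handled through the structural properties of $V$ established in Proposition~\ref{prop:V-properties}. Identify the trader's policy with the position process $X \in \Xscr(x)$ under the norm $\|\cdot\|_\Xscr$ of Lemma~\ref{lem:Pi-convexity}, and the adversary's policy with $Q_\infty \in \Qscr(q)$ via the isomorphism of Lemma~\ref{lem:isomorphism}, equipped with the $\Lscr^2$-weak topology under which $\Qscr(q)$ is convex and compact (Lemma~\ref{lem:weakly-compact}). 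The stopped quantile $Q_\tau^{q,\gamma} = \E[Q_\infty^{q,\gamma}\mid\Fscr_\tau]$ is then a linear function of $Q_\infty^{q,\gamma}$, and $X_\tau^{x,\pi}$ is an affine function of $\pi$.

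For convexity in $\pi$ (with $\gamma$ fixed), I would argue pathwise: $C_\tau^{x,\pi}$ is convex in $\pi$ by the same quadratic-plus-linear-in-$X$ structure used in Theorem~\ref{thm:cvar-minimax}, and $Q_\tau^{q,\gamma}\ge 0$ by Proposition~\ref{prop:Q-properties}\ref{it:Q-martingale}, so the product is convex; and $V(\cdot,q)$ is convex on $\mathbb{X}$ by Proposition~\ref{prop:V-properties}\ref{it:V-convexity}, so $V(X_\tau^{x,\pi}, Q_\tau^{q,\gamma})$ is convex in $\pi$ pathwise. Taking expectations preserves convexity. For concavity in $\gamma$ (with $\pi$ fixed), I would exploit the linearity of the stochastic integral: $Q_\tau^{q,\gamma}$ is linear in $\gamma$, so $\E[C_\tau^{x,\pi}Q_\tau^{q,\gamma}]$ is linear, and $V(x,\cdot)$ is concave on $[0,1]$ by Proposition~\ref{prop:V-properties}\ref{it:V-concavity}, giving pathwise concavity of $V(X_\tau^{x,\pi}, Q_\tau^{q,\gamma})$ in $\gamma$, preserved under expectation.

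The continuity requirements split into two arguments. For lower-semicontinuity in $\pi$ under $\|\cdot\|_\Xscr$, the proof of Theorem~\ref{thm:cvar-minimax} already shows $\pi\mapsto \E[C_\tau^{x,\pi}Q_\tau^{q,\gamma}]$ is continuous (the argument there truncates naturally at $\tau$). For the $V$-term, $V$ is bounded on $\mathbb{X}\times[0,1]$ by Proposition~\ref{prop:V-properties}\ref{it:V-bounds} and is jointly continuous on the interior (convexity in $x$ and concavity in $q$ give separate continuity, and the boundary values match the limits from the interior since $V\ge 0$ and vanishes at the boundary); a norm-convergent sequence $\pi^{(n)}\to\pi$ yields pathwise $X_\tau^{x,\pi^{(n)}}\to X_\tau^{x,\pi}$, and bounded convergence gives the expectation to converge. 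For upper-semicontinuity in $\gamma$ under the $\Lscr^2$-weak topology, I would invoke Lemma~\ref{lem:concave-upper-semicontinuity}: since the functional is concave in $Q_\infty$, weak upper-semicontinuity is equivalent to strong ($\Lscr^2$-norm) upper-semicontinuity. Strong continuity is then routine, since $Q_\infty^{(n)}\to Q_\infty$ in $\Lscr^2$ implies $Q_\tau^{(n)}=\E[Q_\infty^{(n)}\mid\Fscr_\tau]\to Q_\tau$ in $\Lscr^2$ (contraction of conditional expectation), and combining with $C_\tau^{x,\pi}\in\Lscr^2$, uniform boundedness of $Q_\tau^{(n)}$, boundedness of $V$, and dominated convergence yields convergence of both expectations.

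The main obstacle is the upper-semicontinuity step for the $V$-term in $\gamma$: a direct weak-convergence argument is awkward because $V$ is a nonlinear function of $Q_\tau$, and weak convergence of $Q_\infty^{(n)}$ need not transfer cleanly through $V$. The clean resolution is the concavity-to-weak-u.s.c.\ reduction of Lemma~\ref{lem:concave-upper-semicontinuity}, which converts the problem into strong continuity where conditional-expectation contraction and dominated convergence do the work. Once Steps on convexity, concavity, and the two semi-continuities are all in hand, Sion's theorem simultaneously delivers the minimax equality and the existence of an optimal $\gamma\in\Gamma(q)$ for the outer maximization, completing the proof.
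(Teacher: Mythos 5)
Your proposal is correct and follows essentially the same route as the paper: Sion's minimax theorem applied to $\Xscr(x)$ in the $\|\cdot\|_{\Xscr}$-norm topology versus the weakly compact convex set of stopped quantiles, with convexity/concavity supplied by Proposition~\ref{prop:V-properties} and the concavity-to-weak-upper-semicontinuity reduction of Lemma~\ref{lem:concave-upper-semicontinuity} handling the nonlinear $V$-term in $\gamma$, exactly as in the paper's proof. The only loose point is your claim that $\|\cdot\|_{\Xscr}$-convergence of $\pi^{(n)}$ gives \emph{pathwise} convergence of $X_\tau^{x,\pi^{(n)}}$ — it only gives $\Lscr^1$-convergence (via H\"older's inequality, which is precisely where the hypothesis $\E[\tau]<\infty$ enters), after which the continuous mapping theorem and the uniform boundedness of $V$ on $\mathbb{X}\times[0,1]$ complete the argument as you intend.
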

\begin{proof}
        Let us first define
        \begin{equation}
                \Qscr_\tau(q) \defeq \left\{ \E( Q | \Fscr_\tau ) : Q \in \Qscr(q) \right\}.
        \end{equation}
        As analogous to Lemma~\ref{lem:isomorphism}, it can be easily shown that $Q_\tau(q) = \left\{ Q_\tau^{q,\gamma} | \gamma \in \Gamma(q) \right\}$.
        With the notation introduced in the proof of Theorem~\ref{thm:cvar-minimax} (\S \ref{app:cvar-minimax-proof}), it suffices to show that
        \begin{equation}
                \inf_{X \in \Xscr(x)} \max_{Q \in \Qscr_\tau(q)} \E\left[ C_\tau^{X} Q + V( X_\tau, Q ) \right]
                        = \max_{Q \in \Qscr_\tau(q)} \inf_{X \in \Xscr(x)} \E\left[ C_\tau^{X} Q + V( X_\tau, Q ) \right],
        \end{equation}
        for which we need to verify the followings:
        \begin{enumerate}[label=(\roman*)]
                \item \label{it:cvar-dp-minimax-X-mapping} The mapping $X \mapsto \E\left[ C_\tau^X Q + V( X_\tau, Q ) \right]$ is convex and continuous on $\Xscr(x)$ with respect to the norm $\| \cdot \|_{\Xscr}$ for any given $Q \in \Qscr_\tau(q)$.
                \item \label{it:cvar-dp-minimax-weak-compact} $\Qscr_\tau(q)$ is a non-empty, convex, and weakly compact subset of $\Lscr^2( \Omega, \Fscr_\tau, \mathbb{P} )$.
                \item \label{it:cvar-dp-minimax-Q-mapping1} The mapping $Q \mapsto \E\left[ V( X_\tau, Q ) \right]$ is continuous on $\Qscr_\tau(q)$ endowed with $\Lscr^2$-norm\footnote{
                                \newedit{Note that what we ultimately want to show is that the mapping $Q \mapsto \E\left[ V( X_\tau, Q ) \right]$ is upper-semicontinuous on $\Qscr_\tau(q)$ endowed with $\Lscr^2$-weak topology.
                                We first show its continuity with respect to $\Lscr^2$-norm topology and then utilize Lemma~\ref{lem:concave-upper-semicontinuity} to achieve this goal.}
                } for any given $X \in \Xscr(x)$.
                \item \label{it:cvar-dp-minimax-Q-mapping2} The mapping $Q \mapsto \E\left[ C_\tau^X Q + V( X_\tau, Q ) \right]$ is concave and upper-semicontinuous on $\Qscr_\tau(q)$ endowed with $\Lscr^2$-weak topology for any given $X \in \Xscr(x)$.
        \end{enumerate}
        Together with the convexity of $\Xscr(x)$ (Lemma~\ref{lem:Pi-convexity}), we obtain the desired claim by utilizing Sion's minimax theorem (Lemma~\ref{lem:sion-minimax}).

        \subproof{Proof of claim \ref{it:cvar-dp-minimax-X-mapping}.}
        The convexity of the mapping $X \mapsto \E\left[ C_\tau^X Q + V( X_\tau, Q ) \right]$
        immediately follows from the arguments in the proof of Theorem~\ref{thm:cvar-minimax}, and
        the convexity of $V(\cdot, Q)$, which was shown in Proposition~\ref{prop:V-properties}\ref{it:V-convexity}.
        Now fix $X \in \Xscr(x)$ and consider a sequence $( X^{(n)} \in \Xscr(x) )_{n \in \mathbb{N}}$ such that $\| X - X^{(n)} \|_{\Xscr} \rightarrow 0$ as $n \rightarrow \infty$.
        Following the arguments in the proof of Theorem~\ref{thm:cvar-minimax}, we can show that
        $C_\tau^{X^{(n)}} Q \stackrel{ \Lscr^1 }{\rightarrow} C_\tau^X Q$ as $n \rightarrow
        \infty$.
        We also have that
        \begin{align}
                        \E\left[ \left| X_\tau^{(n)} - X_\tau  \right| \right]
                        &= \E\left[ \left| \int_{t=0}^\tau ( \dot{X}_t^{(n)} - \dot{X}_t ) dt \right| \right]
                        \\&\leq \E\left[  \int_{t=0}^\tau \left| \dot{X}_t^{(n)} - \dot{X}_t \right| dt \right]
                        \\&\leq \left( \E\left[ \int_{t=0}^\tau \left| \dot{X}_t^{(n)} - \dot{X}_t \right|^2 dt \right] \right)^{\frac{1}{2}} \times \left( \E\left[ \int_{t=0}^\tau 1^2 dt \right] \right)^{\frac{1}{2}}
                        \\&\leq \| X^{(n)} - X \|_\mathcal{X} \times \sqrt{ \E[\tau] },
        \end{align}
        where the second inequality holds due to H\"{o}lder's inequality.
        Since $\E[ \tau ] < \infty$ and $\| X^{(n)} - X \|_\mathcal{X} \searrow 0$ as $n \rightarrow \infty$, we deduce that $X_\tau^{(n)} \stackrel{ \Lscr^1 }{\rightarrow} X_\tau$.
        By the continuous mapping theorem, we further obtain that $V( X_\tau^{(n)}, Q) \stackrel{ p }{ \rightarrow } V( X_\tau, Q )$ as $n \rightarrow \infty$.
        Given that $\sup_{t \geq 0} |X_t^{(n)}| \leq M$ for all $n \in \mathbb{N}$, the sequence $\big( | V( X_\tau^{(n)}, Q ) | \big)_{n \in \mathbb{N}}$ is uniformly bounded by $\sup_{q' \in [0,1]} \big\{ \sigma^{\frac{2}{3}} \eta^{\frac{1}{3}} \times M^{\frac{4}{3}} \times {q'}^{\frac{1}{3}} \kappa(q')^{\frac{2}{3}} \big\} < \infty$, and hence uniformly integrable.
        Therefore, $V( X_\tau^{(n)}, Q ) \stackrel{ \Lscr^1 }{ \rightarrow } V( X_\tau, Q )$.
        Combining these results, we obtain $\lim_{n \rightarrow \infty} \E\left[ C_\tau^{X^{(n)}} Q + V( X_\tau^{(n)}, Q ) \right] = \E\left[ C_\tau^X Q + V(X_\tau, Q) \right]$, which concludes the proof.

\subproof{Proof of claim \ref{it:cvar-dp-minimax-weak-compact}.}
        The proof is identical to that of Lemma~\ref{lem:weakly-star-compact} and \ref{lem:weakly-compact} except that $\Lscr^\infty(\Omega, \Fscr, \mathbb{P})$ and $\Lscr^2(\Omega, \Fscr, \mathbb{P})$ are replaced with  $\Lscr^\infty(\Omega, \Fscr_\tau, \mathbb{P})$ and $\Lscr^2(\Omega, \Fscr_\tau, \mathbb{P})$, respectively.

\subproof{Proof of claim \ref{it:cvar-dp-minimax-Q-mapping1}.}
        Fix $Q \in \Qscr_\tau(q)$ and consider a sequence $( Q^{(n)} \in \Qscr_\tau(q) )_{n \in \mathbb{N}}$ such that $Q^{(n)} \stackrel{ \Lscr^2 }{\rightarrow} Q$ as $n \rightarrow \infty$.
        By the continuous mapping theorem, we have $V( X_\tau, Q^{(n)} ) \stackrel{p}{\rightarrow} V(X_\tau, Q)$.
        Since the sequence $\big( | V( X_\tau, Q^{(n)} ) | \big)_{n \in \mathbb{N}}$ is uniformly integrable (uniformly bounded by $\sup_{q' \in [0,1]} \big\{ \sigma^{\frac{2}{3}} \eta^{\frac{1}{3}} \times |M|^{\frac{4}{3}} \times {q'}^{\frac{1}{3}} \kappa(q')^{\frac{2}{3}} \big\} < \infty$), we further have $V( X_\tau, Q^{(n)} ) \stackrel{ \Lscr^1 }{\rightarrow} V(X_\tau, Q)$, which concludes the proof.

\subproof{Proof of claim \ref{it:cvar-dp-minimax-Q-mapping2}.}
        The concavity of the mapping $Q \mapsto \E\left[ C_\tau^X Q + V( X_\tau, Q ) \right]$ immediately follows from the concavity of $V(X_\tau, \cdot )$, which was shown in Proposition~\ref{prop:V-properties}\ref{it:V-concavity}.
        On the other hand, by combining the result of claim \ref{it:cvar-dp-minimax-Q-mapping1}
        with Lemma~\ref{lem:concave-upper-semicontinuity}, we deduce that the mapping $Q
        \mapsto \E\left[ V( X_\tau, Q ) \right]$ is upper-semicontinuous on $\Qscr_\tau(q)$
        endowed with $\Lscr^2$-weak topology.
        Therefore, it suffices to show that the mapping $Q \mapsto \E\left[ C_\tau^X Q \right]$ is upper-semicontinuous with respect to $\Lscr^2$-weak topology.

        Fix $Q \in \Qscr_\tau(q)$ and consider a sequence $( Q^{(n)} \in \Qscr_\tau(q) )_{n \in \mathbb{N}}$ such that $\E[ Z Q^{(n)} ] \rightarrow \E[ Z Q ]$ as $n \rightarrow \infty$ for any $Z \in \Lscr^2$. Since $C_\tau^X \in \Lscr^2$, we have $\lim_{n \rightarrow \infty} \E[ C_\tau^X Q^{(n)} ] = \E[ C_\tau^X Q ]$, which implies the continuity of the mapping $Q \mapsto \E\left[ C_\tau^X Q \right]$.
\end{proof}

Before proving Theorem~\ref{thm:cvar-dp}, we introduce additional notation to describe the Markovian structure of problem.
        We denote by $(X_s^{t,x,\pi})_{s \geq t}$ the trader's position process under control $\pi$ that begins from the value $x$ at time $t$, i.e.,
        \begin{equation}
                X_s^{t,x,\pi} \defeq x - \int_{u=t}^s \pi_u du.
        \end{equation}
        We define the adversary's martingale process $(Q_s^{t,q,\gamma})_{s \geq t}$ and the loss process $(C_s^{t,x,\pi})_{s \geq t}$ analogously,
        \begin{equation}
                Q_s^{t,q,\gamma} \defeq q + \int_{u=t}^s \gamma_u dW_u
                , \quad
                C_s^{t,x,\pi} \defeq \int_{u=t}^s \frac{1}{2} \eta \pi_u^2 du
                                - \int_{u=t}^s \sigma X_u^{t,x,\pi} dW_u.
        \end{equation}
        With this notation, we can describe the aforementioned processes in a recursive way,
        \begin{equation}
                X_s^{0,x,\pi} = X_s^{t,X_t^{0,x,\pi},\pi}
                , \quad
                Q_s^{0,q,\gamma} = Q_s^{t,Q_t^{0,q,\gamma},\gamma}
                , \quad
                C_s^{0,q,\gamma} = C_t^{0,x,\pi} + C_s^{t,X_t^{0,x,\pi},\pi}.
        \end{equation}
        The policy spaces are defined analogously as well,
        \begin{equation}
                \Pi_t(x) \defeq \left\{ \pi \in \Pi(x) \left|
                        ~ \pi_s = 0, \forall s < t
                        \right. \right\}
                , \quad
                \Gamma_t(q) \defeq \left\{ \gamma \in \Gamma(q) \left|
                        ~ \gamma_s = 0, \forall s < t
                \right. \right\}.
        \end{equation}

We prove Theorem~\ref{thm:cvar-dp} by utilizing Proposition~\ref{prop:cvar-dp-minimax} and Proposition~\ref{prop:time-decomposition}.

\begin{proof}[\proofnamest{Proof of Theorem~\ref{thm:cvar-dp}}]
        Define
        \begin{equation}
                U(x,q) \defeq \inf_{\pi \in \Pi(x)} \sup_{\gamma \in \Gamma(q)} \E\left[ C_\tau^{0,x,\pi} Q_\tau^{0,q,\gamma} + V\left( X_\tau^{0,x,\pi}, Q_\tau^{0,q,\gamma} \right) \right].
        \end{equation}
        By Lemma~\ref{lem:isomorphism} and Lemma~\ref{lem:weakly-compact}, we can equivalently write
        \begin{equation}
                U(x,q) = \inf_{\pi \in \Pi(x)} \max_{\gamma \in \Gamma(q)} \E\left[ C_\tau^{0,x,\pi} Q_\tau^{0,q,\gamma} + V\left( X_\tau^{0,x,\pi}, Q_\tau^{0,q,\gamma} \right) \right].
        \end{equation}
        We aim to prove $V(x,q) = U(x,q)$.

\subproof{Proof of ``$V(x,q) \leq U(x,q)$''.}
        Fix $\epsilon > 0$.
        By definition of $U(x,q)$, there exists $\pi^\circ \in \Pi(x)$ such that
        \begin{align}
                U(x,q) + \epsilon
                        &\geq \max_{\gamma \in \Gamma(q)} \E\left[ C_\tau^{0,x,\pi^\circ} Q_\tau^{0,q,\gamma} + V\left( X_\tau^{0,x,\pi^\circ}, Q_\tau^{0,q,\gamma} \right) \right].
        \end{align}
        On the other hand, we have that for any $t \geq 0$, $\hat{x} \in \mathbb{X}$, and $\hat{q} \in [0,1]$,
        \begin{equation}
                V(\hat{x}, \hat{q}) = \inf_{\hat{\pi} \in \Pi_t(\hat{x})} \max_{\hat{\gamma} \in \Gamma_t(\hat{q})} \E\left[ C_\infty^{t,\hat{x},\hat{\pi}} Q_\infty^{t,\hat{q},\hat{\gamma}} \right],
        \end{equation}
        by time homogeneity of the problem.
        Consequently, for each $\omega \in \Omega$ and $\gamma \in \Gamma(q)$, there exists $\hat{\pi}^{\omega,\gamma} \in \Pi_{\tau(\omega)}( X_\tau^{0,x,\pi^\circ}(\omega) )$ such that
        \begin{align}
                V\left( X_\tau^{0,x,\pi^\circ}(\omega), Q_\tau^{0,q,\gamma}(\omega) \right) + \epsilon
                        &\geq \max_{\hat{\gamma} \in \Gamma_{\tau(\omega)}( Q_\tau^{0,q,\gamma}(\omega))}
                                \E\left[ \left.
                                        C_\infty^{\tau, X_\tau^{0,x,\pi^\circ}, \hat{\pi}^{\omega,\gamma}}Q_\infty^{\tau,Q_\tau^{0,q,\gamma},\hat{\gamma}}
                                \right| \Fscr_\tau \right](\omega)
                        \\&\geq
                                \E\left[ \left.
                                        C_\infty^{\tau, X_\tau^{0,x,\pi^\circ}, \hat{\pi}^{\omega,\gamma}}Q_\infty^{\tau,Q_\tau^{0,q,\gamma},\gamma}
                                \right| \Fscr_\tau \right](\omega),
        \end{align}
        where the second inequality follows from the fact that the given $\gamma$ may not be optimal for the period $t \geq \tau$.

        For each $\gamma \in \Gamma(q)$, consider a trader's policy $\pi^\gamma: \mathbb{T} \times \Omega \rightarrow \mathbb{R}$ constructed as follows:
        \begin{equation}
                \pi_t^\gamma(\omega) \defeq \left\{ \begin{array}{ll}
                                \pi_t^\circ(\omega) & \text{if } t < \tau(\omega), \\
                                \hat{\pi}_t^{\omega,\gamma}(\omega) & \text{if } t \geq \tau(\omega),
                        \end{array} \right.
        \end{equation}
        i.e., it implements an $\epsilon$-optimal solution to the Bellman equation before $\tau$, and then implements another $\epsilon$-optimal solution for the remaining horizon.
        We observe that $\pi^\gamma \in \Pi(x)$ since $\pi_t^\circ(\omega) \in \Pi(x)$ and $\hat{\pi}_t^{\omega,\gamma} \in \Pi(x)$.
        Combining these results, we obtain
        \begin{align}
                U(x,q)
                        &\geq \max_{\gamma \in \Gamma(q)} \E\left[ C_\tau^{0,x,\pi^\circ} Q_\tau^{0,q,\gamma} + V\left( X_\tau^{0,x,\pi^\circ}, Q_\tau^{0,q,\gamma} \right) \right] - \epsilon
                        \\&\geq \max_{\gamma \in \Gamma(q)} \E\left[
                                C_\tau^{0,x,\pi^\circ} Q_\tau^{0,q,\gamma}
                                + \E\left[ \left. C_\infty^{\tau, X_\tau^{0,x,\pi^\circ}, \hat{\pi}^{\omega,\gamma}}Q_\infty^{\tau,Q_\tau^{0,q,\gamma},\gamma} \right| \Fscr_\tau \right]
                        \right] - 2 \epsilon
                        \\&= \max_{\gamma \in \Gamma(q)} \E\left[
                                        C_\tau^{0,x,\pi^\gamma} Q_\tau^{0,q,\gamma}
                                        + C_\infty^{\tau, X_\tau^{0,x,\pi^\gamma}, \pi^\gamma}Q_\infty^{\tau,Q_\tau^{0,q,\gamma},\gamma}
                        \right] - 2 \epsilon
                        \\&= \max_{\gamma \in \Gamma(q)} \E\left[
                                        C_\infty^{0, x, \pi^\gamma}Q_\infty^{0,q,\gamma}
                         \right] - 2 \epsilon
                         \\&\geq \max_{\gamma \in \Gamma(q)} \inf_{\pi \in \Pi(x)}  \E\left[
                                        C_\infty^{0, x, \pi}Q_\infty^{0,q,\gamma}
                         \right] - 2 \epsilon
                         \\&= V(x,q) - 2 \epsilon.
        \end{align}
        Since the choice of $\epsilon$ was arbitrary, we deduce that $U(x,q) \geq V(x,q)$.

\subproof{Proof of ``$V(x,q) \geq U(x,q)$''.}
        By minimax equality result for $U(x,q)$ (Proposition~\ref{prop:cvar-dp-minimax}), there
        exists $\gamma^\circ \in \Gamma(q)$ such that
        \begin{align}
                U(x,q)
                        &= \inf_{\pi \in \Pi(x)} \E\left[ C_\tau^{0,x,\pi} Q_\tau^{0,q,\gamma^\circ} + V\left( X_\tau^{0,x,\pi}, Q_\tau^{0,q,\gamma^\circ} \right) \right].
        \end{align}
        By minimax equality result for $V(x,q)$ (Theorem~\ref{thm:cvar-minimax}) and by time
        homogeneity of the problem, we further have that for any $t \geq 0$, $\hat{x} \in
        \mathbb{X}$, and $\hat{q} \in [0,1]$,
        \begin{equation}
                V(\hat{x}, \hat{q}) = \max_{\hat{\gamma} \in \Gamma_t(\hat{q})} \inf_{\hat{\pi} \in \Pi_t(\hat{x})}  \E\left[ C_\infty^{t,\hat{x},\hat{\pi}} Q_\infty^{t,\hat{q},\hat{\gamma}} \right].
        \end{equation}
        Therefore, for each $\omega \in \Omega$ and $\pi \in \Pi(x)$, there exists $\hat{\gamma}^{\omega, \pi} \in \Gamma_{\tau(\omega)}( Q_\tau^{0,q,\gamma^\circ}(\omega) )$ such that
        \begin{align}
                V\left( X_\tau^{0,x,\pi}(\omega), Q_\tau^{0,q,\gamma^\circ}(\omega) \right)
                        &= \inf_{\hat{\pi} \in \Pi_{\tau(\omega)}( X_\tau^{0,x,\pi}(\omega))}
                                \E\left[ \left.
                                        C_\infty^{\tau, X_\tau^{0,x,\pi}, \hat{\pi}}Q_\infty^{\tau,Q_\tau^{0,q,\gamma^\circ},\hat{\gamma}^{\omega,\pi}}
                                \right| \Fscr_\tau \right](\omega)
                        \\&\leq
                                \E\left[ \left.
                                        C_\infty^{\tau, X_\tau^{0,x,\pi}, \pi}Q_\infty^{\tau,Q_\tau^{0,q,\gamma^\circ},\hat{\gamma}^{\omega,\pi}}
                                \right| \Fscr_\tau \right](\omega).
        \end{align}

        For each $\pi \in \Pi(x)$, consider an adversary's policy $\gamma^\pi: \mathbb{T} \times \Omega \rightarrow \mathbb{R}$ constructed as
        \begin{equation}
                \gamma_t^\pi(\omega) \defeq \left\{ \begin{array}{ll}
                                \gamma_t^\circ(\omega) & \text{if } t < \tau(\omega), \\
                                \hat{\gamma}_t^{\omega,\pi}(\omega) & \text{if } t \geq \tau(\omega).
                        \end{array} \right.
        \end{equation}
        We observe that $\gamma^\pi \in \Gamma(q)$.
        Combining these results, we obtain
        \begin{align}
                U(x,q)
                        &= \inf_{\pi \in \Pi(x)} \E\left[ C_\tau^{0,x,\pi} Q_\tau^{0,q,\gamma^\circ} + V\left( X_\tau^{0,x,\pi}, Q_\tau^{0,q,\gamma^\circ} \right) \right]
                        \\&\leq \inf_{\pi \in \Pi(x)} \E\left[
                                C_\tau^{0,x,\pi} Q_\tau^{0,q,\gamma^\circ} + \E\left[ \left. C_\infty^{\tau, X_\tau^{0,x,\pi}, \pi}Q_\infty^{\tau,Q_\tau^{0,q,\gamma^\circ},\hat{\gamma}^{\omega,\pi}} \right| \Fscr_\tau \right]
                        \right]
                        \\&= \inf_{\pi \in \Pi(x)} \E\left[
                                C_\tau^{0,x,\pi} Q_\tau^{0,q,\gamma^\pi} + C_\infty^{\tau, X_\tau^{0,x,\pi}, \pi}Q_\infty^{\tau,Q_\tau^{0,q,\gamma^\pi},\gamma^\pi}
                        \right]
                        \\&= \inf_{\pi \in \Pi(x)} \E\left[
                                C_\infty^{0, X_0^{0,x,\pi}, \pi} Q_\infty^{0,q,\gamma^\pi}
                        \right]
                        \\&\leq \inf_{\pi \in \Pi(x)} \max_{\gamma \in \Gamma(q)} \E\left[
                                C_\infty^{0, X_0^{0,x,\pi}, \pi} Q_\infty^{0,q,\gamma}
                        \right]
                        = V(x,q).
        \end{align}
        This concludes the proof.
\end{proof}

\newpage
\section{Proofs for \S \ref{sec:opt}} \label{app:proof-opt}

\subsection{Proofs of Theorem~\ref{thm:verification} and Theorem~\ref{thm:policy-optimality}}

 In this section, we prove Theorem~\ref{thm:verification} and
  Theorem~\ref{thm:policy-optimality} together in Theorem~\ref{thm:optimality-detail} below. We do
  this by constructing the sequence of function pairs $( f^{(n)}, g^{(n)} )_{n \in \mathbb{N}}$
  from a function $V^\star$ satisfying the conditions of Theorem~\ref{thm:verification} and by
  characterizing the outcome of the game when the trader or the adversary employs a policy induced
  by $f^{(n)}$ or $g^{(n)}$.  Before making a formal statement of
  Theorem~\ref{thm:optimality-detail}, we clarify what we aim to show, address some technical
  difficulty, and illustrate how to detour such a difficulty.

Consider a function $V^\star(x,q)$ that satisfies the conditions of Theorem~\ref{thm:verification}, and define
\begin{align*}
        f^\star(x,q) \defeq \frac{V^\star_x(x,q)}{\eta q}
        , \quad
        g^\star(x,q) \defeq \frac{\sigma x}{V^\star_{qq}(x,q)},
\end{align*}
which solve the trader's and the adversary's optimization problems in the HJB equation:
\begin{equation}
        f^\star(x,q) = \argmin_{v \in \mathbb{R}} \left\{ \frac{\eta}{2} q v^2 - V^\star_x\left( x, q \right) v \right\}
        , \quad
        g^\star(x,q) = \argmax_{w \in \mathbb{R}} \left\{ \frac{1}{2} V^\star_{qq}\left( x, q \right) w^2 - \sigma x w \right\},
      \end{equation}
for any \newedit{$x \in \mathbb{X}$} and $q \in (0,1)$.
We would argue that they specify the trader's optimal liquidation rate, $\pi_t^\star = f^\star(X_t, Q_t)$, and the adversary's optimal quantile diffusion rate, $\gamma_t^\star = g^\star(X_t, Q_t)$, and this combination of policies, $(\pi^\star, \gamma^\star)$, achieves the equilibrium of this game at which the outcome equals $V^\star(x,q)$.

First note that we aim to characterize the equilibrium of this game.  We will not argue
  that the trader's policy induced by $f^\star$ is optimal against any adversary's policy
  $\gamma$.  Instead, we will argue is that such a trader's policy yields an outcome that is no
  worse than $V^\star(x,q)$ against any adversary's policy $\gamma$, i.e.,
  $J(\pi^{\star,\gamma}, \gamma) \leq V^\star(x,q)$ for any adversary's policy $\gamma$, where
  $\pi^{\star,\gamma}$ is the trader's policy induced by $f^\star$ and coupled with $\gamma$.  Similarly,
  we will also argue that the adversary's policy induced by $g^\star$ yields an outcome that is no
  better than $V^\star(x,q)$ against any trader's policy $\pi$, i.e.,
  $J(\pi, \gamma^{\star,\pi}) \geq V^\star(x,q)$ for any trader's policy $\pi$ where
  $\gamma^{\star,\pi}$ is the adversary's policy induced by $g^\star$ and coupled with $\pi$.  By
  combining two results, we would be able to argue that a mutually coupled $(X,Q)$-Markov policy
  pair $(\pi^\star, \gamma^\star)$ induced by $(f^\star, g^\star)$ is the best response to each
  other and yields an outcome that equals $V^\star(x,q)$ as a saddle point.

To complete the above arguments, however, we need to show that the policies $\pi^\star$ and $\gamma^\star$ are admissible (i.e., they satisfy the feasibility conditions that we have introduced in $\Pi(x)$ and $\Gamma(q)$).
Unfortunately, their feasibility is uncertain due to the pathological behavior of $f^\star$ and $g^\star$ near the boundaries when $x \approx 0$, $q \approx 0$, or $q \approx 1$: for instance, we may have $\E[ \int_{t=0}^\infty \pi_t^2 dt ] = \infty$ since $\lim_{q \searrow 0} f^\star(x,q) = \infty$;
 $\E[ \int_{t=0}^\infty |X_t|^2 dt ] = \infty$ since $\lim_{q \nearrow 1} f^\star(x,q) = 0$;
 or $\E[ \int_{t=0}^\infty \gamma_t^2 dt ] = \infty$ since $\lim_{x \searrow 0} g^\star(x,q) = -\infty$, which contradicts to the fact that $Q_t$ is a bounded martingale.

To avoid this feasibility issue, we construct a sequence of function pairs $( f^{(n)}, g^{(n)} )_{n \in \mathbb{N}}$ that approximates $( f^\star, g^\star )$ while inducing feasible policies.
More specifically, we define a sequence of function pairs $( f^{(n)}, g^{(n)} )_{n \in
  \mathbb{N}}$:
\begin{align}\label{eq:fngn}
        f^{(n)}(x, q) \defeq \left\{ \begin{array}{ll}
                f^\star(x,q) & \text{if } (x,q) \in \Ascr_n, \\
                x/n & \text{if } (x,q) \notin \Ascr_n,
                \end{array} \right.
        \quad
        g^{(n)}(x,q) \defeq \left\{ \begin{array}{ll}
                g^\star(x,q) & \text{if } (x,q) \in \Ascr_n, \\
                0 & \text{if } (x,q) \notin \Ascr_n,
                \end{array} \right.
\end{align}
where
\begin{equation}
        \Ascr_n \defeq \left\{ (x, q) \in \newedit{\mathbb{X}} \times [0,1] \left| \, |x| > \frac{1}{n}, \, \frac{1}{n} < q < 1-\frac{1}{n} \right. \right\}.
\end{equation}
It can be easily seen that $f^{(n)}$ and $g^{(n)}$ converge to $f^\star$ and $g^\star$ at every interior point $(x,q) \in \R_+ \times (0,1)$ as $n \rightarrow \infty$.
Moreover, these functions suppress the extreme behavior of $f^\star$ and $g^\star$ near the
boundaries, i.e., outside the set $\Ascr_n$. The policies induced by $f^{(n)}$ and $g^{(n)}$ can
be shown to be feasible. This is formally established in Theorem~\ref{thm:optimality-detail}~\ref{it:opt-pi-feasibility}--\ref{it:opt-pi-gamma-feasibility}.
Finally, we can complete the equilibrium argument made above by showing that this sequence of function pairs $( f^{(n)}, g^{(n)} )_{n \in \mathbb{N}}$ achieves the equilibrium asymptotically (Theorem~\ref{thm:optimality-detail} \ref{it:opt-pi-optimality}--\ref{it:opt-pi-gamma-optimality}).
In the midst of the proof, we confirm that the equilibrium outcome, $V(x,q)$, coincides the solution of HJB equation, $V^\star(x,q)$ (Theorem~\ref{thm:optimality-detail} \ref{it:opt-V-star-validity-pi} and \ref{it:opt-V-star-validity-gamma}), by showing that the gap between them can be tightened arbitrarily small as $n \rightarrow \infty$.

\begin{theorem} \label{thm:optimality-detail}
        Fix \newedit{$x \in (0, M]$} and $q \in (0,1)$, and consider functions $V^\star$, $f^\star$, $g^\star$, $f^{(n)}$, and $g^{(n)}$ introduced above.
        For any $n > \max\{ \frac{1}{x}, \frac{1}{q}, \frac{1}{1-q} \}$, we have:
        \begin{enumerate}[label=(\roman*)]
                \item \label{it:opt-pi-feasibility}
                        For any adversary's policy $\gamma \in \Gamma(q)$, a $(X,Q)$-Markov trader's policy $\pi^{(n), \gamma}$ induced by $f^{(n)}$ and coupled with $\gamma$ is admissible, i.e., $\pi^{(n), \gamma} \in \Pi(x)$.
                \item \label{it:opt-gamma-feasibility}
                        For any trader's policy $\pi \in \Pi(x)$, a $(X,Q)$-Markov adversary's policy $\gamma^{(n),\pi}$ induced by $g^{(n)}$ and coupled with $\pi$ is admissible, i.e., $\gamma^{(n),\pi} \in \Gamma(q)$.
                \item \label{it:opt-pi-gamma-feasibility}
                        Mutually coupled $(X,Q)$-Markov policy pair $(\pi^{(n)}, \gamma^{(n)})$ induced by $(f^{(n)}, g^{(n)})$ is admissible, i.e., $\pi^{(n)} \in \Pi(x)$ and $\gamma^{(n)} \in \Gamma(q)$.
                \item \label{it:opt-V-star-validity-pi}
                        $V(x,q) \leq V^\star(x,q)$.
                \item \label{it:opt-V-star-validity-gamma}
                        $V(x,q) \geq V^\star(x,q)$.
                \item \label{it:opt-pi-optimality}
                        For any $\gamma \in \Gamma(q)$, $(\pi^{(n),\gamma})_{n \in \mathbb{N}}$ defined in \ref{it:opt-pi-feasibility} satisfies $\limsup_{n \rightarrow \infty} J( \pi^{(n), \gamma}, \gamma; x, q ) \leq V(x,q)$.
                \item \label{it:opt-gamma-optimality}
                        For any $\pi \in \Pi(x)$, $(\gamma^{(n),\pi})_{n \in \mathbb{N}}$ defined in \ref{it:opt-gamma-feasibility} satisfies $\liminf_{n \rightarrow \infty} J( \pi, \gamma^{(n),\pi}; x, q ) \geq V(x,q)$.
                \item \label{it:opt-pi-gamma-optimality}
                        $(\pi^{(n)}, \gamma^{(n)})_{n \in \mathbb{N}}$ defined in \ref{it:opt-pi-gamma-feasibility} satisfies $\lim_{n \rightarrow \infty} J( \pi^{(n)}, \gamma^{(n)}; x, q ) = V(x,q)$.
        \end{enumerate}
\end{theorem}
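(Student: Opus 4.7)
The strategy is to apply Dynkin's formula (Proposition~\ref{prop:cvar-dp-dynkin}) with test function $V^\star$, combined with the HJB equation~\eqref{eq:HJB} that $V^\star$ satisfies, to each of the truncated policies $\pi^{(n),\gamma}$, $\gamma^{(n),\pi}$, and $(\pi^{(n)},\gamma^{(n)})$. First I would dispatch the admissibility claims~\ref{it:opt-pi-feasibility}--\ref{it:opt-pi-gamma-feasibility}. Conditions~\ref{it:verification-f-monotone} and~\ref{it:verification-g-monotone} of Theorem~\ref{thm:verification}, together with the continuity of $V^\star_x$ and $V^\star_{qq}$ on the compact set $\Ascr_n \subset \mathbb{X} \times [1/n, 1-1/n]$, imply that $f^\star = V^\star_x/(\eta q)$ and $g^\star = \sigma x/V^\star_{qq}$ are bounded on $\Ascr_n$; combined with the explicit definitions \eqref{eq:fngn} off $\Ascr_n$, this gives uniform boundedness of $f^{(n)}$ and $g^{(n)}$ on $\mathbb{X} \times [0,1]$. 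Admissibility of $\pi^{(n),\gamma}$ then follows since $\dot X_t = -X_t/n$ outside $\Ascr_n$ drives exponential decay of $|X_t|$ once the trajectory enters that region, yielding the $\Lscr^2$ moments in~\eqref{eq:Pi}. Admissibility of $\gamma^{(n),\pi}$ follows from boundedness of $g^{(n)}$ via the same stochastic-integral argument used in Proposition~\ref{prop:Q-properties}.

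Next, I would apply Proposition~\ref{prop:cvar-dp-dynkin} with $\widehat V = V^\star$ and $\tau = T$ to the pair $(\pi^{(n),\gamma}, \gamma)$, where $\gamma \in \Gamma(q)$ is arbitrary. On $\Ascr_n$, the trader's integrand equals $\min_v\{\tfrac{\eta}{2} q v^2 - V^\star_x v\}$ by the definition of $f^\star$, which by the HJB equation~\eqref{eq:HJB} equals $-\max_w\{\tfrac{1}{2} V^\star_{qq} w^2 - \sigma x w\}$, and therefore upper-bounds the negative of the adversary's actual integrand. Hence the sum of the two Dynkin integrands is non-positive on $\Ascr_n$. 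Off $\Ascr_n$, the trader's contribution is $O(1/n)$ in $L^\infty$ norm by direct calculation with $f^{(n)} = x/n$ and the explicit form of $V^\star_x$; the adversary's contribution has no pointwise $O(1/n)$ bound but can be controlled in expectation via the It\^o isometry bound $\E\int_0^\infty \gamma_t^2\, dt = \E(Q_\infty^{q,\gamma} - q)^2 \leq 1$ together with Cauchy--Schwarz. Passing $T \to \infty$, using condition~\ref{it:verification-boundary} and the growth bound $V^\star \leq C|x|^{4/3}$ from Proposition~\ref{prop:V-properties}\ref{it:V-bounds}, together with the monotonic decay of $|X_t|$ implied by condition~\ref{it:verification-symmetry}, allows dominated convergence to conclude that $\E[V^\star(X_T,Q_T)] \to 0$. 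The resulting estimate is $J(\pi^{(n),\gamma},\gamma;x,q) \leq V^\star(x,q) + \epsilon_n$ with $\epsilon_n \to 0$ uniformly in $\gamma$.

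The symmetric argument applied to $(\pi, \gamma^{(n),\pi})$ using the dual HJB inequality yields $J(\pi, \gamma^{(n),\pi}; x, q) \geq V^\star(x,q) - \epsilon_n'$ uniformly in $\pi$. Taking $\sup_\gamma$ and $\inf_\pi$ respectively and invoking the saddle-point characterization in Theorem~\ref{thm:cvar-minimax} gives $V(x,q) \leq V^\star(x,q) + \epsilon_n$ and $V(x,q) \geq V^\star(x,q) - \epsilon_n'$; letting $n \to \infty$ establishes claims~\ref{it:opt-V-star-validity-pi} and~\ref{it:opt-V-star-validity-gamma}. Claims~\ref{it:opt-pi-optimality} and~\ref{it:opt-gamma-optimality} then follow immediately from the two uniform inequalities combined with $V = V^\star$. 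Claim~\ref{it:opt-pi-gamma-optimality} follows from Dynkin applied to the coupled pair $(\pi^{(n)}, \gamma^{(n)})$, where the HJB equation produces an equality on $\Ascr_n$ for the sum of integrands, making the two-sided sandwich tight up to the off-$\Ascr_n$ error, which again vanishes.

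The main obstacle is the off-$\Ascr_n$ error control, in particular for the adversary's integrand $\tfrac{1}{2} V^\star_{qq}\gamma_t^2 - \sigma X_t \gamma_t$, for which neither $\gamma_t$ nor $V^\star_{qq}$ is pointwise bounded as $q \to 0$ or $q \to 1$, and for which the error must be uniform in $\gamma$ (respectively $\pi$) for the infimum/supremum swap to close. The resolution is to combine the $\Lscr^2$-bound on $\gamma$ with the observation that under $\pi^{(n),\gamma}$ the quantity $\E\int_0^\infty \Inb{\Ascr_n^c}(X_t,Q_t) |X_t|\, dt$ vanishes as $n \to \infty$, since once the state reaches $\Ascr_n^c$ through the $|x|$-boundary, exponential decay kicks in, and excursions through the $q$-boundary contribute negligibly by the It\^o isometry. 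The monotonicity conditions~\ref{it:verification-f-monotone} and~\ref{it:verification-g-monotone} are crucial in making these estimates uniform, which is why they appear as hypotheses in Theorem~\ref{thm:verification} rather than being consequences of the HJB equation alone.
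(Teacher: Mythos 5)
Your high-level plan (Dynkin's formula with test function $V^\star$, the HJB sign argument on $\Ascr_n$, truncated policies) matches the paper's, but the execution diverges at the one point where the real difficulty lives, and there the argument has a genuine gap. The paper does \emph{not} run Dynkin over the whole horizon and control off-$\Ascr_n$ errors; it stops at the exit time $\tau_n \defeq \inf\{t : (X_t,Q_t)\notin\Ascr_n\}$, so that the Dynkin integrand is only ever evaluated where the truncated policy coincides with $f^\star$ (resp.\ $g^\star$) and the HJB inequality applies cleanly, giving $\E[C_{\tau_n}Q_{\tau_n}]\le V^\star(x,q)$. It then finishes claims (iv)--(v) by invoking the dynamic programming principle (Theorem~\ref{thm:cvar-dp}) at $\tau_n$ together with the fact that $(X_{\tau_n},Q_{\tau_n})$ sits on the boundary of $\Ascr_n$, where $\E[V(X_{\tau_n},Q_{\tau_n})]\le A_n(x)\to 0$. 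Your route instead must bound $\E\int_{\tau_n}^\infty\{\tfrac12 V^\star_{qq}\gamma_t^2-\sigma X_t\gamma_t\}\,dt$ uniformly over $\gamma\in\Gamma(q)$, and the Cauchy--Schwarz estimate you propose does not close: if the state exits $\Ascr_n$ through the $q$-boundary, $X_{\tau_n}$ can be as large as $x$, and under the tail schedule $X_t=X_{\tau_n}e^{-(t-\tau_n)/n}$ one has $\int_{\tau_n}^\infty X_t^2\,dt \le nx^2/2$, so $\sqrt{\E\int X_t^2\,dt}\cdot\sqrt{\E\int\gamma_t^2\,dt}=O(\sqrt n)$, which diverges. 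You correctly identify this as ``the main obstacle'' but the proposed resolution (isometry plus an unproved claim that $\E\int \Inb{\Ascr_n^c}|X_t|\,dt$ vanishes) is not a proof; the paper's stopping-time device is precisely what makes this term never appear.

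Two further points. First, claims (vi)--(viii) do not ``follow immediately'': the paper needs Proposition~\ref{prop:time-decomposition} at $\tau_n$, an explicit computation of $\scvar_{Q_{\tau_n}}[C^{\tau_n,X_{\tau_n},\pi}_\infty]$ for the post-$\tau_n$ exponential schedule (a normal random variable), and Lemma~\ref{lem:kappa-limit} to show $\sqrt n\,\kappa(1/n)\to0$ so that this tail S-CVaR vanishes; for (vii) it uses that $Q$ freezes after $\tau_n$ so the residual term $\E[(\int_{\tau_n}^\infty\tfrac{\eta}{2}\pi_t^2dt)Q_{\tau_n}]$ is nonnegative. Second, your admissibility argument omits the two things the paper actually checks: a strictly positive lower bound $f^\star\ge\alpha_n>0$ on $\Ascr_n$ (from condition~\ref{it:verification-f-monotone}), which guarantees the state exits $\Ascr_n$ by the deterministic time $x/\alpha_n$ and hence that the decay phase is reached at all, and Lipschitz continuity of $f^\star,g^\star$ on $\Ascr_n$ for existence/uniqueness of the coupled SDE; it also proves $\E[\tau_n]<\infty$ under $\gamma^{(n),\pi}$ via Burkholder--Davis--Gundy, which is needed to apply Theorem~\ref{thm:cvar-dp} at $\tau_n$.
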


\begin{proof}
        Throughout the proof, we define a sequence of hitting times $( \tau_n )_{n \in \mathbb{N}}$ as follows:
        \begin{equation}\label{eq:taun}
                \tau_n \defeq \inf_{t \geq 0}\{ (X_t, Q_t) \notin \Ascr_n \}.
        \end{equation}
        Since $\Ascr_n^c$ is a closed set and $(X_t)_{t \geq 0}$ and $(Q_t)_{t \geq 0}$ are continuous, $\tau_n$ is a stopping time for each $n$.
        Also note that if the trader's policy is induced by $f^{(n)}$ or the adversary's policy is induced by $g^{(n)}$, then the process $(X_t,Q_t)$ never returns to $\Ascr_n$ once it leaves $\Ascr_n$, since $|X_t|$ is monotonically decreasing or $Q_t$ remains unchanged after $\tau_n$.

\subproof{Proof of claim \ref{it:opt-pi-feasibility}.} Under the trader's policy $\pi^{(n),\gamma}$, the position process $(X_t)_{t \geq 0}$ is described by
        \begin{equation} \label{eq:xn-sde}
                X_t = \left\{ \begin{array}{ll}
                        x - \int_{s=0}^t f^\star\left( X_s, Q_s \right) ds, & \forall t \leq \tau_n, \\
                        X_{\tau_n} e^{-(t-\tau_n)/n}, & \forall t \geq \tau_n.
                \end{array} \right.
        \end{equation}
        Observe that $X_t \in [0, x]$ for any $t \geq 0$, and in particular, $X_t$ is monotonically decreasing over time.
        Given that $(Q_t)_{t \geq 0}$ has a continuous sample path and $n$ is large enough such that $q \in \left[ \frac{1}{n}, 1 - \frac{1}{n} \right]$, we also have $Q_t \in \left[ \frac{1}{n}, 1 - \frac{1}{n} \right]$ for any $t \in [0, \tau_n]$.
        Then, the sample path of $(X_t)_{t \geq 0}$ for each $\omega$ is uniquely determined (i.e., the associated SDE has a unique strong solution): the uniqueness of sample path on $[0, \tau_n)$ follows from the fact that $f^\star(\cdot,\cdot)$ is Lipschitz continuous on $\left( \frac{1}{n}, x \right] \times \left( \frac{1}{n}, 1-\frac{1}{n} \right) \subset \Ascr_n$ due to condition \ref{it:verification-f-monotone}, and the uniqueness on $[\tau_n, \infty)$ immediately follows from the fact that $X_t = X_{\tau_n} e^{-(t-\tau_n)/n}$ for $t \geq \tau_n$.

        We next show that $\tau_n$ is almost surely bounded.
        For any $t < \tau_n$ (so that $(X_t,Q_t) \in \Ascr_n$), by condition \ref{it:verification-f-monotone}, we have
        \begin{equation}
                f^{(n)}(X_t, Q_t) \geq f^\star( X_t, 1 - 1/n ) \geq f^\star(1/n, 1-1/n) := \alpha_n > 0.
        \end{equation}
        Then, for any $t < \tau_n$, we have $dX_t/dt = - f^{(n)}(X_t, Q_t) \leq - \alpha_n$, and hence $X_{\tau_n} \leq x - \alpha_n \tau_n$.
        Together with the fact that $X_{\tau_n} \geq 0$, we deduce that
        \begin{equation} \label{eq:pi-tau-finite-mean}
                \tau_n \leq \frac{x}{ \alpha_n },
        \end{equation}
        on any sample path.

        We further have that, by condition \ref{it:verification-f-monotone}, for any $t < \tau_n$,
        \begin{equation}
                f^{(n)}(X_t, Q_t) \leq f^\star( X_t, 1/n ) \leq f^\star( x, 1/n ) := \beta_n < \infty.
        \end{equation}
        Then,
        \begin{equation}
                \int_{t=0}^{\tau_n} \big( \pi_t^{(n),\gamma} \big)^2 dt
                        = \int_{t=0}^{\tau_n} \big( f^\star(X_t, Q_t ) \big)^2 dt
                        \leq \tau_n \beta_n^2
                        \leq \frac{x \beta_n^2}{\alpha_n}
                , \quad
                \int_{t=0}^{\tau_n} X_t^2 dt \leq \tau_n x^2 \leq \frac{x^3}{\alpha_n},
        \end{equation}
        on any sample path.
        Further, since $\pi_t^{(n),\gamma} = X_t/n$ for any $t \geq \tau_n$,
        \begin{equation}
                \int_{t=\tau_n}^\infty \big( \pi_t^{(n),\gamma} \big)^2 dt = \frac{1}{n^2} \int_{t=\tau_n}^\infty X_t^2 dt = \frac{1}{n^2} \int_{t=\tau_n}^\infty X_{\tau_n}^2 e^{-2(t-\tau_n)/n} dt \leq \frac{x^2}{2n}
                , \quad
                \int_{t=\tau_n}^\infty X_t^2 dt \leq \frac{nx^2}{2}.
        \end{equation}
        As a result,
        \begin{equation}
                \E\left[ \left( \int_{t=0}^\infty \big( \pi_t^{(n),\gamma} \big)^2 dt \right)^2 \right] \leq \left( \frac{x \beta_n^2}{\alpha_n} + \frac{x^2}{2n} \right)^2 < \infty
                , \quad
                \E\left[ \int_{t=0}^\infty X_t^2 dt \right] \leq \frac{x^3}{\alpha_n} + \frac{nx^2}{2} < \infty.
        \end{equation}
        Also note that $\sup_{t \geq 0} |X_t| \leq x \leq M$.
        Therefore, $\pi^{(n), \gamma}$ is admissible, i.e., $\pi^{(n), \gamma} \in \Pi(x)$.

\subproof{Proof of claim \ref{it:opt-gamma-feasibility}.}
        Under the adversary's policy $\gamma^{(n),\pi}$, the martingale process $(Q_t)_{t \geq 0}$ is described by
        \begin{equation} \label{eq:qn-sde}
                Q_t = \left\{ \begin{array}{ll}
                        q + \int_{s=0}^t g^\star\left( X_s, Q_s \right) dW_s, & \forall t \leq \tau_n, \\
                        Q_{\tau_n}, & \forall t \geq \tau_n.
                \end{array} \right.
        \end{equation}
        From condition \ref{it:verification-g-monotone}, we can show that $g^\star(\cdot,\cdot)$ is Lipschitz continuous and bounded on $\left( \frac{1}{n}, \infty \right) \times \left( \frac{1}{n}, 1-\frac{1}{n} \right) \subset \Ascr_n$.
        Therefore, for each $\omega$, the sample path of $(Q_t)_{t \geq 0}$ is unique and continuous (given that $n$ is large enough so that $q \in \left[\frac{1}{n}, 1 - \frac{1}{n}\right]$, and $(X_t)_{t \geq 0}$ is continuous), and hence $Q_t \in \left[\frac{1}{n}, 1 - \frac{1}{n}\right] \subset [0,1]$ almost surely for any $t \in [0,\infty)$, and hence $\gamma^{(n),\pi}$ is admissible.

        For later use, we show that $\E[\tau_n] < \infty$ under the adversary's policy $\gamma^{(n),\pi}$.
        Observe that, for any $t < \tau_n$,
        \begin{equation}
                \left| g^{(n)}(X_t, Q_t ) \right| = \left| g^\star(X_t, Q_t) \right| \geq \left| g^\star(M, Q_t) \right| \geq \min_{q \in [1/n,1-1/n]}\left\{ g^\star(M, q)  \right\} := \xi_n.
        \end{equation}
        In other words, the diffusion rate of the quantile process $\big( Q_t \big)_{t \geq 0}$ is lower bounded by $\xi_n$ (up to time $\tau_n$).
        Therefore, its quadratic variation process $\big( \left<Q\right>_t \big)_{t \geq 0}$ satisfies
        \begin{equation}
                \left<Q\right>_t = \int_{s=0}^{t \wedge \tau_n} \big( g^\star(X_s, Q_s) \big)^2 ds \geq \xi_n^2 (t \wedge \tau_n),
        \end{equation}
        for any $t \in \mathbb{R}$.
       On the other hand, by Burkholder-Davis-Gundy inequality, there exists $c$ such that $c \E[ \left<Q\right>_{t \wedge \tau_n} ] \leq \E[ ( \sup_{s \leq t \wedge \tau_n} Q_s )^2  ]$ for any $t$, and thus $\xi_n^2 \E[ t \wedge \tau_n ] \leq \E[ \left<Q\right>_{t \wedge \tau_n} ] \leq \E[ ( \sup_{s \leq t \wedge \tau_n} Q_s )^2  ]/c \leq 1 / c$.
        By monotone convergence theorem, we deduce that $\E[ \tau_n ] = \lim_{t \rightarrow \infty} \E[ t \wedge \tau_n ] \leq 1/(c \xi_n^2) < \infty$.

\subproof{Proof of claim \ref{it:opt-pi-gamma-feasibility}.}
        Under the mutually coupled policy pair $(\pi^{(n)}, \gamma^{(n)})$, the process pair $(X_t, Q_t)_{t \geq 0}$ satisfies \eqref{eq:xn-sde} and \eqref{eq:qn-sde} simultaneously.
        By the same argument above (claim \ref{it:opt-gamma-feasibility}), we can show that $\gamma^{(n)}$ is admissible, and also that $\pi^{(n)}$ is admissible by claim \ref{it:opt-pi-feasibility}.

\subproof{Proof of claim \ref{it:opt-V-star-validity-pi}.}
        Fix $\gamma \in \Gamma(q)$ and $n \in \mathbb{N}$.
        For simplicity, we let $\pi$ denote $\pi^{(n),\gamma} \in \Pi(x)$.
        For any $t < \tau_n$, since $f^{(n)}(X_t,Q_t) = f^\star(X_t, Q_t)$, by condition \ref{it:verification-HJB}, we have
        \begin{align}
                 & \left( \frac{\eta}{2} Q_t \pi_t^2 - V_x^\star\left( X_t, Q_t \right) \pi_t \right)
                        +  \left( \frac{1}{2} V_{qq}^\star\left( X_t, Q_t \right) \gamma_t^2 - \sigma X_t \gamma_t \right)
                 \\&= \left( \frac{\eta}{2} Q_t \left( f^\star\left( X_t, Q_t \right) \right)^2 - V_x^\star\left( X_t, Q_t \right) f^\star\left( X_t, Q_t \right) \right)
                        +  \left( \frac{1}{2} V_{qq}^\star\left( X_t, Q_t \right) \gamma_t^2 - \sigma X_t \gamma_t \right)
                 \\&= \min_{v \in \mathbb{R}} \left\{ \frac{\eta}{2} Q_t v^2 - V_x^\star\left( X_t, Q_t \right) v \right\}
                        + \left( \frac{1}{2} V_{qq}^\star\left( X_t, Q_t \right) \gamma_t^2 - \sigma X_t \gamma_t \right)
                 \\&\leq \min_{v \in \mathbb{R}} \left\{ \frac{\eta}{2} Q_t v^2 - V_x^\star\left( X_t, Q_t \right) v \right\}
                        + \max_{w \in \mathbb{R}} \left\{ \frac{1}{2} V_{qq}^\star\left( X_t, Q_t \right) w^2 - \sigma X_t w \right\}
                \\&= 0.
        \end{align}
        By plugging this result into Proposition~\ref{prop:cvar-dp-dynkin}, we obtain
        \begin{align} \label{eq:pi-optimality-sub-1}
                & \E\left[ C_{\tau_n} Q_{\tau_n} \right]
                \\&\leq \E\left[ C_{\tau_n} Q_{\tau_n} + V^\star\left( X_{\tau_n}, Q_{\tau_n} \right) \right]
                \\&= V^\star\left( x, q \right)
                        + \E\left[ \int_{t=0}^{\tau_n}\left\{
                                \left( \frac{\eta}{2} Q_t \pi_t^2 - V_x^\star\left( X_t, Q_t \right) \pi_t \right)
                                +  \left( \frac{1}{2} V_{qq}^\star\left( X_t, Q_t \right) \gamma_t^2 - \sigma X_t \gamma_t \right)
                         \right\} dt \right]
                \\&\leq V^\star\left( x, q \right),
        \end{align}
        where the first inequality follows from the non-negativity of $V^\star(\cdot, \cdot)$.
        On the other hand, by the definition of $\tau_n$ and the continuity of sample paths, we have either $X_{\tau_n} = \frac{1}{n}$ or $Q_{\tau_n} \in \{ \frac{1}{n}, 1 - \frac{1}{n} \}$, and therefore,
        \begin{align}
                \E\left[ V\left( X_{\tau_n}, Q_{\tau_n} \right) \right]
                        &\leq \E\left[ \max\left\{ V( 1/n, Q_{\tau_n} ),~ V( X_{\tau_n}, 1/n ),~ V( X_{\tau_n}, 1 - 1/ n ) \right\} \right]
                        \\&\leq \sup_{q' \in [0,1]}\{ V( 1/n, q' ) \} + V( x, 1/n ) + V( x, 1-1/n ) := A_n(x),
                         \label{eq:pi-optimality-sub-2}
        \end{align}
        where the second inequality follows from the fact that $X_\tau \leq x$ under policy $\pi^{(n),\gamma}$ and $V(\cdot, q)$ is increasing on $[0, \infty)$.
        It can be shown easily that $\lim_{n \rightarrow \infty} A_n(x) = 0$ for any $x$ due to Proposition~\ref{prop:V-properties}.

        Since $\gamma$ was chosen to be arbitrary, from \eqref{eq:pi-optimality-sub-1}, we deduce that
        \begin{equation} \label{eq:pi-optimality-sub-3}
                \sup_{\gamma \in \Gamma(q)} \E\left[ C_{\tau_n}^{x,\pi^{(n),\gamma}} Q_{\tau_n}^{q,\gamma} \right] \leq V^\star( x, q ).
        \end{equation}
        Utilizing Theorem~\ref{thm:cvar-dp} (we have shown that $\E[\tau_n]<\infty$ in \eqref{eq:pi-tau-finite-mean}) and Proposition~\ref{prop:cvar-dp-minimax} with \eqref{eq:pi-optimality-sub-2} and \eqref{eq:pi-optimality-sub-3}, we further obtain
        \begin{align}
                V(x,q)
                        &\stackrel{ \text{Thm}~\ref{thm:cvar-dp}}{=} \inf_{\pi \in \Pi(x)}  \sup_{\gamma \in \Gamma(q)} \E\left[ C_{\tau_n}^{x,\pi} Q_{\tau_n}^{q,\gamma} + V\left( X_{\tau_n}^{x,\pi}, Q_{\tau_n}^{q,\gamma} \right) \right]
                        \\&\stackrel{ \text{Prop}~\ref{prop:cvar-dp-minimax}}{=} \sup_{\gamma \in \Gamma(q)} \inf_{\pi \in \Pi(x)} \E\left[ C_{\tau_n}^{x,\pi} Q_{\tau_n}^{q,\gamma} + V\left( X_{\tau_n}^{x,\pi}, Q_{\tau_n}^{q,\gamma} \right) \right]
                        \\&\leq \sup_{\gamma \in \Gamma(q)} \E\left[ C_{\tau_n}^{x,\pi^{(n),\gamma}} Q_{\tau_n}^{q,\gamma} + V\left( X_{\tau_n}^{x,\pi^{(n),\gamma}}, Q_{\tau_n}^{q,\gamma} \right) \right]
                        \\&\stackrel{ \eqref{eq:pi-optimality-sub-2} }{\leq} \sup_{\gamma \in \Gamma(q)} \E\left[ C_{\tau_n}^{x,\pi^{(n),\gamma}} Q_{\tau_n}^{q,\gamma}  \right]  + A_n(x)
                        \\&\stackrel{ \eqref{eq:pi-optimality-sub-3} }{\leq} V^\star( x, q ) + A_n(x).
        \end{align}
        Since $\lim_{n \rightarrow \infty} A_n(x) = 0$, we obtain $V(x,q) \leq V^\star(x,q)$, which concludes the proof.

\subproof{Proof of claim \ref{it:opt-V-star-validity-gamma}.}
        The proof is almost symmetric to that of claim \ref{it:opt-V-star-validity-pi}.
        Fix $\pi \in \Pi(x)$ and $n \in \mathbb{N}$.
        For simplicity, we let $\gamma$ denote $\gamma^{(n),\pi} \in \Gamma(q)$.
        For any $t < \tau_n$, since $g^{(n)}(X_t,Q_t) = g^\star(X_t, Q_t)$, by condition \ref{it:verification-HJB}, we have
        \begin{align}
                 & \left( \frac{\eta}{2} Q_t \pi_t^2 - V_x^\star\left( X_t, Q_t \right) \pi_t \right)
                        +  \left( \frac{1}{2} V_{qq}^\star\left( X_t, Q_t \right) \gamma_t^2 - \sigma X_t \gamma_t \right)
                 \\&= \left( \frac{\eta}{2} Q_t \pi_t^2 - V_x^\star\left( X_t, Q_t \right) \pi_t \right)
                        + \max_{w \in \mathbb{R}} \left\{ \frac{1}{2} V_{qq}^\star\left( X_t, Q_t \right) w^2 - \sigma X_t w \right\}
                 \\&\geq \min_{v \in \mathbb{R}} \left\{ \frac{\eta}{2} Q_t v^2 - V_x^\star\left( X_t, Q_t \right) v \right\}
                        + \max_{w \in \mathbb{R}} \left\{ \frac{1}{2} V_{qq}^\star\left( X_t, Q_t \right) w^2 - \sigma X_t w \right\}
                \\&= 0.
        \end{align}
        Consequently, by Proposition~\ref{prop:cvar-dp-dynkin},
        \begin{align} \label{eq:gamma-optimality-sub-1}
                &\E\left[ C_{\tau_n} Q_{\tau_n} + V^\star\left( X_{\tau_n}, Q_{\tau_n} \right) \right]
                \\&= V^\star\left( x, q \right)
                        + \E\left[ \int_{t=0}^{\tau_n}\left\{
                                \left( \frac{\eta}{2} Q_t \pi_t^2 - V_x^\star\left( X_t, Q_t \right) \pi_t \right)
                                +  \left( \frac{1}{2} V_{qq}^\star\left( X_t, Q_t \right) \gamma_t^2 - \sigma X_t \gamma_t \right)
                         \right\} dt \right]
                \\&\geq V^\star\left( x, q \right).
        \end{align}
        By the definition of $\tau_n$ and the continuity of sample paths, we have either $X_{\tau_n} = \frac{1}{n}$ or $Q_{\tau_n} \in \{ \frac{1}{n}, 1 - \frac{1}{n} \}$, and hence,
        \begin{align} \label{eq:gamma-optimality-sub-2}
                \E\left[ V^\star\left( X_{\tau_n}, Q_{\tau_n} \right) \right]
                        &\leq \E\left[ \max\left\{ V^\star( 1/n, Q_{\tau_n} ),~ V^\star( X_{\tau_n}, 1/n ),~ V^\star( X_{\tau_n}, 1-1/n ) \right\} \right]
                        \\&\leq \sup_{q' \in [0,1]}\{ V^\star( 1/n, q' ) \} + V^\star( M, 1/n ) + V^\star( M, 1-1/n ) := B_n,
        \end{align}
        where the last inequality follows from the feasibility of $\pi$ (i.e., $|X_{\tau_n}| \leq M$) and the monotonicity of $V^\star(\cdot, q)$ (condition \ref{it:verification-symmetry}).
        By condition \ref{it:verification-boundary}, we have $\lim_{n \rightarrow 0} B_n = 0$.

        Since $\pi$ was chosen arbitrary, we further deduce that
        \begin{equation} \label{eq:gamma-optimality-sub-3}
                V^\star(x,q)
                                \stackrel{\eqref{eq:gamma-optimality-sub-1}}{\leq} \inf_{\pi \in \Pi(x)} \E\left[ C_{\tau_n}^{x,\pi} Q_{\tau_n}^{q,\gamma^{(n),\pi}} + V^\star\left( X_{\tau_n}^{x,\pi}, Q_{\tau_n}^{q,\gamma^{(n),\pi}} \right)  \right]
                        \stackrel{\eqref{eq:gamma-optimality-sub-2}}{\leq} \inf_{\pi \in \Pi(x)} \E\left[ C_{\tau_n}^{x,\pi} Q_{\tau_n}^{q,\gamma^{(n),\pi}} \right] + B_n.
        \end{equation}
        By utilizing Theorem~\ref{thm:cvar-dp} (we have shown that $\E[\tau_n] < \infty$ at the end of the proof of claim \ref{it:opt-gamma-feasibility}) and the above results, we obtain
        \begin{align}
                V(x,q)
                        &\stackrel{\text{Thm~\ref{thm:cvar-dp}}}{=} \inf_{\pi \in \Pi(x)} \sup_{\gamma \in \Gamma(q)} \E\left[ C_{\tau_n}^{x,\pi} Q_{\tau_n}^{q,\gamma} + V\left( X_{\tau_n}^{x,\pi}, Q_{\tau_n}^{q,\gamma} \right) \right]
                        \\&\geq \inf_{\pi \in \Pi(x)} \E\left[ C_{\tau_n}^{x,\pi} Q_{\tau_n}^{q,\gamma^{(n),\pi}} + V\left( X_{\tau_n}^{x,\pi}, Q_{\tau_n}^{q,\gamma^{(n),\pi}} \right) \right]
                        \\&\geq \inf_{\pi \in \Pi(x)} \E\left[ C_{\tau_n}^{x,\pi} Q_{\tau_n}^{q,\gamma^{(n),\pi}}  \right]
                        \\&\stackrel{\eqref{eq:gamma-optimality-sub-3}}{\geq} V^\star( x, q ) - B_n,
        \end{align}
        where the first inequality follows from the definition of supremum, and the second inequality follows from the non-negativity of $V(\cdot, \cdot)$.
        By taking $n \nearrow \infty$, we obtain the desired result.

\subproof{Proof of claim \ref{it:opt-pi-optimality}.}
        To simplify notation, let $\pi := \pi^{(n),\gamma}$ and $\tau := \tau_n$ temporarily.
        By Proposition~\ref{prop:time-decomposition}, we have
        \begin{equation}
                J( \pi, \gamma; x, q )
                        = \E\left[ C_\tau^{0,x,\pi} Q_\tau^{0,q,\gamma} + C_\infty^{\tau, X_\tau^{0,x,\pi}, \pi} Q_\infty^{\tau, Q_\tau^{0,q, \gamma}, \gamma}  \right].
        \end{equation}
        (Recall that $C_\infty^{\tau, X_\tau^{0,x,\pi}, \pi} = C_\infty^{0,x,\pi} - C_\tau^{0,x,\pi}$.)
        In the proof of claim \ref{it:opt-V-star-validity-pi}, in \eqref{eq:pi-optimality-sub-1}, we have shown that
        \begin{equation}
                \E\left[ C_\tau^{0,x,\pi} Q_\tau^{0,q,\gamma} \right] \leq V^\star(x,q) = V(x,q),
        \end{equation}
        where the last equality follows from the claims \ref{it:opt-V-star-validity-pi} and \ref{it:opt-V-star-validity-gamma}.
        Therefore,
        \begin{align}
                J( \pi, \gamma; x, q )
                        &\leq V(x,q) + \E\left[ C_\infty^{\tau, X_\tau^{0,x,\pi}, \pi} Q_\infty^{\tau, Q_\tau^{0,q, \gamma}, \gamma}  \right]
                        \\&\leq V(x,q) + \E\left[ \scvar_{Q_\tau^{0,q,\gamma}}\left[ C_\infty^{\tau, X_\tau^{0,x,\pi}, \pi}  \right]  \right].
        \end{align}
        To obtain the desired result, it suffices to show that $\lim_{n \rightarrow \infty} \E\left[ \scvar_{Q_\tau^{0,q,\gamma}}\left[ C_\infty^{\tau, X_\tau^{0,x,\pi}, \pi}  \right]  \right] = 0$.

        Note that after time $\tau_n$, the policy $\pi^{(n),\gamma}$ trades according to the exponential schedule (i.e,. $X_t = X_{\tau_n} e^{-(t-\tau_n)/n}$), and thus $C_\infty^{\tau_n, X_{\tau_n}, \pi}$ is normally distributed conditional on $X_{\tau_n}$: more specifically, as derived in \eqref{eq:mv-exponential}, we have
        \begin{equation}
                \E\left[ \left. C_\infty^{\tau_n, X_{\tau_n}, \pi} \right| X_{\tau_n} \right] = \frac{ \eta X_{\tau_n}^2 }{ 4 n }
                , \quad
                \text{Var}\left[ \left. C_\infty^{\tau_n, X_{\tau_n}, \pi} \right| X_{\tau_n} \right] = \frac{ n \sigma^2 X_{\tau_n}^2  }{ 2 }.
        \end{equation}
        Consequently,
        \begin{equation}
                \scvar_{Q_{\tau_n}}\left[ C_\infty^{\tau_n, X_{\tau_n}, \pi} \right] = \frac{ \eta X_{\tau_n}^2 Q_{\tau_n} }{ 4 n } + \kappa(Q_{\tau_n}) \times \sqrt{ \frac{ n \sigma^2 X_{\tau_n}^2  }{ 2 } }.
        \end{equation}
        By the definition of $\tau_n$, we have either $X_{\tau_n} = \frac{1}{n}$ or $Q_{\tau_n} \in \left\{ \frac{1}{n}, 1 - \frac{1}{n} \right\}$, and thus
        \begin{align}
                \scvar_{Q_{\tau_n}}\left[ C_\infty^{\tau_n, X_{\tau_n}, \pi} \right]
                        &\leq \max\left\{
                                \frac{ \eta Q_{\tau_n} }{ 4n^3 } + \kappa(Q_{\tau_n}) \times \sqrt{ \frac{ \sigma^2 }{ 2n } }
                                , ~
                                \frac{ \eta X_{\tau_n}^2 Q_\tau }{ 4 n } +  \max\left\{ \kappa(1/n), \kappa(1-1/n)  \right\} \times \sqrt{ \frac{ n \sigma^2 X_{\tau_n}^2  }{ 2 } }
                        \right\}
                        \\&\leq \max\left\{
                                \frac{ \eta }{ 4n^3 } + \sup_{q'}\{ \kappa(q') \} \times \sqrt{ \frac{ \sigma^2 }{ 2n } }
                                , ~
                                \frac{ \eta x^2 }{ 4 n } +  \max\left\{ \kappa(1/n), \kappa(1-1/n)  \right\} \times \sqrt{ \frac{ n \sigma^2 x^2  }{ 2 } }
                        \right\}
                        \label{eq:pi-optimality-sub-5}
                        .
        \end{align}
        Since $\sup_{q'} \kappa(q') < \infty$, $\lim_{n \rightarrow \infty} \sqrt{n} \kappa(1/n) =
        0$, and $\lim_{n \rightarrow \infty} \sqrt{n} \kappa(1-1/n) = 0$
        (Lemma~\ref{lem:kappa-limit}), the right side of \eqref{eq:pi-optimality-sub-5} vanishes as $n \rightarrow \infty$.
        Therefore, we have $\lim_{n \rightarrow \infty} \E\left[ \scvar_{Q_{\tau_n}}\left[ C_\infty^{\tau_n, X_{\tau_n}, \pi} \right] \right] = 0$ and this concludes the proof.

\subproof{Proof of claim \ref{it:opt-gamma-optimality}.}
        In the proof of claim \ref{it:opt-V-star-validity-gamma}, we have shown that
        \begin{align}
                \E\left[ C_{\tau_n}^{x,\pi} Q_{\tau_n}^{q,\gamma^{(n),\pi}} + V^\star\left( X_{\tau_n}^{x,\pi}, Q_{\tau_n}^{q,\gamma^{(n),\pi}} \right) \right]
                \geq V^\star( x, q ) = V(x,q),
        \end{align}
        where the last inequality follows from the claims \ref{it:opt-V-star-validity-pi} and \ref{it:opt-V-star-validity-gamma}, and $\lim_{n \rightarrow \infty} \E\left[ V^\star\left( X_{\tau_n}^{x,\pi}, Q_{\tau_n}^{q,\gamma^{(n),\pi}} \right) \right] = 0$.

        Observe that we have $Q_{\infty} = Q_{\tau_n}$ since $g^\star(X_t,Q_t) = 0$ for all $t \geq \tau_n$ under $\gamma^{(n),\pi}$.
        Therefore,
        \begin{align}
                \E\left[ C_{\infty}^{x,\pi} Q_{\infty}^{q,\gamma^{(n),\pi}} \right] - \E\left[ C_{\tau_n}^{x,\pi} Q_{\tau_n}^{q,\gamma^{(n),\pi}} \right]
                        &= \E\left[ \left( C_{\infty} - C_{\tau_n} \right) Q_{\tau_n} \right]
                        \\&= \E\left[ \E\left( \left. \int_{t=\tau_n}^\infty \frac{\eta}{2} \pi_t^2dt - \int_{t=\tau_n}^\infty \sigma X_t dW_t \right| \Fscr_{\tau_n} \right) Q_{\tau_n}  \right]
                        \\&= \E\left[ \left( \int_{t=\tau_n}^\infty \frac{\eta}{2} \pi_t^2dt \right) Q_{\tau_n}^{q,\gamma^{(n)}}  \right]
                        \\&\geq 0.
        \end{align}
        Consequently,
        \begin{equation}
                J( \pi, \gamma^{(n),\pi}; x, q )
                        = \E\left[ C_{\infty}^{x,\pi} Q_{\infty}^{q,\gamma^{(n),\pi}} \right]
                        \geq \E\left[ C_{\tau_n}^{x,\pi} Q_{\tau_n}^{q,\gamma^{(n),\pi}} \right]
                        \geq V(x,q) - \E\left[ V^\star\left( X_{\tau_n}^{x,\pi}, Q_{\tau_n}^{q,\gamma^{(n),\pi}} \right) \right].
        \end{equation}
        By taking $\liminf_{n \rightarrow \infty}$ on both sides, we obtain the desired result.
\end{proof}

\begin{proof}[\proofnamest{Proof of Theorem~\ref{thm:verification}}]
        We aim to show that $V(x,q) = V^\star(x,q)$ for any $x \in \mathbb{R}$ and $q \in [0,1]$.
        From Theorem~\ref{thm:optimality-detail}\ref{it:opt-V-star-validity-pi} and \ref{it:opt-V-star-validity-gamma}, we deduce that $V(x,q) = V^\star(x,q)$ for any $x > 0$ and $q \in (0,1)$.
        By symmetry, the same argument holds for any $x < 0$ and $q \in (0,1)$.
        From Proposition~\ref{prop:V-properties}\ref{it:V-boundary} and condition \ref{it:verification-boundary}, we can also verify that their boundary values match, i.e., $V^\star(x,q) = V(x,q) = 0$ if $x=0$ or $q \in \{0,1\}$.
\end{proof}

\begin{proof}[\proofnamest{Proof of Theorem~\ref{thm:policy-optimality}}]
        \newedit{For any $x \in (0, M]$ and $q \in (0,1)$}, the claims follow from Theorem~\ref{thm:optimality-detail}\ref{it:opt-pi-feasibility},
        \ref{it:opt-gamma-feasibility}, \ref{it:opt-pi-gamma-feasibility},
        \ref{it:opt-pi-optimality}, \ref{it:opt-gamma-optimality}, and
        \ref{it:opt-pi-gamma-optimality}.
        \newedit{The claims also hold true for any $x \in [-M, 0)$ by symmetry.}

        We next examine the cases when $x=0$, $q =0$, or $q=1$.
        Note that $V(0, \cdot) = V(\cdot,0) = V(\cdot,1) = 0$.

        \begin{itemize}
        \item Suppose $x=0$.
                Under $\pi^{(n),\gamma}$, we have $X_t=0$ for any $t \geq 0$ and thus $C_\infty^{x,\pi^{(n),\gamma}} = 0$ almost surely for any $\gamma \in \Gamma(q)$. Therefore, $J(\pi^{(n),\gamma}, \gamma; x,q) = 0$, from which the claims \ref{it:opt-pi-optimality-general} and \ref{it:opt-pi-gamma-optimality-general} follow.
        Consequently, under $\gamma^{(n),\pi}$, we have $Q_t = q$ for any $t \geq 0$ since $g^{(n)}(0, Q_t) =0$, and thus $\E[ C_\infty^{x,\pi} Q_\infty^{q, \gamma^{(n),\pi}} ] = q \E[ C_\infty^{x,\pi} ] \geq 0$ for any $\pi \in \Pi(x)$, from which the claim \ref{it:opt-gamma-optimality-general} follows.

        \item Suppose $q=0$.
        The only admissible adversary's policy is the one satisfying that $Q_t = 0$ for all $t \geq 0$, and thus $J(\pi, \gamma; x,q) = 0$ for any $\pi \in \Pi(x)$, from which claims \ref{it:opt-pi-optimality-general}--\ref{it:opt-pi-gamma-optimality-general} follow.

        \item Suppose $q=1$.
        Again, the only admissible adversary's policy is the one satisfying that $Q_t = 1$ for all $t \geq 0$.
        Then, $\pi^{(n),\gamma}$ is an exponential policy in which we have $X_t = x \exp( - t/n )$ and thus $J(\pi^{(n), \gamma},\gamma; x,q) = \E[ C_\infty^{x,\pi^{(n),\gamma}} ] = \frac{ \eta x^2 }{ 4 n } \searrow 0$ as $n \rightarrow \infty$, from which claims \ref{it:opt-pi-optimality-general} and \ref{it:opt-pi-gamma-optimality-general} follow.
        Consequently, since $\E[ C_\infty^{x,\pi} ] \geq 0$ for any $\pi \in \Pi(x)$, the claim \ref{it:opt-gamma-optimality-general} immediately follows.
      \end{itemize}
\end{proof}

\subsection{Other Proofs}


\begin{proof}[\proofnamest{Proof of Proposition~\ref{prop:cvar-dp-dynkin}}]
By applying It\^o's formula, we obtain
        \begin{equation}
                \widehat{V}(X_t, Q_t)
                        = \widehat{V}(x,q) - \int_{s=0}^t \widehat{V}_x(X_s, Q_s) \pi_s ds
                        + \int_{s=0}^t \widehat{V}_q( X_s, Q_s ) \gamma_s dW_s
                        + \frac{1}{2} \int_{s=0}^t \widehat{V}_{qq}(X_s, Q_s) \gamma_s^2 ds.
        \end{equation}
        Recall that $C_t \defeq \int_{s=0}^t \frac{\eta}{2} \pi_s^2 ds - \int_{s=0}^t \sigma X_s dW_s$.
        By applying It\^o's product rule, we further obtain
        \begin{equation}
                C_t Q_t
                        = Q_0 C_0
                        + \int_{s=0}^t \left( \frac{\eta}{2} \pi_s^2 Q_s - \sigma X_s \gamma_s \right) ds
                        + \int_{s=0}^t \left( - \sigma X_s Q_s + X_s \gamma_s \right) dW_s.
        \end{equation}

        Let $M_t^{(1)} \defeq \int_{s=0}^t \widehat{V}_q(X_s, Q_s) \gamma_s dW_s$, $M_t^{(2)} \defeq - \sigma \int_{s=0}^t X_s Q_s dW_s$, and $M_t^{(3)} \defeq \int_{s=0}^t X_s \gamma_s dW_s$.
        We first verify that  $\big( M_t^{(1)} \big)_{t \geq 0}$ is uniformly integrable:
        \begin{equation}
                        \sup_t \E\left[ |M_t^{(1)}|^2 \right]
                        \leq \E\left[ \int_{s=0}^\infty \widehat{V}_q(X_s, Q_s)^2 \gamma_s^2 ds \right]
                        \leq \left( \sup_{x, q} \widehat{V}_q(x,q) \right)^2 \E\left[ \int_{s=0}^\infty \gamma_s^2 ds \right] < \infty,
        \end{equation}
        where $\sup_{x, q} \widehat{V}_q(x,q) < \infty$ since we have assumed that the domain of $\widehat{V}$ is compact and $\widehat{V}_q$ is continuous, and $\E\left[ \int_{s=0}^\infty \gamma_s^2 ds \right] < \infty$ since $( Q_t )_{t \geq 0}$ is a bounded martingale.
        Consequently,
        \begin{equation}
                \sup_t \E\left[ |M_t^{(2)}|^2 \right]
                        \leq \E\left[ \int_{s=0}^\infty X_s^2 Q_s^2 ds \right]
                        \leq \E\left[ \int_{s=0}^\infty X_s^2 ds \right] < \infty,
        \end{equation}
        where the last inequality directly follows from the definition of $\Pi(x)$.
        Similarly, we have $\sup_t \E\left[ |M_t^{(3)}|^2 \right] \leq \E\left[ \int_{s=0}^\infty X_s^2 \gamma_s^2 ds \right] \leq M^2 \E\left[ \int_{s=0}^\infty \gamma_s^2 ds \right] < \infty$.
        Therefore, $\big( M_t^{(1)} \big)_{t \geq 0}$, $\big( M_t^{(2)} \big)_{t \geq 0}$, and $\big( M_t^{(3)} \big)_{t \geq 0}$ are all uniformly integrable, and hence, $\E[ M_\tau^{(1)} ] = \E[ M_\tau^{(2)} ] = \E[ M_\tau^{(3)} ] = 0$ for any stopping time $\tau$.

        Combining these results, we deduce that
        \begin{align}
                \E\left[  C_\tau Q_\tau + \widehat{V}(X_\tau, Q_\tau ) \right]
                        &= \widehat{V}(x,q) + \E\left[ \int_{t=0}^\tau \left\{ \frac{\eta}{2} \pi_t^2 Q_t - \widehat{V}_x(X_t,Q_t)\pi_t + \frac{1}{2} \widehat{V}_{qq}(X_t,Q_t) \gamma_t^2 - \sigma X_t \gamma_t \right\} dt \right].
        \end{align}
        We obtain the claim by rearranging terms.
\end{proof}

\begin{proof}[\proofnamest{Proof of Theorem~\ref{thm:value-function}}]
We prove that the function $V^\star$, defined in \eqref{eq:V-opt}, satisfies the conditions \ref{it:verification-HJB}--\ref{it:verification-g-monotone} in Theorem~\ref{thm:verification}.

\subproof{Verification of condition \ref{it:verification-HJB}.}
Observe that $\varphi''(q) = - \frac{q}{\varphi^2(q)} < 0$ for any $q \in (0,1)$, and thus and $V_{qq}^\star(x,q) < 0$.
The minimum/maximum in \eqref{eq:HJB} are attainable, and we have
\begin{equation}
        \min_{v \in \mathbb{R}} \left\{ \frac{\eta}{2} q v^2 - V^\star_x\left( x, q \right) v \right\}
                +
                \max_{w \in \mathbb{R}} \left\{ \frac{1}{2} V^\star_{qq}\left( x, q \right) w^2 - \sigma x w \right\}
                = - \frac{\big(V_x^\star(x,q)\big)^2}{2\eta q} - \frac{\sigma^2 x^2}{ 2V_{qq}^\star(x,q) }
\end{equation}
Therefore, it suffices show that ${V_x^\star}^2 \times V_{qq}^\star = - \sigma^2 \eta \times x^2 q$:
\begin{align*}
        {V_x^\star}^2 \times V_{qq}^\star
                &= \left( (3/4)^{\frac{2}{3}} \times \sigma^{\frac{2}{3}} \eta^{\frac{1}{3}} \times \frac{4}{3} x^{\frac{1}{3}} \times \varphi(q)\right)^2 \times \left((3/4)^{\frac{2}{3}} \times \sigma^{\frac{2}{3}} \eta^{\frac{1}{3}} \times |x|^{\frac{4}{3}} \times \varphi''(q) \right)
                \\&= \sigma^2 \eta \times x^2 \times \varphi^2(q) \times \varphi''(q)
                = - \sigma^2 \eta x^2 q.
\end{align*}

\subproof{Verification of conditions \ref{it:verification-boundary}--\ref{it:verification-symmetry}.}
These conditions can be easily verified by inspection.

\subproof{Verification of condition \ref{it:verification-f-monotone}.}
Note that $\frac{V^\star_x(x,q)}{q} = (3/4)^{-\frac{1}{3}} \times \sigma^{\frac{2}{3}} \eta^{\frac{1}{3}} \times x^{\frac{1}{3}} \times \frac{\varphi(q)}{q}$.
Its continuity and monotonicity (with respect to $x$) can be verified immediately, and it suffices to show that $\frac{\varphi(q)}{q}$ is decreasing in $q$.

Observe that for any $q_1, q_2$ such that $0 < q_1 \leq q_2 \leq 1$, we have $\varphi(q_1) \geq \frac{q_1}{q_2} \varphi(q_2) + \left(1 - \frac{q_1}{q_2}\right) \varphi(0) = q_1 \times \frac{\varphi(q_2)}{q_2}$ due to its concavity, and therefore $\frac{\varphi(q_1)}{q_1} \geq \frac{\varphi(q_2)}{q_2}$. As a result, $\frac{\varphi(q)}{q}$ is monotonically decreasing in $q$ on $(0,1)$.

\subproof{Verification of condition \ref{it:verification-g-monotone}.}
Note that $\frac{x}{V^\star_{qq}(x,q)} = (3/4)^{-\frac{2}{3}} \times \sigma^{-\frac{2}{3}} \eta^{-\frac{1}{3}} \times x^{-\frac{1}{3}} \times \frac{\varphi^2(q)}{q}$. The continuity and monotonicity (with respect to $x$) is trivial.
\end{proof}

\begin{proof}[\proofnamest{Proof of Proposition~\ref{prop:Emden-Fowler}}]
        Emden--Fowler equation deals with a differential equation with a form of $\frac{d^2 \varphi }{dq^2} = A q^n \varphi^m$, and the differential equation that we have corresponds to the case of $A=-1$, $n=1$, and $m=-1/2$.
        In this case, the solutions can be written in parametric form \citep[2.3.27]{ODEhandbook}:
        for constants $a$ and $b$ such that $A = -\frac{9}{2} (b/a)^3$,
        \begin{equation} \label{eq:Emden-Fowler-solution-left}
                q(\theta) = a \theta^{-\frac{2}{3}}\left[ \left( \theta Z'(\theta) + \frac{1}{3} Z(\theta) \right)^2 - \theta^2 Z^2(\theta) \right]
                , ~
                \varphi(\theta) = b \theta^{\frac{2}{3}} Z^2(\theta)
                , ~
                Z(\theta) = C_1 I_{1/3}(\theta) + C_2 K_{1/3}(\theta),
        \end{equation}
        or
        \begin{equation} \label{eq:Emden-Fowler-solution-right}
                q(\theta) = a \theta^{-\frac{2}{3}}\left[ \left( \theta Z'(\theta) + \frac{1}{3} Z(\theta) \right)^2 + \theta^2 Z^2(\theta) \right]
                , ~
                \varphi(\theta) = b \theta^{\frac{2}{3}} Z^2(\theta)
                , ~
                Z(\theta) = C_1 J_{1/3}(\theta) + C_2 Y_{1/3}(\theta),
        \end{equation}
        where $\theta \in \mathbb{R}_+$ is the parameter, and $C_1$ and $C_2$ are arbitrary constants.

        Note that the expression \eqref{eq:Emden-Fowler-left} is obtained by taking $C_1 = - \frac{2}{\pi}$ and $C_2 = 0$ in \eqref{eq:Emden-Fowler-solution-left}, and the expression \eqref{eq:Emden-Fowler-right} is obtained by taking $C_1 = \sqrt{3}$ and $C_2 = -1$ in \eqref{eq:Emden-Fowler-solution-right}.
        Therefore it suffices to show that the curve $\left\{  \left( q_L(\theta), \varphi_L(\theta) \right) \right\}_{\theta \in (0, \infty]} \bigcup \left\{ \left( q_R(\theta), \varphi_R(\theta) \right) \right\}_{\theta \in (0, \bar{\theta}]}$ is a valid graph satisfying the boundary conditions, i.e.,
        \begin{enumerate}[label=(\roman*)]
                \item $\lim_{\theta \nearrow \infty} \big( q_L(\theta), \varphi_L(\theta) \big) = (0, 0)$.
                \item $(q_R(\bar{\theta}), \varphi_R(\bar{\theta})) = (1,0)$.
                \item The left part and the right part meet at a point, i.e., $\lim_{\theta \searrow 0} \big( q_L(\theta), \varphi_L(\theta) \big) = \lim_{\theta \searrow 0} \big( q_R(\theta), \varphi_R(\theta) \big)$.
                \item They have the same slope at the contact point, i.e., $\lim_{\theta \searrow 0} \frac{d\phi_L/d\theta}{dq_L/d\theta} = \lim_{\theta \searrow 0} \frac{d\phi_R/d\theta}{dq_R/d\theta}$.
        \end{enumerate}

        First, observe that $\lim_{\theta \nearrow \infty} Z_L(\theta) = - \frac{2}{\pi} \lim_{\theta \nearrow \infty} K_{1/3}(\theta) = 0$ and $\lim_{\theta \nearrow \infty} Z_L'(\theta) = \frac{1}{\pi} \lim_{\theta \nearrow \infty} \big( K_{-2/3}(\theta) + K_{4/3}(\theta) \big) = 0$.
        Therefore, $\lim_{\theta \nearrow \infty} q_L(\theta) = 0$ and $\lim_{\theta \nearrow \infty} \varphi_L(\theta) = 0$, which proves (i).

        Next, observe that the value of $\bar{\theta}$ was chosen to satisfy $Z_R(\bar{\theta}) = 0$, and the value of $a$ was chosen to satisfy $q_R(\bar{\theta}) = a \times \bar{\theta}^{\frac{4}{3}} \left( Z_R'(\bar{\theta}) \right)^2  = 1$, which proves (ii).

        Finally, with some algebra, it can be shown that
        \begin{equation}
                \lim_{\theta \searrow 0} q_L(\theta) = \lim_{\theta \searrow 0} q_R(\theta) = \frac{2^{\frac{10}{3}} a}{3 \Gamma^2(\frac{1}{3})}
                , \quad
                \lim_{\theta \searrow 0} \varphi_L(\theta) = \lim_{\theta \searrow 0} \varphi_R(\theta) = \frac{2^{\frac{8}{3}} b}{3 \Gamma^2(\frac{2}{3})},
        \end{equation}
        and
        \begin{equation}
                \lim_{\theta \searrow 0} \frac{d\phi_L(\theta)/d\theta}{dq_L(\theta)/d\theta} = \lim_{\theta \searrow 0}  \frac{d\phi_L(\theta)/d\theta}{dq_L(\theta)/d\theta} = \frac{3 \times 2^{\frac{1}{3}}b \Gamma(\frac{2}{3}) }{a \Gamma(\frac{1}{3})},
        \end{equation}
        where $\Gamma$ is the Gamma function.
        These results prove (iii) and (iv).

        \noindent
        \textbf{Some notes on the determination of constants.}
        In the representation of the general solution,  \eqref{eq:Emden-Fowler-solution-left} and \eqref{eq:Emden-Fowler-solution-right}, there are seven constants that need to be identified: $a$, $b$, $\bar{\theta}$, $C_{L1}$, $C_{L2}$, $C_{R1}$, and $C_{R2}$.
        The upper limit $\bar{\theta}$ and the constant $a$ are uniquely determined by $(C_{L1}, C_{L2}, C_{R1}, C_{R2})$ due to (ii), and the constant $b$ is also uniquely determined by $a$ due to the identity $(b/a)^3=9/2$.
        We can also observe that the curve $\{ (q(\theta), \varphi(\theta)) \}_{\theta \in \mathbb{R}_+}$ is invariant to a uniform scaling of $(C_{L1}, C_{L2}, C_{R1}, C_{R2})$, and thus we can set $C_{R4}=-1$ without loss of generality.
        We can further obtain a system of equations from the other conditions: $C_{L2} = 0$ from (i), $C_{L1}^2 \pi^2 = 4 C_{R2}^2$ and $C_{L1}^2 \pi^2 = 3 C_{R1}^2 + 2\sqrt{3} C_{R1} C_{R2} + 3 C_{R2}^2$ from (iii), and $C_{R2} = - \sqrt{3} C_{R1}$ from (iv).
        These three equations uniquely determine the values of $C_{L1}$, $C_{L2}$, and $C_{R1}$.
\end{proof}


\end{document}